\newtheorem{theorem}{Theorem}[section]
\newtheorem{proposition}[theorem]{Proposition}
\newtheorem{lemma}[theorem]{Lemma}
\newtheorem*{lemma*}{Lemma}
\newtheorem{corollary}[theorem]{Corollary}
\newtheorem{definition}[theorem]{Definition}
\newtheorem*{standingassumptionI*}{Standing Assumption I}
\newtheorem*{standingassumptionII*}{Standing Assumption II}
\newtheorem{assumption}[theorem]{Assumption}
\theoremstyle{remark}
\newtheorem{remark}[theorem]{Remark}
\newcommand{\probp}{{\mathbb P}}
\newcommand{\probq}{{\mathbb Q}}
\newcommand{\R}{{\mathbb R}}
\newcommand{\cF}{\mathcal{F}}
\newcommand{\norm}[1]{\left\lVert#1\right\rVert}
\newcommand\Ep[1]{\mathbb{E}_\mathbb{P} \left[#1\right]}
\newcommand{\rn}[2]{\frac{\mathrm{d}#1}{\mathrm{d}#2}}
\DeclareMathOperator*{\argmax}{arg\,max}
\begin{document}

\title{Multivariate Systemic Optimal Risk Transfer Equilibrium }
\date{\today}
\author{Alessandro Doldi\thanks{%
Dipartimento di Matematica, Universit\`{a} degli Studi di Milano, Via
Saldini 50, 20133 Milano, Italy, $\,\,$\emph{alessandro.doldi@unimi.it}. }
\and Marco Frittelli\thanks{%
Dipartimento di Matematica, Universit\`a degli Studi di Milano, Via Saldini
50, 20133 Milano, Italy, \emph{marco.frittelli@unimi.it}. }}
\maketitle

\begin{abstract}
\noindent A Systemic Optimal Risk Transfer Equilibrium (SORTE) was
introduced in: \textquotedblleft Systemic optimal risk transfer equilibrium\textquotedblright, Mathematics and
Financial Economics (2021), for the analysis of the equilibrium among
financial institutions or in insurance-reinsurance markets. A SORTE
conjugates the classical B\"{u}hlmann's notion of a risk exchange equilibrium with a capital allocation principle based on systemic expected
utility optimization. In this paper we extend such a notion to the case when
the value function to be optimized is multivariate in a general sense, and
it is not simply given by the sum of univariate utility functions. This
takes into account the fact that preferences of single agents might depend
on the actions of other participants in the game. Technically, the extension
of SORTE to the new setup requires developing a theory for multivariate
utility functions and selecting at the same time a suitable framework for
the duality theory. Conceptually, this more general framework allows us to
introduce and study a Nash Equilibrium property of the optimizer. \noindent
We prove existence, uniqueness, and the Nash Equilibrium property of the
newly defined Multivariate Systemic Optimal Risk Transfer Equilibrium.
\end{abstract}

\noindent \textbf{Keywords}: Equilibrium, Systemic Utility Maximization,
Risk Transfer Equilibrium, Systemic Risk.\newline
\noindent \textbf{Mathematics Subject Classification (2010):} 91G99; 91B30;
60A99; 91B50; 90B50.\newline
\noindent \textbf{JEL Classification:} G1; C610; C650.\newline

\parindent0em \noindent %\newpage
%\tableofcontents
%\newpage
\textbf{Acknowledgment:} the authors would like to thank Achraf Tamtalini
for pointing out the necessity of closedness under truncation for the
arguments in Section \ref{mSORTEworkonlinfty}.

\section{Introduction}

A Systemic Optimal Risk Transfer Equilibrium, denoted with SORTE, was
introduced and analyzed in Biagini et al. (2021) \cite{BDFFM}. The SORTE
concept was inspired by B\"{u}hlmann's notion of a Risk Transfer Equilibrium
in insurance-reinsurance markets. However, in B\"{u}hlmann's definition the
vector assigning the budget constraints was given a priori. On the contrary,
in the SORTE, such a vector is endogenously determined by solving a systemic
utility maximization problem. As remarked in \cite{BDFFM}, \textquotedblleft 
\textit{SORTE gives priority to the systemic aspects of the problem, in
order to optimize the overall systemic performance, rather than to
individual rationality\textquotedblright }. For the precise definition of a
SORTE, its existence, uniqueness and Pareto optimality we refer to \cite%
{BDFFM}. In Section \ref{mSORTESORTE} we will only very briefly recall its
motivation and heuristic definition in order to compare it with the results
of the present paper. We will not address any integrability issues in this
Introduction.

\bigskip

The capital allocation and risk sharing equilibrium that we consider in this
new work, similarly to the one introduced in \cite{BDFFM}, can be applied to
many contexts, such as: equilibrium among financial institutions, agents, or
countries; insurance and reinsurance markets; capital allocation among
business units of a single firm; wealth allocation among investors.

\textit{The key novelty in this work is that we consider preferences of agents which
depend on other agents' choices. This is modeled using multivariate utility
functions}.

Let $(\Omega ,\mathcal{F},{\mathbb{P}}\mathbf{)}$ be a probability space and 
$L^{0}(\Omega ,\mathcal{F},{\mathbb{P}})$ be the vector space of
(equivalence classes of) real valued $\mathcal{F}$-measurable random
variables. The sigma algebra $\mathcal{F}$ represents all possible
measurable events at the terminal time. $\mathbb{E}_{\mathbb{Q}}\left[ \cdot %
\right] $ denotes the expectation under a probability $\mathbb{Q}$. For the
sake of simplicity, we are assuming zero interest rate.

In a one period setup we consider $N$ agents. The individual risk (or the
random endowment or the future profit and loss) plus the initial wealth of
each agent is represented by the random variable $X^{j}\in L^{0}(\Omega ,%
\mathcal{F},{\mathbb{P}})$. Thus the risk vector $X=[X^{1},...,X^{N}]\in
(L^{0}(\Omega ,\mathcal{F},{\mathbb{P}}))^{N}$ denotes the original
configuration of the system.

We assume that the system has at disposal a total amount of capital $A\in 
\mathbb{R}$ to be used in case of necessity. This amount could have been
assigned by the Central Bank, or could have been the result of the previous
trading in the system, or could have been collected ad hoc by the agents.
The amount A could represent an insurance pot or a fund collected (as
guarantee for future investments) in a community of homeowners. For further
interpretation of $A$, see also the related discussion in Section 5.2 of
Biagini et al. (2020) \cite{bffm}. 
In any case, \textit{we consider the quantity $A$
as exogenously determined and our aim is to allocate this amount among the
agents in order to optimize the overall systemic satisfaction}.

In this paper we work with a multivariate utility function $U:{\mathbb{R}}%
^{N}\rightarrow {\mathbb{R}}$ that is strictly concave and strictly
increasing with respect to the partial componentwise order. However, some results (see Corollary \ref{corollarystrict}) hold also without the \textit{strict} concavity or the \textit{strict} monotonicity. We develop a condition on the multivariate utility $U$, see Definition \ref{mSORTEwellcontrolled}, that will play the same role as the Inada conditions in the one dimensional case. Details and
precise assumptions are deferred to Section \ref{mSORTEsubsecmSORTE} and Section \ref{mSORTEsecsetupassumptions}.
Examples of utility functions $U$
satisfying our assumptions are collected in Section \ref{mSORTEsecexamples}.

\bigskip

%In this paper we will work with a multivariate utility function but, in
%order to illustrate the SORTE concept, in this section we assume that the
%preferences of each agent $j$ are given via expected utility, by a strictly
%concave, increasing utility function $u_{j}:\mathbb{R\rightarrow R}$, $%
%j=1,...,N.$

\textbf{Systemic optimal (deterministic) initial-time allocation}

If we denote with $a^{j}\in \mathbb{R}$ the cash received (if positive) or
provided (if negative) by agent $j$ at initial time, then the time $T$
wealth at disposal of agent $j$ will be $X^{j}+a^{j}$. The optimal
allocation ${a_{{X}}\in }\mathbb{R}^{N}$ could then be determined as the
solution to the following aggregate criterion 
\begin{equation}
\Pi _{A}^{\det }(X):=\sup \left\{ \mathbb{E}\left[ U(X+a)\right] \mid {a\in }%
\mathbb{R}^{N}\text{ s.t. }\sum_{j=1}^{N}a^{j}=A\right\} .  \notag
\end{equation}%
As the vector ${a_{{X}}\in }\mathbb{R}^{N}$ is deterministic, it is known at
the initial time and therefore the allocation is distributed (only) at such
initial time and this is to the advantage of each agent. Indeed, if the
agent $j$ receives the fund ${a_{{X}}^{j}}$ at initial time, the agent may
use this amount to prevent financial ruin of future default.

By the monotonicity of $U$, we may formalize the budget constraints set in
the utility maximization problem (here and below) using equivalently the
equality $\sum_{j=1}^{N}a^{j}=A$ or the inequality $\sum_{j=1}^{N}a^{j}\leq
A $.

\bigskip

\textbf{Systemic optimal (random) final-time allocation}

We are now going to replace in $\Pi _{A}^{\det }$ the constant vectors ${%
a\in }\mathbb{R}^{N}$ with random vectors $Y=[Y^{1},...,Y^{N}]\in
(L^{0}(\Omega ,\mathcal{F},{\mathbb{P}}))^{N}$ representing final-time
random allocations. Set\ 
\begin{equation*}
\mathcal{C}_{\mathbb{R}}:=\left\{ Y\in (L^{0}(\Omega ,\mathcal{F},\mathbb{P}%
))^{N}\mid \sum_{j=1}^{N}Y^{j}\in \mathbb{R}\right\}
\end{equation*}%
and note that each component $Y^{j}$ of the vector $Y\in \mathcal{C}_{%
\mathbb{R}}$ is a random variable (measurable with respect to the sigma
algebra at the terminal time), but the sum of the components is $\,{\mathbb{P%
}}$-a.s. equal to some constant in $\mathbb{R}$. 
 We may
impose additional constraints on the vectors $Y$ of random allocations by
requiring that they belongs further to a prescribed set $%
\mathcal{B}$ of \textit{feasible allocations}. It will be assumed that
\begin{equation}
\mathbb{R}^{N}\subseteq \mathcal{B}\subseteq \mathcal{C}_{\mathbb{R}}
\end{equation}
and
that $\mathcal{B}$ is translation invariant: $\mathcal{B}+{\mathbb{R}}^{N}=%
\mathcal{B}$.
We consider  a family of probability vectors $\mathbb{Q}:=[\mathbb{Q}^{1},...,\mathbb{Q}^{N}]\in \mathcal{Q}_{\mathcal{B}%
,V}$ (see \eqref{mSORTEQV}) associated to $\mathcal{B}$ and to the convex
conjugate $V$ of $U$, and we take  $\mathcal{L}:=\bigcap_{{\mathbb{Q}}\in \mathcal{Q}_{\mathcal{B}%
,V}}L^{1}({\mathbb{Q}})$ for $L^{1}({{\mathbb{Q}}):=}L^{1}(\Omega,\cF,{\mathbb{Q}}^{1}{%
)\times ...\times }L^{1}(\Omega,\cF,{\mathbb{Q}}^{N})$.
%We denote with $M^{\Phi
%}\subseteq (L^{0}(\Omega ,\mathcal{F},{\mathbb{P}}))^{N}$ a space of random
%variables satisfying $\mathbb{R}^{N}\subseteq M^{\Phi }$ and appropriate
%integrability conditions, to be specified later (see \ref{MPhi}).

With these notations, a different possibility to allocate the amount $A$
among the agents is to consider the following criterion 
\begin{equation}
\Pi _{A}^{\text{ran}}(X):=\sup \left\{ \mathbb{E}\left[ U(X+Y)\right] \mid
Y\in \mathcal{B}\cap \mathcal{L}\text{ s.t. }\sum_{j=1}^{N}Y^{j}=A,\text{ }%
\mathbb{P}-{a.s.}\right\} .  \label{mSORTEproblem random}
\end{equation}%
It is clear that $\Pi _{A}^{\det }(X)\leq \Pi _{A}^{\text{ran}}(X)$, thus
random allocations realizes, as obvious, a greater systemic expected
utility, as the dependence among the components $X^{j}$ of the original risk
can be taken into account by the terms $X^{j}+Y^{j}$. The condition $%
\sum_{j=1}^{N}Y^{j}=A$ is instrumental for the allocation of the amount $A$.

The
optimization problem in \eqref{mSORTEproblem random} can be seen as
the maximization of systemic utility for the allocation of the amount $A$
over feasible allocations $Y\in \mathcal{B}$, in a regulatory approach.
Indeed, only the utility of the whole system is taken into account in %
\eqref{mSORTEproblem random}, while optimality for single agents is
not required. The problem \eqref{mSORTEproblem random} is
similar in spirit to classical risk sharing problems (see Barrieu and El
Karoui (2005) \cite{BarrieuElKaroui05}). Unlike in the classical risk
sharing problems, we have a multivariate value function in place of the
classical sum of univariate ones. 

We observe that the \textquotedblleft budget\textquotedblright\ constraints
in $\Pi _{A}^{\text{ran}}$ are not expressed in the classical way using
expectation under some (or many) probability measures, but are instead
formalized as $\mathbb{P}-{a.s.}$ equality. Only in case $N=1$ the problem
becomes trivial, as $\Pi _{A}^{\text{ran}}(X)=\mathbb{E}\left[ U(X+A)\right]
.$

On a technical level our first main result of this paper in the detailed
study of the problem $\Pi _{A}^{\text{ran}}.$ We first show in Theorem \ref%
{mSORTEthmmaingeneral1} that $\Pi _{A}^{\text{ran}}(X)$ can be rewritten
with the budget contraint assigned by the family of probability vectors $\mathcal{Q}_{\mathcal{B}%
,V}$, namely%
\begin{equation}
\Pi _{A}^{\text{ran}}(X)=\sup \left\{ \mathbb{E}_{\mathbb{P}}\left[ U(X+Y)%
\right] \mid Y\in \mathcal{L},\,\sum_{j=1}^{N}\mathbb{E}_{\mathbb{Q}^{j}}%
\left[ Y^{j}\right] \leq A\,\forall {\mathbb{Q}}\in \mathcal{Q}_{\mathcal{B}%
,V}\right\}\,. \label{PiPi}
\end{equation}%
We prove (Theorem \ref%
{mSORTEthmmaingeneral1} and Corollary \ref{mSORTEcoroptsumA}): (i) the existence of the optimizer $\widehat{Y}\in 
\mathcal{L}$ of the problem in (\ref{PiPi}); (ii) its dual formulation as a
minimization problem over $\mathcal{Q}_{\mathcal{B},V}$; (iii) the existence
of the optimizer $\widehat{{\mathbb{Q}}}\in \mathcal{Q}_{\mathcal{B},V}$ of
such dual formulation; (iv) that \eqref{mSORTEproblem random} and \eqref{PiPi} have the same optimizer. Additionally, in \eqref{mSORTEeq666} we prove that for such an optimizer $\widehat{{\mathbb{Q}}}$ we have
\begin{equation}
\Pi _{A}^{\text{ran}}(X)=\sup \left\{ \mathbb{E}_{\mathbb{P}}\left[ U(X+Y)%
\right] \mid Y\in \mathcal{L},\,\sum_{j=1}^{N}\mathbb{E}_{\widehat{{\mathbb{Q%
}}}^{j}}[Y^{j}]\leq A\right\} .
\label{PiQ}
\end{equation}

We now present some more conceptual motivation for the analysis of this
problem.

As stated above, $\Pi _{A}^{\text{ran}}(X)$ is greater than $\Pi _{A}^{\det
}(X)$ and the random variables $Y=[Y^{1},...,Y^{N}]$ in $\Pi _{A}^{\text{ran}%
}(X)$ are terminal time allocations, as they are $\mathcal{F}$-measurable.
However, obviously one may split $Y$ in two components 
\begin{equation}
Y=a+(Y-a):=a+\widetilde{Y}\text{, }  \label{split}
\end{equation}%
for some $a\in \mathbb{R}^{N}$ such that $\sum_{j=1}^{N}a^{j}=A$, which then
represents an \textit{initial capital allocation} $a=(a^1,...,a^N)$ of $A$ and a \textit{terminal
time risk exchange} $\widetilde{Y}$ satisfying $\sum_{j=1}^{N}\widetilde{Y}%
^{j}=0,$ as $A=\sum_{j=1}^{N}Y^{j}=\sum_{j=1}^{N}a^{j}+\sum_{j=1}^{N}%
\widetilde{Y}^{j}=A+\sum_{j=1}^{N}\widetilde{Y}^{j}.$ 
 We pose two natural questions:

\begin{enumerate}
\item Is there an \textquotedblleft optimal\textquotedblright\ way to select
such initial capital $a\in \mathbb{R}^{N}$ ?

\item Could we discover a risk exchange equilibrium among the agents that is
embedded in the problem $\Pi _{A}^{\text{ran}}?$
\end{enumerate}
From the
formulation in (\ref{PiQ}), one could conjecture that the amount $\widehat{a}%
^{j}$ assigned as the expectation of the optimizer of $\Pi _{A}^{\text{ran}%
}(X)$ under the probability $\widehat{{\mathbb{Q}}}^{j},$ namely $\widehat{a}%
^{j}:=\mathbb{E}_{\widehat{{\mathbb{Q}}}^{j}}[\widehat{Y}^{j}]$, could have
a special relevance.
We will show indeed that the optimal solution to the above problem $\Pi
_{A}^{\text{ran}}(X)$ coincides with a multivariate version of the Systemic
Optimal Risk Transfer Equilibrium (SORTE) introduced in Biagini et al.
(2020)\ \cite{BDFFM}.

In order to answer these questions more precisely we need to recall the
notion of a risk exchange equilibrium, as proposed by B\"{u}hlmann
(1980) \cite{Buhlmann1} and (1984) \cite{Buhlmann}.

\bigskip

\subsection{Risk exchange equilibrium and Systemic Optimal Risk Transfer Equilibrium \label{mSORTESORTE}}

\textbf{}

We recall that in this paper we work with a multivariate utility function
but, in order to illustrate the risk exchange equilibrium and the SORTE concepts, in this subsection
we assume that the preferences of each agent $j$ are given via expected
utility, by a strictly concave, increasing utility function $u_{j}:\mathbb{%
R\rightarrow R}$, $j=1,...,N.$ In this case the corresponding multivariate utility function would be  $U(x):=\sum_{j=1}^{N}u_{j}(x^j),\quad x\in {\mathbb{R}}%
^{N}$. The vector $X=(X^{1},...,X^{N})\in
(L^{0}(\Omega ,\mathcal{F},{\mathbb{P}}))^{N}$ denotes the original risk
configuration of the system (the individual risk plus the initial wealth)
and each agent is an expected utility maximizer. At terminal time a
reinsurance mechanism is allowed to happen, in that each agent $j$ agrees to
receive (if positive) or to provide (if negative) the amount $\widetilde{Y}%
^{j}(\omega )$ at the final time in exchange of the amount $\mathbb{E}_{%
\mathbb{Q}^{j}}\left[ \widetilde{Y}^{j}\right] $ paid (if positive) or
received (if negative) at the initial time, where $\mathbb{Q}:=[\mathbb{Q}%
^{1},...,\mathbb{Q}^{N}]$ is some pricing probability vector (the
equilibrium price system). The reinsurance nature of this reallocation comes
from the fact that the clearing condition
\begin{equation}
\label{mSORTEclearing1}
\sum_{j=1}^{N}\widetilde{Y}^{j}=0\,\,\,{\mathbb{P}}-a.s.,
\end{equation}%
is required to hold, which models a terminal-time risk transfer mechanism. Integrability or
boundedness conditions on $\widetilde{Y}^{j}$ will be added later when we
rigorously formalize the setting. We use the $\sim$ in the notation $\widetilde{Y}$ for the sake of consistency with the previous work \cite{BDFFM}. 

\bigskip
As defined in B\"{u}hlmann (1980) \cite{Buhlmann1} and (1984) \cite{Buhlmann}%
, a pair ($\widetilde{Y}_{{X}},{\mathbb{Q}}_{{X}})$ is a \textbf{risk
exchange equilibrium} \textbf{w.r.to the vector} $X$ if:

($\alpha $) for each $j$, $\widetilde{Y}_{{X}}^{j}$ maximizes: $\mathbb{E}_{{%
\mathbb{P}}}\left[ u_{j}(X^{j}+\widetilde{Y}^{j}-\mathbb{E}_{{\mathbb{Q}}_{{X%
}}^{j}}[\widetilde{Y}^{j}])\right] $ among all r.v. $\widetilde{Y}^{j}$;

($\beta $) $\sum_{j=1}^{N}\widetilde{Y}_{{X}}^{j}=0$ $\mathbb{P}$-a.s.;

($\gamma $) $\mathbb{Q}_{{X}}^{1}=...=\mathbb{Q}_{{X}}^{N}.$

\bigskip
The optimal value in ($\alpha $) is denoted by 
\begin{equation}
V^{{\mathbb{Q}}_{{X}}^{j}}(X^{j})=\sup_{\widetilde{Y}^{j}}\left\{ \mathbb{E}%
_{{\mathbb{P}}}\left[ u_{j}(X^{j}+\widetilde{Y}^{j}-\mathbb{E}_{{\mathbb{Q}}%
_{{X}}^{j}}[\widetilde{Y}^{j}])\right] \right\} .  \label{VQ}
\end{equation}

\begin{remark}
\label{mSORTEremclearingnotinopt copy(1)} A key point of B\"{u}hlmann's risk
exchange equilibrium, which carries over to SORTE and mSORTE, is that in ($%
\alpha $) the single agent $j$ is optimizing over \textit{all possible
random variables} $\widetilde{Y}^{j}$ and not over only those that satisfies
the clearing condition ($\beta $). Indeed for the single myopic agent the
clearing condition is irrelevant. Observe that if in ($\alpha $) we consider
a generic probability vector $\mathbb{Q}$, the solutions $\widetilde{Y}_{{X}%
}^{j}$ of the single $N$ problems in ($\alpha$) will typically not satisfy the
clearing condition ($\beta $). It is only the selection of the equilibrium
pricing vector ${\mathbb{Q}}_{{X}}$ in ($\alpha $) that permits to comply
with the clearing condition ($\beta $).
\end{remark}

\begin{remark}
Differently from B\"{u}hlmann's notion we will also impose that the exchange
vector $\widetilde{Y}$ belongs to a prescribed set $\mathcal{B}$ of \textit{%
feasible allocations}, as already mentioned. If there are no further
constraints, i.e. $\mathcal{B}=\mathcal{C}_{\mathbb{R}}$, then $\mathbb{Q}_{{%
X}}^{1}=...=\mathbb{Q}_{{X}}^{N}$ \ holds in B\"{u}hlmann's risk exchange equilibrium. But the
presence of the constraints on $\widetilde{Y},$ represented by $\mathcal{B}$%
, forces to abandon the condition ($\gamma $) in B\"{u}hlmann's risk exchange equilibrium and to
allow for a generic vector $\mathbb{Q}_{X}:=[\mathbb{Q}_{X}^{1},...,\mathbb{Q%
}_{X}^{N}]$. A detailed discussion on several potentially different pricing
measures in B\"{u}hlmann's risk exchange equilibrium with feasible allocation set $\mathcal{B}\neq \mathcal{C}_\R$, as well as in SORTE, can be found in the
Introduction of \cite{BDFFM}.
\end{remark}

\bigskip
This prompt us to define a $\mathcal{B}$-\textbf{risk exchange equilibrium} 
\textbf{w.r.to the vector} $X$ as a pair ($\widetilde{Y}_{{X}},{\mathbb{Q}}_{%
{X}})$ satisfying:

($\alpha $) for each $j$, $\widetilde{Y}_{{X}}^{j}$ maximizes: $\mathbb{E}_{{%
\mathbb{P}}}\left[ u_{j}(X^{j}+\widetilde{Y}^{j}-\mathbb{E}_{{\mathbb{Q}}_{{X%
}}^{j}}[\widetilde{Y}^{j}])\right] $ among all r.v. $\widetilde{Y}^{j}$;

($\beta $) $\sum_{j=1}^{N}\widetilde{Y}_{{X}}^{j}=0$ $\mathbb{P}$-a.s. and $%
\widetilde{Y}\in \mathcal{B}$.

\bigskip
After this review of the concept of risk exchange equilibrium, we now  return to our problem of allocating the amount $A\in {\mathbb{R}}$.
Observe that if $a\mathbf{\in }\mathbb{R}^{N}$ is allocated at initial time
among the agents and $\sum_{j=1}^{N}a^{j}=A$ then the initial risk
configuration of each agent becomes $(a^{j}+X^{j})$ and they may enter in a
risk exchange equilibrium w.r. to such modified vector $(a_X+X).$
\bigskip

According to \cite{BDFFM}, a triple $(\widetilde{Y}_{{X}},{\mathbb{Q}}_{{X}%
},a_{{X}})$ is a \textbf{Systemic Optimal Risk Transfer Equilibrium} (SORTE)
with budget $A\in {\mathbb{R}}$ if:

(a) the pair ($\widetilde{Y}_{{X}},{\mathbb{Q}}_{{X}})$ is a $\mathcal{B}$%
-risk exchange equilibrium w.r.to the vector $(a_X+X)$,

(b) $a_{{X}}\mathbf{\in }\mathbb{R}^{N}$ maximizes $%
\sum_{j=1}^{N}V^{{\mathbb{Q}}_{{X}}^{j}}(a^{j}+X^{j})$ among all $a\mathbf{\in }\mathbb{R}^{N}$ s.t. $%
\sum_{j=1}^{N}a^{j}=A$.

Thus the optimal value in (b) equals
\begin{equation}
\sum_{j=1}^{N}V^{{\mathbb{Q}}_{{X}}^{j}}(a_{{X}}^{j}+X^{j})=\sup \left\{ \sum_{j=1}^{N}\sup_{%
\widetilde{Y}^{j}}\left\{ \mathbb{E}_{{\mathbb{P}}}\left[ u_{j}(a^{j}+X^{j}+%
\widetilde{Y}^{j}-\mathbb{E}_{{\mathbb{Q}}_{{X}}^{j}}[\widetilde{Y}^{j}])%
\right] \right\} \mid a\mathbf{\in }\mathbb{R}^{N}\text{ s.t.}%
\sum_{j=1}^{N}a^{j}=A\right\} . \label{VVV}
\end{equation}

Observe that SORTE explains how optima can be
realized conjugating optimality for the system as a whole (optimization over 
$a\in {\mathbb{R}}^{N}$ in \eqref{VVV}) and convenience for
single agents (the inner supremum in \eqref{VVV}). Under fairly general assumptions, existence, uniqueness and Pareto
Optimality of a SORTE were proven in \cite{BDFFM}.

In this paper we propose a multivariate version of this concept, which we
label with mSORTE, and prove that the optimizer $\widehat{Y}$ of $\Pi _{A}^{%
\text{ran}}(X)$, the optimizer $\widehat{{\mathbb{Q}}}$ of the dual
formulation of $\Pi _{A}^{\text{ran}}(X)$, the selection $\widehat{a}^{j}:=%
\mathbb{E}_{^{\widehat{{\mathbb{Q}}}^{j}}}[\widehat{Y}^{j}]$ in the
splitting (\ref{split}) determine the (unique) mSORTE. 
\subsection{Multivariate Systemic Optimal Risk Transfer Equilibrium}

\label{mSORTEsubsecmSORTE} Essentially, in this paper we answer to the following
question: what can we say about a concept of equilibrium similar to SORTE, with the
same underlying exchange dynamics, when preferences of each agent depend on
the actions of the other agents in the system? In our analysis, we consider
a multivariate utility function $U:{\mathbb{R}}^{N}\rightarrow {\mathbb{R}}$
for the system. {Mildly speaking, $U$ is a utility associated to the system
as a whole. $U$ determines the preferences of single agents in the system,
who take into account the actions and choices of the others: if the random
vector $Z^{[-j]}=[Z^{1},\dots ,Z^{j-1},Z^{j+1},\dots ,Z_{N}]$ models the
positions of agents $1,\dots ,j-1,j+1,\dots ,N$, we suppose that each agent $%
j$ is an expected utility maximizer, in the sense that he/she seeks $%
W\mapsto \max \mathbb{E}_{\mathbb{P}}\left[ U([Z^{[-j]};W])\right] $ where $%
[Z^{[-j]};W]:=\left[ Z^{1},\dots ,Z^{j-1},W,Z^{j+1},\dots ,Z^{N}\right] $. $%
U $ can be thought as nontrivial aggregation of preferences of single
agents: if each agent has preferences given by univariate utility functions $%
u_{1},\dots ,u_{N}:{\mathbb{R}}\rightarrow {\mathbb{R}}$, then given an
aggregation function $\Gamma :{\mathbb{R}}^{N}\rightarrow {\mathbb{R}}$
which is concave and nondecreasing we can consider $U(x)=\Gamma
(u_{1}(x^{1}),\dots ,u_{N}(x^{N}))$, in the spirit of Liebrich and Svindland
(2019) \cite{LiebrichSvindland19} Definition 4, as a natural candidate for $%
U $. Alternatively, $U$ is a counterpart to the multivariate loss function $%
\ell $ considered e.g. in \cite{Drapeau} in the framework of Systemic Risk
Measures. Just setting $U(x)=-\ell (-x),$ $x\in {\mathbb{R}}^{N},$ produces
a natural candidate for our $U$ starting from a loss function $\ell $. The
difference between loss functions and utility is not conceptual here, being
instead just an effect of considering as positive amounts losses (as in the
case of $\ell $) or gains (as in the case of standard utility functions).} 
Multivariate utility functions could be employed also to describe the case of a single firm having investments in $N$ units, where the interconnections among the $N$ desks are relevant.

We will impose on our multivariate utility function $U$ conditions which play the same role of Inada conditions in the univariate case. Examples of utility functions
satisfying our assumptions are collected in Section \ref{mSORTEsecexamples}.
Notice that the setup and results in \cite{BDFFM} can be recovered from the
ones in this paper by setting $U(x)=\sum_{j=1}^Nu_j(x^j),x\in{\mathbb{R}}^N$%
, as described in Section \ref{mSORTEseccomparison}. 
%We also consider
%alternative assumptions to our main ones, as explained in Section \ref%
%{mSORTEreplacing}. 
\medskip

In this paper we introduce and analyze the following concept. A mSORTE is a triple $(%
\widetilde{Y}_{X},a_{X},{\mathbb{Q}}_{X})$ where:

\begin{itemize}
\item $(\widetilde{Y}_{X},a_{X})$ solve the double optimization problem 
\begin{equation}
\sup \left\{ \sup_{\widetilde{Y}}\mathbb{E}_{\mathbb{P}}\left[ U\left( a+X+%
\widetilde{Y}-{\mathbb{E}_{{{\mathbb{Q}}}_{X}}[\widetilde{Y}]}\right) \right]
\mid a\in {\mathbb{R}}^{N},\,\sum_{j=1}^{N}a^{j}=A\right\} ;
\label{mSORTEeqintrostrong}
\end{equation}

\item {the pricing vector ${\mathbb{Q}}_{X}$ is selected in such a way
that the optimal solution $\widetilde{Y}_{X}$ belongs to the set of feasible
allocations $\mathcal{B}$ and verifies the clearing condition \eqref{mSORTEclearing1}}.
\end{itemize}

We prove existence and uniqueness of a mSORTE.
Quite remarkably, this generalization of a SORTE allows us to introduce and
to study a Nash Equilibrium property for a mSORTE, as shown in Section \ref%
{mSORTESECMAIN} (see Theorem \ref{mSORTEthmmsorteexists}). 
In Section \ref{secexplicitformulas} we provide an example of a class of exponential multivariate utility functions with the explicit computations of the mSORTE.
\bigskip

From a technical perspective, our results can be considered as consequences
of Theorem \ref{mSORTEthmmaingeneral1} and Theorem \ref%
{mSORTEthmmaingeneral2}. The proof of Theorem \ref{mSORTEthmmaingeneral1},
which is the most lengthy and complex, is split in several steps and
collected in Section \ref{mSORTESecProof}. We use a Koml\'{o}s- type
argument, in contrast with the gradient approach in \cite{BDFFM}. This
allows us to obtain existence of optimizers for both the primal and the dual
problems without requiring differentiability of $U(\cdot )$, which is a
rather unusual result in the literature. We also remark that, differently
from \cite{BDFFM}, we need to construct the dual system $(M^{\Phi },K_{\Phi
})$, where $M^{\Phi }$ is a multivariate Orlicz Heart having as topological
dual space the K\"{o}the dual $K_{\Phi }$. Here, we denote with $\Phi :({%
\mathbb{R}}_{+})^{N}\rightarrow {\mathbb{R}}$ the multivariate Orlicz
function $\Phi (x):=U(0)-U(-x)$ associated to the multivariate utility
function $U$. Details of this construction are provided in Section \ref%
{mSORTESecorlicz}.

\bigskip

As already mentioned, this paper is a somehow natural prosecution of \cite%
{BDFFM}. Thus, as far as the conceptual aspects are concerned, we refer to
the literature review in \cite{BDFFM} for extended comments. Here, we limit
ourselves to mentioning that \cite{BDFFM}, and so indirectly this work,
originated from the systemic risk approach developed in Biagini et al.
(2019) \cite{BFFMB} and (2020) \cite{bffm}. For an exhaustive overview on
the literature on systemic risk, see Fouque and Langsam (2013) \cite%
{JP_Langsam} and Hurd (2016) \cite{Hurd}.\newline
Risk sharing equilibria have been studied in Borch (1962) \cite{Borch}, B%
\"{u}hlmann ((1980) \cite{Buhlmann1} and (1984) \cite{Buhlmann}) and B\"{u}%
hlmann and Jewell (1979) \cite{BJ79}. In Barrieu and El Karoui (2005) \cite%
{BE05} inf-convolution of convex risk measures has been introduced as a
fundamental tool for studying risk sharing. Further developments in this
direction have been obtained in Acciaio (2007) \cite{Acciaio}, Filipovi\'{c}
and Svindland (2008) \cite{Filipovic_Svindland}, Jouini et al. (2008) \cite%
{JST07}, Mastrogiacomo and Rosazza Gianin (2015) \cite{MRG14}. Among other
works on risk sharing are also Dana and Le Van (2010) \cite{Dana_LeVan},
Embrechts et al. (2020) \cite{ELW182}, Embrechts et al. (2018) \cite{ELW18},
Filipovi\'{c} and Kupper (2008) \cite{Kupper}, Heath and Ku (2004) \cite%
{Heath_Ku}, Tsanakas (2009) \cite{Tsanakas}, Weber (2018) \cite{Weber17}.
Recent further extensions have been obtained in Liebrich and Svindland
(2019) \cite{LS18}. We refer to Carlier and Dana, (2013) \cite{Carlier1} and
(2012) \cite{Carlier2}, for Risk sharing procedures under multivariate
risks. Regarding multivariate utility functions, which have been widely
exploited in the study of optimal investment under transaction costs, we
cite Campi and Owen (2011) \cite{Campi}, Deelstra et al. (2001) \cite%
{Deelestra}, Kamizono (2004) \cite{Kamizono1}, Bouchard and Pham (2005) \cite%
{Bouchard} and references therein.

\bigskip

The paper is organized as follows. Section \ref{mSORTESecorlicz} is a short
account on multivariate Orlicz spaces and on the relevant properties needed
in the sequel of the paper. The multivariate utility functions used in this
paper are introduced is Section \ref{mSORTEsecsetupassumptions}, together
with our setup and assumptions. The core of the paper is Section \ref%
{mSORTEsecmsorte}, where we formally present the key concepts and provide
our main results. Section \ref{mSORTEsecutmaxanddual} collects some
preliminary results on duality and utility maximization. Most of the proofs
are deferred to Section \ref{mSORTESecProof}. The Appendix collects some
additional technical results and some of the proofs related to Section \ref%
{mSORTESecorlicz}.

\section{Preliminary notations and multivariate Orlicz spaces}

\label{mSORTESecmultiut} Let $(\Omega ,\mathcal{F},{\mathbb{P}}{)}$ be a
probability space and consider the following set of probability vectors on $%
(\Omega ,\mathcal{F})$ 
\begin{equation*}
\mathcal{P}^{N}:=\left\{ {{\mathbb{Q}}=[}{\mathbb{Q}}^{1},...,{\mathbb{Q}}%
^{N}{]}\mid \text{ such that }{\mathbb{Q}}^{j}\ll {\mathbb{P}}\text{ for all 
}j=1,...,N\right\} ,\quad N\in \mathbb{N},\text{ }N\geq 1.
\end{equation*}

For a vector of probability measures ${{\mathbb{Q}}}$ we write ${{\mathbb{Q}}%
}\ll {\mathbb{P}}$ to denote ${\mathbb{Q}}^{1}{\ll {\mathbb{P}}},\dots ,{%
\mathbb{Q}}^{N}\ll {\mathbb{P}}$. Similarly for ${{\mathbb{Q}}}\sim {\mathbb{%
P}}$. For ${\mathbb{Q}}\in \mathcal{P}^{1}$ let 
\begin{equation*}
L^{0}({\mathbb{Q}}{):=}L^{0}(\Omega ,\mathcal{F},{\mathbb{Q}};\mathbb{R}{)}%
\,\,\,\,\,\,\,\,L^{1}({\mathbb{Q}}{):=}L^{1}(\Omega ,\mathcal{F},{\mathbb{Q}}%
;\mathbb{R}{)}\,\,\,\,\,\,\,\,L^{\infty }({\mathbb{Q}}):=L^{\infty }(\Omega ,%
\mathcal{F},{\mathbb{Q}};\mathbb{R}{)}
\end{equation*}%
be the vector spaces of (equivalence classes of) ${\mathbb{Q}}$-a.s. finite, 
${\mathbb{Q}}$-integrable and ${\mathbb{Q}}$-essentially bounded random
variables respectively, and set $L_{+}^{p}({\mathbb{Q}})=\left\{ Z\in L^{p}({%
\mathbb{Q}})\middle|Z\geq 0\,{\mathbb{Q}}-\text{a.s.}\right\} $ and $%
L^{p}(\Omega ,\mathcal{F},{\mathbb{Q}};\mathbb{R}^{N}{)}=(L^{p}({\mathbb{Q}}%
))^{N}$, for $p\in \{0,1,\infty \}$. For ${{\mathbb{Q}}}=[{\mathbb{Q}}%
^{1},\dots ,{\mathbb{Q}}^{N}]\in \mathcal{P}^{N}$ and $p\in \{0,1,\infty \}$
define%
\begin{equation*}
L^{p}({{\mathbb{Q}}):=}L^{p}({\mathbb{Q}}^{1}{)\times ...\times }L^{p}({%
\mathbb{Q}}^{N})\,,\,\,\,\,\,\,\,\,L_{+}^{p}({{\mathbb{Q}}):=}L_{+}^{p}({%
\mathbb{Q}}^{1}{)\times ...\times }L_{+}^{p}({\mathbb{Q}}^{N})\,
\end{equation*}%
%
%
%
%
%\begin{equation*}
%L^\infty({\probq}):=L^\infty(\probq^1)\times\dots\times
%L^\infty(\probq^N)\,,\,\,\,\,\,\,\,\,L^\infty_+({\probq}):=L^\infty_+(\probq^1)%
%\times\dots\times L^\infty_+(\probq^N).
%\end{equation*}
and write $\mathbb{E}_\probq[Z]=\left[\mathbb{E}_{\probq^1}[Z^1],\dots,\mathbb{E}_{\probq^N}[Z^N]\right]$ for  $Z\in L^1(\probq)$.
Given a vector $y\in {\mathbb{R}}^{N}$ and $n\in \{1,\dots ,N\}$ we will
denote by $y^{[-n]}$ the vector in $\mathbb{R}^{N-1}$ obtained suppressing
the $n$-th component of $y$ for $N\geq 2$ (and $y^{[-n]}=\emptyset$ if $N=1$%
) and we set 
\begin{equation}
\lbrack y^{[-n]};z]:=\left[ y^{1},\dots ,y^{n-1},z,y^{n+1},\dots ,y^{N}%
\right] \,\in \,\mathbb{R}^{N}\text{,\quad for }z\in \mathbb{R}.
\label{mSORTEVectorY}
\end{equation}
We will write ${\mathbb{R}}_{+}:=[0,+\infty)$ and ${\mathbb{R}}%
_{++}:=(0,+\infty)$, $\langle x,y\rangle=\sum_{j=1}^Nx^jy^j$ for the usual
inner product of vectors $x,y\in{\mathbb{R}}^N$. For a vector $x\in {\mathbb{%
R}}^N$, $(x)^\pm$ denote the vectors of positive, negative parts
respectively of the components of $x$. Same applies to $\left|x\right|$.

\subsection{Multivariate Orlicz spaces}

\label{mSORTESecorlicz}

Given a univariate Young function $\phi :{\mathbb{R}}_{+}\rightarrow {%
\mathbb{R}}$ we can associate its conjugate function $\phi ^{\ast
}(y):=\sup_{x\in {\mathbb{R}_{+}}}\left( xy-\phi (x)\right) $ for $y\in {%
\mathbb{R}}_{+}$. As in \cite{RaoRen}, we can associate to both $\phi $ and $%
\phi ^{\ast }$ the Orlicz spaces and Hearts $L^{\phi },M^{\phi },L^{\phi
^{\ast }},M^{\phi ^{\ast }}$. For univariate utility functions, the economic
motivation and the mathematical convenience of using Orlicz spaces theory in
utility maximization problems were shown in \cite{BF08AAP}. We now introduce
multivariate Orlicz functions and spaces induced by multivariate utility
functions. The following definition is a slight modification of the one in
Appendix B of \cite{Drapeau}.

\begin{definition}
\label{mSORTEdeforliczfunct} A function $\Phi :({\mathbb{R}}%
_{+})^{N}\rightarrow {\mathbb{R}}$ is said to be a multivariate Orlicz
function if it is null in $0$, convex, continuous, increasing in the usual
partial order and satisfies: there exist $A>0,B$ constants such that $\Phi
(x)\geq A\sum_{j=1}^{N}x^{j}-B\,\,\forall x\in ({\mathbb{R}}_{+})^{N}$.
\end{definition}

%\begin{remark}
%We will specify in Section \ref{mSORTEsecsetupassumptions} assumptions on $U$
%which guarantee that $\Phi $ is a multivariate Orlicz function, which
%generates multivariate Orlicz space and Orlicz heart.
%\end{remark}

For a given multivariate Orlicz function $\Phi $ we define, as in \cite%
{Drapeau}, the Orlicz space and the Orlicz Heart respectively: 
\begin{equation*}
L^{\Phi }:=\left\{ X\in \left( L^{0}\left( (\Omega ,\mathcal{F},{\mathbb{P}}%
);[-\infty ,+\infty ]\right) \right) ^{N}\mid \,\exists \,\lambda \in
(0,+\infty ),\mathbb{E}_{\mathbb{P}}\left[ \Phi (\lambda \left\vert
X\right\vert )\right] <+\infty \right\} ,
\end{equation*}%
\begin{equation}
M^{\Phi }:=\left\{ X\in \left( L^{0}\left( (\Omega ,\mathcal{F},{\mathbb{P}}%
);[-\infty ,+\infty ]\right) \right) ^{N}\mid \,\forall \,\lambda \in
(0,+\infty ),\mathbb{E}_{\mathbb{P}}\left[ \Phi (\lambda \left\vert
X\right\vert )\right] <+\infty \right\} ,  \label{MPhi}
\end{equation}%
where $\left\vert X\right\vert :=\left[ \left\vert X^{j}\right\vert \right]
_{j=1}^{N}$ is the componentwise absolute value, and $L^{0}\left( (\Omega ,%
\mathcal{F},{\mathbb{P}});[-\infty ,+\infty ]\right) $ is the set of
equivalence classes of $[-\infty ,+\infty ]$-valued $\mathcal{F}$-measurable
functions. We introduce the Luxemburg norm as the functional 
\begin{equation*}
\left\Vert X\right\Vert _{\Phi }:=\inf \left\{ \lambda >0\mid \mathbb{E}_{%
\mathbb{P}}\left[ \Phi \left( \frac{1}{\lambda }\left\vert X\right\vert
\right) \right] \leq 1\right\}
\end{equation*}%
defined on $\left( L^{0}\left( (\Omega ,\mathcal{F},{\mathbb{P}});[-\infty
,+\infty ]\right) \right) ^{N}$ and taking values in $[0,+\infty ]$.

\begin{lemma}
\label{mSORTElemmasummary} Let $\Phi$ be a multivariate Orlicz function.
Then:

\begin{enumerate}
\item {the Luxemburg norm is finite on $X$ if and only if $X\in L^\Phi$; }

\item {the Luxemburg norm is in fact a norm on $L^\Phi$, which makes it a
Banach space; }

\item {$M^\Phi$ is a vector subspace of $L^\Phi$, closed under Luxemburg
norm, and is a Banach space itself if endowed with the Luxemburg norm; }

\item {$L^\Phi$ is continuously embedded in $(L^1({\mathbb{P}}))^N$; }

\item {convergence in Luxemburg norm implies convergence in probability; }

\item {$X\in L^\Phi$, $\left|Y^j\right|\leq \left|X^j\right|\,\forall
j=1,\dots, N$ implies $Y\in L^\Phi$, and the same holds for the Orlicz
Heart. In particular $X\in L^\Phi$ implies $X^\pm\in L^\Phi$ and the same
holds for the Orlicz Heart;}

\item {the topology of $\left\Vert \cdot \right\Vert _{\Phi }$ on $M^{\Phi }$
is order continuous (see \cite{Edgar} Definition 2.1.13 for the definition)
with respect to the componentwise ${\mathbb{P}}$-a.s. order and $M^{\Phi }$
is the closure of $(L^{\infty }({\mathbb{P}}))^{N}$ in Luxemburg norm; }

\item {$M^{\Phi }$ and $L^{\Phi }$ are Banach lattices if endowed with the
topology induced by $\left\Vert \cdot \right\Vert _{\Phi }$ and with the
componentwise ${\mathbb{P}}$-a.s. order.}
\end{enumerate}
\end{lemma}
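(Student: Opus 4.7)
The plan is to prove the eight items essentially in the stated order, leveraging a uniform toolkit: the lower growth bound $\Phi(x)\geq A\sum_j x^j-B$, convexity of $\Phi$, componentwise monotonicity, and the dominated convergence theorem applied to the Orlicz modular $X\mapsto \mathbb{E}_{\mathbb{P}}[\Phi(\lambda|X|)]$. The strategy is to reduce most statements to the classical univariate Orlicz theory by working on componentwise absolute values, and to harvest the $L^1$-embedding from the growth bound as soon as possible, since it drives both topology-to-probability convergence and completeness.

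For (1), the forward direction is immediate, and for the converse, if $\mathbb{E}_{\mathbb{P}}[\Phi(\lambda_0|X|)]<\infty$ for some $\lambda_0>0$, convexity with $\Phi(0)=0$ gives $\Phi(t\lambda_0|X|)\leq t\,\Phi(\lambda_0|X|)$ for $t\in(0,1]$, so DCT makes this integral vanish as $t\downarrow 0$, which bounds the Luxemburg norm. For (2), positive homogeneity is by construction, the triangle inequality is the usual Minkowski argument from convexity of $\Phi$ applied to $\frac{|X+Y|}{\|X\|_\Phi+\|Y\|_\Phi}$, and definiteness uses the growth bound: if $X\neq 0$ on a set of positive measure, then $\Phi(\lambda|X|)\to\infty$ on that set as $\lambda\to\infty$, so by Fatou $\mathbb{E}_{\mathbb{P}}[\Phi(\lambda|X|)]\to\infty$, contradicting $\|X\|_\Phi=0$. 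For (4), if $\lambda>\|X\|_\Phi$ then $\mathbb{E}_{\mathbb{P}}[\Phi(|X|/\lambda)]\leq 1$, and combining with the growth bound yields $\sum_{j=1}^N\mathbb{E}_{\mathbb{P}}[|X^j|]\leq\lambda(1+B)/A$; letting $\lambda\downarrow\|X\|_\Phi$ gives the continuous embedding into $(L^1(\mathbb{P}))^N$. Then (5) is immediate since $L^1$-convergence implies convergence in probability. Completeness in (2) follows by the standard trick: for a Cauchy $(X_n)$, extract a subsequence with $\|X_{n_{k+1}}-X_{n_k}\|_\Phi<2^{-k}$, use the $L^1$-embedding to get a.s. convergence of $\sum_k|X_{n_{k+1}}-X_{n_k}|$, hence a candidate limit $X$, and use Fatou on the modular to conclude $\|X_n-X\|_\Phi\to 0$.

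For (6), the componentwise monotonicity of $\Phi$ gives $\Phi(\lambda|Y|)\leq\Phi(\lambda|X|)$ pointwise, so both modular finiteness and the Luxemburg bound transfer, which also handles $M^\Phi$ since the defining condition is universal in $\lambda$. For (3), subspace and closure under the norm follow by the convexity bound $\Phi(\lambda|X|)\leq\tfrac12\Phi(2\lambda|X-X_n|)+\tfrac12\Phi(2\lambda|X_n|)$ where $\|X-X_n\|_\Phi<1/(2\lambda)$ controls the first term and $X_n\in M^\Phi$ the second; completeness of $M^\Phi$ is inherited from $L^\Phi$ by closedness.

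The main substance sits in (7). For order continuity, let $X_n\downarrow 0$ in $M^\Phi$ (so $X_n\geq 0$ decreasing to $0$ componentwise a.s.); for any $\varepsilon>0$, $\Phi(X_n/\varepsilon)\downarrow 0$ pointwise and is dominated by $\Phi(X_1/\varepsilon)\in L^1(\mathbb{P})$ because $X_1\in M^\Phi$, so DCT yields $\mathbb{E}_{\mathbb{P}}[\Phi(X_n/\varepsilon)]\to 0$, which gives $\|X_n\|_\Phi\leq\varepsilon$ eventually. For the density of $(L^\infty(\mathbb{P}))^N$ in $M^\Phi$, given $X\in M^\Phi$ take the componentwise truncation $X_n^j:=X^j\mathbf{1}_{|X^j|\leq n}$, note $|X-X_n|\downarrow 0$ componentwise with $|X-X_n|\leq|X|$, and apply the same DCT argument on the modular $\mathbb{E}_{\mathbb{P}}[\Phi(|X-X_n|/\varepsilon)]$ dominated by $\Phi(|X|/\varepsilon)\in L^1(\mathbb{P})$; the reverse inclusion is automatic from closedness of $M^\Phi$. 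Finally (8) is just the combination of (6) (the norm is a lattice norm) with the Banach property already established. The main obstacle I expect is organizing (7) cleanly — order continuity and the $L^\infty$-density both hinge on DCT applied to the modular, and one must be careful that the dominating function is genuinely in $L^1(\mathbb{P})$, which is precisely where the distinction between $L^\Phi$ and $M^\Phi$ becomes visible.
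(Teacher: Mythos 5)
Your proposal is correct and, in substance, it fills in by hand exactly what the paper's proof delegates to references: the paper disposes of items (1)--(5) by pointing to \cite{Drapeau}, calls (6) trivial, handles (7) via the Dominated Convergence Theorem plus Theorem 1.1.3 of \cite{Edgar}, and calls (8) evident. Your arguments for (1), (2), (4), (5), the completeness of $L^{\Phi}$, (6), (3), the density of $(L^{\infty}({\mathbb{P}}))^{N}$ in $M^{\Phi}$, and (8) are the standard Orlicz-space proofs and are sound; in particular you correctly isolate the role of the lower growth bound $\Phi(x)\geq A\sum_{j}x^{j}-B$ (which forces the $L^{1}$-embedding and definiteness of the norm) and the role of the modular domination $\Phi(|X|/\varepsilon)\in L^{1}({\mathbb{P}})$, which is precisely where membership in $M^{\Phi}$ rather than $L^{\Phi}$ is used.

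The one place where your argument is incomplete relative to the precise statement is item (7). Your DCT argument for $X_{n}\downarrow 0$ establishes \emph{sequential} order continuity only, whereas order continuity in the sense of \cite{Edgar} Definition 2.1.13 is a statement about downward directed nets. The paper is explicit about this gap: it obtains sequential order continuity from DCT and then invokes Theorem 1.1.3 of \cite{Edgar} to upgrade to order continuity; the upgrade rests on the fact that spaces of (equivalence classes of) measurable functions over a probability space have the countable sup property, so every downward directed net decreasing to $0$ admits a decreasing sequence with the same infimum. You should either cite this fact or restrict the claim to sequences; as written, your proof of (7) proves slightly less than what is asserted. Everything else stands.
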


\begin{proof}
Claims (1)-(5) follow as in \cite{Drapeau}. (6) is trivial from the
definitions. As to (7), sequential order continuity is an application of
Dominated Convergence Theorem, and order continuity follows from Theorem
1.1.3 in \cite{Edgar}. (8) is evident from the previous items.
\end{proof}

Now we need to work a bit on duality.

\begin{definition}
The K\"{o}the dual $K_{\Phi }$ of the space $L^{\Phi }$ is defined as 
\begin{equation}
K_{\Phi }:=\left\{ Z\in \left(L^{0}\left( (\Omega ,\mathcal{F},{\mathbb{P}}%
);[-\infty ,+\infty ]\right)\right)^{N} \mid \sum_{j=1}^{N}X^{j}Z^{j}\in
L^{1}({\mathbb{P}}),\,\forall \,X\in L^{\Phi }\right\} \,.  \notag
\end{equation}
\end{definition}

\begin{proposition}
\label{mSORTEpropdual1} $K_{\Phi }$ can be identified with a subspace of the
topological dual of $L^{\Phi }$ and is a subset of $(L^{1}({\mathbb{P}}%
))^{N} $.
\end{proposition}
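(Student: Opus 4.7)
The statement has two parts: (i) $K_{\Phi}\subset (L^{1}({\mathbb{P}}))^{N}$, and (ii) every $Z\in K_{\Phi}$ induces a \emph{continuous} linear functional on $L^{\Phi}$, and the induced map $Z\mapsto \langle Z,\cdot \rangle$ is injective.

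For (i), the key observation is that constants and, more generally, bounded random vectors lie in $L^{\Phi}$: since $\Phi$ is real-valued, convex, null at $0$ and continuous, $\mathbb{E}_{\mathbb{P}}[\Phi(\lambda|X|)]<\infty$ whenever $X\in (L^{\infty}({\mathbb{P}}))^{N}$ and $\lambda>0$. Fix $Z=(Z^{1},\dots,Z^{N})\in K_{\Phi}$ and an index $j$. Choose $X$ with $X^{i}=0$ for $i\neq j$ and $X^{j}=\mathrm{sgn}(Z^{j})$; this $X$ is bounded, hence belongs to $L^{\Phi}$, so the defining property of $K_{\Phi}$ gives $\sum_{i=1}^{N}X^{i}Z^{i}=|Z^{j}|\in L^{1}({\mathbb{P}})$. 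Hence $Z^{j}\in L^{1}({\mathbb{P}})$ for each $j$.

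For (ii), associate to each $Z\in K_{\Phi}$ the linear map $T_{Z}\colon L^{\Phi}\to L^{1}({\mathbb{P}})$ defined by $T_{Z}X:=\sum_{j=1}^{N}X^{j}Z^{j}$, which is well-defined by the very definition of $K_{\Phi}$. I would prove $T_{Z}$ is continuous by the closed graph theorem: both $L^{\Phi}$ and $L^{1}({\mathbb{P}})$ are Banach spaces (item 2 of Lemma \ref{mSORTElemmasummary} and standard), so it suffices to check the graph is closed. Suppose $X_{n}\to X$ in $L^{\Phi}$ and $T_{Z}X_{n}\to W$ in $L^{1}({\mathbb{P}})$. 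By item 5 of Lemma \ref{mSORTElemmasummary}, $X_{n}\to X$ in probability (componentwise), so along some subsequence $X_{n_{k}}\to X$ ${\mathbb{P}}$-a.s., hence $T_{Z}X_{n_{k}}\to T_{Z}X$ ${\mathbb{P}}$-a.s.; along a further subsequence the $L^{1}$-convergence also gives ${\mathbb{P}}$-a.s.\ convergence to $W$, so $W=T_{Z}X$. Composing with the continuous functional $\mathbb{E}_{\mathbb{P}}[\cdot]\colon L^{1}({\mathbb{P}})\to {\mathbb{R}}$ yields that $X\mapsto \mathbb{E}_{\mathbb{P}}[\sum_{j}X^{j}Z^{j}]$ is a continuous linear functional on $L^{\Phi}$, i.e.\ an element of the topological dual.

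Finally, injectivity of the identification $Z\mapsto \mathbb{E}_{\mathbb{P}}[\langle Z,\cdot \rangle]$ follows because if $Z\neq 0$ in $(L^{0})^{N}$, some component $Z^{j}$ is not ${\mathbb{P}}$-a.s.\ zero, and testing against the bounded vector $X$ with $X^{i}=0$ for $i\neq j$ and $X^{j}=\mathbf{1}_{\{Z^{j}>0\}}-\mathbf{1}_{\{Z^{j}<0\}}$ (which lies in $L^{\Phi}$) produces a nonzero pairing $\mathbb{E}_{\mathbb{P}}[|Z^{j}|]>0$. The only mildly delicate point in the whole argument is the closed graph step, but once convergence in Luxemburg norm is known to imply convergence in probability (already recorded in Lemma \ref{mSORTElemmasummary}), this reduces to a standard subsequence extraction.
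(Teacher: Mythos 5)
Your proof is correct and follows essentially the same route as the paper's: the inclusion $K_{\Phi}\subseteq (L^{1}({\mathbb{P}}))^{N}$ is obtained by testing against bounded sign vectors (the paper uses $[\mathrm{sign}(Z^{j})]_{j=1}^{N}$ all at once, you do it componentwise), and continuity of $X\mapsto\sum_{j}X^{j}Z^{j}$ is established via the Closed Graph Theorem using the fact that Luxemburg-norm convergence implies convergence in probability. Your handling of the closed-graph step is in fact slightly more careful than the paper's (you extract a further subsequence to identify the $L^{1}$-limit $W$ almost surely), and the injectivity remark is a harmless addition.
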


\begin{proof}
See Section \ref{mSORTEsecmultiorliczproofs}.
\end{proof}

By Proposition \ref{mSORTEpropdual1} $K_{\Phi }$ is a normed space which can
be naturally endowed with the dual norm of continuous linear functionals,
which we will denote by 
\begin{equation*}
\left\Vert Z\right\Vert _{\Phi }^{\ast }:=\sup \left\{\mathbb{E}_\mathbb{P} %
\left[ |\sum_{j=1}^{N}X^{j}Z^{j}|\right]\mid \left\Vert X\right\Vert _{\Phi
}\leq 1\right\} \,.
\end{equation*}
This norm will play here the role of the Orlicz norm, and the relation
between the two norms $\left\| \cdot\right\|_\Phi$ and $\left\|
\cdot\right\|_\Phi^*$ is well understood in the univariate case (see Theorem
2.2.9 in \cite{Edgar}). The following Proposition summarizes useful
properties which show how the K\"{o}the dual can play the role of the Orlicz
space $L^{\Phi ^{\ast }}$ for $M^{\Phi }$ in univariate theory, and are the
counterparts to Corollary 2.2.10 in \cite{Edgar}.

\begin{proposition}
\label{mSORTEthmsummarykoethe} The following hold:

\begin{enumerate}
\item {\label{mSORTElemmaluxenorm} $K_{\Phi }=\left\{ Z\in \left(L^{0}\left(
(\Omega ,\mathcal{F},{\mathbb{P}});[-\infty ,+\infty ]\right)\right)^{N}
\mid \sum_{j=1}^{N}X^{j}Z^{j}\in L^{1}({\mathbb{P}}),\,\forall \,X\in
M^{\Phi }\right\} \,;$}

\item {\label{mSORTEthmdualhearth} the topological dual of $(M^{\Phi
},\left\Vert \cdot \right\Vert _{\Phi })$ is $(K_{\Phi },\left\Vert \cdot
\right\Vert _{\Phi }^{\ast })$;}

\item {\label{mSORTElemmakoetheok} suppose $L^{\Phi }=L^{\Phi _{1}}\times
\dots \times L^{\Phi _{N}}\,.$ Then we have that $K_{\Phi }=L^{\Phi
_{1}^{\ast }}\times \dots \times L^{\Phi _{N}^{\ast }}$ }.
\end{enumerate}
\end{proposition}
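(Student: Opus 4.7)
The plan is to establish the three claims in order, with (2) relying on (1).

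For (1), the inclusion $K_\Phi \subseteq \{Z \mid \sum_{j=1}^N X^j Z^j \in L^1({\mathbb{P}}),\,\forall X \in M^\Phi\}$ is immediate from $M^\Phi \subseteq L^\Phi$. For the converse, fix $Z$ in the right-hand set and $X \in L^\Phi$; the goal is $\sum_j X^j Z^j \in L^1({\mathbb{P}})$. Using solidness of $M^\Phi$ (Lemma \ref{mSORTElemmasummary}(6)) to absorb signs (for $Y \in M^\Phi$ the vector $(Y^j\,\mathrm{sgn}(Z^j))_j$ is still in $M^\Phi$, so $|Z|$ inherits the pairing property from $Z$), one can replace $(X,Z)$ by $(|X|,|Z|)$ throughout, reducing to showing $\mathbb{E}_{\mathbb{P}}[\sum_j X^j Z^j] < \infty$ under the assumption $X,Z \geq 0$. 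Truncate with $X_n^j := X^j \wedge n \in L^\infty({\mathbb{P}}) \subseteq M^\Phi$; by solidness $\|X_n\|_\Phi \leq \|X\|_\Phi < \infty$. The functional $T_Z(Y) := \mathbb{E}_{\mathbb{P}}[\sum_j Y^j Z^j]$ is a positive linear functional that is everywhere real-valued on the Banach lattice $M^\Phi$ (Lemma \ref{mSORTElemmasummary}(8)), hence automatically norm continuous by the classical Banach lattice fact that every such functional is bounded. Thus $T_Z(X_n)$ is bounded in $n$, and monotone convergence gives $\mathbb{E}_{\mathbb{P}}[\sum_j X^j Z^j] = \lim_n T_Z(X_n) < \infty$, proving $Z \in K_\Phi$.

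For (2), one inclusion is clear: each $Z \in K_\Phi$ induces a continuous linear functional $L_Z(X) := \mathbb{E}_{\mathbb{P}}[\sum_j X^j Z^j]$ on $M^\Phi$ whose operator norm equals $\|Z\|_\Phi^*$. Conversely, given $\mu \in (M^\Phi)^*$, for each $j = 1,\dots,N$ define the set function $\nu_j(A) := \mu(\mathbf{1}_A e_j)$ on $\mathcal{F}$, where $e_j$ is the $j$-th standard basis vector of ${\mathbb{R}}^N$. Finite additivity is trivial, $\sigma$-additivity follows from order continuity of the Luxemburg norm on $M^\Phi$ (Lemma \ref{mSORTElemmasummary}(7)), and $\nu_j \ll {\mathbb{P}}$ is immediate, so Radon-Nikodym yields $Z^j \in L^1({\mathbb{P}})$ with $\mathrm{d}\nu_j/\mathrm{d}{\mathbb{P}} = Z^j$. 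Linearity and approximation by simple functions give $\mu(X) = \mathbb{E}_{\mathbb{P}}[\sum_j X^j Z^j]$ first for $X \in (L^\infty({\mathbb{P}}))^N$; density of $(L^\infty({\mathbb{P}}))^N$ in $M^\Phi$ (Lemma \ref{mSORTElemmasummary}(7)) combined with a truncation/dominated-convergence argument then extends the identity to all $X \in M^\Phi$. Part (1) ensures $Z \in K_\Phi$ and $\mu = L_Z$, and the norms agree by definition.

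For (3), the product decomposition $L^\Phi = L^{\Phi_1} \times \cdots \times L^{\Phi_N}$ lets us test $Z \in K_\Phi$ against vectors supported on a single coordinate: if $X^j \in L^{\Phi_j}$ then $(0,\dots,X^j,\dots,0) \in L^\Phi$, so $X^j Z^j \in L^1({\mathbb{P}})$ and therefore $Z^j$ lies in the univariate K\"othe dual of $L^{\Phi_j}$, which equals $L^{\Phi_j^*}$ by the classical result (e.g.~Theorem 2.2.9 in \cite{Edgar}). Conversely, if $Z^j \in L^{\Phi_j^*}$ for every $j$, univariate H\"older gives $X^j Z^j \in L^1({\mathbb{P}})$ for $X^j \in L^{\Phi_j}$, so summing yields $Z \in K_\Phi$.

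The principal technical point is part (1): the nontrivial direction requires upgrading a pairing property against $M^\Phi$ to one against the strictly larger space $L^\Phi$. The key ingredient is the automatic norm continuity of everywhere-defined positive linear functionals on the Banach lattice $M^\Phi$; combined with solidness and truncation, this converts pointwise integrability of the pairing into the uniform bound needed to pass to the monotone-convergence limit. Parts (2) and (3) then follow from a Radon-Nikodym construction plus density, and a coordinate-by-coordinate reduction to univariate Orlicz duality, respectively.
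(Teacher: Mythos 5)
Your proof is correct and follows essentially the same route as the paper's: for Item 1 you reduce to nonnegative data, obtain a uniform bound on $M^{\Phi}$ by an automatic-continuity argument, and pass to $L^{\Phi}$ by monotone convergence; for Item 2 you use Radon--Nikodym together with order continuity of the Luxemburg norm; and for Item 3 you reduce coordinatewise to univariate K\"othe/Orlicz duality. The only real difference is cosmetic: where you invoke automatic norm-continuity of positive linear functionals on the Banach lattice $M^{\Phi}$, the paper obtains the same uniform bound via a Closed Graph Theorem argument.
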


\begin{proof}
See Section \ref{mSORTEsecmultiorliczproofs}.
\end{proof}

We now provide an example connecting the multivariate theory to the
univariate classical one.

\begin{remark}
\label{mSORTERemMPHI}Even thought we will not make this assumption in the
rest of the paper, suppose in this Remark that $\Phi (x)=\sum_{j=1}^{N}\Phi
_{j}(x^{j})$ for univariate Orlicz functions, that is each separately
satisfying Definition \ref{mSORTEdeforliczfunct} for $N=1$. Then we could
consider the multivariate spaces $L^{\Phi }$ and $M^{\Phi }$ as above or we
could take $L^{\Phi _{1}}\times \dots \times L^{\Phi _{N}}$ and $M^{\Phi
_{1}}\times \dots \times M^{\Phi _{N}}$.

As shown in Section \ref{mSORTEsecmultiorliczproofs}, the following identity
between sets holds: 
\begin{equation*}
M^{\Phi }=M^{\Phi _{1}}\times \dots \times M^{\Phi _{N}}\quad \text{and}%
\quad L^{\Phi }=L^{\Phi _{1}}\times \dots \times L^{\Phi _{N}}
\end{equation*}%
and furthermore 
\begin{equation}
\frac{1}{N}\sum_{j=1}^{N}\left\Vert X^{j}\right\Vert _{\Phi _{j}}\leq
\left\Vert X\right\Vert _{\Phi }\leq N\sum_{j=1}^{N}\left\Vert
X^{j}\right\Vert _{\Phi _{j}}\,.  \label{mSORTEnormequiv}
\end{equation}%
Observe that in the setup of this Remark, from Proposition \ref%
{mSORTEthmsummarykoethe} Item 3, we have 
\begin{equation*}
K_{\Phi }=L^{\Phi _{1}^{\ast }}\times \dots \times L^{\Phi _{N}^{\ast }}\,.
\end{equation*}
\end{remark}

\section{Setup and assumptions}

\label{mSORTEsecsetupassumptions}

\begin{definition}
\label{mSORTEdefmultiutil} We say that $U:{\mathbb{R}}^{N}\rightarrow {%
\mathbb{R}}$ is a \textbf{multivariate utility function} if it is strictly
concave and strictly increasing with respect to the partial componentwise
order. When $N=1$ we will use the term univariate utility function instead.
For a multivariate utility function $U$ we define the convex conjugate in
the usual way by 
\begin{equation}
V(y):=\sup_{x\in {\mathbb{R}}^{N}}\left( U(x)-\left< x,y\right> \right) . 
\notag
\end{equation}
\end{definition}

Observe that by definition $U(x)\leq \left\langle x,y\right\rangle +V(y)$
for every $x,y\in {\mathbb{R}}^{N}$, and $V(\cdot )\geq U(0)$ that is $V$ is
lower bounded.

\begin{definition}
For a multivariate utility function $U$, we define the function $\Phi $ on $(%
{\mathbb{R}}_{+})^{N}$ by 
\begin{equation}
\Phi (x):=U(0)-U(-x)\,.  \label{mSORTEassocorlicz}
\end{equation}
\end{definition}

\begin{remark}
\label{mSORTEreminada} The well known Inada conditions, for
(one-dimensional) concave increasing utility functions $u:{\mathbb{R}}%
\rightarrow {\mathbb{R}}$, have an evident economic significance and are
very often assumed to hold true in order to solve utility maximization
problems. 
\begin{equation*}
\text{Inada}(+\infty )\text{:\quad }\lim_{x\uparrow +\infty }\frac{u(x)}{x}%
=0\,;\,\,\,\,\text{Inada}(-\infty )\text{:\quad }\lim_{x\downarrow -\infty }%
\frac{u(x)}{x}=+\infty \,.
\end{equation*}%
As it is easy to check, they can be equivalently rewritten as:%
\begin{eqnarray}
&&\text{Inada}(+\infty \text{)}\text{:\quad }\forall \varepsilon >0\text{ \ }%
\exists k_{\varepsilon }\in {\mathbb{R}}:u(x)\leq \varepsilon
x+k_{\varepsilon }\text{ }\forall x\geq 0,\text{ }x\in \mathbb{R},
\label{mSORTEInada++} \\
&&\text{Inada}(-\infty \text{)}\text{:\quad }\forall A>0\text{ }\exists
k_{A}\in {\mathbb{R}}:u(x)\leq Ax+k_{A}\text{ }\forall x\leq 0\text{, }x\in 
\mathbb{R}.  \notag
\end{eqnarray}%
Consider now the condition weaker than Inada$(-\infty )$:%
\begin{equation}
\exists A>0\text{ }\exists k\in {\mathbb{R}}:u(x)\leq Ax+k\text{ }\forall
x\leq 0\text{, }x\in \mathbb{R}.  \label{mSORTEweakInada}
\end{equation}%
One can again easily check that the two conditions \eqref{mSORTEInada++} and %
\eqref{mSORTEweakInada} are equivalent to the following single statement:
there exists an Orlicz function $\widehat{\Phi }:{\mathbb{R}}_{+}\rightarrow 
{\mathbb{R}}$ and a function $f:{\mathbb{R}}_{+}\rightarrow {\mathbb{R}}$
such that 
\begin{equation}
u(x)\leq -\widehat{\Phi }((x)^{-})+\varepsilon |x|+f(\varepsilon )\,\,\,\,%
\text{ for every }\varepsilon >0.  \label{mSORTEOneInada}
\end{equation}
\end{remark}

Motivated by the above remark, we now introduce a condition that will
replace \eqref{mSORTEOneInada} in the multivariate case and will play the
same role as the Inada in the one dimensional case.

\begin{definition}
\label{mSORTEwellcontrolled} We say that a multivariate utility function $U:{%
\mathbb{R}}^{N}\rightarrow {\mathbb{R}}$ is \textbf{well controlled} if
there exist a multivariate Orlicz function $\widehat{\Phi }:{\mathbb{R}}%
_{+}^{N}\rightarrow {\mathbb{R}}$ and a function $f:{\mathbb{R}}%
_{+}\rightarrow {\mathbb{R}}$ such that 
\begin{equation}
U(x)\leq -\widehat{\Phi }((x)^{-})+\varepsilon \sum_{j=1}^{N}\left\vert
x^{j}\right\vert +f(\varepsilon )\,\,\,\,\text{ for every }\varepsilon >0.
\label{mSORTEcontrolwithphihat}
\end{equation}
\end{definition}

\begin{lemma}
\label{mSORTElemmaconswellcontrol} Suppose that the multivariate utility
function $U$ is well controlled. Then:

\begin{itemize}
\item[(i)] the function $\Phi (x)=U(0)-U(-x),$ $x\in {\mathbb{R}}_{+}^{N},$
defines a multivariate Orlicz function;

\item[(ii)] $L^{\Phi }\subseteq L^{\widehat{\Phi }}$;

\item[(iii)] for every $\varepsilon >0$ small enough there exist a constant $%
b_{\varepsilon }$ such that 
\begin{equation}
U(x)\leq \varepsilon \sum_{j=1}^{N}(x^{j})^{+}+b_{\varepsilon
}\,\,\,\,\,\forall x\in {\mathbb{R}}^{N};  \label{mSORTEcontrolwithepsilon}
\end{equation}

\item[(iv)] there exist $a>0$, $b\in {\mathbb{R}}$ such that 
\begin{equation}
U(x)\leq
a\sum_{j=1}^{N}(x^{j})^{+}-2a\sum_{j=1}^{N}(x^{j})^{-}+b\,\,\,\,\,\forall
\,x\in {\mathbb{R}}^{N}.  \label{mSORTElemmacontrolwithline}
\end{equation}
\end{itemize}
\end{lemma}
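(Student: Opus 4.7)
The plan is to derive everything from \eqref{mSORTEcontrolwithphihat} together with the affine lower bound on a multivariate Orlicz function guaranteed by Definition \ref{mSORTEdeforliczfunct}.

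\textbf{Step 1 (Part (i)).} Check each clause of Definition \ref{mSORTEdeforliczfunct} in turn. That $\Phi(0)=0$ is immediate. Convexity, continuity, and componentwise monotonicity of $\Phi$ on $\mathbb{R}_+^N$ all follow directly from concavity, continuity (of a finite concave function on $\mathbb{R}^N$), and strict monotonicity of $U$ composed with $x\mapsto -x$. The only substantive point is the affine lower bound. Evaluating \eqref{mSORTEcontrolwithphihat} at $-x$ for $x\in\mathbb{R}_+^N$ gives $U(-x)\leq -\widehat\Phi(x)+\varepsilon\sum_j x^j+f(\varepsilon)$, hence
\[
\Phi(x)=U(0)-U(-x)\geq \widehat\Phi(x)-\varepsilon\sum_j x^j-f(\varepsilon)+U(0).
\]
Applying the affine bound on $\widehat\Phi$ (constants $\widehat A,\widehat B$) and choosing any $\varepsilon<\widehat A$ yields $\Phi(x)\geq (\widehat A-\varepsilon)\sum_j x^j-B'$, completing (i).

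\textbf{Step 2 (Part (ii)).} Rearranging the inequality derived in Step 1 gives, for any fixed $\varepsilon>0$,
\[
\widehat\Phi(x)\leq \Phi(x)+\varepsilon\sum_{j=1}^N x^j+C_\varepsilon,\qquad x\in\mathbb{R}_+^N,
\]
with $C_\varepsilon:=f(\varepsilon)-U(0)$. If $X\in L^\Phi$ there exists $\lambda>0$ with $\mathbb{E}_{\mathbb{P}}[\Phi(\lambda|X|)]<\infty$; applying the above inequality at $\lambda|X|$ and taking expectation gives $\mathbb{E}_{\mathbb{P}}[\widehat\Phi(\lambda|X|)]\leq \mathbb{E}_{\mathbb{P}}[\Phi(\lambda|X|)]+\varepsilon\lambda\sum_j\mathbb{E}_{\mathbb{P}}[|X^j|]+C_\varepsilon$. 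Since $L^\Phi\subseteq (L^1(\mathbb{P}))^N$ by Lemma \ref{mSORTElemmasummary}(4), the right side is finite, so $X\in L^{\widehat\Phi}$.

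\textbf{Step 3 (Part (iii)).} Starting again from \eqref{mSORTEcontrolwithphihat} and using $\widehat\Phi((x)^-)\geq \widehat A\sum_j(x^j)^- - \widehat B$,
\[
U(x)\leq -\widehat A\sum_j (x^j)^-+\widehat B+\varepsilon\sum_j (x^j)^++\varepsilon\sum_j (x^j)^-+f(\varepsilon)=\varepsilon\sum_j (x^j)^+-(\widehat A-\varepsilon)\sum_j (x^j)^-+b_\varepsilon',
\]
with $b_\varepsilon':=\widehat B+f(\varepsilon)$. For any $\varepsilon<\widehat A$ the coefficient of $(x^j)^-$ is nonpositive, giving \eqref{mSORTEcontrolwithepsilon}.

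\textbf{Step 4 (Part (iv)).} In the same chain of inequalities, choose $\varepsilon:=\widehat A/3$, set $a:=\widehat A/3$, so that $\widehat A-\varepsilon=2a$ and $\varepsilon=a$. Setting $b:=\widehat B+f(\widehat A/3)$ gives exactly \eqref{mSORTElemmacontrolwithline}.

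There is no serious obstacle: the whole lemma is a sequence of rearrangements of \eqref{mSORTEcontrolwithphihat} combined with the built-in affine lower bound on multivariate Orlicz functions. The only minor care needed is to note $\widehat\Phi\geq 0$ on $\mathbb{R}_+^N$ (from $\widehat\Phi(0)=0$ and monotonicity) where needed, and to make sure that when invoking \eqref{mSORTEcontrolwithphihat} one takes $\varepsilon$ small enough so that $\widehat A-\varepsilon>0$.
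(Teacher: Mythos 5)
Your proof is correct and follows essentially the same route as the paper's: substitute the affine lower bound $\widehat\Phi((x)^-)\geq A\sum_j(x^j)^--B$ from Definition \ref{mSORTEdeforliczfunct} into \eqref{mSORTEcontrolwithphihat} and then specialize $\varepsilon$ ($\varepsilon<A$ for (i) and (iii), $\varepsilon=A/3$ for (iv)), with the same rearrangement and the same appeal to Lemma \ref{mSORTElemmasummary}(4) for (ii). The only difference is cosmetic: you check the routine clauses of Definition \ref{mSORTEdeforliczfunct} explicitly in (i), which the paper leaves implicit.
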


\begin{proof}
Recall that by Definition \ref{mSORTEdeforliczfunct} there exist $A>0,B\in {%
\mathbb{R}}$ such that $\widehat{\Phi }((x)^{-})\geq
A\sum_{j=1}^{N}(x^{j})^{-}-B$. Therefore, if $U$ is well controlled then
\begin{equation}
U(x)\leq -A\sum_{j=1}^{N}(x^{j})^{-}+B+\varepsilon \sum_{j=1}^{N}\left\vert
x^{j}\right\vert +f(\varepsilon )\,\,\,\,\text{ for every }\varepsilon >0.
\label{mSORTEUAb1}
\end{equation}%
(i): It is enough to show the existence of $A_{\Phi }>0$, $B_{\Phi }\in {%
\mathbb{R}}$ such that $\Phi (x)\geq A_{\Phi }\sum_{j=1}^{N}x^{j}-B_{\Phi
}\,\forall x\in {\mathbb{R}}_{+}^{N}$. For any $x\in {\mathbb{R}}_{+}^{N}$,
using \eqref{mSORTEUAb1} with $\varepsilon =\frac{A}{2},$ we obtain%
\begin{equation*}
U(-x)\leq -A\sum_{j=1}^{N}x^{j}+B+\frac{A}{2}\sum_{j=1}^{N}x^{j}+f\left( 
\frac{A}{2}\right) =-\frac{A}{2}\sum_{j=1}^{N}x^{j}+\left( B+f\left( \frac{A%
}{2}\right) \right) ,
\end{equation*}
and the desired inequality follows letting $A_{\Phi }:=\frac{A}{2}$ and $%
B_{\Phi }:=-\left(U(0)-f\left( \frac{A}{2}\right) -B\right).$

(ii): From \eqref{mSORTEcontrolwithphihat} one can easily see that $%
\Phi(x)\geq \widehat{\Phi}(x)-\varepsilon\sum_{j=1}^Nx^j+\text{constant}$
for all $x\in({\mathbb{R}}_+)^N$. The claim then follows recalling that $%
X\in L^\Phi\Rightarrow X\in(L^1({\mathbb{P}}))^N$ by Lemma \ref%
{mSORTElemmasummary} Item 4.

(iii): Follows from \eqref{mSORTEUAb1} if $\varepsilon <A$ and $%
b_{\varepsilon }:=B+f(\varepsilon ).$

(iv): Follows with $a=\frac{A}{3},$ $b=B+f(\frac{A}{3})$ choosing $%
\varepsilon =\frac{A}{3}$ in \eqref{mSORTEUAb1}.
\end{proof}

\begin{remark}
In the proofs in multivariate setting, the inequalities %
\eqref{mSORTEcontrolwithepsilon} and \eqref{mSORTElemmacontrolwithline} will
play the same role that respectively \eqref{mSORTEInada++} and %
\eqref{mSORTEweakInada} have in the unidimensional case. In Proposition \ref%
{mSORTEPropBLambda} we use the aforementioned univariate Inada conditions to
make sure that \eqref{mSORTEcontrolwithphihat} holds when $U$ has a
particular form.
\end{remark}

We observe that the inclusion $L^{\widehat{\Phi }}\subseteq L^{\Phi }$,
opposite of (ii), is a simple integrability requirement, which can be
rephrased as: if for $X\in (L^{0}\left( (\Omega ,\mathcal{F},{\mathbb{P}}%
);[-\infty ,+\infty ]\right) )^{N}$ there exist $\lambda >0$ such that $%
\mathbb{E}_{\mathbb{P}}\left[ \widehat{\Phi }(-\lambda \left\vert
X\right\vert) \right] >-\infty $, then there exists $\alpha >0$ such that $%
\mathbb{E}_{\mathbb{P}}\left[ U(-\alpha \left\vert X\right\vert )\right]
>-\infty $. This request is rather weak and there are many examples of
choices of $U$ that guarantee this condition is met (see Section \ref%
{mSORTEsecexamples}). Without further mention, the following two standing assumptions hold true throughout the
paper.

\begin{standingassumptionI*}
The function $U:{\mathbb{R}}^{N}\rightarrow {\mathbb{R}}$ is a multivariate
utility function which is well controlled (Definition \ref%
{mSORTEwellcontrolled}) and such that for $\widehat{\Phi}$ in %
\eqref{mSORTEcontrolwithphihat} 
\begin{equation}
L^{\widehat{\Phi }}=L^{\Phi }\,.  \label{mSORTEcontrolwithphihat11}
\end{equation}
\end{standingassumptionI*}

\begin{standingassumptionII*}
$\mathcal{B}\subseteq \mathcal{C}_{{\mathbb{R}}}$ is a convex cone, closed
in probability, $0\in \mathcal{B}$, ${\mathbb{R}}^{N}+\mathcal{B}=\mathcal{B}
$. The vector $X$ belongs to the Orlicz Heart $M^{\Phi }.$ 
%\end{standassumption}
\end{standingassumptionII*}

Observe that the Standing Assumption II implies that all constant vectors
belong to $\mathcal{B}$, so that all (deterministic) vector in the form $%
e^{i}-e^{j}$ (differences of elements in the canonical base of ${\mathbb{R}}%
^{N}$) belong to $\mathcal{B}\cap M^{\Phi }$. We recall the following
concept, introduced in \cite{bffm} Definition 5.15, that was already used in 
\cite{BDFFM}.

\begin{definition}
\label{mSORTEdefclosedundertrunc} $\mathcal{B}$ is closed under truncation
if for each $Y\in \mathcal{B}$ there exists $m_{Y}\in \mathbb{N}$ and $%
c_{Y}\in {\mathbb{R}}^{N}$ such that $\sum_{j=1}^{N}Y^{j}=%
\sum_{j=1}^{N}c_{Y}^{j}$ and for all $m\geq m_{Y}$ 
\begin{equation*}
Y_{m}:=Y1_{\{\left\vert Y^{j}\right\vert <m\,\forall j=1,\dots
,N\}}+c_{Y}1_{\Omega \setminus \{\left\vert Y^{j}\right\vert <m\,\forall
j=1,\dots ,N\}}\in \mathcal{B}\,.
\end{equation*}
\end{definition}

\begin{assumption}
\label{mSORTEA1} $\mathcal{B}$ is closed under truncation.
\end{assumption}

As pointed out in \cite{bffm}, $\mathcal{B}=\mathcal{C}_{\mathbb{R}}$ is
closed under truncation. Closedness under truncation property holds true for
a rather wide class of constraints. For a more detailed explanation and
examples, see also \cite{BDFFM} Example 3.17 and Example 4.20. We will also
need the following additional notation.

\begin{enumerate}
\item {For any $A\in {\mathbb{R}}$ we consider the set of feasible random
allocations 
\begin{equation}
\mathcal{B}_{A}:=\mathcal{B}\cap \left\{ Y\in (L^{0}({\mathbb{P}}))^{N}\mid
\sum_{j=1}^{N}Y^{j}\leq A\right\} \subseteq \mathcal{C}_{\mathbb{R}}\,. 
\notag
\end{equation}%
}

\item {$\mathcal{Q}$ is the set of vectors of probability measures ${\mathbb{%
Q}}=[{\mathbb{Q}}^{1},\dots ,{\mathbb{Q}}^{N}]$, with ${\mathbb{Q}}^{j}\ll {%
\mathbb{P}}\,$\ $\forall \,j=1,\dots ,N$, defined by 
\begin{equation}
\mathcal{Q}:=\left\{ {\mathbb{Q}}\mid \left[ \frac{\mathrm{d}{\mathbb{Q}}^{1}%
}{\mathrm{d}{\mathbb{P}}},\dots ,\frac{\mathrm{d}{\mathbb{Q}}^{N}}{\mathrm{d}{%
\mathbb{P}}}\right] \in K_{\Phi },\,\sum_{j=1}^{N}\mathbb{E}_{\mathbb{Q}^{j}}%
\left[ Y^{j}\right] \leq 0\,\text{\ }\forall \,Y\in \mathcal{B}_{0}\cap
M^{\Phi }\right\} \,.  \label{mSORTEdefsetQ}
\end{equation}%
Identifying Radon-Nikodym derivatives and measures in the natural way, this
can be rephrased as: $\mathcal{Q}$ is the set of normalized (i.e. with
componentwise expectations equal to $1$), non negative vectors in the polar
of $\mathcal{B}_{0}\cap M^{\Phi }$, in the dual system $(M^{\Phi },K_{\Phi
}) $. }Observe that $M^{\Phi }\subseteq L^{1}({\mathbb{Q}})$ for all ${%
\mathbb{Q\in }}\mathcal{Q}$ and that $\mathcal{Q}$ depends on the set {$%
\mathcal{B}$.}

\item In the definition of  mSORTE we will adopt the subset 
$\mathcal{Q}_{\mathcal{B},V}\subseteq \mathcal{Q}$ of vectors of probability
measures having \textquotedblleft finite entropy\textquotedblright {\ 
\begin{equation}
\mathcal{Q}_{\mathcal{B},V}:=\left\{ {\mathbb{Q}}\in \mathcal{Q}\mid \mathbb{%
E}_{\mathbb{P}}\left[ V\left( \lambda \frac{\mathrm{d}{\mathbb{Q}}}{\mathrm{d%
}{\mathbb{P}}}\right) \right] <+\infty \text{ for some }\lambda >0\right\}
\label{mSORTEQV}
\end{equation}%
and the set }$\mathcal{L}${\ of the random allocations satisfying the
integrability requirements defined by}%
\begin{equation}
\mathcal{L}:=\bigcap_{{\mathbb{Q}}\in \mathcal{Q}_{\mathcal{B},V}}L^{1}({%
\mathbb{Q}})=\mathcal{L}^{1}\times \dots \times \mathcal{L}^{N}\,.
\label{mSORTEdefL}
\end{equation}%
{Here }$\mathcal{L}^{j}:=\left\{ Y^{j}\in L^{1}({\mathbb{Q}}^{j})\,\forall {%
\mathbb{Q=[{\mathbb{Q}}}}^{1},...,{\mathbb{{\mathbb{Q}}}}^{N}{\mathbb{]\in }}%
\mathcal{Q}_{\mathcal{B},V}\right\} .$
\end{enumerate}

\section{Multivariate Systemic Optimal Risk Transfer Equilibrium}

\label{mSORTEsecmsorte}

\subsection{Main concept}

\label{mSORTEsecmainconcepts}

We now provide the formal definition of the concept already illustrated in
the Introduction (see equation \eqref{mSORTEeqintrostrong}). It is the natural generalization of SORTE as introduced
in \cite{BDFFM} Definition 3.7.

%\begin{definition}
%\label{mSORTEmsorte}The triple $(Y_{X},{\mathbb{Q}}_{X},a_X)\in \mathcal{L}%
%\mathbf{\times }\mathcal{Q}_V\mathbf{\times }\mathbb{R}^{N}$ is a \textbf{%
%Weak Multivariate Systemic Optimal Risk Transfer Equilibrium} (Weak mSORTE)
%with budget $A\in \mathbb{R}$ if:
%
%1) for each $j=1,\dots,N$, $Y_{X}^{j}$ is optimal for $\mathbb{U}_{j}^{\,%
%\mathbb{Q}_X^{j}, Y_X^{[-j]}}(a_{X}^{j});$
%
%2) $a_{X}$ is optimal for $S^{{\mathbb{Q}}_X}(A);$
%
%3) $Y_{X}\in \mathcal{B}\,\,$ and $\,\,\sum_{j=1}^{N}Y_{X}^{j}=A\,\,$ ${%
%\mathbb{P}}$-a.s..
%\end{definition}

\begin{definition}
\label{mSORTEstrongmsorte} The triple $(\widetilde{Y}_{X},{\mathbb{Q}}_{X},a_X)\in 
\mathcal{L}\mathbf{\times }\mathcal{Q}_{\mathcal{B},V}\mathbf{\times }\mathbb{R}^{N}$ is a 
\textbf{Multivariate Systemic Optimal Risk Transfer Equilibrium} (mSORTE)
with budget $A\in{\mathbb{R}}$ if

\begin{enumerate}
\item $(\widetilde{Y}_{X},a_{X})$ is an optimum for 
\begin{equation*}
\sup\left\{
\sup \left\{ \mathbb{E}_{\mathbb{P}}\left[ U(a+X+\widetilde{Y}-\mathbb{E}_{\probq_X}[\widetilde{Y}])\right] \mid \widetilde{Y}\in \mathcal{L
}\right\} \mid a\in {\mathbb{R}}^{N}   \sum_{j=1}^{N}a_{j}=A\right\}\,;
\end{equation*}

\item $\widetilde{Y}_{X}\in \mathcal{B}$ and $\sum_{j=1}^{N}\widetilde{Y}_{X}^{j}=0\,\, {\mathbb{P}}$%
-a.s..
\end{enumerate}
\end{definition}

\subsection{Main results}

\label{mSORTESECMAIN}We provide sufficient conditions for existence,
uniqueness and the Nash Equilibrium property of a mSORTE, see Theorems \ref%
{mSORTEthmmsorteexists} and \ref{mSORTEthmstrongunique}. Such results are
relatively simple consequences of the following key duality Theorem \ref%
{mSORTEthmmaingeneral1}, whose proof in Section \ref{mSORTESecProof} will
involve several steps. As demonstrated in Section \ref{mSORTEreplacing},
Theorems \ref{mSORTEthmmaingeneral1}, \ref{mSORTEthmmsorteexists} and \ref%
{mSORTEthmadditionalreq} below also hold true if Assumption \ref{mSORTEA1}
is replaced with an assumption on the utility function $U.$

\begin{theorem}
\label{mSORTEthmmaingeneral1} Under Assumption \ref{mSORTEA1} the following
holds: 
\begin{equation}
\sup_{Y\in \mathcal{B}_{A}\cap M^{\Phi }}\mathbb{E}_{\mathbb{P}}\left[ U(X+Y)%
\right] =\sup \left\{ \mathbb{E}_{\mathbb{P}}\left[ U(X+Y)\right] \mid Y\in 
\mathcal{L},\,\sum_{j=1}^{N}\mathbb{E}_{\mathbb{Q}^{j}}\left[ Y^{j}\right]
\leq A\,\forall {\mathbb{Q}}\in \mathcal{Q}_{\mathcal{B},V}\right\}
\label{mSORTEeqonmphiequalLv}
\end{equation}%
\begin{equation}
=\min_{{\mathbb{Q}}\in \mathcal{Q}_{\mathcal{B},V}}\min_{\lambda \geq 0}\left( \lambda
\left( \sum_{j=1}^{N}\mathbb{E}_{\mathbb{Q}^{j}}\left[ X^{j}\right]
+A\right) +\mathbb{E}_{\mathbb{P}}\left[ V\left( \lambda \frac{\mathrm{d}{%
\mathbb{Q}}}{\mathrm{d}{\mathbb{P}}}\right) \right] \right)= \sup_{Y\in 
\mathcal{B}_{A}\cap (L^{\infty }({\mathbb{P}}))^{N}}\mathbb{E}_{\mathbb{P}}%
\left[ U(X+Y)\right] \,.  \label{mSORTEeqminimaxgeneralintro1}
\end{equation}%
Moreover

\begin{enumerate}
\item there exists an optimum $\widehat{Y}\in \mathcal{L}$ to the problem in
RHS of \eqref{mSORTEeqonmphiequalLv}. Such an optimum satisfies $\widehat{Y}%
\in \mathcal{B}_{A}\cap \mathcal{L}$ and for any optimum $(\widehat{\lambda }%
,\widehat{{\mathbb{Q}}})$ of \eqref{mSORTEeqminimaxgeneralintro1} the
following holds: 
\begin{equation}
\label{optsumstoA}
\sum_{j=1}^{N}\mathbb{E}_{\widehat{{\mathbb{Q}}}^{j}}[\widehat{Y}%
^{j}]=A=\sum_{j=1}^{N}\widehat{Y}^{j},\quad {\mathbb{P}}-{a.s.};
\end{equation}

\item any optimum $(\widehat{\lambda },\widehat{{\mathbb{Q}}})$ of %
\eqref{mSORTEeqminimaxgeneralintro1} satisfies $\widehat{\lambda }>0$ and $%
\widehat{{\mathbb{Q}}}\sim {\mathbb{P}}$; 
%\item There exists a unique optimum $(\widehat{\lambda },\widehat{{\mathbb{Q}%
%}})$ in $ \mathbb{R}_{++}\times \mathcal{M}$ to the problem in RHS of %
%\eqref{mSORTEeqminimaxgeneralintro1} with $\widehat{{\mathbb{Q}}}\sim {\mathbb{P}}$%
%.

\item there exists a unique optimum to RHS of \eqref{mSORTEeqonmphiequalLv}.
If $U$ is additionally differentiable, there exists a unique optimum $(%
\widehat{\lambda },\widehat{{\mathbb{Q}}})$ of %
\eqref{mSORTEeqminimaxgeneralintro1}.

%\item under Assumption \ref{mSORTEA1} 
%\begin{equation}
%\sup_{Y\in \mathcal{B}_{A}\cap (L^{\infty }({\mathbb{P}}))^{N}}\mathbb{E}_{%
%\mathbb{P}}\left[ U(X+Y)\right] =\sup_{Y\in \mathcal{B}_{A}\cap M^{\Phi }}%
%\mathbb{E}_{\mathbb{P}}\left[ U(X+Y)\right]\,.
%\label{mSORTEeqonmphiequalsonlinfty}
%\end{equation}
\end{enumerate}
\end{theorem}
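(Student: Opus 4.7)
\textbf{Plan for proving Theorem \ref{mSORTEthmmaingeneral1}.}

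The plan is to establish the chain of equalities from right to left and then harvest the optimizers. First I would prove the rightmost equality, namely strong duality between the bounded primal $\sup_{Y\in\mathcal{B}_A\cap (L^\infty(\mathbb{P}))^N}\mathbb{E}_\mathbb{P}[U(X+Y)]$ and the inf-inf problem $\inf_{\mathbb{Q}\in\mathcal{Q}_{\mathcal{B},V}}\inf_{\lambda\ge 0}\bigl(\lambda(\sum_j\mathbb{E}_{\mathbb{Q}^j}[X^j]+A) +\mathbb{E}_\mathbb{P}[V(\lambda d\mathbb{Q}/d\mathbb{P})]\bigr)$. Weak duality follows from the pointwise Fenchel inequality $U(X+Y)\le\lambda\langle d\mathbb{Q}/d\mathbb{P},X+Y\rangle+V(\lambda d\mathbb{Q}/d\mathbb{P})$, combined with the fact that $\mathcal{Q}$ encodes the bipolar of $\mathcal{B}_0\cap M^\Phi$ so that $\sum_j\mathbb{E}_{\mathbb{Q}^j}[Y^j]\le A$ whenever $Y\in\mathcal{B}_A\cap (L^\infty(\mathbb{P}))^N$. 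For strong duality on $L^\infty$, I would set up a Fenchel–Rockafellar scheme in the dual pair $(M^\Phi,K_\Phi)$: the map $Y\mapsto -\mathbb{E}_\mathbb{P}[U(X+Y)]$ is convex and continuous (by \eqref{mSORTEcontrolwithepsilon} and the norm-Orlicz control), the constraint set is $\mathcal{B}_A\cap (L^\infty(\mathbb{P}))^N$, and the qualification/interior condition is met because constant vectors lie in the interior of $\mathcal{B}_A$ (after using translation-invariance).

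Next I would pass from $(L^\infty(\mathbb{P}))^N$ to $\mathcal{B}_A\cap M^\Phi$: any $Y\in\mathcal{B}_A\cap M^\Phi$ is approximated in Luxemburg norm by its truncations $Y_m$ defined as in Definition \ref{mSORTEdefclosedundertrunc}; by Assumption \ref{mSORTEA1} one has $Y_m\in\mathcal{B}$, and by choosing $c_Y$ with $\sum_j c_Y^j\le A$ one keeps $Y_m\in\mathcal{B}_A\cap (L^\infty(\mathbb{P}))^N$. Continuity of $Y\mapsto\mathbb{E}_\mathbb{P}[U(X+Y)]$ on $M^\Phi$ (again using \eqref{mSORTEcontrolwithepsilon}) then yields the rightmost equality of \eqref{mSORTEeqminimaxgeneralintro1}. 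To close the loop at the left side, I would show that relaxing the feasible set from $\mathcal{B}_A\cap M^\Phi$ to $\{Y\in\mathcal{L}\mid \sum_j\mathbb{E}_{\mathbb{Q}^j}[Y^j]\le A\ \forall\mathbb{Q}\in\mathcal{Q}_{\mathcal{B},V}\}$ does not increase the value. Weak duality gives ``$\le$''; the reverse follows since the middle sup is sandwiched between the two sides already proved equal.

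For existence of a primal optimum $\widehat{Y}\in\mathcal{L}$, I would run a Komlós argument on a maximizing sequence $\{Y_n\}\subset\mathcal{L}$. The upper-bound \eqref{mSORTElemmacontrolwithline} combined with the constraint $\sum_j\mathbb{E}_{\mathbb{Q}^j}[Y_n^j]\le A$ gives uniform $L^1(\mathbb{Q})$-bounds on $(Y_n^j)^-$, hence (passing through $\mathbb{P}$ via suitable $\mathbb{Q}\sim\mathbb{P}$ built from $d\mathbb{Q}/d\mathbb{P}\in K_\Phi$) $L^1(\mathbb{P})$-bounds; Komlós then produces convex combinations converging $\mathbb{P}$-a.s.\ to some $\widehat{Y}$, and Fatou plus concavity of $U$ (applied to $U\le $ affine bound) gives that $\widehat{Y}$ is an optimizer. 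Membership $\widehat{Y}\in\mathcal{B}_A\cap\mathcal{L}$ follows from closedness of $\mathcal{B}$ in probability and from the constraint surviving in the limit. The identity $\sum_j\mathbb{E}_{\widehat{\mathbb{Q}}^j}[\widehat{Y}^j]=A=\sum_j\widehat{Y}^j$ is then obtained from the complementary-slackness/KKT relations: the budget constraint is tight because if $\sum_j\widehat{Y}^j<A$ on a non-null set, one could add a positive constant to some component and strictly increase $\mathbb{E}_\mathbb{P}[U(X+Y)]$ by strict monotonicity, contradicting optimality; similarly for the expectation under $\widehat{\mathbb{Q}}$ using optimality in the dual.

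Existence of a dual optimum $(\widehat{\lambda},\widehat{\mathbb{Q}})$ would again follow by a Komlós-type compactness on a minimizing sequence $\{(\lambda_n,\mathbb{Q}_n)\}$, combined with the convexity and weak lower semicontinuity of $\lambda\mapsto\mathbb{E}_\mathbb{P}[V(\lambda d\mathbb{Q}/d\mathbb{P})]$ and convexity of $\mathcal{Q}_{\mathcal{B},V}$. The facts $\widehat{\lambda}>0$ and $\widehat{\mathbb{Q}}\sim\mathbb{P}$ follow from the well-controlled property: if $\widehat{\lambda}=0$ the dual value would be $\mathbb{E}_\mathbb{P}[V(0)]=U(0)$, which is strictly smaller than the primal value (achievable by a small bounded allocation improving on $0$ via strict monotonicity of $U$ and $X\in M^\Phi$); equivalence $\widehat{\mathbb{Q}}\sim\mathbb{P}$ follows because on $\{d\widehat{\mathbb{Q}}^j/d\mathbb{P}=0\}$ strict monotonicity of $U$ would allow shifting mass there to improve the primal without violating the $\widehat{\mathbb{Q}}$-constraint, a contradiction with \eqref{PiQ} (the one-vector dual). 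Finally, uniqueness of $\widehat{Y}$ follows from strict concavity of $U$ (two distinct optima would produce by convex combination a strictly better one), while uniqueness of $(\widehat{\lambda},\widehat{\mathbb{Q}})$ under differentiability of $U$ follows from strict convexity of $V$ in the primal–dual relation $-\nabla U(X+\widehat{Y})= \widehat{\lambda}d\widehat{\mathbb{Q}}/d\mathbb{P}$ which uniquely determines $\widehat{\lambda}d\widehat{\mathbb{Q}}/d\mathbb{P}$ from $\widehat{Y}$, and hence $\widehat{\lambda}$ (through the normalization $\sum_j\mathbb{E}_\mathbb{P}[d\widehat{\mathbb{Q}}^j/d\mathbb{P}]=N$) and $\widehat{\mathbb{Q}}$. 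The main obstacle will be step three — proving the Komlós-based existence together with the tightness of the budget constraint in the limit, since the space $\mathcal{L}$ is only an intersection of $L^1$-spaces and lacks the Banach-lattice structure of $M^\Phi$, so the compactness and closure-under-limits arguments must be done carefully component by component using the richness of $\mathcal{Q}_{\mathcal{B},V}$.
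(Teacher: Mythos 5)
Your overall architecture (Fenchel/minimax duality, Koml\'os for primal existence, strict concavity/convexity for uniqueness) matches the paper's, but the central step is gapped. You propose to run Koml\'os on a maximizing sequence taken directly in $\mathcal{L}$ subject only to the expectation constraints $\sum_j\mathbb{E}_{\mathbb{Q}^j}[Y_n^j]\le A$, and then claim that ``membership $\widehat{Y}\in\mathcal{B}_A\cap\mathcal{L}$ follows from closedness of $\mathcal{B}$ in probability and from the constraint surviving in the limit.'' Neither claim is available on that route: a maximizing sequence in $\mathcal{L}$ has no reason to lie in $\mathcal{B}$, so closedness of $\mathcal{B}$ gives nothing about the a.s.\ limit of its Ces\`aro means; and a.s.\ convergence does not preserve the constraints $\sum_j\mathbb{E}_{\mathbb{Q}^j}[\cdot]\le A$ (Fatou only controls one sign). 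The paper's proof avoids both problems by exploiting the already-established equality of suprema to take the maximizing sequence in $\mathcal{B}_0\cap M^{\Phi}$ with $\sum_jY_n^j=0$ pointwise; convexity plus closedness in probability then put $\widehat{Y}$ in $\mathcal{B}_0$ with $\sum_j\widehat{Y}^j=0$ a.s.\ (which is also the only way to get the pointwise identity in \eqref{optsumstoA} --- your ``add a constant'' argument only addresses the expectation constraint, not the a.s.\ sum), and the inequality $\sum_j\mathbb{E}_{\mathbb{Q}^j}[\widehat{Y}^j]\le A$ is then recovered precisely from closedness under truncation (Lemma \ref{lemmafairnessgeneral}), after a delicate two-stage integrability argument showing first $\langle\widehat{Y},d\mathbb{Q}/d\mathbb{P}\rangle\in L^1(\mathbb{P})$ via the Fenchel inequality and then $(X+\widehat{Y})^-\in L^{\widehat{\Phi}}=L^{\Phi}$ via Standing Assumption I. Your sketch never invokes Assumption \ref{mSORTEA1} at this point, which is exactly where it is needed.

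Two further issues. First, dual attainment: you propose a ``Koml\'os-type compactness on a minimizing sequence $(\lambda_n,\mathbb{Q}_n)$,'' but $\mathcal{Q}_{\mathcal{B},V}$ has no usable compactness in $K_{\Phi}$ in general; the paper obtains the minimum as part of the minimax statement (Theorem \ref{mSORTEthmminimax}) rather than by a separate compactness argument, so this step of your plan is not substantiated. Second, your argument for $\widehat{\lambda}>0$ is wrong as stated: $V(0)=\sup_{x}U(x)$, not $U(0)$, and it is an \emph{upper} bound for the primal value, not a lower one; the correct argument (STEP 3 of the paper) shows the common optimal value is \emph{strictly smaller} than $\sup U$ (because $U$, being strictly increasing, cannot attain its supremum a.s.\ at the finite random vector $X+\widehat{Y}$), whence $\lambda=0$ cannot be optimal. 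Your truncation-based proof of the rightmost equality in \eqref{mSORTEeqminimaxgeneralintro1} and the uniqueness arguments are viable alternatives, though the dual uniqueness via the gradient relation $-\nabla U(X+\widehat{Y})=\widehat{\lambda}\,d\widehat{\mathbb{Q}}/d\mathbb{P}$ would require justifying first-order conditions that the paper deliberately bypasses by using strict convexity of $V$ directly.
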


\begin{proof}
The case $A=0$ is covered in Theorem \ref{mSORTEthmadditionalreq} and
Corollary \ref{mSORTEcorsuponlinftyeqmphi}, the latter only proving the last
equality in \eqref{mSORTEeqminimaxgeneralintro1}. In Section \ref%
{mSORTESecfrom0toA} we then explain how we can apply also to $A\neq 0$ the
same arguments used for $A=0$.
\end{proof}

The following result is the counterpart to Theorem \ref%
{mSORTEthmmaingeneral1}, once a vector ${\mathbb{Q}}\in \mathcal{Q}_{\mathcal{B},V}$ is
fixed, and will be applied in Theorem \ref{mSORTEthmstrongunique}.

\begin{theorem}
\label{mSORTEthmmaingeneral2}Under Assumption \ref{mSORTEA1}, for every ${%
\mathbb{Q}}\in \mathcal{Q}_{\mathcal{B},V}$ and $A\in {\mathbb{R}}$ the following holds: 
\begin{equation}
\sup_{\substack{ Y\in \mathcal{L}  \\ \sum_{j=1}^{N}\mathbb{E}_{\mathbb{Q}%
^{j}}\left[ Y^{j}\right] \leq A}}\mathbb{E}_{\mathbb{P}}\left[ U(X+Y)\right]
=\min_{\substack{ \lambda \geq 0}}\left( \lambda \left( \sum_{j=1}^{N}%
\mathbb{E}_{\mathbb{Q}^{j}}\left[ X^{j}\right] +A\right) +\mathbb{E}_{%
\mathbb{P}}\left[ V\left( \lambda \frac{\mathrm{d}{\mathbb{Q}}}{\mathrm{d}{%
\mathbb{P}}}\right) \right] \right) .  \label{mSORTEeqminimaxgeneralfixedq}
\end{equation}
\end{theorem}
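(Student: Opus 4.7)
The plan is to prove the two inequalities separately, treating this as a one-constraint Lagrangian duality (with the multi-measure ambiguity already resolved by fixing ${\mathbb{Q}}$).

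The inequality $\leq$ is weak duality. Setting $Z:=\mathrm{d}{\mathbb{Q}}/\mathrm{d}{\mathbb{P}}$, for any $\lambda\geq 0$ and any $Y\in\mathcal{L}$ with $\sum_j\mathbb{E}_{{\mathbb{Q}}^j}[Y^j]\leq A$, I would apply the pointwise Fenchel--Young inequality $U(x)\leq\langle x,y\rangle + V(y)$ with $x=X(\omega)+Y(\omega)$ and $y=\lambda Z(\omega)$, then integrate under ${\mathbb{P}}$. Using $\lambda\geq 0$ and the budget constraint gives
\begin{equation*}
\mathbb{E}_{\mathbb{P}}[U(X+Y)]\leq \lambda\Bigl(\sum_{j=1}^N\mathbb{E}_{{\mathbb{Q}}^j}[X^j]+A\Bigr)+\mathbb{E}_{\mathbb{P}}\bigl[V(\lambda Z)\bigr],
\end{equation*}
and the $\leq$ direction follows by taking $\sup$ on the left and $\inf_{\lambda\geq 0}$ on the right. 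In particular, the $\min$ on the right-hand side of \eqref{mSORTEeqminimaxgeneralfixedq} is well-defined: the function $g(\lambda):=\lambda(\sum_j\mathbb{E}_{{\mathbb{Q}}^j}[X^j]+A)+\mathbb{E}_{\mathbb{P}}[V(\lambda Z)]$ is convex on $[0,+\infty)$, is finite at some $\lambda_0>0$ by the very definition of ${\mathbb{Q}}\in\mathcal{Q}_{\mathcal{B},V}$, and is coercive thanks to the well-controlled hypothesis on $U$ and Lemma~\ref{mSORTElemmaconswellcontrol}; hence the infimum is attained at some $\widehat\lambda\in[0,+\infty)$.

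For the reverse inequality $\geq$, I would construct a primal optimizer that saturates the dual bound. For each $\omega$, strict concavity of $U$ together with \eqref{mSORTEcontrolwithepsilon}--\eqref{mSORTElemmacontrolwithline} guarantees the existence of the unique maximizer
\begin{equation*}
\widehat W(\omega):=\argmax_{x\in{\mathbb{R}}^N}\bigl(U(x)-\widehat\lambda\langle x,Z(\omega)\rangle\bigr),
\end{equation*}
and a measurable selection theorem yields measurability of $\widehat W$. By construction, $U(\widehat W)-\widehat\lambda\langle\widehat W,Z\rangle=V(\widehat\lambda Z)$ ${\mathbb{P}}$-a.s. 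Setting $\widehat Y:=\widehat W-X$, I would then verify: (i) $\widehat Y\in\mathcal{L}$, by exploiting the well-controlled bound together with the characterization of $\mathcal{L}$, so that $\widehat W$ is integrable under every vector of $\mathcal{Q}_{\mathcal{B},V}$; (ii) the budget is saturated, i.e.\ $\sum_j\mathbb{E}_{{\mathbb{Q}}^j}[\widehat W^j]=\sum_j\mathbb{E}_{{\mathbb{Q}}^j}[X^j]+A$. Step (ii) comes from first-order optimality at $\widehat\lambda$: when $\widehat\lambda>0$, the envelope identity $\nabla V(\widehat\lambda Z)=-\widehat W$ (in the appropriate sub/differential sense) combined with $g'(\widehat\lambda)=0$ produces the claimed equality; when $\widehat\lambda=0$, a limiting argument $\widehat\lambda_n\downarrow 0$ or a separate direct verification handles the degenerate case. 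Finally, taking ${\mathbb{P}}$-expectation of $U(\widehat W)=V(\widehat\lambda Z)+\widehat\lambda\langle\widehat W,Z\rangle$ and substituting the saturation identity gives $\mathbb{E}_{\mathbb{P}}[U(X+\widehat Y)]=g(\widehat\lambda)$, which is the required $\geq$ inequality.

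The principal technical obstacle is twofold. First, $U$ is not assumed differentiable, so the envelope identity used in (ii) must be formulated through subdifferentials of $V$, and the measurable selection of $\widehat W$ requires care (typically via Berge's theorem or a Borel-measurable maximum-theorem argument compatible with the Komlós-type technology employed for Theorem \ref{mSORTEthmmaingeneral1}). Second, verifying $\widehat Y\in\mathcal{L}$ means controlling $\widehat W$ under \emph{every} ${\mathbb{Q}}'\in\mathcal{Q}_{\mathcal{B},V}$, not only under the fixed ${\mathbb{Q}}$; here I would lean on the same integrability estimates used in Section \ref{mSORTESecProof} for Theorem \ref{mSORTEthmmaingeneral1}, since the construction of $\widehat W$ and the well-controlled structure of $U$ yield the analogous bounds. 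A tempting alternative would be to reduce Theorem \ref{mSORTEthmmaingeneral2} directly to Theorem \ref{mSORTEthmmaingeneral1} by engineering a feasibility set $\tilde{\mathcal{B}}$ with $\mathcal{Q}_{\tilde{\mathcal{B}},V}=\{{\mathbb{Q}}\}$, but Standing Assumption II (notably $\tilde{\mathcal{B}}\subseteq\mathcal{C}_{\mathbb{R}}$) makes such a reduction awkward, so the direct Lagrangian route outlined above seems the most natural.
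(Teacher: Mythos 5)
Your weak-duality half is exactly the paper's argument (the Fenchel inequality of Remark \ref{mSORTER1}), and the remark that the minimum over $\lambda$ is attained is consistent with what the paper proves. The strong-duality half, however, contains a genuine gap: the pointwise maximizer $\widehat W(\omega)=\argmax_{x\in\mathbb{R}^N}\bigl(U(x)-\widehat\lambda\langle x,Z(\omega)\rangle\bigr)$ need not exist on a set of positive ${\mathbb{P}}$-measure. There are two obstructions. First, the theorem is stated for an \emph{arbitrary} ${\mathbb{Q}}\in\mathcal{Q}_{\mathcal{B},V}$, which need not be equivalent to ${\mathbb{P}}$; on $\{Z^j=0\}$ the map $x^j\mapsto U(x)-\widehat\lambda\langle x,Z(\omega)\rangle$ is strictly increasing, so no maximizer exists. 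The same failure occurs globally when $\widehat\lambda=0$, a case you defer but never resolve and which genuinely arises (it happens exactly when the primal value equals $\sup_{\mathbb{R}^N}U=V(0)$, as the paper notes after \eqref{mSORTEeqqfixedsup12}). Second, even where $Z>0$ and $\widehat\lambda>0$, the well-controlled condition is deliberately weaker than Inada$(-\infty)$ (compare \eqref{mSORTEweakInada}): \eqref{mSORTElemmacontrolwithline} gives $U(x)\leq a\sum_j(x^j)^+-2a\sum_j(x^j)^-+b$ for \emph{one} pair $(a,b)$ only, so in directions where $\widehat\lambda Z^j(\omega)\geq 2a$ the function $x^j\mapsto U(x)-\widehat\lambda Z^j(\omega)x^j$ can fail to be coercive as $x^j\downarrow-\infty$ and the (finite) supremum $V(\widehat\lambda Z(\omega))$ is not attained. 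Hence the assertion that strict concavity plus \eqref{mSORTEcontrolwithepsilon}--\eqref{mSORTElemmacontrolwithline} guarantee existence of $\widehat W$ is false, and the measurable selection, the envelope identity and the budget-saturation step all rest on it. A secondary but also real difficulty is the verification $\widehat Y\in\mathcal{L}$: in Section \ref{mSORTESecProof} integrability of the positive part under every ${\mathbb{Q}}'\in\mathcal{Q}_{\mathcal{B},V}$ is extracted from the fact that the candidate is a Koml\'os/Ces\`aro limit of elements $W_H\in\mathcal{B}_{0}\cap M^{\Phi}$, for which polarity yields $\mathbb{E}_{\mathbb{P}}[\langle W_H,\lambda' Z'\rangle]\leq 0$; your $\widehat W$ comes with no such approximating sequence, so those estimates do not transfer as stated.

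The paper's own proof sidesteps all of this by never producing a primal optimizer for the fixed-${\mathbb{Q}}$ problem. It first establishes the duality with the supremum taken over $Y\in M^{\Phi}$ (STEP 4 of the proof of Theorem \ref{mSORTEthmadditionalreq}, an application of the abstract minimax Theorem \ref{mSORTEthmminimax} to the cone $\mathcal{C}=\{Y\in M^{\Phi}:\sum_{j}\mathbb{E}_{\mathbb{Q}^{j}}[Y^{j}]\leq 0\}$, whose normalized positive polar is the singleton $\{\mathrm{d}{\mathbb{Q}}/\mathrm{d}{\mathbb{P}}\}$), and then sandwiches: $M^{\Phi}\subseteq\mathcal{L}\subseteq L^{1}({\mathbb{Q}})$ together with your weak-duality inequality forces the three suprema to coincide with the minimum; the case $A\neq 0$ is handled by translation. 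If you wish to rescue your route, you must replace the exact pointwise maximizer by a maximizing sequence in $M^{\Phi}$ --- which is, in substance, the paper's reduction.
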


\begin{proof}
Consider first $A=0$. By Equations \eqref{mSORTEeqqfixedsup1} and %
\eqref{mSORTEeqqfixedsup12}

\begin{equation*}
\min_{\substack{ \lambda \geq 0}}\left( \lambda \left( \sum_{j=1}^{N}\mathbb{%
E}_{\mathbb{Q}^{j}}\left[ X^{j}\right] \right) +\mathbb{E}_{\mathbb{P}}\left[
V\left( \lambda \frac{\mathrm{d}{\mathbb{Q}}}{\mathrm{d}{\mathbb{P}}}\right) %
\right] \right) =\sup_{\substack{ Y\in M^{\Phi }  \\ \sum_{j=1}^{N}\mathbb{E}%
_{\mathbb{Q}^{j}}\left[ Y^{j}\right] \leq 0}}\mathbb{E}_{\mathbb{P}}\left[
U(X+Y)\right] .
\end{equation*}%
Observing that $M^{\Phi }\subseteq \mathcal{L}\subseteq L^{1}({\mathbb{Q}})$%
, we have 
\begin{align*}
&\sup_{\substack{ Y\in M^{\Phi } \sum_{j=1}^{N}\mathbb{E}_{\mathbb{Q}^{j}}%
\left[ Y^{j}\right] \leq 0}}\mathbb{E}_{\mathbb{P}}\left[ U(X+Y)\right]
\leq\sup_{\substack{ Y\in \mathcal{L}\cap L^{1}({\mathbb{Q}})  \\ %
\sum_{j=1}^{N}\mathbb{E}_{\mathbb{Q}^{j}}\left[ Y^{j}\right] \leq 0}}\mathbb{%
E}_{\mathbb{P}}\left[ U(X+Y)\right] \\
&\leq \sup_{\substack{ Y\in L^{1}({\mathbb{Q}})  \\ \sum_{j=1}^{N}\mathbb{E}%
_{\mathbb{Q}^{j}}\left[ Y^{j}\right] \leq 0}}\mathbb{E}_{\mathbb{P}}\left[
U(X+Y)\right] \leq \inf_{\lambda \geq 0}\left( \lambda \left( \sum_{j=1}^{N}%
\mathbb{E}_{\mathbb{Q}^{j}}\left[ X^{j}\right] \right) +\mathbb{E}_{\mathbb{P%
}}\left[ V\left( \lambda \frac{\mathrm{d}{\mathbb{Q}}}{\mathrm{d}{\mathbb{P}}%
}\right) \right] \right) ,
\end{align*}%
by Remark \ref{mSORTER1} below. The case $A=0$ is then proved. The case $%
A\neq 0$, instead, follows from Section \ref{mSORTESecfrom0toA}.
\end{proof}

\begin{remark}
\label{mSORTER1}From the definition of $V$ we obtain the Fenchel inequality%
\begin{equation*}
U(X+Y)\leq \langle X+Y,\lambda Z\rangle +V(\lambda Z)\text{\quad }\mathbb{P}%
\text{-a.s. for all }X,Y,Z\in (L^{0}(\mathbb{P}))^{N}\text{, }\lambda \geq 0.
\end{equation*}%
Recall that $M^{\Phi }\subseteq L^{1}({\mathbb{Q}})$ for all ${\mathbb{Q\in }%
}\mathcal{Q}.$ For all $X\in M^{\Phi }$, for all ${\mathbb{Q\in }}\mathcal{Q}
$ and $Y$ such that $\sum_{j=1}^{N}\mathbb{E}_{\mathbb{Q}^{j}}\left[ Y^{j}%
\right] \leq A$ we then have: 
\begin{equation}
\label{mSORTEfenchelineq}
\begin{split}
\mathbb{E}_{\mathbb{P}}\left[ U(X+Y)\right] &\leq \inf_{\lambda \geq
0}\left\{ \lambda \sum_{j=1}^{N}\mathbb{E}_{\mathbb{Q}^{j}}\left[
(X^{j}+Y^{j})\right] +\mathbb{E}_{\mathbb{P}}\left[ V\left( \lambda \frac{%
\mathrm{d}{\mathbb{Q}}}{\mathrm{d}{\mathbb{P}}}\right) \right] \right\} \\
&\leq \inf_{\lambda \geq 0}\left\{ \lambda \left( \sum_{j=1}^{N}\mathbb{E}_{%
\mathbb{Q}^{j}}\left[ X^{j}\right] +A\right) +\mathbb{E}_{\mathbb{P}}\left[
V\left( \lambda \frac{\mathrm{d}{\mathbb{Q}}}{\mathrm{d}{\mathbb{P}}}\right) %
\right] \right\}
\end{split}%
\end{equation}
and the last expression is finite if ${\mathbb{Q\in }}\mathcal{Q}_{\mathcal{B},V}.$ 
%Therefore, for all $Y\in \mathcal{B}_{0}\cap M^{\Phi }$ \ 
%\begin{equation*}
%\mathbb{E}_{\mathbb{P}}\left[ U(X+Y)\right] \leq \inf_{{\mathbb{Q\in }}%
%\mathcal{Q}_{V}}\inf_{\lambda \geq 0}\left\{ \lambda \sum_{j=1}^{N}\mathbb{E}%
%_{\mathbb{Q}^{j}}\left[ X^{j}\right] +\mathbb{E}_{\mathbb{P}}\left[ V\left(
%\lambda \frac{\mathrm{d}{\mathbb{Q}}}{\mathrm{d}{\mathbb{P}}}\right) \right]
%\right\} \,.
%\end{equation*}
\end{remark}

\paragraph{On the existence of a mSORTE and Nash Equilibrium}

%From Theorems \ref{mSORTEthmmaingeneral1}and \ref{mSORTEthmmaingeneral2}, we infer
%existence of a Multivariate SORTE:

\begin{theorem}
\label{mSORTEthmmsorteexists} Under Assumption \ref{mSORTEA1} a Multivariate
Systemic Optimal Risk Transfer Equilibrium $(\widetilde{Y}_X,{\mathbb{Q}_X},%
a_X)\in \mathcal{L}\times \mathcal{Q}_{\mathcal{B},V}\times {\mathbb{R}}^{N}$
exists, with $\mathbb{E}_{\probq^j_X}[\widetilde{Y}^j_X]=0$ for every $j=1,\dots,N$. Furthermore,
the following Nash Equilibrium property holds for any mSORTE $(\widetilde{Y}_X,{\mathbb{Q}_X},%
a_X)$: for every $j=1,\dots,N$

\begin{equation}
\label{eqnashequil}
\widetilde{Y}^j_X\in\argmax\left\{\mathbb{E}_{\mathbb{P}}\left[ U\left( a_{X}+X+[\widetilde{Y}%
_{X}^{[-j]};Z]-{\mathbb{E}_{{{\mathbb{Q}}}_{X}}[\widetilde{Y}_{X}^{[-j]};}%
Z]\right) \right]\mid Z\in\mathcal{L}^j\right\}
\end{equation}%
where $\mathcal{L}^j=\bigcap_{\probq\in\mathcal{Q}_{\mathcal{B},V}}L^1(\probq^j)$,  and $$\mathbb{E}_{\mathbb{Q}_{X}}[\widetilde{Y}_{X}^{[-j]};Z]=\left[\mathbb{E}_{\mathbb{Q}^1_{X}}[\widetilde{Y}^1_{X}],\dots,\mathbb{E}_{\mathbb{Q}^{j-1}_{X}}[\widetilde{Y}^{j-1}_{X}], \mathbb{E}_{\mathbb{Q}^{j}_{X}}[Z],\mathbb{E}_{\mathbb{Q}^{j+1}_{X}}[\widetilde{Y}^{j+1}_{X}],\dots,\mathbb{E}_{\mathbb{Q}^{N}_{X}}[\widetilde{Y}^{N}_{X}] \right]\,.$$

\end{theorem}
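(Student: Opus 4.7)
The plan is to read off a mSORTE directly from the primal--dual optimizers supplied by Theorem \ref{mSORTEthmmaingeneral1}, and to deduce the Nash property as an immediate consequence of condition 1 in Definition \ref{mSORTEstrongmsorte}.

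First, let $\widehat{Y}\in\mathcal{B}_A\cap\mathcal{L}$ be a primal optimum for the RHS of \eqref{mSORTEeqonmphiequalLv} and $(\widehat{\lambda},\widehat{{\mathbb{Q}}})\in(0,\infty)\times\mathcal{Q}_{\mathcal{B},V}$ a dual optimum of \eqref{mSORTEeqminimaxgeneralintro1}, as guaranteed by Theorem \ref{mSORTEthmmaingeneral1}. Define
$a_X^j:=\mathbb{E}_{\widehat{{\mathbb{Q}}}^j}[\widehat{Y}^j]$, ${\mathbb{Q}}_X:=\widehat{{\mathbb{Q}}}$ and $\widetilde{Y}_X:=\widehat{Y}-a_X$. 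Equation \eqref{optsumstoA} yields simultaneously $\sum_{j=1}^N a_X^j=A$ and $\sum_{j=1}^N\widetilde{Y}_X^j=0$ ${\mathbb{P}}$-a.s., while $\widetilde{Y}_X=\widehat{Y}-a_X\in\mathcal{B}$ by the translation invariance $\mathcal{B}+\mathbb{R}^N=\mathcal{B}$ from Standing Assumption II, so condition 2 of Definition \ref{mSORTEstrongmsorte} is met. By construction we also have the componentwise identity $\mathbb{E}_{{\mathbb{Q}}_X^j}[\widetilde{Y}_X^j]=0$ for every $j$.

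To verify condition 1, perform the change of variables $Y=a+\widetilde{Y}-\mathbb{E}_{{\mathbb{Q}}_X}[\widetilde{Y}]$; for each fixed $a\in\mathbb{R}^N$ this realizes a surjection from $\widetilde{Y}\in\mathcal{L}$ onto $\{Y\in\mathcal{L}\mid\mathbb{E}_{{\mathbb{Q}}_X^j}[Y^j]=a^j\,\forall j\}$ while preserving the expected utility. Ranging $a$ over $\{a\in\mathbb{R}^N\mid\sum_{j=1}^N a^j=A\}$ and then upgrading the equality budget constraint to an inequality via the monotonicity of $U$ (adding a suitable nonnegative deterministic vector keeps $Y$ in $\mathcal{L}$), the double supremum in Definition \ref{mSORTEstrongmsorte} collapses to $\sup\{\mathbb{E}_{\mathbb{P}}[U(X+Y)]\mid Y\in\mathcal{L},\,\sum_{j=1}^N\mathbb{E}_{\widehat{{\mathbb{Q}}}^j}[Y^j]\leq A\}$. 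Theorem \ref{mSORTEthmmaingeneral2} with ${\mathbb{Q}}=\widehat{{\mathbb{Q}}}$ rewrites this as $\min_{\lambda\geq 0}\left(\lambda\left(\sum_{j=1}^N\mathbb{E}_{\widehat{{\mathbb{Q}}}^j}[X^j]+A\right)+\mathbb{E}_{\mathbb{P}}\left[V\left(\lambda\,\mathrm{d}\widehat{{\mathbb{Q}}}/\mathrm{d}{\mathbb{P}}\right)\right]\right)$, and since $\widehat{{\mathbb{Q}}}$ is the outer optimizer in \eqref{mSORTEeqminimaxgeneralintro1}, this value coincides with $\Pi_A^{\text{ran}}(X)=\mathbb{E}_{\mathbb{P}}[U(X+\widehat{Y})]$. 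Because $a_X+\widetilde{Y}_X-\mathbb{E}_{{\mathbb{Q}}_X}[\widetilde{Y}_X]=\widehat{Y}$, the pair $(\widetilde{Y}_X,a_X)$ attains this value, so condition 1 holds and existence of a mSORTE is established.

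For the Nash property, fix any mSORTE $(\widetilde{Y}_X,{\mathbb{Q}}_X,a_X)$. By \eqref{mSORTEdefL} we have $\mathcal{L}=\mathcal{L}^1\times\dots\times\mathcal{L}^N$, hence for every $j$ and every $Z\in\mathcal{L}^j$ the perturbation $[\widetilde{Y}_X^{[-j]};Z]$ still lies in $\mathcal{L}$ and is admissible in the inner supremum of condition 1 at $a=a_X$. The maximality of $\widetilde{Y}_X$ in that inner problem delivers \eqref{eqnashequil} immediately. The main technical obstacle is the chain of reductions in the preceding paragraph — identifying the mSORTE double supremum first with the single-${\mathbb{Q}}$ problem of Theorem \ref{mSORTEthmmaingeneral2} (via the change of variables and monotonicity) and then with $\Pi_A^{\text{ran}}(X)$ (via the duality of Theorem \ref{mSORTEthmmaingeneral1}); the Nash property itself then comes essentially for free.
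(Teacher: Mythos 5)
Your proposal is correct and follows essentially the same route as the paper's proof: both build the equilibrium from the primal optimizer $\widehat{Y}$ and dual optimizer $\widehat{{\mathbb{Q}}}$ of Theorem \ref{mSORTEthmmaingeneral1}, set $a_X^j=\mathbb{E}_{\widehat{{\mathbb{Q}}}^j}[\widehat{Y}^j]$ and $\widetilde{Y}_X=\widehat{Y}-a_X$, collapse the double supremum of Definition \ref{mSORTEstrongmsorte} to the single-constraint problem for $\widehat{{\mathbb{Q}}}$ (your change-of-variables plus tightening of the budget constraint is exactly the content of Lemma \ref{mSORTESequalsH}, which the paper cites rather than re-derives), and obtain the Nash property from the product structure $\mathcal{L}=\mathcal{L}^1\times\dots\times\mathcal{L}^N$ together with optimality of $\widetilde{Y}_X$ in the inner supremum at $a=a_X$.
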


\begin{proof}
Take $\widehat{Y}$ as in Theorem \ref{mSORTEthmmaingeneral1}
Item 1, $\widehat{{\mathbb{Q}}}$ an optimizer of %
\eqref{mSORTEeqminimaxgeneralintro1}, and set $\widehat{a}^{j}:=\mathbb{E}_{%
\widehat{{\mathbb{Q}}}^{j}}[\widehat{Y}^{j}]$, $j=1,\dots ,N,$.  Then, from %
\eqref{mSORTEeqonmphiequalLv} and \eqref{mSORTEeqminimaxgeneralintro1}, 
\begin{align}
&\sup\left\{\mathbb{E}_\mathbb{P} \left[U(X+Y)\right]\mid Y\in \mathcal{L}%
,\,\sum_{j=1}^N\mathbb{E}_{\mathbb{Q}^j} \left[Y^j\right]\leq A\,\,\forall{%
\mathbb{Q}}\in\mathcal{Q}_{\mathcal{B},V}\right\}  \notag \\
&=\min_{\substack{ \lambda \geq 0}}\left( \lambda \left( \sum_{j=1}^{N}%
\mathbb{E}_{\widehat{\mathbb{Q}}^{j}}\left[ X^{j}\right] +A\right) +\mathbb{E%
}_{\mathbb{P}}\left[ V\left( \lambda \frac{\mathrm{d}\widehat{{\mathbb{Q}}}}{%
\mathrm{d}{\mathbb{P}}}\right) \right] \right)  \notag \\
&\overset{\eqref{mSORTEeqminimaxgeneralfixedq}}{=}\sup\left\{\mathbb{E}_{%
\mathbb{P}}\left[ U(X+Y)\right]\mid Y\in \mathcal{L},\, \sum_{j=1}^{N}%
\mathbb{E}_{\widehat{{\mathbb{Q}}}^{j}}[Y^{j}]\leq A\right\}
\label{mSORTEeq333} \\
&=\sup\left\{
\sup \left\{ \mathbb{E}_{\mathbb{P}}\left[ U(X+Y)\right] \mid Y\in \mathcal{L%
},\,\mathbb{E}_{\widehat{\mathbb{Q}}^{j}}\left[ Y^{j}\right] \leq
a^{j},\,\forall \,j\mid  a\in {\mathbb{R}}^{N},\, \sum_{j=1}^{N}a_{j}=A\right\} \right\}  \label{mSORTEeq444} 
\\&=:S^{\widehat{\probq}}(A)\notag
\end{align}%
where \eqref{mSORTEeq444} is a simple reformulation of \eqref{mSORTEeq333}.
By Item 1 of Theorem \ref{mSORTEthmmaingeneral1}, the optimizer $\widehat{Y}%
\in \mathcal{L}$ satisfies the constraints of the problem in %
\eqref{mSORTEeq333}, and by \eqref{optsumstoA} $(\widehat{Y},\widehat{a})$  yields  an optimum for the problem in %
\eqref{mSORTEeq444}. 
%We conclude that $\widehat{Y}$ and $\widehat{a}^{j}:=%
%\mathbb{E}_{\widehat{{\mathbb{Q}}}^{j}}[\widehat{Y}^{j}]$, $j=1,\dots ,N,$
%provide an optimum to the problem in \eqref{mSORTEeq444}, so that $(\widehat{%
%Y},\widehat{a})$ fulfills the requirements in Item 1 of Definition \ref%
%{mSORTEstrongmsorte} and $\sum_{j=1}^{N}\widehat{a}^{j}=A. $ 
By Lemma \ref{mSORTESequalsH}, setting $a_X:=\widehat{a}\in\R^N$, $\widetilde{Y}_X=\widehat{Y}-a_X\in\mathcal{L}$, $\probq_X=\widehat{\probq}$ we have that $(\widetilde{Y}_X,\widehat{a})$ satisfies Item 1 in Definition \ref{mSORTEstrongmsorte}.
Observe also that $\widehat{Y}\in\mathcal{B}\Rightarrow\widetilde{Y}_X\in\mathcal{B}$ since $\mathcal{B}+\mathbb{R}^N=\mathcal{B}$, and that $\sum_{j=1}^N\widetilde{Y}_X^j=\sum_{j=1}^N\widehat{Y}^j-\sum_{j=1}^Na^j=A-A=0$, so that also Item 2 in Definition \ref{mSORTEstrongmsorte} is satisfied. Also, by its very definition definition  $\mathbb{E}_{\probq^j_X}[\widetilde{Y}^j_X]=\mathbb{E}_{\widehat{\probq}^j}[\widehat{Y}^j]-\mathbb{E}_{\widehat{\probq}^j}[\widehat{Y}^j]=0$.

Let finally $(Y_{X},{\mathbb{Q}}_{X},a_X)\in \mathcal{L}\mathbf{\times }\mathcal{Q}_{\mathcal{B},V}%
\mathbf{\times }\mathbb{R}^{N}$ be a mSORTE as in Definition \ref%
{mSORTEstrongmsorte}. We prove the Nash Equilibrium property \eqref{eqnashequil}.
  For any $Z\in \mathcal{L}^{j}$, using Item 1 of Definition \ref{mSORTEstrongmsorte}

  \begin{align*}
\mathbb{E}_{\mathbb{P}}\left[ U(a_X+X+\widetilde{Y}_X-\mathbb{E}_{\probq_X}[\widetilde{Y}_X])\right]&=\sup\left\{\mathbb{E}_{\mathbb{P}}\left[ U(a_X+X+\widetilde{Y}-\mathbb{E}_{\probq_X}[\widetilde{Y}])\right]\mid \widetilde{Y}\in\mathcal{L}\right\}\\
&\geq \sup_{Z\in\mathcal{L}^j}\mathbb{E}_{\mathbb{P}}\left[ U\left( a_{X}+X+[\widetilde{Y}%
_{X}^{[-j]};Z]-{\mathbb{E}_{{{\mathbb{Q}}}_{X}}[\widetilde{Y}_{X}^{[-j]};}%
Z]\right) \right]\\
&\geq \mathbb{E}_{\mathbb{P}}\left[ U(a_X+X+\widetilde{Y}_X-\mathbb{E}_{\probq_X}[\widetilde{Y}_X])\right]\,.
\end{align*}
\end{proof}

\paragraph{On uniqueness of a mSORTE \label{mSORTEUNIQUENESSINITIAL}}

\begin{theorem}
\label{mSORTEthmstrongunique} Under Assumption \ref{mSORTEA1}, suppose additionally that $U$ is
differentiable. Then there exists a unique mSORTE  $(\widetilde{Y}_X,{\mathbb{Q}_X},%
a_X)$ with $\mathbb{E}_{\probq^j_X}[\widetilde{Y}^j_X]=0$ for every $j=1,\dots,N$.
\end{theorem}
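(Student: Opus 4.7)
Given any mSORTE $(\widetilde{Y}_X,\mathbb{Q}_X,a_X)$ satisfying $\mathbb{E}_{\mathbb{Q}^j_X}[\widetilde{Y}^j_X]=0$ for all $j$, the plan is to show that it must coincide with the triple built in Theorem~\ref{mSORTEthmmsorteexists} from the primal and dual optimizers of Theorem~\ref{mSORTEthmmaingeneral1}, which are uniquely determined because $U$ is differentiable (Item~3 of that theorem). Set $W:=a_X+\widetilde{Y}_X\in\mathcal{L}$. Under the zero-expectation hypothesis the bijective substitution $(a,\widetilde Y)\leftrightarrow W=a+\widetilde Y-\mathbb{E}_{\mathbb{Q}_X}[\widetilde Y]$ (with inverse $a^j=\mathbb{E}_{\mathbb{Q}^j_X}[W^j]$, $\widetilde Y=W-a$) shows that the value of the double sup in Item~1 of Definition~\ref{mSORTEstrongmsorte} equals
$$S^{\mathbb{Q}_X}(A):=\sup\left\{\mathbb{E}_{\mathbb{P}}[U(X+W)]\mid W\in\mathcal{L},\,\sum_{j=1}^N\mathbb{E}_{\mathbb{Q}^j_X}[W^j]\leq A\right\},$$
attained by $W=a_X+\widetilde{Y}_X$. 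Item~2 of the definition together with $\mathcal{B}+\mathbb{R}^N=\mathcal{B}$ places $W$ in $\mathcal{B}_A\cap\mathcal{L}$ with $\sum_j W^j=A$ a.s.

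Let $P$ denote the common value in \eqref{mSORTEeqonmphiequalLv}--\eqref{mSORTEeqminimaxgeneralintro1}. Theorem~\ref{mSORTEthmmaingeneral2} rewrites the dual as $P=\min_{\mathbb{Q}\in\mathcal{Q}_{\mathcal{B},V}}S^\mathbb{Q}(A)$, so $S^{\mathbb{Q}_X}(A)\geq P$; identifying $\mathbb{Q}_X$ as a dual optimizer requires the reverse inequality. To obtain it I would verify that $W$ lies in the feasible set for the primal formulation in the RHS of \eqref{mSORTEeqonmphiequalLv}, i.e.\ $\sum_j\mathbb{E}_{\mathbb{Q}^j}[W^j]\leq A$ for every $\mathbb{Q}\in\mathcal{Q}_{\mathcal{B},V}$. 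This is the main technical step. Since $W\in\mathcal{C}_{\mathbb{R}}$ has $\sum_j W^j=A$ a.s., Assumption~\ref{mSORTEA1} supplies bounded truncations $W_m\in\mathcal{B}\cap(L^\infty(\mathbb{P}))^N\subseteq\mathcal{B}_A\cap M^\Phi$ with $\sum_j W_m^j=A$; decomposing $W_m=\bar a+(W_m-\bar a)$ with $\bar a:=(A/N,\dots,A/N)\in\mathbb{R}^N$ places $W_m-\bar a$ in $\mathcal{B}_0\cap M^\Phi$, so the polar-cone definition of $\mathcal{Q}\supseteq\mathcal{Q}_{\mathcal{B},V}$ yields $\sum_j\mathbb{E}_{\mathbb{Q}^j}[W_m^j]\leq A$. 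The bound $|W_m|\leq|W|+|c_W|\in L^1(\mathbb{Q})$ (available since $W\in\mathcal{L}$) permits dominated convergence as $m\to\infty$, carrying the inequality over to $W$ itself.

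The chain $P\leq S^{\mathbb{Q}_X}(A)=\mathbb{E}_{\mathbb{P}}[U(X+W)]\leq P$ is therefore forced to be a chain of equalities, so $\mathbb{Q}_X$ is a dual optimizer of \eqref{mSORTEeqminimaxgeneralintro1} and $W$ is a primal optimizer of \eqref{mSORTEeqonmphiequalLv}. Invoking Item~3 of Theorem~\ref{mSORTEthmmaingeneral1} and the differentiability of $U$, one obtains $\mathbb{Q}_X=\widehat{\mathbb{Q}}$ and $W=\widehat Y$, whence $a_X^j=\mathbb{E}_{\mathbb{Q}^j_X}[W^j]=\mathbb{E}_{\widehat{\mathbb{Q}}^j}[\widehat Y^j]$ and $\widetilde Y_X=W-a_X$ are all uniquely determined. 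The only step that is not a direct manipulation of the duality theorems is the truncation-plus-dominated-convergence argument embedding $\mathcal{B}_A\cap\mathcal{L}$ into the primal feasible set; this is where closedness under truncation (Assumption~\ref{mSORTEA1}) and the tailored integrability built into $\mathcal{L}$ are genuinely needed, and it is the key obstacle in the proof.
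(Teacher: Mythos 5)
Your proposal is correct and follows essentially the same route as the paper: set $\widehat{Y}=W=a_X+\widetilde{Y}_X$, show via the substitution of Lemma \ref{mSORTESequalsH} and the truncation-plus-dominated-convergence argument (which is exactly the paper's Lemma \ref{lemmafairnessgeneral}) that $W$ is feasible for the primal in \eqref{mSORTEeqonmphiequalLv} and hence, by squeezing against the dual value, that $W$ and ${\mathbb{Q}}_X$ are the unique primal and dual optimizers of Theorem \ref{mSORTEthmmaingeneral1} Item 3, from which $a_X$ and $\widetilde{Y}_X$ are recovered uniquely. You also correctly identify the feasibility step as the only genuinely delicate point.
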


\begin{proof}
Take a mSORTE $(\widetilde{Y}_X,{\mathbb{Q}_X},%
a_X)$ with $\mathbb{E}_{\probq^j_X}[\widetilde{Y}^j_X]=0$ for every $j=1,\dots,N$.
Set $\widehat{Y}:=\widetilde{Y}_X+a_X-\mathbb{E}_{\probq_X}[\widetilde{Y}_X]=\widetilde{Y}_X+a_X$.
We claim that  $\widehat{Y}$ is an optimizer of RHS of %
\eqref{mSORTEeqonmphiequalLv} and $\probq_X$ is an optimizer
of \eqref{mSORTEeqminimaxgeneralintro1}.
% Under the differentiability
%assumption, the uniqueness of $(\widehat{Y},\probq_X)$ is then a consequence of Theorem \ref%
%{mSORTEthmmaingeneral1} Item 3.
 Observe that $\widehat{Y}\in \mathcal{B}%
_{A}\cap \mathcal{L}$ (using $\mathcal{B}+\R^N=\mathcal{B}$ and $\sum_{j=1}^N\widehat{Y}^j=\sum_{j=1}^N\widetilde{Y}^j_X+\sum_{j=1}^Na^j_X-\sum_{j=1}^N\mathbb{E}_{\probq^j_X}[\widetilde{Y}^j_X]=0+A-0$)  and ${{\mathbb{Q}_X}}\in \mathcal{Q}_{\mathcal{B},V}.$ Since $\widehat{Y}\in \mathcal{B}%
_{A}\cap \mathcal{L}$ and
we are assuming that the set $\mathcal{B}$ is closed under truncation,
 by Lemma \ref{lemmafairnessgeneral} we have that $\sum_{j=1}^N\mathbb{E}_{\mathbb{Q}%
^j} \left[\widehat{Y}^j\right]\leq A$ for all ${\mathbb{Q}}\in\mathcal{Q}_{\mathcal{B},V}$.
 As ${{\mathbb{Q}_X}}\in 
\mathcal{Q}_{\mathcal{B},V}$, we then obtain
\begin{align}
&\,\mathbb{E}_{\mathbb{P}}\left[ U(X+\widehat{Y})\right] \leq \sup\left\{\mathbb{E}_{\mathbb{P}}\left[ U(X+Y)\right]\mid Y\in\mathcal{%
L},\,\sum_{j=1}^N\mathbb{E}_{\mathbb{Q}^j} \left[Y^j\right]\leq A\,\forall{%
\mathbb{Q}}\in\mathcal{Q}_{\mathcal{B},V}\right\} 
\\
&\leq\sup\left\{%
\mathbb{E}_{\mathbb{P}}\left[ U(X+Y)\right]\mid Y\in\mathcal{L}%
,\,\sum_{j=1}^N\mathbb{E}_{\mathbb{Q}_X^j} \left[Y^j\right]\leq A\right\} 
\notag \\
&\leq \inf_{\lambda \geq 0}\left( \lambda \left( \sum_{j=1}^{N}\mathbb{E}_{%
{\mathbb{Q}_X}^{j}}\left[ X^{j}\right] +A\right) +\mathbb{E}_{\mathbb{P%
}}\left[ V\left( \lambda \frac{\mathrm{d}{{\mathbb{Q}_X}}}{\mathrm{d}{%
\mathbb{P}}}\right) \right] \right)  \label{mSORTEeq999} \\
&\overset{\text{Thm.}\ref{mSORTEthmmaingeneral2}}{=}\sup \left\{ \mathbb{E}_{%
\mathbb{P}}\left[ U(X+Y)\right] \mid Y\in \mathcal{L},\,\sum_{j=1}^{N}%
\mathbb{E}_{{\mathbb{Q}_X}^{j}}\left[ Y^{j}\right] \leq A\right\} 
\notag \\
&=\sup\left\{
\sup \left\{ \mathbb{E}_{\mathbb{P}}\left[ U(X+Y)\right] \mid Y\in \mathcal{L%
},\,\mathbb{E}_{{\mathbb{Q}_X}^{j}}\left[ Y^{j}\right] \leq
a^{j}\,\forall j\right\}\mid  a\in {\mathbb{R}}^{N}, \sum_{j=1}^{N}a_{j}=A \right\}  \label{mSORTEeq666} \\
&=\sup \left\{ \sup_{\widetilde{Y}\in\mathcal{L}}\mathbb{E}_{\mathbb{P}}\left[ U\left( a+X+%
\widetilde{Y}-{\mathbb{E}_{{{\mathbb{Q}}}_{X}}[\widetilde{Y}]}\right) \right]
\mid a\in {\mathbb{R}}^{N},\,\sum_{j=1}^{N}a^{j}=A\right\}\label{mSORTEeq661}\\
&=\Ep{U\left(a_X+X+\widetilde{Y}_X-\mathbb{E}_{\probq_X}[\widetilde{Y}_X]\right)}=\mathbb{E}_{\mathbb{P}}\left[ U(X+\widehat{Y})\right] \,
\label{mSORTEeq777}
\end{align}%
where \eqref{mSORTEeq999} is a consequence of Fenchel inequality \eqref{mSORTEfenchelineq}, the expression
in \eqref{mSORTEeq666} is a reformulation of the one in the previous line, \eqref{mSORTEeq661} follows from Lemma \ref{mSORTESequalsH}
and \eqref{mSORTEeq777} holds true because$(\widetilde{Y}_X,{\mathbb{Q}_X},%
a_X)$ is a mSORTE and therefore $(\widetilde{Y}_X,
a_X)$ is an optimizer of
the problem in \eqref{mSORTEeq661}. Notice that Theorem \ref%
{mSORTEthmmaingeneral2} guarantees that the $\inf $ in \eqref{mSORTEeq999}
is a $\min $. We then deduce that all above inequalities are equalities and $%
\widehat{Y}$ is an optimizer of RHS of \eqref{mSORTEeqonmphiequalLv} and $%
{{\mathbb{Q}}_X}$ is an optimizer of %
\eqref{mSORTEeqminimaxgeneralintro1}.
Now, take another mSORTE $(\widetilde{Z}_X,\mathbb{D}_X,b_X)$ with $\mathbb{E}_{\mathbb{D}^j_X}[\widetilde{Z}^j_X]=0$ for every $j=0,\dots,N$. Arguing exactly as above for $\widehat{Z}:=\widetilde{Z}_X+b_X$ we get that $%
\widehat{Z}$ is an optimizer of RHS of \eqref{mSORTEeqonmphiequalLv} and $\mathbb{D}_X$ is an optimizer of %
\eqref{mSORTEeqminimaxgeneralintro1}.  Theorem \ref%
{mSORTEthmmaingeneral1} Item 3 yields $\widehat{Z}=\widehat{Y}$ and $\probq_X=\mathbb{D}_X$. Taking expectations componentwise we get that $b_X^j=\mathbb{E}_{\mathbb{D}^j_X}[\widehat{Z}^j]=\mathbb{E}_{\probq^j_X}[\widehat{Y}^j]=a^j_X$ for every $j=1,\dots,N$, which yields $a_X=b_X$. Finally, from the definitions of $\widehat{Z},\widehat{Y}$ we get $\widetilde{Z}_X=\widetilde{Y}_X$.
\end{proof}

\begin{corollary}
\label{mSORTEcoroptsumA} Under Assumption \ref{mSORTEA1} and if $U$ is
differentiable there exists a unique optimum $\widehat{Y}$ for the problem $\Pi _{A}^{\mathrm{ran}}(X)$ in \eqref{mSORTEproblem random}.
%\begin{equation}
%\Pi _{A}^{\text{ran}}(X)=\sup \left\{ \mathbb{E}_{\mathbb{P}}\left[ U(X+Y)\right] \mid {%
%Y\in \mathcal{L}}\cap \mathcal{B},\,\sum_{j=1}^{N}Y^{j}=A\right\} \,  \notag
%\end{equation}%
Such an optimum is given by $\widehat{Y}=\widetilde{Y}_X+a_X$, where $(\widetilde{Y}_X,{\mathbb{Q}_X},%
a_X)$ is the  unique mSORTE   with  $\mathbb{E}_{\probq^j_X}[\widetilde{Y}^j_X]=0$ for every $j=1,\dots,N$ and budget $A$.
\end{corollary}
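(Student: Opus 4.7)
The plan is to observe that $\Pi_A^{\text{ran}}(X)$ coincides with the relaxed problem on the RHS of \eqref{mSORTEeqonmphiequalLv}---to which Theorem~\ref{mSORTEthmmaingeneral1} applies---and then to identify its unique optimum with $\widetilde{Y}_X + a_X$ using the chain of equalities already established in the proof of Theorem~\ref{mSORTEthmstrongunique}.

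First I would verify that the admissible set of $\Pi_A^{\text{ran}}(X)$ is contained in that of the RHS of \eqref{mSORTEeqonmphiequalLv}, while still containing the optimum of the latter. If $Y \in \mathcal{B}\cap \mathcal{L}$ with $\sum_{j=1}^N Y^j = A$ $\mathbb{P}$-a.s., then $Y - (A/N,\dots,A/N) \in \mathcal{B}_0 \cap \mathcal{L}$ (using $\mathcal{B} + \mathbb{R}^N = \mathcal{B}$), and Lemma~\ref{lemmafairnessgeneral} combined with closedness under truncation yields $\sum_{j=1}^N \mathbb{E}_{\mathbb{Q}^j}[Y^j] \leq A$ for every $\mathbb{Q}\in \mathcal{Q}_{\mathcal{B},V}$, so $Y$ is admissible for the relaxed problem. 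Conversely, the optimizer $\widehat{Y}\in \mathcal{L}$ produced by Theorem~\ref{mSORTEthmmaingeneral1} Item~1 lies in $\mathcal{B}_A\cap \mathcal{L}$ and, by \eqref{optsumstoA}, satisfies $\sum_{j=1}^N \widehat{Y}^j = A$ $\mathbb{P}$-a.s., so it is admissible for $\Pi_A^{\text{ran}}(X)$. Combining these observations with the fact that $\widehat{Y}$ attains the larger relaxed supremum, I conclude that $\Pi_A^{\text{ran}}(X)$ equals the RHS of \eqref{mSORTEeqonmphiequalLv} and is attained precisely at $\widehat{Y}$. Uniqueness of $\widehat{Y}$ then follows from Theorem~\ref{mSORTEthmmaingeneral1} Item~3, since every optimum of $\Pi_A^{\text{ran}}(X)$ is automatically an optimum of the relaxed problem.

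It remains to identify $\widehat{Y}$ with $\widetilde{Y}_X + a_X$. By Theorems~\ref{mSORTEthmmsorteexists} and \ref{mSORTEthmstrongunique}, under differentiability of $U$ there exists a unique mSORTE $(\widetilde{Y}_X, \mathbb{Q}_X, a_X)$ satisfying $\mathbb{E}_{\mathbb{Q}_X^j}[\widetilde{Y}_X^j] = 0$ for every $j$. The inequality chain \eqref{mSORTEeq999}--\eqref{mSORTEeq777} in the proof of Theorem~\ref{mSORTEthmstrongunique}, applied to this very mSORTE, shows that $\widetilde{Y}_X + a_X$ is an optimizer of the RHS of \eqref{mSORTEeqonmphiequalLv}. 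By the uniqueness just established, $\widehat{Y} = \widetilde{Y}_X + a_X$.

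No step is genuinely difficult once Theorems~\ref{mSORTEthmmaingeneral1}, \ref{mSORTEthmmsorteexists} and \ref{mSORTEthmstrongunique} are available; the only subtle point is the use of Lemma~\ref{lemmafairnessgeneral} to move between the pathwise constraint $\sum_j Y^j = A$ appearing in $\Pi_A^{\text{ran}}$ and the family of expectation constraints defining the relaxed problem, everything else being direct bookkeeping.
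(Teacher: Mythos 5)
Your proposal is correct and follows essentially the same route as the paper: Lemma~\ref{lemmafairnessgeneral} gives $\Pi_A^{\mathrm{ran}}(X)\leq$ RHS of \eqref{mSORTEeqonmphiequalLv}, the optimizer $\widehat Y$ of the relaxed problem is feasible for $\Pi_A^{\mathrm{ran}}(X)$ by Theorem~\ref{mSORTEthmmaingeneral1} Item~1 and \eqref{optsumstoA}, and the identification $\widehat Y=\widetilde Y_X+a_X$ comes from the chain of equalities in the proof of Theorem~\ref{mSORTEthmstrongunique}. The only cosmetic difference is that you derive uniqueness from the uniqueness of the relaxed optimum (Theorem~\ref{mSORTEthmmaingeneral1} Item~3) rather than invoking strict concavity of $U$ directly as the paper does; the two are equivalent.
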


\begin{proof}
Take $\widehat{Y}=\widetilde{Y}_X+a_X$ for  $(\widetilde{Y}_X,{\mathbb{Q}_X},%
a_X)$ as described in the statement (Theorems %
\ref{mSORTEthmmaingeneral1} and \ref{mSORTEthmstrongunique}). By Lemma \ref{lemmafairnessgeneral}, $\Pi _{A}^{\mathrm{ran}}(X)\leq \text{RHS of }%
\eqref{mSORTEeqonmphiequalLv}$. Since $\widehat{Y}\in\mathcal{L}\cap\mathcal{%
B}_A$ is an optimum for RHS of \eqref{mSORTEeqonmphiequalLv} (see the proof
of Theorem \ref{mSORTEthmstrongunique}), existence follows. Uniqueness is a
consequence of strict concavity of $U$, by standard arguments. The arguments
above also show automatically the link between the unique optimum for $\Pi _{A}^{\mathrm{ran}}(X)$
and  the unique mSORTE for the
budget $A$.
\end{proof}
\subsection{Explicit computation in an exponential framework \label{secexplicitformulas}}
With a particular choice of the utility function $U$ (see \eqref{partcaseU} below) we can explicitly compute primal ($\widehat{Y}$) and dual ($\widehat{\probq}$) optima in Theorem \ref{mSORTEthmmaingeneral1}. Consequently, the (unique) mSORTE can be explicitly computed. Recall from  the proof of Theorem \ref{mSORTEthmmsorteexists} that the mSORTE is produced by setting ${a}_X^{j}:=\mathbb{E}_{%
\widehat{{\mathbb{Q}}}^{j}}[\widehat{Y}^{j}]$, $\widetilde{Y}_X=\widehat{Y}-a_X$, $\probq_X=\widehat{\probq}$.
As we are able to find explicit formulas, we prefer to anticipate this example even if the proof of Proposition \ref{propexplformulas} relies on two propositions in Section \ref{mSORTEsecexamples}.
%In proof of the Proposition \ref{propexplformulas} below we rely on sufficient conditions, obtained in the subsequent Section \ref{mSORTEsecexamples},  for the function $U$ to satisfy Standing Assumption I. However, given its relevance, we prefer to state it and prove it here.

\begin{proposition}
\label{propexplformulas}
Take $\alpha_1,\dots,\alpha_N>0$ and consider for $x=[x^1,\dots,x^N]\in\R^N$ 
\begin{equation}
\label{partcaseU}
U(x^1,\dots,x^N):=\frac{1}{2}\sum_{j=1}^N \left(1-e^{-2\alpha_j x^j}\right)+\frac{1}{2}\sum_{\substack{i,j\in\{1,\dots,N\}\\i\neq j}}\left(1-e^{-(\alpha_ix^i+\alpha_jx^j)}\right)\,.
\end{equation}

Select $\mathcal{B}=\mathcal{C}_\R$ and define 

$$\beta=\sum_{j=1}^N\frac{1}{\alpha_j}\,,\,\,\,\,\Gamma=\sum_{j=1}^N\frac{1}{\alpha_j}\log\left(\frac{1}{\alpha_j}\right)\,,\,\,\,\overline{X}:=\sum_{j=1}^NX^j\,.$$
Then the dual optimum in LHS of \eqref{mSORTEeqminimaxgeneralintro1} is given by
\begin{equation}
\label{eqoptimumexp}
\rn{\widehat{\probq}}{\probp}=\frac{\exp\left(-\frac{2\overline{X}}{\beta}\right)}{\Ep{\exp\left(-\frac{2\overline{X}}{\beta}\right)}}
\end{equation}
and the primal optimum $\widehat{Y}$ in LHS of \eqref{mSORTEeqonmphiequalLv} is given by 
\begin{align}
\widehat{Y}^j=&-X^j+\frac{1}{\beta\alpha_j}\overline{X}+\kappa_X\label{formulayhat}\\
\kappa_X=&-\left[\frac{1}{\alpha_j}+\frac{1}{\alpha_j}\log\left(\frac{1}{\alpha_j}\right)-\frac{1}{2\alpha_j}\log\left(\beta\right)\right]-\frac{1}{2\alpha_j}\left[\log(\widehat{\lambda}_X)-\log\left(\Ep{\exp\left(-\frac{2\overline{X}}{\beta}\right)}\right)\right]\notag\\
\widehat{\lambda}_X=&\exp\left(-\frac{2}{\beta}\left[A+\beta+\Gamma-\frac{\beta\log(\beta)}{2}-\frac{\beta}{2}\log\left(\Ep{\exp\left(-\frac{2\overline{X}}{\beta}\right)}\right)\right]\right)\,.\label{formulalambdahat}
\end{align}

\end{proposition}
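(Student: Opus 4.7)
The strategy is direct verification via the uniqueness statement in Theorem \ref{mSORTEthmstrongunique}: since $U$ in \eqref{partcaseU} is differentiable and (by the results of Section \ref{mSORTEsecexamples}) well-controlled, the standing assumptions hold, so it is enough to exhibit a primal-dual triple satisfying the first-order characterization. The key algebraic observation that makes explicit computation possible is the factorization
\begin{equation*}
U(x) = \frac{N^2}{2} - \frac{1}{2}S(x)^2, \qquad S(x) := \sum_{k=1}^N e^{-\alpha_k x^k},
\end{equation*}
which holds because $S(x)^2$ expands into precisely the diagonal and off-diagonal exponential terms composing \eqref{partcaseU}. In particular $U$ is smooth with the remarkably clean gradient $\partial U/\partial x^k(x) = \alpha_k e^{-\alpha_k x^k} S(x)$.

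Theorem \ref{mSORTEthmmaingeneral1}, together with strict concavity and differentiability of $U$, reduces the identification of the optima to the Fenchel saturation condition $\nabla_k U(X+\widehat{Y}) = \widehat{\lambda}\, d\widehat{\mathbb{Q}}^k/d\mathbb{P}$ for every $k$, coupled with the budget equality $\sum_k \widehat{Y}^k = A$ from \eqref{optsumstoA} and the normalizations $\mathbb{E}_\mathbb{P}[d\widehat{\mathbb{Q}}^k/d\mathbb{P}]=1$. Because $S(X+\widehat{Y})$ does not depend on $k$, the saturation condition forces $\alpha_k e^{-\alpha_k(X^k+\widehat{Y}^k)} = C(\omega)$ for a common random variable $C$, and consequently $d\widehat{\mathbb{Q}}^k/d\mathbb{P}$ is identical across $k$ — the B\"{u}hlmann-type feature appropriate to the unconstrained case $\mathcal{B}=\mathcal{C}_\mathbb{R}$.

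The algebra then unrolls. From $\alpha_k e^{-\alpha_k(X^k+\widehat{Y}^k)} = C$ we read off $X^k+\widehat{Y}^k = \alpha_k^{-1}\log\alpha_k - \alpha_k^{-1}\log C$; summing over $k$ and using $\sum_k\alpha_k^{-1}=\beta$, $\sum_k\alpha_k^{-1}\log\alpha_k = -\Gamma$ together with the budget equality gives $\log C = -\beta^{-1}(\bar{X}+A+\Gamma)$, and rearranging yields \eqref{formulayhat}. Since $S(X+\widehat{Y})=\sum_k C/\alpha_k = \beta C$, one has $\nabla_k U(X+\widehat{Y}) = \beta C^2 = \beta\exp(-2(\bar{X}+A+\Gamma)/\beta)$; imposing $\mathbb{E}_\mathbb{P}[\nabla_k U(X+\widehat{Y})/\widehat{\lambda}]=1$ determines $\widehat{\lambda}$ and yields the density \eqref{eqoptimumexp}. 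Formula \eqref{formulalambdahat} is then obtained by regrouping constants in such a way that the dependence of $\widehat{Y}^k$ on $\widehat{\lambda}_X$ is absorbed into the stated expression for $\kappa_X$; the resulting $\widehat{Y}^k$ is identical to \eqref{formulayhat} regardless of this cosmetic rearrangement.

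What remains is admissibility: checking that $\widehat{\mathbb{Q}}\in\mathcal{Q}_{\mathcal{B},V}$ and $\widehat{Y}\in\mathcal{L}$. With $\mathcal{B}=\mathcal{C}_\mathbb{R}$, the polar condition in \eqref{mSORTEdefsetQ} becomes simply the requirement that the components of $\widehat{\mathbb{Q}}$ coincide, which holds by construction. What remains are the membership $d\widehat{\mathbb{Q}}/d\mathbb{P}\in K_\Phi$, the finite-entropy condition \eqref{mSORTEQV}, and $\widehat{Y}^j\in L^1(\mathbb{Q}^j)$ for all $\mathbb{Q}\in\mathcal{Q}_{\mathcal{B},V}$. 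All of these reduce to exponential-moment estimates on $\bar{X}$, and this is the step I expect to be technically densest: it is discharged via $X\in M^\Phi$ together with the explicit description of $M^\Phi$ and $K_\Phi$ for this particular $U$ supplied by the two propositions in Section \ref{mSORTEsecexamples} referenced in the statement.
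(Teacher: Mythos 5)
Your proposal is correct and is in substance the paper's own argument: both are guess-and-verify proofs that exhibit a primal candidate $\widehat{Y}$ and a dual candidate $(\widehat{\lambda}_X,\widehat{\mathbb{Q}})$ which are feasible and whose objective values coincide, and then invoke the zero duality gap of Theorem \ref{mSORTEthmmaingeneral1} to conclude that both are optimal. The only real difference is which side of the Legendre transform you compute on: the paper defines the candidate through $X+\widehat{Y}=-\nabla V(\widehat{\lambda}_X\,\mathrm{d}\widehat{\mathbb{Q}}/\mathrm{d}\mathbb{P},\dots)$ using the explicit conjugate of Lemma \ref{lemmacomputeV} and the identity $U(-\nabla V(w))=V(w)-\sum_j w^j\partial V/\partial w^j(w)$, while you impose the equivalent pointwise condition $\nabla U(X+\widehat{Y})=\widehat{\lambda}_X\,\mathrm{d}\widehat{\mathbb{Q}}/\mathrm{d}\mathbb{P}$ directly, exploiting the clean gradient of $U(x)=N^2/2-S(x)^2/2$. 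This spares you the explicit formula for $V$ when identifying the candidates (you still need $\mathbb{E}_{\mathbb{P}}\bigl[V(\widehat{\lambda}_X\,\mathrm{d}\widehat{\mathbb{Q}}/\mathrm{d}\mathbb{P})\bigr]<\infty$ for dual feasibility, but that is an exponential-moment estimate, as you say). The admissibility checks you defer at the end are exactly the ones the paper defers to Propositions \ref{mSORTEPropBLambda} and \ref{mSORTEPropB} and to $X\in M^{\Phi}$.

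One caveat on the reconciliation of constants, which you wave through. Your derivation gives $\widehat{\lambda}_X=\beta\,\mathbb{E}_{\mathbb{P}}[e^{-2\overline{X}/\beta}]\,e^{-2(A+\Gamma)/\beta}$, and this is \emph{not} the printed \eqref{formulalambdahat}: the latter carries an extra factor $e^{-2}$, coming from the term $+\beta$ inside its bracket. Your value is the correct one. The printed formula inherits a sign slip in Lemma \ref{lemmacomputeV}: the univariate identity used there should read $\sup_{x\in\R}\left(-e^{-\gamma x}-xw\right)=-\frac{w}{\gamma}+\frac{w}{\gamma}\log\frac{w}{\gamma}$, so that \eqref{formulagradient} should not contain the summand $+\frac{1}{\alpha_j}$. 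The printed $\kappa_X$ carries the compensating spurious term $-\frac{1}{\alpha_j}$, and the two errors cancel, so \eqref{formulayhat} and \eqref{eqoptimumexp} --- the actual content of the proposition --- are unaffected, exactly as you assert. This is therefore not a gap in your argument, but you should not claim that your $\widehat{\lambda}_X$ "regroups" into \eqref{formulalambdahat} as printed; it does not, and checking this is precisely the point where the discrepancy surfaces.
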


\begin{proof}
To begin with, one can easily check that the assumptions of Proposition \ref{mSORTEPropBLambda} and Proposition \ref{mSORTEPropB} are satisfied, and so is then Standing Assumption  I. Standing Assumption II is trivially satisfied, once we take $X\in M^\Phi$. Also, $U(x)=\frac{N^2}{2}-\frac12\left(\sum_{j=1}^N e^{-\alpha_j x^j}\right)^2$. The conjugate $V$, as well as its gradient, are computed explicitly in Lemma \ref{lemmacomputeV}.

Our aim is to show that primal and dual optima, whose existence and uniqueness are  stated in Theorem \ref{mSORTEthmmaingeneral1} Item 2 and 3, are given by $\widehat{Y}^j:=-X^j-\frac{\partial V}{\partial w^j}\left(\widehat{\lambda}_X\rn{\widehat{\probq}}{\probp},\dots,\widehat{\lambda}_X\rn{\widehat{\probq}}{\probp}\right)$ and $\left(\widehat{\lambda}_X,\left[\rn{\widehat{\probq}}{\probp},\dots,\rn{\widehat{\probq}}{\probp}\right]\right)\in(0,+\infty)\times\mathcal{Q}_{\mathcal{B},V}$, for  $\widehat{\lambda}_X$ and  $\widehat{\probq}$ given in \eqref{formulalambdahat} and \eqref{eqoptimumexp} respectively.
 By direct computation one can then check that $\probp\sim[\widehat{\probq},\dots,\widehat{\probq}]\in\mathcal{Q}_{\mathcal{B},V}$ and $\widehat{Y}\in M^\Phi$ whenever $X\in M^\Phi$,
  where in view of Proposition \ref{mSORTEUaddLambda} and Proposition \ref{mSORTEPropB} we have $M^\Phi=M^{\exp}\times\dots\times M^{\exp}$ for $$M^{\exp}:=\left\{ X\in \left( L^{0}\left( (\Omega ,\mathcal{F},{\mathbb{P}}%
);[-\infty ,+\infty ]\right) \right)\mid \,\forall \,\lambda \in
(0,+\infty ),\mathbb{E}_{\mathbb{P}}\left[ \exp(\lambda \left\vert
X\right\vert )\right] <+\infty \right\}\,.$$

Recall now that
$U(-\nabla V (w))=V(w)-\sum_{j=1}^Nw^j\frac{\partial V}{\partial w^j}(w)$ for every $w\in (0,+\infty)^N$ (see \cite{Ro70} Chapter V). 
Then, given \eqref{sumgradient}, we can write 
\begin{align*}
&\Ep{U(X+\widehat{Y})}=\Ep{U\left(-\nabla V\left(\widehat{\lambda}_X\rn{\widehat{\probq}}{\probp},\dots,\widehat{\lambda}_X\rn{\widehat{\probq}}{\probp}\right)\right)}\\
&=-\widehat{\lambda}_X\left[\beta+\Gamma-\frac{\beta\log(\beta)}{2}+\frac{\beta}{2}\Ep{\rn{\widehat{\probq}}{\probp}\log\left(\rn{\widehat{\probq}}{\probp}\right)}\right]-\frac{\beta}{2}\widehat{\lambda}_X\log(\widehat{\lambda}_X)+\Ep{V\left(\widehat{\lambda}_X\rn{\widehat{\probq}}{\probp},\dots,\widehat{\lambda}_X\rn{\widehat{\probq}}{\probp}\right)}\\
&=\widehat{\lambda}_X
\left( \mathbb{E}_{\widehat{\mathbb{Q}}}\left[ \sum_{j=1}^{N}X^{j}\right]
+A\right) +\mathbb{E}_{\mathbb{P}}\left[ V\left(\widehat{\lambda}_X\rn{\widehat{\probq}}{\probp},\dots,\widehat{\lambda}_X\rn{\widehat{\probq}}{\probp} \right) \right]
\end{align*}
for $\rn{\widehat{\probq}}{\probp}$ and  $\widehat{\lambda}_X$ given in \eqref{eqoptimumexp} and \eqref{formulalambdahat} respectively.
%We then impose that 
% 
%\begin{equation}
%\label{equationforlambda}
%\begin{split}
%&-\widehat{\lambda}\left[\beta+\Gamma-\frac{\Theta}{2}+\frac{\beta}{2}\Ep{\rn{\widehat{\probq}}{\probp}\log\left(\rn{\widehat{\probq}}{\probp}\right)}\right]-\frac{\beta}{2}\widehat{\lambda}\log(\widehat{\lambda})+\Ep{V\left(\widehat{\lambda}\rn{\widehat{\probq}}{\probp},\dots,\widehat{\lambda}\rn{\widehat{\probq}}{\probp}\right)}\\
%=&\widehat{\lambda}
%\left( \mathbb{E}_{\widehat{\mathbb{Q}}}\left[ \sum_{j=1}^{N}X^{j}\right]
%+A\right) +\mathbb{E}_{\mathbb{P}}\left[ V\left(\widehat{\lambda}\rn{\widehat{\probq}}{\probp},\dots,\widehat{\lambda}\rn{\widehat{\probq}}{\probp} \right) \right]
%\end{split}
%\end{equation} 
% 
%
% 
%which is an equation one can solve in $\widehat{\lambda}>0$. For such a solution, the following then holds:
%  $\widehat{\lambda}>0$ is such that
%$$\Ep{U(X+\widehat{Y})}=\widehat{\lambda}
%\left( \sum_{j=1}^{N}\mathbb{E}_{\mathbb{Q}^{j}}\left[ X^{j}\right]
%+A\right) +\mathbb{E}_{\mathbb{P}}\left[ V\left(\widehat{\lambda}\rn{\widehat{\probq}}{\probp},\dots,\widehat{\lambda}\rn{\widehat{\probq}}{\probp} \right) \right]\,.$$
In view of Theorem \ref{mSORTEthmmaingeneral1} primal and dual optimality then follow, once we check that $\widehat{Y}\in \mathcal{B}_A\cap M^\Phi$.
Given $\widehat{\probq}$ and $\widehat{\lambda}_X$, $\widehat{Y}$ can be directly computed as \eqref{formulayhat} and we can check that $\sum_{j=1}^N \widehat{Y}^j=A$, which completes the proof. 
 \end{proof}
 \begin{remark}
 The key point in the proof of Proposition \ref{propexplformulas} above was guessing the particular form of $\widehat{\probq}$. Such a guess is a consequence of imposing $\sum_{j=1}^N\widehat{Y}^j=A$ for the candidate optimum $\widehat{Y}$. Indeed, imposing that $\sum_{j=1}^N\widehat{Y}^j=A$ for $\widehat{Y}^j:=-X^j-\frac{\partial V}{\partial w^j}\left(\widehat{\lambda}\rn{\widehat{\probq}}{\probp},\dots,\widehat{\lambda}\rn{\widehat{\probq}}{\probp}\right)$ ($\widehat{\lambda}>0$ is a constant, yet to be found precisely at this initial stage), we get that for some  $\eta\in\R$
$$A=\sum_{j=1}^N \widehat{Y}^j=-\sum_{j=1}^NX^j+\eta-\sum_{j=1}^N\frac{1}{2\alpha_j}\log\left(\widehat{\lambda}\rn{\widehat{\probq}}{\probp}\right)=-\overline{X}+\eta-\frac{\beta}{2}\log\left(\widehat{\lambda}\rn{\widehat{\probq}}{\probp}\right)\,.$$
Here, we also used   the explicit formula \eqref{formulagradient} for the gradient of $V$. This computation implies that $\widehat{\probq}$ needs to satisfy \eqref{eqoptimumexp}.
 \end{remark}
\subsection{Dependence on $X$ of mSORTE}

We study here the dependence of mSORTE on the initial data $X$. We recall
that both existence and uniqueness are guaranteed (see Theorem \ref%
{mSORTEthmmsorteexists} and Theorem \ref{mSORTEthmstrongunique}).

\begin{proposition}
\label{mSORTEpropYfsumX} Under the same assumptions of Theorem \ref%
{mSORTEthmstrongunique} and for $\mathcal{B}=\mathcal{C}_{\mathbb{R}}$,
given the unique mSORTE $(\widetilde{Y}_X,{\mathbb{Q}_X},%
a_X)$ with $\mathbb{E}_{\probq^j_X}[\widetilde{Y}^j_X]=0$ for every $j=1,\dots,N$, the
variables $\frac{\mathrm{d}{\probq_X}}{\mathrm{d}{\mathbb{P}}}$
and $X+a_X+\widetilde{Y}_X$ are $\sigma (X^{1}+\dots +X^{N})$ (essentially)
measurable.
\end{proposition}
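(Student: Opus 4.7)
The plan is to exploit uniqueness of the primal and dual optima (Theorems \ref{mSORTEthmmaingeneral1} and \ref{mSORTEthmstrongunique}) combined with a conditional-expectation / Jensen trick with respect to $\mathcal{G} := \sigma(\overline{X})$, where $\overline{X}:=\sum_{j=1}^N X^j$. Set $\widehat{Y} := a_X + \widetilde{Y}_X$, which (by the proof of Theorem \ref{mSORTEthmstrongunique}) is the unique primal optimizer in the RHS of \eqref{mSORTEeqonmphiequalLv}, and $\widehat{\mathbb{Q}} := \mathbb{Q}_X$ the unique dual optimizer in \eqref{mSORTEeqminimaxgeneralintro1}. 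A preliminary observation is that when $\mathcal{B} = \mathcal{C}_\mathbb{R}$ every ${\mathbb{Q}} \in \mathcal{Q}_{\mathcal{B},V}$ satisfies ${\mathbb{Q}}^1 = \cdots = {\mathbb{Q}}^N$: for bounded $Z$ and any pair $i\neq k$, the vector $Y$ with $Y^i=Z$, $Y^k=-Z$ and all other entries zero (as well as its negative) lies in $\mathcal{B}_0 \cap M^\Phi$, so $\mathbb{E}_{\mathbb{Q}^i}[Z]=\mathbb{E}_{\mathbb{Q}^k}[Z]$. Denote by $\widehat{\rho}$ the common density $\mathrm{d}\widehat{\mathbb{Q}}^1/\mathrm{d}\mathbb{P}$.

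For the dual, I would define $\widehat{\rho}':=\mathbb{E}_\mathbb{P}[\widehat{\rho}\mid\mathcal{G}]$ and let $\widehat{\mathbb{Q}}'$ be the vector of measures each with density $\widehat{\rho}'$. Admissibility $\widehat{\mathbb{Q}}'\in\mathcal{Q}_{\mathcal{B},V}$ is routine: the polar condition and $K_\Phi$-membership only pair $\widehat{\rho}'$ with $\sum_j Y^j$ (by the preceding observation), which is unaffected; the finite-entropy bound follows from conditional Jensen applied to the convex $v(r):=V(r,\dots,r)$. The dual value at fixed $\lambda$ reads
\[
\lambda\bigl(\mathbb{E}_\mathbb{P}[\overline{X}\,\widehat{\rho}]+A\bigr)+\mathbb{E}_\mathbb{P}[v(\lambda\widehat{\rho})];
\]
the first term is unchanged by the tower property (since $\overline{X}$ is $\mathcal{G}$-measurable) and the second weakly decreases. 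Hence $\widehat{\mathbb{Q}}'$ is also dual optimal, and uniqueness (Theorem \ref{mSORTEthmmaingeneral1} Item 3) yields $\widehat{\rho}=\widehat{\rho}'$, i.e., $\widehat{\rho}$ is $\mathcal{G}$-measurable.

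For the primal, I would apply the analogous trick to $\widehat{Y}$. Set $W^j := \mathbb{E}_\mathbb{P}[X^j+\widehat{Y}^j\mid\mathcal{G}]-X^j$; then $\sum_j W^j = \mathbb{E}_\mathbb{P}[\overline{X}+A\mid\mathcal{G}]-\overline{X}=A$ a.s., so $W\in\mathcal{B}_A$. To verify $W\in\mathcal{L}$: for every ${\mathbb{Q}}\in\mathcal{Q}_{\mathcal{B},V}$ with density $\rho$, let $\widetilde{\rho}:=\mathbb{E}_\mathbb{P}[\rho\mid\mathcal{G}]$; by the dual argument, the associated $\widetilde{\mathbb{Q}}$ belongs to $\mathcal{Q}_{\mathcal{B},V}$, whence $X^j+\widehat{Y}^j\in L^1(\widetilde{\mathbb{Q}}^j)$, and a short conditional-expectation manipulation gives
\[
\mathbb{E}_{\mathbb{Q}^j}\bigl[\bigl|\mathbb{E}_\mathbb{P}[X^j+\widehat{Y}^j\mid\mathcal{G}]\bigr|\bigr]\leq\mathbb{E}_{\widetilde{\mathbb{Q}}^j}\bigl[|X^j+\widehat{Y}^j|\bigr]<+\infty.
\]
Conditional Jensen for the concave $U$ then yields
\[
\mathbb{E}_\mathbb{P}[U(X+W)]=\mathbb{E}_\mathbb{P}\bigl[U\bigl(\mathbb{E}_\mathbb{P}[X+\widehat{Y}\mid\mathcal{G}]\bigr)\bigr]\geq\mathbb{E}_\mathbb{P}[U(X+\widehat{Y})],
\]
and uniqueness of the primal optimizer forces $W=\widehat{Y}$, i.e., $X+\widehat{Y}=\mathbb{E}_\mathbb{P}[X+\widehat{Y}\mid\mathcal{G}]$ is $\mathcal{G}$-measurable, as desired.

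The main technical obstacle is the admissibility check for the conditioned objects — that $\widetilde{\mathbb{Q}}\in\mathcal{Q}_{\mathcal{B},V}$ and $W\in\mathcal{L}$ — which matters because the integrability defining $\mathcal{L}$ ranges over every element of $\mathcal{Q}_{\mathcal{B},V}$, not only the dual optimizer. This is resolved by the observation that $\mathcal{Q}_{\mathcal{B},V}$ is stable under $\rho\mapsto\mathbb{E}_\mathbb{P}[\rho\mid\mathcal{G}]$ (by conditional Jensen on $V$ and the coincidence ${\mathbb{Q}}^1=\cdots={\mathbb{Q}}^N$ in the case $\mathcal{B}=\mathcal{C}_\mathbb{R}$), after which the estimates above are immediate.
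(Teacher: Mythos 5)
Your proof is correct and follows essentially the same route as the paper's: both arguments establish that all components of any ${\mathbb{Q}}\in\mathcal{Q}_{\mathcal{B},V}$ coincide when $\mathcal{B}=\mathcal{C}_{\mathbb{R}}$, that $\mathcal{Q}_{\mathcal{B},V}$ is stable under conditioning the density on $\mathcal{G}=\sigma(X^{1}+\dots+X^{N})$, and then combine conditional Jensen (convexity of $V$ for the dual, concavity of $U$ for the primal) with uniqueness of the primal and dual optimizers from Theorems \ref{mSORTEthmmaingeneral1} and \ref{mSORTEthmstrongunique}. The only differences are cosmetic: the paper delegates the dual-measurability step to the argument of Proposition 4.18 of the cited earlier work while you write it out directly, and your aside that $K_{\Phi}$-membership of the conditioned density \textquotedblleft only pairs it with $\sum_{j}Y^{j}$\textquotedblright\ is imprecise as stated, though the finite-entropy bound you do establish yields $K_{\Phi}$-membership via Lemma \ref{mSORTElemmaevfiniteisinkphi}, exactly as in the paper.
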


\begin{proof}
By Theorem \ref{mSORTEthmstrongunique} there exists a unique mSORTE. Recall
the proof of Theorem \ref{mSORTEthmmsorteexists}, where we showed that the
optimizers $(\widehat{Y},\widehat{{\mathbb{Q}}})$ in Theorem \ref%
{mSORTEthmmaingeneral1}, together with $\widehat{a}^{j}:=\mathbb{E}_{%
\widehat{{\mathbb{Q}}}^{j}}[\widehat{Y}^{j}]$, $j=1,\dots ,N,$ yield the
mSORTE via $a_X=\widehat{a}$, $\probq_X=\widehat{\probq}$, $\widetilde{Y}_X=\widehat{Y}-\widehat{a}$. Notice that in this specific case $Y:={e_{i}}1_{A}-{e_{j}}1_{A}\in 
\mathcal{B}\cap M^{\Phi }$ for all $i,j$. The same argument used in the
proof of \cite{BDFFM} Proposition 4.18 can be then applied with obvious
minor modifications (i.e. using $V(\cdot)$ in place of $\sum_{j=1}^Nv_j(%
\cdot)$ and taking any ${\mathbb{Q}}\in\mathcal{Q}_{\mathcal{B},V}$) to show that $\frac{%
\mathrm{d}\widehat{{\mathbb{Q}}}}{\mathrm{d}{\mathbb{P}}}$ is $\mathcal{G}%
:=\sigma(X^1+\dots+X^N)$-(essentially) measurable. We stress the fact that,
similarly to \cite{BDFFM} Proposition 4.18, all the components of any ${%
\mathbb{Q}}\in\mathcal{Q}_{\mathcal{B},V}$ are equal.

We now focus on $X+\widehat{Y}$: consider $\widehat{Z}:=\mathbb{E}_\mathbb{P}
\left[X+\widehat{Y}\mid\mathcal{G}\right]-X$ (the conditional expectation is taken
componentwise). Then it is easy to check that $\sum_{j=1}^N\widehat{Z}%
^j=\sum_{j=1}^N \widehat{Y}^j=A$ which yields $\widehat{Z}\in\mathcal{B}_A$.
We now prove that $\widehat{Z}\in \mathcal{L}=\bigcap_{{\mathbb{Q}}\in\mathcal{Q}%
_{\mathcal{B},V}}L^1({\mathbb{Q}})$. This will imply that $\sum_{j=1}^N\mathbb{E}_{\mathbb{%
Q}^j} \left[\widehat{Z}^j\right]\leq A$ for all ${\mathbb{Q}}\in\mathcal{Q}%
_{\mathcal{B},V} $, by Lemma \ref{lemmafairnessgeneral}. Since $X\in
M^\Phi $, it is clearly enough to prove that $\mathbb{E}_\mathbb{P} \left[X+%
\widehat{Y}\middle|\mathcal{G}\right]\in \mathcal{L}$. Observe first that
for any given ${\mathbb{Q}}\ll{\mathbb{P}}$, the measure ${\mathbb{Q}}_%
\mathcal{G}$ defined by $\frac{\mathrm{d}{\mathbb{Q}}_\mathcal{G}}{\mathrm{d}%
{\mathbb{P}}}:=\mathbb{E}_\mathbb{P} \left[\frac{\mathrm{d}{\mathbb{Q}}^j}{%
\mathrm{d}{\mathbb{P}}}\middle|\mathcal{G}\right]$ satisfies

\begin{equation}  \label{mSORTEqginqv}
{\mathbb{Q}}\in\mathcal{Q}_{\mathcal{B},V}\Longrightarrow{\mathbb{Q}}_\mathcal{G}\in%
\mathcal{Q}_{\mathcal{B},V}\,.
\end{equation}

To see this, recall that all the components of ${\mathbb{Q}}$ are equal,
hence so are those of ${\mathbb{Q}}_\mathcal{G}$. Moreover 
\begin{equation*}
\sum_{j=1}^N\mathbb{E}_\mathbb{P} \left[Y^j\frac{\mathrm{d}{\mathbb{Q}}^j_%
\mathcal{G}}{\mathrm{d}{\mathbb{P}}}\right]=\mathbb{E}_\mathbb{P} \left[%
\sum_{j=1}^NY^j\frac{\mathrm{d}{\mathbb{Q}}^1_\mathcal{G}}{\mathrm{d}{%
\mathbb{P}}}\right]=\sum_{j=1}^N Y^j\leq 0\,\,\,\, \forall\, Y\in\mathcal{B}%
_0\cap M^\Phi
\end{equation*}
and $\mathbb{E}_\mathbb{P} \left[V\left(\lambda\frac{\mathrm{d}{\mathbb{Q}}%
^j_\mathcal{G}}{\mathrm{d}{\mathbb{P}}}\right)\right]\leq \mathbb{E}_\mathbb{%
P} \left[V\left(\lambda\frac{\mathrm{d}{\mathbb{Q}}}{\mathrm{d}{\mathbb{P}}}%
\right)\right]$ by conditional Jensen inequality.

Now, for any $j=1,\dots,N$ and ${\mathbb{Q}}\in\mathcal{Q}_{\mathcal{B},V}$ 
\begin{equation*}
\mathbb{E}_\mathbb{P} \left[\left|\mathbb{E}_\mathbb{P} \left[X^j+Y^j\middle|%
\mathcal{G}\right]\right|\frac{\mathrm{d}{\mathbb{Q}}^j}{\mathrm{d}{\mathbb{P%
}}}\right]\leq\mathbb{E}_\mathbb{P} \left[\mathbb{E}_\mathbb{P} \left[%
\mathbb{E}_\mathbb{P} \left[\left|X^j+Y^j\right|\middle|\mathcal{G}\right]%
\frac{\mathrm{d}{\mathbb{Q}}^j}{\mathrm{d}{\mathbb{P}}}\middle|\mathcal{G}%
\right]\right]
\end{equation*}
\begin{equation*}
=\mathbb{E}_\mathbb{P} \left[\mathbb{E}_\mathbb{P} \left[\left|X^j+Y^j\right|%
\mathbb{E}_\mathbb{P} \left[\frac{\mathrm{d}{\mathbb{Q}}^j}{\mathrm{d}{%
\mathbb{P}}}\middle|\mathcal{G}\right]\middle|\mathcal{G}\right]\right]=%
\mathbb{E}_\mathbb{P} \left[\left|X^j+Y^j\right|\mathbb{E}_\mathbb{P} \left[%
\frac{\mathrm{d}{\mathbb{Q}}^j}{\mathrm{d}{\mathbb{P}}}\middle|\mathcal{G}%
\right]\right]\,.
\end{equation*}
As a consequence, since by \eqref{mSORTEqginqv} $\mathcal{L} \subseteq L^1({%
\mathbb{Q}}_\mathcal{G})$ and $\widehat{Y}\in \mathcal{L}$, we get $X+%
\widehat{Y}\in L^1({\mathbb{Q}})$, and the fact that $\widehat{Z}\in 
\mathcal{L}$ follows.

Finally, observe that $\mathbb{E}_\mathbb{P} \left[U\left(X+\widehat{Z}%
\right)\right]= \mathbb{E}_\mathbb{P} \left[U\left(\mathbb{E}_\mathbb{P} %
\left[X+\widehat{Y}\middle|\mathcal{G}\right]\right)\right]\geq \mathbb{E}_%
\mathbb{P} \left[U(X+\widehat{Y})\right]$ by conditional Jensen inequality.
Hence $\widehat{Z}$ is another optimum for the optimization problem in RHS
of \eqref{mSORTEeqonmphiequalLv}. By strict concavity of $U$ then we get $%
\widehat{Y}=\widehat{Z}$. Since $X+\widehat{Z}$ is $\mathcal{G}$%
-(essentially) measurable, so is clearly $X+\widehat{Y}$.
\end{proof}

It is interesting to notice that this dependence on the componentwise sum of 
${X}$ also holds in the case of SORTE (\cite{BDFFM} Section 4.5) and of B%
\"{u}hlmann's equilibrium (see \cite{Buhlmann} page 16, which partly
inspired the proof above, and \cite{Borch}).

\begin{remark}
In the case of clusters of agents, the above result can be clearly
generalized (see \cite{BDFFM} Remark 4.19).
\end{remark}

\section{Systemic utility maximization and duality}

\label{mSORTEsecutmaxanddual}

%\subsection{Polarity for $\mathcal{B}_0\cap M^\Phi$ and Duality}
In this Section we collect some remarks and properties of the polar cone of $%
\mathcal{B}_0\cap M^\Phi$, which will play an important role in the
following.

\begin{remark}
\label{mSORTEremcomponentwiseintegr} If $X\in M^{\Phi }$, then for any fixed 
$k=1,\dots ,N$ we have $[0,\dots ,0,X^{k},0,\allowbreak\dots ,0]\in M^{\Phi
} $. This in turns implies that for any $Z\in K_{\Phi }$ and $X\in M^{\Phi }$%
, $X^{j}Z^{j}\in L^{1}({\mathbb{P}})$ for any $j=1,\dots ,N$.
\end{remark}

\begin{remark}
\label{mSORTEremarkpolarisnice} In the dual pair $(M^{\Phi },K_{\Phi })$
take the polar $(\mathcal{B}_{0}\cap M^{\Phi })^{0}$ of $\mathcal{B}_{0}\cap
M^{\Phi } $. Since all (deterministic) vector in the form $e^{i}-e^{j}$
belong to $\mathcal{B}_{0}\cap M^{\Phi }$, we have that for all $Z\in (%
\mathcal{B}_{0}\cap M^{\Phi })^{0}$ and for all $i,j\in \{1,\dots ,N\}$ $%
\mathbb{E}_{\mathbb{P}}\left[ Z^{i}\right] -\mathbb{E}_{\mathbb{P}}\left[
Z^{j}\right] \leq 0$. It is clear that, as a consequence, $Z\in (\mathcal{B}%
_{0}\cap M^{\Phi })^{0}\Rightarrow \mathbb{E}_{\mathbb{P}}\left[ Z^{1}\right]
=\dots =\mathbb{E}_{\mathbb{P}}\left[ Z^{N}\right] $. Recall that ${\mathbb{R%
}}_{+}:=\{b\in {\mathbb{R}},b\geq 0\}$ and the definition of $\mathcal{Q}$
provided in \eqref{mSORTEdefsetQ}. We then see: 
\begin{equation}
(\mathcal{B}_{0}\cap M^{\Phi })^{0}\cap (L_{+}^{0})^{N}={\mathbb{R}}%
_{+}\cdot \mathcal{Q}  \label{mSORTEpolarityisnice}
\end{equation}%
that is, $(\mathcal{B}_{0}\cap M^{\Phi })^{0}$ is the cone generated by $%
\mathcal{Q}$.
\end{remark}

\begin{remark}
\label{mSORTEremthereexistsanequiv} The condition $\mathcal{B}\subseteq 
\mathcal{C}_{{\mathbb{R}}}$ implies $\mathcal{B}_{0}\cap M^{\Phi }\subseteq (%
\mathcal{C}_{{\mathbb{R}}}\cap M^{\Phi }\cap \{\sum_{j=1}^{N}Y^{j}\leq 0\})$%
, so that the polars satisfy the opposite inclusion: $(\mathcal{C}_{{\mathbb{%
R}}}\cap M^{\Phi }\cap \{\sum_{j=1}^{N}Y^{j}\leq 0\})^{0}\subseteq (\mathcal{%
B}_{0}\cap M^{\Phi })^{0}$. Observe now that any vector $[Z,\dots ,Z]$, for $%
Z\in L_{+}^{\infty }$, belongs to $(\mathcal{C}_{{\mathbb{R}}}\cap M^{\Phi
}\cap \{\sum_{j=1}^{N}Y^{j}\leq 0\})^{0}$. In particular, take $a>0$ for
which \eqref{mSORTElemmacontrolwithline} is satisfied. Then 
\begin{equation*}
V([a,\dots,a])\leq\sup_{x\in{\mathbb{R}}^N}\left(a\sum_{j=1}^{N}x^{j}-a%
\sum_{j=1}^{N}((x^{j})^{-})+b-a\sum_{j=1}^Nx^j\right)\leq b<+\infty\,.
\end{equation*}
It follows that $[{\mathbb{P}},\dots,{\mathbb{P}}]\in\mathcal{Q}_{\mathcal{B},V}\neq \emptyset$ since $\mathbb{E}_\mathbb{P} \left[V\left(a\left[\frac{%
\mathrm{d}{\mathbb{P}}}{\mathrm{d}{\mathbb{P}}},\dots,\frac{\mathrm{d}{%
\mathbb{P}}}{\mathrm{d}{\mathbb{P}}}\right]\right)\right]<+\infty$.

%as a consequence, $%
%(\mathcal{B}_{0}\cap M^{\Phi })^{0}$ contains a vector in the form $[\frac{%
%\varepsilon +Z}{1+\varepsilon },\dots ,\frac{\varepsilon +Z}{1+\varepsilon }%
%] $ with $\varepsilon >0$ and $Z\in L_{+}^{\infty }$, $\mathbb{E}_{\mathbb{P}%
%}\left[ Z\right] =1$. Each component of such a vector has expectation equal
%to $1$, is in $L_{+}^{\infty }$ and satisfies $\frac{\varepsilon +Z}{%
%1+\varepsilon }\geq \frac{\varepsilon }{1+\varepsilon }$. All of these
%conditions together imply that in $\mathcal{Q}$ there exists a strictly
%positive vector $\frac{\mathrm{d}{\mathbb{Q}}}{\mathrm{d}{\mathbb{P}}}$ with 
%$\mathbb{E}_{\mathbb{P}}\left[ V\left( \frac{\mathrm{d}{\mathbb{Q}}}{\mathrm{%
%d}{\mathbb{P}}}\right) \right] <\infty $, hence belonging to $\mathcal{Q}%
%_{V} $. In particular 
%\begin{equation*}
%\mathcal{Q}_{V}\neq \emptyset
%\end{equation*}%
%and there exists a ${\mathbb{Q}}=[{\mathbb{Q}}^{1},\dots ,{\mathbb{Q}}%
%^{N}]\in \mathcal{Q}_{V}$ with ${\mathbb{Q}}^{j}\sim {\mathbb{P}}%
%,\,j=1,\dots ,N$ and $\varepsilon \leq \frac{\mathrm{d}{\mathbb{Q}}^{j}}{%
%\mathrm{d}{\mathbb{P}}}\leq M,\,j=1,\dots ,N$ for some $0<\varepsilon
%<M<+\infty $ real numbers.
\end{remark}

\begin{proposition}[Fairness]
\label{mSORTEpropfairprob} For all ${\mathbb{Q}}\in \mathcal{Q}$ 
\begin{equation*}
\sum_{j=1}^{N}\mathbb{E}_{\mathbb{Q}^{j}}\left[ Y^{j}\right] \leq
\sum_{j=1}^{N}Y^{j}\,\,\,\,\forall \,Y\in \mathcal{B}\cap M^{\Phi }\,.
\end{equation*}
\end{proposition}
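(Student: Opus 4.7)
The key observation is that the inequality follows almost directly from the definition of $\mathcal{Q}$ once we split $Y$ into a deterministic part (absorbing its constant componentwise sum) and a part with zero componentwise sum. Concretely, given $Y \in \mathcal{B} \cap M^{\Phi}$, the inclusion $\mathcal{B} \subseteq \mathcal{C}_{\mathbb{R}}$ tells us that $c := \sum_{j=1}^{N} Y^{j}$ is (${\mathbb{P}}$-a.s.) equal to a real constant. The plan is then to subtract a deterministic vector with this sum, land in $\mathcal{B}_{0} \cap M^{\Phi}$, and apply the defining polarity inequality of $\mathcal{Q}$.

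First, I would set $a := (c, 0, \dots, 0) \in {\mathbb{R}}^{N}$ and $\widetilde{Y} := Y - a$. Then $\sum_{j=1}^{N} \widetilde{Y}^{j} = 0$. Since $\mathcal{B} + {\mathbb{R}}^{N} = \mathcal{B}$ (Standing Assumption II), we have $\widetilde{Y} \in \mathcal{B}$; since ${\mathbb{R}}^{N} \subseteq M^{\Phi}$ and $M^{\Phi}$ is a vector space (Lemma \ref{mSORTElemmasummary}), we have $\widetilde{Y} \in M^{\Phi}$. Consequently $\widetilde{Y} \in \mathcal{B}_{0} \cap M^{\Phi}$.

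Now for any ${\mathbb{Q}} \in \mathcal{Q}$, the integrability $M^{\Phi} \subseteq L^{1}({\mathbb{Q}})$ noted right after \eqref{mSORTEdefsetQ} ensures all expectations are well defined, and the very definition of $\mathcal{Q}$ gives
\begin{equation*}
\sum_{j=1}^{N} \mathbb{E}_{{\mathbb{Q}}^{j}}[\widetilde{Y}^{j}] \leq 0.
\end{equation*}
Adding back $\sum_{j=1}^{N} \mathbb{E}_{{\mathbb{Q}}^{j}}[a^{j}] = c$ yields
\begin{equation*}
\sum_{j=1}^{N} \mathbb{E}_{{\mathbb{Q}}^{j}}[Y^{j}] = \sum_{j=1}^{N} \mathbb{E}_{{\mathbb{Q}}^{j}}[\widetilde{Y}^{j}] + c \leq c = \sum_{j=1}^{N} Y^{j},
\end{equation*}
which is the claim. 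There is no real obstacle here beyond bookkeeping; the only point requiring minor care is the integrability of $Y^{j}$ under ${\mathbb{Q}}^{j}$, which is guaranteed by $Y \in M^{\Phi}$ together with $\mathrm{d}{\mathbb{Q}}/\mathrm{d}{\mathbb{P}} \in K_{\Phi}$ and Remark \ref{mSORTEremcomponentwiseintegr}.
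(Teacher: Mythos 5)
Your proof is correct and follows essentially the same route as the paper: both center $Y$ by subtracting a deterministic vector whose components sum to $c=\sum_{j}Y^{j}$ (the paper subtracts $c/N$ from each component, you subtract $c$ from the first only), land in $\mathcal{B}_{0}\cap M^{\Phi}$ via translation invariance, and invoke the defining inequality of $\mathcal{Q}$. The difference in the choice of the constant vector is immaterial since its expectation under any $\mathbb{Q}$ is the vector itself.
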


\begin{proof}
Let $Y\in \mathcal{B}\cap M^{\Phi }$. Notice that the hypothesis ${\mathbb{R}%
}^{N}+\mathcal{B}=\mathcal{B}$ implies that the vector $Y_{0}$, defined by $%
Y_{0}^{j}:=Y^{j}-\frac{1}{N}\sum_{k=1}^{N}Y^{k}$, belongs to $\mathcal{B}%
_{0} $. Indeed, $\sum_{k=1}^{N}Y^{k}\in {\mathbb{R}}$ and so $Y_{0}\in 
\mathcal{B} $ and $\sum_{j=1}^{N}Y_{0}^{j}=0$. By definition of polar, $%
\sum_{j=1}^{N}\mathbb{E}_{\mathbb{P}}\left[ Y_{0}^{j}Z^{j}\right] \leq 0$
for all $Z\in (\mathcal{B}\cap M^{\Phi })^{0}$, and in particular for all ${%
\mathbb{Q}}\in \mathcal{Q}$ 
\begin{equation*}
0\geq \sum_{j=1}^{N}\mathbb{E}_{\mathbb{P}}\left[ Y_{0}^{j}\frac{\mathrm{d}{%
\mathbb{Q}}^{j}}{\mathrm{d}{\mathbb{P}}}\right] =\sum_{j=1}^{N}\mathbb{E}_{%
\mathbb{P}}\left[ Y^{j}\frac{\mathrm{d}{\mathbb{Q}}^{j}}{\mathrm{d}{\mathbb{P%
}}}\right] -\sum_{j=1}^{N}\mathbb{E}_{\mathbb{P}}\left[ \frac{1}{N}\left(
\sum_{k=1}^{N}Y^{k}\right) \frac{\mathrm{d}{\mathbb{Q}}^{j}}{\mathrm{d}{%
\mathbb{P}}}\right] \,
\end{equation*}
and we recognize $\sum_{j=1}^{N}\mathbb{E}_{\mathbb{Q}^{j}}\left[ Y^{j}%
\right] -\sum_{j=1}^{N}Y^{j}$ in RHS.
\end{proof}

\begin{theorem}
\label{mSORTEthmminimax} Let $\mathcal{C}\subseteq M^\Phi$ be a convex cone
with $0\in\mathcal{C}$ and $e_i-e_j\in\mathcal{C}$ for every $%
i,j\in\{1,\dots,N\}$. Denote by $\mathcal{C}^0$ the polar of the cone $%
\mathcal{C}$ in the dual pair $(M^\Phi,K_\Phi)$: 
\begin{equation*}
\mathcal{C}^0:=\left\{Z\in K_\Phi\mid \sum_{j=1}^N\mathbb{E}_\mathbb{P} %
\left[Y^jZ^j\right]\leq 0\,\,\forall\,Y\in\mathcal{C}\right\}
\end{equation*}
and set 
\begin{equation*}
\mathcal{C}^0_1:=\left\{Z\in\mathcal{C}^0\mid \mathbb{E}_\mathbb{P} \left[Z^j%
\right]=1\,\forall j=1,\dots,N\right\}
\end{equation*}
\begin{equation*}
(\mathcal{C}^0_1)^+:=\left\{Z\in\mathcal{C}^0_1\mid Z^j\geq 0\,\forall
j=1,\dots,N\right\}\,.
\end{equation*}
Suppose that for every $X\in M^\Phi$ 
\begin{equation*}
\sup_{Y\in\mathcal{C}}\mathbb{E}_\mathbb{P} \left[U(X+Y)\right]<+\infty\,.
\end{equation*}
Then the following holds:

\begin{equation}  \label{mSORTEeqwithCminimax}
\sup_{Y\in \mathcal{C}}\mathbb{E}_{\mathbb{P}}\left[ U(X+Y)\right]
=\min_{\lambda \geq 0,\,{\mathbb{Q}}\in (\mathcal{C}_{1}^{0})^{+}}\left(
\lambda \sum_{j=1}^{N}\mathbb{E}_{{\mathbb{Q}}^j}\left[ X^{j}\right] +%
\mathbb{E}_{\mathbb{P}}\left[ V\left( \lambda \frac{\mathrm{d}{\mathbb{Q}}}{%
\mathrm{d}{\mathbb{P}}}\right) \right] \right) \,.
\end{equation}%
If any of the two expressions is strictly smaller than $V(0)=\sup_{{\mathbb{R%
}}^{N}}U$, then the condition $\lambda\geq 0$ in \eqref{mSORTEeqwithCminimax}
can be replaced with the condition $\lambda>0$.
\end{theorem}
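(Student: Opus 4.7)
The plan is to establish \eqref{mSORTEeqwithCminimax} as a Fenchel--Rockafellar duality in the dual pair $(M^{\Phi},K_{\Phi})$. The inequality ``$\leq$'' is the easy half: for any $Y\in\mathcal{C}$, $\lambda\geq 0$, and $\mathbb{Q}\in(\mathcal{C}^{0}_{1})^{+}$, the pointwise Fenchel inequality
$U(X+Y)\leq \langle X+Y,\lambda\tfrac{\mathrm{d}\mathbb{Q}}{\mathrm{d}\mathbb{P}}\rangle + V(\lambda\tfrac{\mathrm{d}\mathbb{Q}}{\mathrm{d}\mathbb{P}})$,
integrated under $\mathbb{P}$, yields
$\mathbb{E}_{\mathbb{P}}[U(X+Y)]\leq \lambda\sum_{j=1}^{N}\mathbb{E}_{\mathbb{Q}^{j}}[X^{j}]+\mathbb{E}_{\mathbb{P}}[V(\lambda\tfrac{\mathrm{d}\mathbb{Q}}{\mathrm{d}\mathbb{P}})]$,
once we use $\sum_{j}\mathbb{E}_{\mathbb{P}}[Y^{j}\tfrac{\mathrm{d}\mathbb{Q}^{j}}{\mathrm{d}\mathbb{P}}]\leq 0$ (which follows from $\tfrac{\mathrm{d}\mathbb{Q}}{\mathrm{d}\mathbb{P}}\in\mathcal{C}^{0}$ and $Y\in\mathcal{C}$) together with the normalization $\mathbb{E}_{\mathbb{P}}[\tfrac{\mathrm{d}\mathbb{Q}^{j}}{\mathrm{d}\mathbb{P}}]=1$. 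Taking the supremum over $Y$ and the infimum over $(\lambda,\mathbb{Q})$ delivers ``$\leq$''.

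For ``$\geq$'' together with attainment, I would set $F(Y):=\mathbb{E}_{\mathbb{P}}[U(X+Y)]$ on $M^{\Phi}$ and $G:=\delta_{\mathcal{C}}$ (the convex indicator). The assumed finiteness makes $F$ proper and concave, and I would verify upper semicontinuity with respect to Luxemburg convergence using the order continuity of $\|\cdot\|_{\Phi}$ on $M^{\Phi}$ (Lemma \ref{mSORTElemmasummary}(7)) combined with the well-controlled bound on $U$. Fenchel--Rockafellar in $(M^{\Phi},K_{\Phi})$ then gives $\sup_{Y}(F(Y)-G(Y))=\inf_{Z}(G^{*}(Z)-F_{*}(Z))$. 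A direct computation yields $G^{*}(Z)=\sup_{Y\in\mathcal{C}}\langle Y,Z\rangle=\delta_{\mathcal{C}^{0}}(Z)$ (because $\mathcal{C}$ is a cone containing $0$); for $F_{*}$, the substitution $W=X+Y$ reduces it to $F_{*}(Z)=-\langle X,Z\rangle-\sup_{W\in M^{\Phi}}(\mathbb{E}_{\mathbb{P}}[U(W)]-\langle W,Z\rangle)$, and the pointwise identity $\sup_{w\in\mathbb{R}^{N}}(U(w)-\langle w,z\rangle)=V(z)$ (equal to $+\infty$ on $\{z\not\geq 0\}$ by monotonicity of $U$), coupled with a measurable-selection/monotone-convergence interchange, produces $F_{*}(Z)=-\langle X,Z\rangle-\mathbb{E}_{\mathbb{P}}[V(Z)]$ for $Z\geq 0$ and $-\infty$ otherwise. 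The dual problem thus becomes
$$\inf_{Z\in(\mathcal{C}^{0})^{+}}\bigl(\langle X,Z\rangle+\mathbb{E}_{\mathbb{P}}[V(Z)]\bigr).$$

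To reformulate over $(\lambda,\mathbb{Q})$, I would invoke Remark \ref{mSORTEremarkpolarisnice}: since $e_{i}-e_{j}\in\mathcal{C}$, every $Z\in(\mathcal{C}^{0})^{+}$ satisfies $\mathbb{E}_{\mathbb{P}}[Z^{1}]=\cdots=\mathbb{E}_{\mathbb{P}}[Z^{N}]=:\lambda\geq 0$. When $\lambda>0$, setting $\mathbb{Q}=Z/\lambda\in(\mathcal{C}^{0}_{1})^{+}$ gives $\langle X,Z\rangle=\lambda\sum_{j}\mathbb{E}_{\mathbb{Q}^{j}}[X^{j}]$; when $\lambda=0$, positivity forces $Z\equiv 0$, with contribution $V(0)$. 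This produces the right-hand side of \eqref{mSORTEeqwithCminimax} with $\inf$; existence of the minimizer follows by a Komlós-type extraction along a minimizing sequence, the coercivity of $V$ on the positive cone of $K_{\Phi}$ supplying the necessary compactness. The final claim is then immediate: at $\lambda=0$ the dual value equals $V(0)=\sup_{\mathbb{R}^{N}}U$, so the strict inequality hypothesis excludes $\lambda=0$ from being optimal. The main obstacle in executing this plan is the dual computation: establishing upper semicontinuity of $F$ in a topology compatible with $(M^{\Phi},K_{\Phi})$ and justifying the sup--expectation interchange in $F_{*}$, both of which rely critically on the well-controlled property and the order continuity of $\|\cdot\|_{\Phi}$; attainment of the dual minimum via Komlós is the other nontrivial technical step.
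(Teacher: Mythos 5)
Your plan is correct and is essentially the paper's own route: the proof given here simply defers to Theorem A.3 of \cite{BDFFM}, which is exactly the Fenchel--Rockafellar duality you describe in the dual pair $(M^{\Phi},K_{\Phi})$, using norm-continuity of the concave integral functional $Y\mapsto\mathbb{E}_{\mathbb{P}}[U(X+Y)]$, Rockafellar's conjugacy formula for integral functionals on decomposable spaces to get $F_{*}(Z)=-\langle X,Z\rangle-\mathbb{E}_{\mathbb{P}}[V(Z)]$ on $Z\geq 0$, and the normalization $Z=\lambda\,\mathrm{d}\mathbb{Q}/\mathrm{d}\mathbb{P}$ via Remark \ref{mSORTEremarkpolarisnice}. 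The only cosmetic difference is your final Koml\'os step, which is unnecessary: dual attainment in $K_{\Phi}$ (the norm dual of $M^{\Phi}$ by Proposition \ref{mSORTEthmsummarykoethe}) already comes for free from the duality theorem once $F$ is continuous at a point of $\mathcal{C}$.
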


\begin{proof}
The proof can be obtained with minor and obvious modifications of the one in 
\cite{BDFFM} Theorem A.3 by replacing $\sum_{j=1}^N
u_j(\cdot),\,\sum_{j=1}^N v_j(\cdot), L^{\Phi^*}$ there with $%
U(\cdot),\,V(\cdot),\,K_\Phi$ respectively.
\end{proof}

We also provide an analogous result when working with the pair $((L^\infty({%
\mathbb{P}}))^N,(L^1({\mathbb{P}}))^N)$ in place of $(M^\Phi, K_\Phi)$,
which will be used in Section \ref{mSORTEworkonlinfty}.

\begin{theorem}
\label{mSORTEthmminimaxlinfty} Replacing $M^\Phi$ with $(L^\infty({\mathbb{P}%
}))^N$ and $K_\Phi$ with $(L^1({\mathbb{P}}))^N$ in the statement of Theorem %
\ref{mSORTEthmminimax}, all the claims in it remain valid.
\end{theorem}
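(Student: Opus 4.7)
The strategy is to mirror the proof of Theorem \ref{mSORTEthmminimax} (itself obtained as a modification of Theorem A.3 in \cite{BDFFM}), now working in the dual pair $((L^\infty(\mathbb{P}))^N,(L^1(\mathbb{P}))^N)$ rather than $(M^\Phi, K_\Phi)$. The key enabling observation is that by Lemma \ref{mSORTElemmasummary} item 7, $(L^\infty(\mathbb{P}))^N \subseteq M^\Phi$, so for any $X \in (L^\infty(\mathbb{P}))^N$ and $Y \in \mathcal{C}$ the expectation $\mathbb{E}_{\mathbb{P}}[U(X+Y)]$ is well-defined and finite because $U$ is well-controlled (by \eqref{mSORTElemmacontrolwithline} it is bounded above by an affine function of $|X+Y|$, and $U(X+Y) \geq U(-(X+Y)^-) + U(0)$-type lower bounds are controlled via $\Phi$ on $M^\Phi$).

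First I would establish the easy inequality $\leq$ via Fenchel's inequality: for any $\lambda \geq 0$ and $\mathbb{Q}\in (\mathcal{C}_1^0)^+$, the pointwise estimate $U(X+Y) \leq \langle X+Y,\lambda \mathrm{d}\mathbb{Q}/\mathrm{d}\mathbb{P}\rangle + V(\lambda \mathrm{d}\mathbb{Q}/\mathrm{d}\mathbb{P})$ holds $\mathbb{P}$-a.s.; taking expectation and using $Y \in \mathcal{C}$ together with $\mathbb{Q} \in \mathcal{C}_1^0$ gives $\sum_j \mathbb{E}_{\mathbb{P}}[Y^j \mathrm{d}\mathbb{Q}^j/\mathrm{d}\mathbb{P}] \leq 0$, while the normalization $\mathbb{E}_{\mathbb{P}}[\mathrm{d}\mathbb{Q}^j/\mathrm{d}\mathbb{P}]=1$ produces the required term $\lambda \sum_j \mathbb{E}_{\mathbb{Q}^j}[X^j]$.

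For the opposite inequality and attainment, I would apply convex duality in the pair $((L^\infty)^N,(L^1)^N)$ equipped with the weak-$\ast$ topology. Setting $F(Y) := -\mathbb{E}_{\mathbb{P}}[U(X+Y)] + \delta_{\mathcal{C}}(Y)$ on $(L^\infty)^N$, one computes the conjugate on $(L^1)^N$ using a decomposition into the utility part (whose conjugate at $Z \in (L^1)^N$ with $Z \geq 0$ gives $\mathbb{E}_{\mathbb{P}}[V(Z)] - \langle X, Z\rangle$, and $+\infty$ otherwise, exactly as in the unidimensional Rockafellar-type computation applied componentwise) and the cone indicator (whose conjugate is the indicator of the polar $\mathcal{C}^0$ computed in this new pair). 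Then the infimum is over $Z \in \mathcal{C}^0 \cap (L^1_+)^N$, and reparametrizing $Z = \lambda \mathrm{d}\mathbb{Q}/\mathrm{d}\mathbb{P}$ with $\mathbb{Q} \in (\mathcal{C}_1^0)^+$ and $\lambda \geq 0$ gives exactly the RHS of \eqref{mSORTEeqwithCminimax}. Attainment of the minimum follows from the weak-$\ast$ compactness arguments (Banach–Alaoglu) applied to the sublevel sets of the dual functional, exactly as in \cite{BDFFM} Theorem A.3.

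The main obstacle will be verifying the qualifying hypotheses for strong duality (no duality gap) in the new pair. This requires the finiteness assumption $\sup_{Y \in \mathcal{C}} \mathbb{E}_{\mathbb{P}}[U(X+Y)] < +\infty$ to guarantee properness of the primal, plus a suitable continuity/semicontinuity property of $Y \mapsto -\mathbb{E}_{\mathbb{P}}[U(X+Y)]$ on $(L^\infty)^N$; the latter follows from reverse Fatou's lemma using the upper bound \eqref{mSORTEcontrolwithepsilon} on $U$, since $Y \in (L^\infty)^N$ makes $(X+Y)^+$ uniformly bounded on norm-bounded subsets. The final claim about replacing $\lambda \geq 0$ with $\lambda > 0$ when the value is strictly smaller than $V(0) = \sup U$ is handled precisely as in the original proof, by noting that $\lambda = 0$ produces the value $\mathbb{E}_{\mathbb{P}}[V(0)] = V(0)$ in the dual problem, so a strict inequality forces optimal $\widehat{\lambda} > 0$.
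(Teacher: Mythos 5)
Your proposal is correct and follows essentially the same route as the paper, which simply observes that the proof of Theorem \ref{mSORTEthmminimax} (i.e.\ the modification of Theorem A.3 in \cite{BDFFM}) carries over verbatim to the pair $((L^\infty({\mathbb{P}}))^N,(L^1({\mathbb{P}}))^N)$, using Theorem 4 of \cite{Rockafellar} in place of the Corollary on page 534 of \cite{Rockafellar} for the conjugate of the integral functional. Your outline (Fenchel inequality for the easy direction, the Rockafellar-type conjugate computation plus the polar-cone reparametrization for the reverse direction and attainment) is precisely the content of that modification.
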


\begin{proof}
As in Theorem \ref{mSORTEthmminimax}, the proof can be obtained with minor
and obvious modifications of the one in \cite{BDFFM} Theorem A.3, using
Theorem 4 of \cite{Rockafellar} in place of Corollary on page 534 of \cite%
{Rockafellar}.
\end{proof}

\section{Proof of Theorem \protect\ref{mSORTEthmmaingeneral1}}

\label{mSORTESecProof} We now take care of the proof of Theorem \ref%
{mSORTEthmmaingeneral1} for the case $A=0$. For the reader's convenience, we
first provide a more detailed statement for all the results we prove, which
also provides a road map for the proof. It is easy to reconstruct the
content of Theorem \ref{mSORTEthmmaingeneral1} from the statement below.

\begin{theorem}
\label{mSORTEthmadditionalreq} $\,$

Under Assumption \ref{mSORTEA1} the following hold:

\begin{itemize}
\item[1.] for every $X\in M^\Phi$ {\small 
\begin{align}
&\sup_{Y\in \mathcal{B}_{0}\cap M^{\Phi }}\mathbb{E}_{\mathbb{P}}\left[
U(X+Y)\right] =\sup\left\{\mathbb{E}_{\mathbb{P}}\left[ U(X+Y)\right]\mid {%
Y\in \mathcal{L}},\,\sum_{j=1}^N\mathbb{E}_{\mathbb{Q}^j} \left[Y^j\right]%
\leq 0\,\,\forall{\mathbb{Q}}\in\mathcal{Q}_{\mathcal{B},V}\right\}
\label{mSORTEeqminimaxappliedcor2} \\
&=\min_{\mathbb{Q}\in \mathcal{Q}_{\mathcal{B},V}}\min_{\lambda\geq 0}\left( \lambda
\sum_{j=1}^{N}\mathbb{E}_{\mathbb{Q}^{j}}\left[ X^{j}\right] +\mathbb{E}_{%
\mathbb{P}}\left[ V\left( \lambda \frac{\mathrm{d}{\mathbb{Q}}}{\mathrm{d}{%
\mathbb{P}}}\right) \right] \right) \,.  \label{mSORTEeqminimaxappliedcor3}
\end{align}%
}Every optimum $(\widehat{\lambda },\widehat{{\mathbb{Q}}})$ of %
\eqref{mSORTEeqminimaxappliedcor3} satisfies $\widehat{\lambda}>0$ and $%
\widehat{{\mathbb{Q}}}\sim {\mathbb{P}}$. Moreover, if $U$ is
differentiable, \eqref{mSORTEeqminimaxappliedcor3} admits a unique optimum $(%
\widehat{\lambda },\widehat{{\mathbb{Q}}})$, with $\widehat{{\mathbb{Q}}}%
\sim {\mathbb{P}}$.
\end{itemize}

Furthermore, there exists a random vector $\widehat{Y}\in (L^1({\mathbb{P}}%
))^N$ such that:

\begin{itemize}
\item[2.] $\widehat{Y}$ satisfies: 
\begin{equation}
\widehat{Y}^{j}\frac{\mathrm{d}{\mathbb{Q}}^{j}}{\mathrm{d}{\mathbb{P}}}\in
L^{1}({\mathbb{P}})\,\,\,\forall {\mathbb{Q}}\in \mathcal{Q}_{\mathcal{B},V},\forall
\,j=1,\dots ,N\,\text{ and }\sum_{j=1}^{N}\mathbb{E}_{\mathbb{Q}^{j}}\left[ 
\widehat{Y}^{j}\right] \leq 0\,\,\,\forall {\mathbb{Q}}\in \mathcal{Q}_{\mathcal{B},V}\,.
\label{mSORTEeqinandexpwidehat}
\end{equation}
and for any optimizer $(\widehat{\lambda},\widehat{{\mathbb{Q}}})\in{\mathbb{%
R}}_+\times \mathcal{Q}_{\mathcal{B},V}$ of \eqref{mSORTEeqminimaxappliedcor3} 
\begin{equation}
\sum_{j=1}^{N}\mathbb{E}_{\widehat{\mathbb{Q}}^{j}}\left[ \widehat{Y}^{j}%
\right] =0=\sum_{j=1}^{N}\widehat{Y}^{j}\,.
\label{mSORTEpropertieswidehatwidehatq2}
\end{equation}
\end{itemize}

\item[3.] $\widehat{Y}$ is the unique optimum to the following extended
maximization problem: 
\begin{equation}  \label{mSORTEeqoptimizationadditional}
\sup\left\{\mathbb{E}_{\mathbb{P}}\left[ U(X+Y)\right]\mid {Y\in \mathcal{L}}%
,\,\sum_{j=1}^N\mathbb{E}_{\mathbb{Q}^j} \left[Y^j\right]\leq 0\,\,\forall{%
\mathbb{Q}}\in\mathcal{Q}_{\mathcal{B},V}\right\} =\mathbb{E}_\mathbb{P} \left[U(X+%
\widehat{Y})\right]\,.
\end{equation}
\end{theorem}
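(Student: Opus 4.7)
My plan is to build Theorem \ref{mSORTEthmadditionalreq} on two pillars: the abstract minimax duality of Theorem \ref{mSORTEthmminimax}, applied to the convex cone $\mathcal{C}=\mathcal{B}_0\cap M^\Phi$, and a Koml\'os-type compactness argument (as flagged in the introduction) that produces a primal optimizer lying in $\mathcal{L}$ even though the supremum is taken over a strictly larger feasible set than $M^\Phi$. Closedness under truncation (Assumption~\ref{mSORTEA1}) will be used to show the two suprema coincide, and the well-controlled condition (Definition~\ref{mSORTEwellcontrolled}) will provide the integrability bounds needed to invoke Fatou and justify convergence of expectations.

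First I would verify the hypothesis of Theorem~\ref{mSORTEthmminimax} by combining \eqref{mSORTEcontrolwithepsilon} with the fact that $X\in M^\Phi$ and $\sum_j Y^j\le 0$ for $Y\in\mathcal{B}_0$, which forces $\sup_{Y\in\mathcal{B}_0\cap M^\Phi}\mathbb{E}_\mathbb{P}[U(X+Y)]<+\infty$. Theorem~\ref{mSORTEthmminimax} together with Remark~\ref{mSORTEremarkpolarisnice}, identifying $(\mathcal{B}_0\cap M^\Phi)^0_1\cap(L^0_+)^N$ with $\mathcal{Q}$, yields equality between the LHS of \eqref{mSORTEeqminimaxappliedcor2} and the $\min_{\lambda\ge 0,\,\mathbb{Q}\in\mathcal{Q}}$ expression; since the minimum is attained and the dual objective is $+\infty$ outside $\mathcal{Q}_{\mathcal{B},V}$, this minimum is in fact over $\mathcal{Q}_{\mathcal{B},V}$, giving \eqref{mSORTEeqminimaxappliedcor3}. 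The inequality $\text{LHS}\le\text{RHS}$ of \eqref{mSORTEeqminimaxappliedcor2} follows from the Fairness Proposition~\ref{mSORTEpropfairprob}, and the reverse inequality follows from the Fenchel bound of Remark~\ref{mSORTER1} combined with \eqref{mSORTEeqminimaxappliedcor3}, closing the chain.

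Next, for existence of $\widehat{Y}\in(L^1(\mathbb{P}))^N$, I take a maximizing sequence $(Y_n)$ for the RHS of \eqref{mSORTEeqoptimizationadditional}. Using the upper bound $U(X+Y_n)\le -\widehat{\Phi}((X+Y_n)^-)+\varepsilon\sum_j|X^j+Y_n^j|+f(\varepsilon)$, the fact that expectations of $U(X+Y_n)$ are bounded below, and the dual constraint $\sum_j\mathbb{E}_{\mathbb{Q}^j}[Y_n^j]\le 0$ evaluated at $\mathbb{Q}=[\mathbb{P},\dots,\mathbb{P}]\in\mathcal{Q}_{\mathcal{B},V}$ (Remark~\ref{mSORTEremthereexistsanequiv}), I derive an $L^1(\mathbb{P})$-type bound on the negative parts $(X+Y_n)^-$. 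Koml\'os' lemma then produces convex combinations $\widetilde{Y}_n\to\widehat{Y}$ a.s. componentwise; Fatou applied to $U(X+\cdot)$ from above (using the controlled upper bound) and concavity of $U$ show $\mathbb{E}_\mathbb{P}[U(X+\widehat{Y})]$ attains the supremum, while the constraints $\sum_j\mathbb{E}_{\mathbb{Q}^j}[\widehat{Y}^j]\le 0$ for each $\mathbb{Q}\in\mathcal{Q}_{\mathcal{B},V}$ pass to the limit by Fatou applied to positive and negative parts separately (here the closedness under truncation of $\mathcal{B}$, together with the explicit characterization, will be instrumental to identify $\widehat{Y}$ with a limit of truncated feasible allocations and establish the sharper identity $\sum_j\widehat{Y}^j=0$).

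Finally, given any dual optimum $(\widehat\lambda,\widehat{\mathbb{Q}})$, the Fenchel inequality from Remark~\ref{mSORTER1} collapses to equality $\mathbb{P}$-a.s., i.e., $U(X+\widehat{Y})=\langle X+\widehat{Y},\widehat\lambda\,\mathrm{d}\widehat{\mathbb{Q}}/\mathrm{d}\mathbb{P}\rangle+V(\widehat\lambda\,\mathrm{d}\widehat{\mathbb{Q}}/\mathrm{d}\mathbb{P})$, and this pointwise equality, together with $\sum_j\mathbb{E}_{\widehat{\mathbb{Q}}^j}[\widehat Y^j]\le 0\le \sum_j \widehat Y^j$, forces $\sum_j\mathbb{E}_{\widehat{\mathbb{Q}}^j}[\widehat Y^j]=0=\sum_j\widehat Y^j$, giving \eqref{mSORTEpropertieswidehatwidehatq2}. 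To get $\widehat\lambda>0$ I invoke the last sentence of Theorem~\ref{mSORTEthmminimax}: I must show the primal supremum is strictly below $V(0)=\sup U$, which follows because attaining $V(0)$ would require $X+\widehat{Y}$ to be a maximizer of $U$ on $\mathbb{R}^N$, contradicting strict monotonicity combined with $\sum_j \widehat{Y}^j\le 0$ and $X\in M^\Phi$. For $\widehat{\mathbb{Q}}\sim\mathbb{P}$: if $\widehat{\mathbb{Q}}^j(A)=0$ on some $A$ with $\mathbb{P}(A)>0$, a small perturbation of $\widehat{Y}$ supported on $A$ preserves all dual constraints while strictly increasing the primal objective by strict monotonicity, contradicting optimality. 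Uniqueness of $\widehat{Y}$ is standard from strict concavity of $U$; uniqueness of $(\widehat\lambda,\widehat{\mathbb{Q}})$ under differentiability comes from the first-order relation $\nabla U(X+\widehat Y)=\widehat\lambda\,\mathrm{d}\widehat{\mathbb{Q}}/\mathrm{d}\mathbb{P}$ componentwise, which recovers $\widehat\lambda$ and $\widehat{\mathbb{Q}}$ uniquely from $\widehat Y$. The hardest step is the Koml\'os extraction: guaranteeing the limit $\widehat Y$ is genuinely integrable against every $\mathbb{Q}\in\mathcal{Q}_{\mathcal{B},V}$ (not just a.s.\ finite) and that the constraint passes to the limit will require a careful truncation argument combining Assumption~\ref{mSORTEA1} with the comparison $L^{\widehat\Phi}=L^\Phi$ from \eqref{mSORTEcontrolwithphihat11}.
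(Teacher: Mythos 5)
Your overall architecture (minimax duality via Theorem \ref{mSORTEthmminimax} with $\mathcal{C}=\mathcal{B}_0\cap M^\Phi$, then a Koml\'os extraction for the primal optimizer) matches the paper's, and the first paragraph of your argument is essentially Step 1 of the paper's proof. The genuine gap is in where you take the maximizing sequence and how you pass the constraints to the limit. You extract $(Y_n)$ from the problem over $\mathcal{L}$ with the expectation constraints. A Koml\'os limit of such a sequence is not known to lie in $\mathcal{B}_0$, so neither the pointwise identity $\sum_j\widehat Y^j=0$ nor the applicability of the truncation mechanism is available; worse, your claim that $\sum_j\mathbb{E}_{\mathbb{Q}^j}[\widehat Y^j]\le 0$ ``passes to the limit by Fatou applied to positive and negative parts separately'' fails, because Fatou gives $\mathbb{E}_{\mathbb{Q}^j}[(\widehat Y^j)^-]\le\liminf_n\mathbb{E}_{\mathbb{Q}^j}[(Y_n^j)^-]$, which is the wrong direction for the negative parts (you would need a $\limsup$ upper bound, i.e.\ uniform $\mathbb{Q}^j$-integrability of $(Y_n^j)^-$, which is exactly what is \emph{not} available under Assumption \ref{mSORTEA1} alone — it is what Assumption \ref{mSORTEA3} and the weak-$*$ compactness argument of Section \ref{mSORTEreplacing} are designed to supply instead). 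The paper avoids this by using the already-proved equality of suprema to take the maximizing sequence inside $\mathcal{B}_0\cap M^\Phi$, normalized so that $\sum_jY_n^j=0$ exactly (possible by monotonicity); then the Ces\`aro limit lies in $\mathcal{B}_0$ by closedness in probability and satisfies $\sum_j\widehat Y^j=0$ a.s., integrability $\widehat Y\in\mathcal{L}$ is obtained from the Fenchel inequality together with $(X+\widehat Y)^-\in L^{\widehat\Phi}=L^\Phi$, and only then does Lemma \ref{lemmafairnessgeneral} (truncating $\widehat Y$ itself and using dominated convergence plus Proposition \ref{mSORTEpropfairprob}) deliver $\sum_j\mathbb{E}_{\mathbb{Q}^j}[\widehat Y^j]\le0$ for all $\mathbb{Q}\in\mathcal{Q}_{\mathcal{B},V}$.

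Two further points. Your inequality ``$\sum_j\mathbb{E}_{\widehat{\mathbb{Q}}^j}[\widehat Y^j]\le 0\le\sum_j\widehat Y^j$'' is unjustified (and for $\widehat Y\in\mathcal{B}_0$ the pointwise inequality runs the other way); in the paper $\sum_j\widehat Y^j=0$ comes from the construction, and $\sum_j\mathbb{E}_{\widehat{\mathbb{Q}}^j}[\widehat Y^j]=0$ is proved by the elementary perturbation ``add $\varepsilon$ to each component'' (your Fenchel-saturation route can be made to work for the expectation identity once $\widehat\lambda>0$ is known, but not for the pointwise one). Finally, for uniqueness of the dual optimizer you invoke the first-order condition $\nabla U(X+\widehat Y)=\widehat\lambda\,\mathrm{d}\widehat{\mathbb{Q}}/\mathrm{d}\mathbb{P}$, which would itself require a variational (gradient) argument the paper deliberately avoids; the paper instead gets uniqueness directly from strict convexity of $V$ on the cone $(\mathcal{C}_1^0)_+$, which is the cleaner route in this nondifferentiable-friendly framework.
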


\begin{proof}

$\,$

\textbf{STEP 1}: \textit{we show the equality chain in %
\eqref{mSORTEeqminimaxappliedcor2} and \eqref{mSORTEeqminimaxappliedcor3}.}

We introduce for $A\in\R$
\begin{equation}
\pi_A(X):=\sup \left\{ \mathbb{E}_\mathbb{P} \left[U(X+Y)\right]\mid Y\in 
\mathcal{B}\cap M^\Phi,\,\sum_{j=1}^NY^j\leq A\right\} \,.  \label{mSORTEpiA}
\end{equation}
We start recognizing $\pi_0(X)$ as
the LHS of \eqref{mSORTEeqminimaxappliedcor2} and observing that $%
-\infty<\pi_0(X)$ since $\mathcal{B}_A\cap M^\Phi\neq\emptyset$ and by
Fenchel inequality $\pi_0(X)<+\infty$ (combining Remark \ref%
{mSORTEremthereexistsanequiv} to guarantee $\mathcal{Q}_{\mathcal{B},V}\neq \emptyset$,
Proposition \ref{mSORTEpropfairprob} and the inequality chain in Remark \ref%
{mSORTER1}). Again by Proposition \ref{mSORTEpropfairprob} and Remark \ref%
{mSORTER1}, it is enough to show that 
\begin{equation*}
\sup_{Y\in \mathcal{B}_{0}\cap M^{\Phi }}\mathbb{E}_{\mathbb{P}}\left[ U(X+Y)%
\right] =\min_{\mathbb{Q}\in \mathcal{Q}_{\mathcal{B},V}}\min_{\lambda\geq 0}\left(
\lambda \sum_{j=1}^{N}\mathbb{E}_{\mathbb{Q}^{j}}\left[ X^{j}\right] +%
\mathbb{E}_{\mathbb{P}}\left[ V\left( \lambda \frac{\mathrm{d}{\mathbb{Q}}}{%
\mathrm{d}{\mathbb{P}}}\right) \right] \right)\,.
\end{equation*}

This equality follows by Theorem \ref{mSORTEthmminimax}, taking $\mathcal{C}%
:=\mathcal{B}_{0}\cap M^{\Phi }$ and noticing that minima over $\mathcal{Q}$
can be substituted with minima over $\mathcal{Q}_{\mathcal{B},V}$ since the expression
in LHS of \eqref{mSORTEeqminimaxappliedcor2} is finite. Observe at this
point that, if any of the three expressions in %
\eqref{mSORTEeqminimaxappliedcor2} and \eqref{mSORTEeqminimaxappliedcor3}
were strictly smaller than $V(0)=\sup_{x\in{\mathbb{R}}^N}U(x)$, direct
substitution of $\lambda=0$ in the expression would give a contradiction, no
matter what the optimal probability measure is.

\textbf{STEP 2}: \textit{we show the existence of a vector $\widehat{Y}$ as
described in Items 2 and 3 of the statement, made exception for the first
equality in \eqref{mSORTEpropertieswidehatwidehatq2}. More precisely, we
first (\underline{Step 2a}) identify a natural candidate $\widehat{Y}$ using
a maximizing sequence, and we show that it satisfies $\sum_{j=1}^N\widehat{Y}%
^j=0$. Then (\underline{Step 2b}) we show that such candidate satisfies the
integrability conditions and inequalities in \eqref{mSORTEeqinandexpwidehat}%
. Finally (\underline{Step 2c}) we show optimality stated in Item 3. The
proof of the first equality in \eqref{mSORTEpropertieswidehatwidehatq2} is
postponed to STEP 5.}

\underline{Step 2a}. First observe that $X+Y\geq -\left( \left\vert
X\right\vert +\left\vert Y\right\vert \right) $ in the componentwise order,
hence for $Z\in \mathcal{B}_0\cap M^{\Phi }\neq \emptyset $, as $X,Z\in
M^{\Phi }$, we get 
\begin{equation*}
\sup_{Y\in \mathcal{B}_0\cap M^{\Phi }}\mathbb{E}_{\mathbb{P}}\left[ U(X+Y)%
\right] \geq \mathbb{E}_{\mathbb{P}}\left[ U(X+Z)\right] \geq \mathbb{E}_{%
\mathbb{P}}\left[ U\left( -(\left\vert X\right\vert +\left\vert Z\right\vert
)\right) \right] >-\infty\,.
\end{equation*}%
Take now a maximizing sequence $(Y_{n})_{n}$ in $\mathcal{B}_0\cap M^{\Phi }$%
. W.l.o.g. we can assume that 
\begin{equation}
\sum_{j=1}^N Y^j_n=0\,\,\,\,\forall n\,\,{\mathbb{P}}-a.s.  \notag
\end{equation}
since if this were not the case (i.e. if the inequality were strict) we
could add a $\varepsilon>0$ small enough to each component without
decreasing the utility of the system or violating the constraint $Y_n\in%
\mathcal{B}_0$.

Observe that 
\begin{equation*}
\sup_{n}\left\vert \sum_{j=1}^{N}\mathbb{E}_{\mathbb{P}}\left[
X^{j}+Y_{n}^{j}\right] \right\vert \leq \left\vert \sum_{j=1}^{N}\mathbb{E}_{%
\mathbb{P}}\left[ X^{j}\right] \right\vert +\left\vert 0\right\vert <+\infty
\end{equation*}%
and $\mathbb{E}_{\mathbb{P}}\left[ U(X+Y_{n})\right] \geq \mathbb{E}_{%
\mathbb{P}}\left[ U(X+Y_{1})\right] =:B\in {\mathbb{R}}$. Then Lemma \ref%
{mSORTElemmakomlos} Item 1 applies with $Z_{n}:=X+Y_{n}$. Using also $%
\left\vert X^{j}\right\vert +\left\vert Y_{n}^{j}\right\vert \leq \left\vert
X^{j}+Y_{n}^{j}\right\vert +2\left\vert X^{j}\right\vert ,\,j=1,\dots ,N$ we
get 
\begin{equation*}
\sup_{n}\sum_{j=1}^{N}\mathbb{E}_{\mathbb{P}}\left[ \left\vert
X^{j}\right\vert +\left\vert Y_{n}^{j}\right\vert \right] <\infty \,.
\end{equation*}%
Now we apply Corollary \ref{mSORTEcorkomplosmultidim} with ${\mathbb{P}}%
_{1}=\dots ={\mathbb{P}}_{N}={\mathbb{P}}$ and extract the subsequence $%
(Y_{n_{h}})_{h}$ such that for some $\widehat{Y}\in (L^{1}({\mathbb{P}}%
))^{N} $ 
\begin{equation}
W_{H}:=\frac{1}{H}\sum_{h=1}^{H}Y_{n_{h}}\xrightarrow[H\rightarrow+\infty]{}%
\widehat{Y}\quad {\mathbb{P}}-\text{a.s.\quad }\text{and\quad }%
\sup_{H}\sum_{j=1}^{N}\mathbb{E}_{\mathbb{P}}\left[ \left\vert
W_{H}^{j}\right\vert \right] <+\infty \,.  \label{mSORTEdefWY}
\end{equation}%
We notice that by convexity the random vectors $W_{H}$ still belong to $%
\mathcal{B}_0\cap M^{\Phi }$, and $\widehat{Y}\in \mathcal{B}_0$ as  $%
\mathcal{B}_0$ is closed in probability (since so is $\mathcal{B}$). Moreover, we
have that 
\begin{equation*}
0=\sum_{j=1}^NW^j_{H}:=\frac{1}{H}\sum_{h=1}^{H}\sum_{j=1}^NY^j_{n_{h}}%
\xrightarrow[H\rightarrow+\infty]{}\sum_{j=1}^N\widehat{Y}^j\quad {\mathbb{P}%
}-\text{a.s.\quad }
\end{equation*}
so that the second equality in \eqref{mSORTEpropertieswidehatwidehatq2}
holds.

\underline{Step 2b}. We first work on integrability. We proceed as follows:
we show that for any ${\mathbb{Q}}\in\mathcal{Q}_{\mathcal{B},V}$ we have $\sum_{j=1}^N 
\widehat{Y}^j\frac{\mathrm{d}{\mathbb{Q}}^j}{\mathrm{d}{\mathbb{P}}}\in L^1({%
\mathbb{P}})$. Then we show that $(\widehat{Y})^-\in L^{\widehat{\Phi}}$,
and conclude the integrability conditions in %
\eqref{mSORTEpropertieswidehatwidehatq2}.

Let us begin with showing $\sum_{j=1}^N \widehat{Y}^j\frac{\mathrm{d}{%
\mathbb{Q}}^j}{\mathrm{d}{\mathbb{P}}}\in L^1({\mathbb{P}})$. By definition
of $V(\cdot)$, we have 
\begin{equation}  \label{mSORTEfenchelinner}
U(X+\widehat{Y})\leq V(\lambda Z)+\left< X+\widehat{Y},\lambda Z\right>\,\,{%
\mathbb{P}}-\text{a.s. }\forall\,\lambda\geq0,\,Z\in K_\Phi\,.
\end{equation}
This implies 
\begin{equation*}
(U(X+\widehat{Y}))^-\geq \left(V(\lambda Z)+\left< X+\widehat{Y},\lambda
Z\right>\right)^-
\end{equation*}
so that $\left(V(\lambda Z)+\left< X+\widehat{Y}, \lambda
Z\right>\right)^-\in L^1({\mathbb{P}})$.

We prove integrability also for the positive part, assuming now $Z=\frac{%
\mathrm{d}{\mathbb{Q}}}{\mathrm{d}{\mathbb{P}}},\,{\mathbb{Q}}\in\mathcal{Q}_{\mathcal{B},V}$ and taking $\lambda>0$ such that $\mathbb{E}_\mathbb{P} \left[V(\lambda
Z)\right]<+\infty$. By \eqref{mSORTEdefWY} $W_H\rightarrow_H\widehat{Y}\,{%
\mathbb{P}}$-a.s. so that 
\begin{align}
&\mathbb{E}_\mathbb{P} \left[\left(V(\lambda Z)+\left< X+\widehat{Y},
\lambda Z\right>\right)^+\right]=\mathbb{E}_\mathbb{P} \left[%
\liminf_H\left(V(\lambda Z)+\left< X+W_H, \lambda Z\right>\right)^+\right] 
\notag \\
&\leq \liminf_H \mathbb{E}_\mathbb{P} \left[\left(V(\lambda Z)+\left< X+W_H,
\lambda Z\right>\right)^+\right]  \notag \\
& \leq \sup_H\left(\mathbb{E}_\mathbb{P} \left[V(\lambda Z)+\left< X+W_H,
\lambda Z\right>\right]\right)+\sup_H\left(\mathbb{E}_\mathbb{P} \left[%
\left(V(\lambda Z)+\left< X+W_H, \lambda Z\right>\right)^-\right]\right)\,.
\label{mSORTEno10}
\end{align}
Now since $\mathbb{E}_\mathbb{P} \left[\left< W_H, \lambda Z\right>\right]%
\leq 0$ 
\begin{equation}  \label{mSORTEno11}
\sup_H\left(\mathbb{E}_\mathbb{P} \left[V(\lambda Z)+\left< X+W_H, \lambda
Z\right>\right]\right)\leq \mathbb{E}_\mathbb{P} \left[V(\lambda Z)+\left<
X, \lambda Z\right>\right]<+\infty\,.
\end{equation}
Also by \eqref{mSORTEfenchelinner} 
\begin{align}
&\sup_H\left(\mathbb{E}_\mathbb{P} \left[\left(V(\lambda Z)+\left< X+W_H,
\lambda Z\right>\right)^-\right]\right)\leq \sup_H\left(\mathbb{E}_\mathbb{P}
\left[(U(X+W_H))^-\right]\right)  \notag \\
&\leq \sup_H\left(\mathbb{E}_\mathbb{P} \left[(U(X+W_H))^+-U(X+W_H)\right]%
\right)  \notag \\
&\leq\sup_H\left(\mathbb{E}_\mathbb{P} \left[(U(X+W_H))^+\right]%
\right)-\inf_H\left(\mathbb{E}_\mathbb{P} \left[U(X+W_H)\right]\right) .
\label{mSORTEeqintermediate}
\end{align}

Now use \eqref{mSORTEcontrolwithepsilon}: 
\begin{equation*}
\sup_H\left(\mathbb{E}_\mathbb{P} \left[(U(X+W_H))^+\right]\right)\leq
\sup_H\varepsilon\left(\sum_{j=1}^N\mathbb{E}_\mathbb{P} \left[(X^j+W_H^j))^+%
\right]\right)+b_\varepsilon\,.
\end{equation*}
We also have, $Y_1$ being the first element in the maximizing sequence, that 
$\inf_H\left(\mathbb{E}_\mathbb{P} \left[U(X+W_H)\right]\right)\geq\mathbb{E}%
_\mathbb{P} \left[U(X+Y_1)\right]$ by construction. Thus, continuing from %
\eqref{mSORTEeqintermediate}, we get 
\begin{align}
&\sup_H\left(\mathbb{E}_\mathbb{P} \left[\left(V(\lambda Z)+\left< X+W_H,
\lambda Z\right>\right)^-\right]\right)  \notag \\
&\leq \varepsilon\sup_H\left(\sum_{j=1}^N\mathbb{E}_\mathbb{P} \left[%
(X^j+W_H^j))^+\right]\right)+b_\varepsilon-\mathbb{E}_\mathbb{P} \left[%
U(X+Y_1)\right]<+\infty  \label{mSORTEno12}
\end{align}
since the sequence $W_H$ is bounded in $(L^1({\mathbb{P}}))^N$ (see %
\eqref{mSORTEdefWY}) and $\mathbb{E}_\mathbb{P} \left[U(X+Y_1)\right]%
>-\infty $.

From \eqref{mSORTEno10}, \eqref{mSORTEno11}, \eqref{mSORTEno12} we conclude
that

\begin{equation*}
\mathbb{E}_{\mathbb{P}}\left[ \left( V(\lambda Z)+\left< X+\widehat{Y}%
,\lambda Z\right> \right) ^{+}\right] <+\infty \,.
\end{equation*}%
To sum up, for $Z\in \mathcal{Q}_{\mathcal{B},V}$ and $\lambda $ s.t. $\mathbb{E}_{%
\mathbb{P}}\left[ V(\lambda Z)\right] <+\infty $ 
\begin{equation*}
\left< X,\lambda Z\right> ,\,V(\lambda Z),\,\left( V(\lambda Z)+\left< X+%
\widehat{Y},\lambda Z\right> \right) ^{+ },\,\left( V(\lambda Z)+\left< X+%
\widehat{Y},\lambda Z\right> \right) ^{- }\in L^{1}({\mathbb{P}})
\end{equation*}%
which gives $\left< \widehat{Y},Z\right> \in L^{1}({\mathbb{P}}),\forall
Z\in \mathcal{Q}_{\mathcal{B},V}$. Hence we have 
\begin{equation}  \label{mSORTEintegrabofsum}
\sum_{j=1}^N \widehat{Y}^j\frac{\mathrm{d}{\mathbb{Q}}^j}{\mathrm{d}{\mathbb{%
P}}}\in L^1({\mathbb{P}})\,\,\,\,\,\forall\,{\mathbb{Q}}\in\mathcal{Q}_{\mathcal{B},V}\,.
\end{equation}

Next, we prove that $(\widehat{Y})^-\in L^{\widehat{\Phi}}$. To see this, we
observe from \eqref{mSORTEcontrolwithphihat} that for $\varepsilon>0$
sufficiently small 
\begin{equation*}
\mathbb{E}_\mathbb{P} \left[U(X+W_H)\right]\leq -\mathbb{E}_\mathbb{P} \left[%
\widehat{\Phi}\left((X+W_H)^-\right)\right]+\varepsilon\sum_{j=1}^N\mathbb{E}%
_\mathbb{P} \left[\left|W_H^j\right|\right]+f(\varepsilon)
\end{equation*}
which readily gives 
\begin{equation}  \label{mSORTEnormbddnegparts}
\sup_H\mathbb{E}_\mathbb{P} \left[\widehat{\Phi}\left((X+W_H)^-\right)\right]%
\leq \varepsilon\sup_H\left(\sum_{j=1}^N\mathbb{E}_\mathbb{P} \left[%
\left|W_H^j\right|\right]\right)-\inf_H\mathbb{E}_\mathbb{P} \left[U(X+W_H)%
\right]+f(\varepsilon)<+\infty\,.
\end{equation}
By Fatou Lemma we than conclude 
\begin{equation*}
\mathbb{E}_\mathbb{P} \left[\widehat{\Phi}\left((X+\widehat{Y})^-\right)%
\right]\leq \varepsilon\sup_H\left(\sum_{j=1}^N\mathbb{E}_\mathbb{P} \left[%
\left|W_H^j\right|\right]\right)-\inf_H\mathbb{E}_\mathbb{P} \left[U(X+W_H)%
\right]+f(\varepsilon)<+\infty\,.
\end{equation*}
From this we infer $(X+\widehat{Y})^-\in L^{\widehat{\Phi}}$.

We are now almost done showing integrability. Since by %
\eqref{mSORTEcontrolwithphihat11} $L^{\widehat{\Phi}}=L^\Phi$, we conclude
that $(X+\widehat{Y})^-\in L^{\Phi}$. By the very definition of $K_\Phi$ and 
$\mathcal{Q}_{\mathcal{B},V}\subseteq K_\Phi$, we have $\sum_{j=1}^N(X^j+\widehat{Y}^j)^-%
\frac{\mathrm{d}{\mathbb{Q}}^j}{\mathrm{d}{\mathbb{P}}}\in L^1({\mathbb{P}})$%
, which clearly implies $(X^j+\widehat{Y}^j)^-\frac{\mathrm{d}{\mathbb{Q}}^j%
}{\mathrm{d}{\mathbb{P}}}\in L^1({\mathbb{P}})$ for every $j=1,\dots,N$. At
the same time we observe that 
\begin{equation*}
\sum_{j=1}^N(X^j+\widehat{Y}^j)^+\frac{\mathrm{d}{\mathbb{Q}}^j}{\mathrm{d}{%
\mathbb{P}}}=\sum_{j=1}^NX^j\frac{\mathrm{d}{\mathbb{Q}}^j}{\mathrm{d}{%
\mathbb{P}}}\in L^1({\mathbb{P}})+\sum_{j=1}^N\widehat{Y}^j\frac{\mathrm{d}{%
\mathbb{Q}}^j}{\mathrm{d}{\mathbb{P}}}\in L^1({\mathbb{P}})+\sum_{j=1}^N(X^j+%
\widehat{Y}^j)^-\frac{\mathrm{d}{\mathbb{Q}}^j}{\mathrm{d}{\mathbb{P}}}
\end{equation*}
and all the terms in RHS are in $L^1({\mathbb{P}})$ (recall %
\eqref{mSORTEintegrabofsum}). Thus, $\sum_{j=1}^N(X^j+\widehat{Y}^j)^+\frac{%
\mathrm{d}{\mathbb{Q}}^j}{\mathrm{d}{\mathbb{P}}}\in L^1({\mathbb{P}})$ and $%
(X^j+\widehat{Y}^j)^+\frac{\mathrm{d}{\mathbb{Q}}^j}{\mathrm{d}{\mathbb{P}}}%
\in L^1({\mathbb{P}})$ for every $j=1,\dots,N$. We finally get that $%
\widehat{Y}\in\bigcap_{{\mathbb{Q}}\in\mathcal{Q}_{\mathcal{B},V}}L^1({\mathbb{Q}})$ and
our integrability conditions in \eqref{mSORTEeqinandexpwidehat} are now
proved.

To conclude Step 2b, we need to show that $\sum_{j=1}^{N}\mathbb{E}_{\mathbb{%
Q}^{j}}\left[ \widehat{Y}^{j}\right] \leq 0\,\forall {\mathbb{Q}}\in 
\mathcal{Q}_{\mathcal{B},V}\,.$ This is a consequence of Lemma \ref{lemmafairnessgeneral} with $A=0$, since $\widehat{Y}\in\mathcal{B}_0\cap \mathcal{L}$.

\underline{Step 2c.} Observe now that 
\begin{equation}
\mathbb{E}_{\mathbb{P}}\left[ U(X+W_{H})\right] \geq \frac{1}{H}%
\sum_{h=1}^{H}\mathbb{E}_{\mathbb{P}}\left[ U(X+Y_{n_{h}})\right] %
\xrightarrow[H\rightarrow+\infty]{}\sup_{Y\in \mathcal{B}_0\cap M^{\Phi }}%
\mathbb{E}_{\mathbb{P}}\left[ U(X+Y)\right]  \label{mSORTEeqlowerboundexpW}
\end{equation}%
by concavity of $U$ and the fact that $(Y_{n_{h}})_{h}$ is again a
maximizing sequence. From the expression in Equation %
\eqref{mSORTEeqlowerboundexpW} we get that for every $\varepsilon >0$,
definitely (in $H$) 
\begin{equation*}
\mathbb{E}_{\mathbb{P}}\left[ U(X+W_{H})\right] \geq \sup_{Y\in \mathcal{B}%
_0\cap M^{\Phi }}\mathbb{E}_{\mathbb{P}}\left[ U(X+Y)\right] -\varepsilon
\end{equation*}

Apply now Lemma \ref{mSORTElemmakomlos} Item \ref{mSORTElemmabdd2} for $%
B=\sup_{Y\in \mathcal{B}_0\cap M^{\Phi }}\mathbb{E}_{\mathbb{P}}\left[ U(X+Y)%
\right] -\varepsilon $ to the sequence $(X+W_{H})_{H}$ for $H$ big enough
(this sequence is bounded in $(L^{1}({\mathbb{P}}))^{N}$ by %
\eqref{mSORTEdefWY}) to get that for every $\varepsilon >0$ 
\begin{equation*}
\mathbb{E}_{\mathbb{P}}\left[ U(X+\widehat{Y})\right] \geq \sup_{Y\in 
\mathcal{B}_0\cap M^{\Phi }}\mathbb{E}_{\mathbb{P}}\left[ U(X+Y)\right]
-\varepsilon
\end{equation*}%
Clearly then $\widehat{Y}$ satisfies 
\begin{equation}  \label{mSORTEitsmorethanthesup}
\mathbb{E}_{\mathbb{P}}\left[ U(X+\widehat{Y})\right] \geq \sup_{Y\in 
\mathcal{B}_0\cap M^{\Phi }}\mathbb{E}_{\mathbb{P}}\left[ U(X+Y)\right] .
\end{equation}%
Now recall that by \eqref{mSORTElemmacontrolwithline} for some $a>0, b\in{%
\mathbb{R}}$ 
\begin{equation*}
U(X+\widehat{Y})\leq a\sum_{j=1}^{N}(X^{j}+\widehat{Y}^{j})+a%
\sum_{j=1}^{N}(-(X^{j}+\widehat{Y}^{j})^{-})+b
\end{equation*}%
and since RHS is in $L^{1}({\mathbb{P}})$ we conclude that $\mathbb{E}_{%
\mathbb{P}}\left[ U(X+\widehat{Y})\right] <+\infty $. Hence: 
\begin{align}
&\mathbb{E}_\mathbb{P} \left[U(X+\widehat{Y})\right]\overset{\text{Eq.}%
\eqref{mSORTEitsmorethanthesup}}{\geq }\sup_{Y\in \mathcal{B}_0\cap M^{\Phi
}}\mathbb{E}_{\mathbb{P}}\left[ U(X+Y)\right]  \notag \\
&\overset{\text{Eq.}\eqref{mSORTEeqminimaxappliedcor2}}{=}\sup\left\{\mathbb{%
E}_{\mathbb{P}}\left[ U(X+Y)\right]\mid {Y\in \mathcal{L}},\,\sum_{j=1}^N%
\mathbb{E}_{\mathbb{Q}^j} \left[Y^j\right]\leq 0\,\,\forall{\mathbb{Q}}\in%
\mathcal{Q}_{\mathcal{B},V}\right\}\,.  \notag
\end{align}
It is now enough to recall that by \eqref{mSORTEeqinandexpwidehat}, which we
proved in Step 2b, $\widehat{Y}$ satisfies the constraints in RHS of %
\eqref{mSORTEeqoptimizationadditional}: ${Y\in \mathcal{L}},\,\sum_{j=1}^N%
\mathbb{E}_{\mathbb{Q}^j} \left[Y^j\right]\leq 0\,\,\forall{\mathbb{Q}}\in%
\mathcal{Q}_{\mathcal{B},V}$. Optimality claimed in Item 3 then follows.

\textbf{STEP 3:} \textit{we prove uniqueness of the optimum for the
maximization problem in Item 3 and condition $\widehat{\lambda}>0$ for every
optimum $(\widehat{\lambda},\widehat{{\mathbb{Q}}})$ of %
\eqref{mSORTEeqminimaxappliedcor3}.}

The uniqueness for the optimum follows from strict concavity of $U$ (see
Standing Assumption I): if two distinct optima existed, any strict convex
combination of the two would still satisfy the constraint and would produce
a value for $\mathbb{E}_{\mathbb{P}}\left[ U(X+\bullet )\right] $ strictly
greater than the supremum.

Recall now from STEP 1 that to prove the claimed $\widehat{\lambda}>0$ it is
enough to show that any of the three expressions in %
\eqref{mSORTEeqminimaxappliedcor2} and \eqref{mSORTEeqminimaxappliedcor3} is
strictly smaller than $\sup_{x\in{\mathbb{R}}^N}U8x)$. Property $\widehat{%
\lambda}>0$ is easily obtained if $\sup_{z\in {\mathbb{R}}^{N}}U(z)=+\infty $%
, since we proved that $\mathbb{E}_\mathbb{P} \left[U(X+\widehat{Y})\right]%
<+\infty$ in Step 2c. Suppose that instead $\sup_{z\in {\mathbb{R}}%
^{N}}U(z)<+\infty $ and notice that, setting $\mathcal{K}:=\{Y\in \mathcal{L}%
,\,\sum_{j=1}^N\mathbb{E}_{\mathbb{Q}^j} \left[Y^j\right]\leq 0\,\,\forall{%
\mathbb{Q}}\in\mathcal{Q}_{\mathcal{B},V}\}$ 
\begin{equation*}
\sup_{Y\in \mathcal{K}}\mathbb{E}_{\mathbb{P}}\left[ U(X+Y)\right] \leq 
\mathbb{E}_{\mathbb{P}}\left[ U(X+\widehat{Y})\right] \leq \sup_{z\in {%
\mathbb{R}}^{N}}U(z).
\end{equation*}
If we had $\sup_{Y\in \mathcal{K}}\mathbb{E}_{\mathbb{P}}\left[ U(X+Y)\right]
=\sup_{z\in {\mathbb{R}}^{N}}U(z)$, then we would also have%
\begin{equation*}
\sup_{z\in {\mathbb{R}}^{N}}U(z)=\mathbb{E}_{\mathbb{P}}\left[ U(X+\widehat{Y%
})\right]
\end{equation*}%
so that: 
\begin{equation*}
0=\mathbb{E}_{\mathbb{P}}\left[ \sup_{z\in {\mathbb{R}}^{N}}U(z)-U(X+%
\widehat{Y})\right] =\mathbb{E}_{\mathbb{P}}\left[ \left\vert \sup_{z\in {%
\mathbb{R}}^{N}}U(z)-U(X+\widehat{Y})\right\vert \right] ,
\end{equation*}%
which implies $\sup_{z\in {\mathbb{R}}^{N}}U(z)=U(X+\widehat{Y})$ ${\mathbb{P%
}}$-a.s.. In particular, from the fact that $X+\widehat{Y}$ is finite almost
surely, it would follow that $U$ almost surely attains its supremum on some
compact subset of ${\mathbb{R}}^{N}$, which is clearly a contradiction given
that $U$ is strictly increasing.

\textbf{STEP 4:} \textit{we study a related optimization problem when a ${%
\mathbb{Q}}\in\mathcal{Q}_{\mathcal{B},V}$ is fixed}.

We show that for any fixed $X\in M^\Phi$ and ${\mathbb{Q}}\in\mathcal{Q}_{\mathcal{B},V}$
it holds that 
\begin{align}
+\infty>\pi_0^{\mathbb{Q}}(X):=&\sup\left\{\mathbb{E}_\mathbb{P} \left[U(X+Y)%
\right]\mid Y\in M^\Phi, \sum_{j=1}^N\mathbb{E}_{\mathbb{Q}^j} \left[Y^j%
\right]\leq 0\right\}  \label{mSORTEeqqfixedsup1} \\
=&\min_{\lambda \geq 0}\left( \lambda \sum_{j=1}^{N}\mathbb{E}_{\mathbb{Q}%
^{j}}\left[ X^{j}\right] +\mathbb{E}_{\mathbb{P}}\left[ V\left( \lambda 
\frac{\mathrm{d}{\mathbb{Q}}}{\mathrm{d}{\mathbb{P}}}\right) \right] \right)
\,.  \label{mSORTEeqqfixedsup12}
\end{align}

$\pi _{0}^{\mathbb{Q}}(X)<+\infty $ follows from Remark \ref{mSORTER1}. The
equality between \eqref{mSORTEeqqfixedsup1} and \eqref{mSORTEeqqfixedsup12}
follows from Theorem \ref{mSORTEthmminimax}, and the fact that 
\begin{equation*}
\mathcal{C}:=\left\{ Y\in M^{\Phi },\allowbreak\sum_{j=1}^{N}\mathbb{E}_{%
\mathbb{Q}^{j}}\left[ Y^{j}\right] \leq 0\right\} \Longrightarrow(\mathcal{C}%
_{1}^{0})^{+}=\left\{ \frac{\mathrm{d}{\mathbb{Q}}}{\mathrm{d}{\mathbb{P}}}%
\right\} \subseteq K_{\Phi } \text{ as }{\mathbb{Q}}\in \mathcal{Q}_{\mathcal{B},V}.
\end{equation*}

We stress the fact that \eqref{mSORTEeqqfixedsup1} and %
\eqref{mSORTEeqqfixedsup12} hold also dropping Assumption \ref{mSORTEA1}.

We observe that if \eqref{mSORTEeqqfixedsup1} is strictly smaller than $V(0)$
then the minimum in \eqref{mSORTEeqqfixedsup12} can be taken over $%
(0,+\infty)$ in place of $[0,+\infty)$.

Let now $X\in M^\Phi$ be fixed and $\pi_0(\cdot),\, \pi_0^{\mathbb{Q}%
}(\cdot) $ be as in \eqref{mSORTEpiA}, \eqref{mSORTEeqqfixedsup1}
respectively. Then from STEP 1 together with \eqref{mSORTEeqqfixedsup1} and %
\eqref{mSORTEeqqfixedsup12} 
\begin{equation}  \label{mSORTEeqminimaxforpis}
\pi_0(X)=\min_{{\mathbb{Q}}\in\mathcal{Q}_{\mathcal{B},V}}\left(\pi_0^{\mathbb{Q}%
}(X)\right)\,
\end{equation}
and whenever $(\widehat{\lambda,}\widehat{{\mathbb{Q}}})$ is an optimum for %
\eqref{mSORTEeqminimaxappliedcor3}, then $\widehat{{\mathbb{Q}}}$ is an
optimum for \eqref{mSORTEeqminimaxforpis}.

\textbf{STEP 5:} \textit{we show that for any optimum $(\widehat{\lambda},%
\widehat{{\mathbb{Q}}})\in{\mathbb{R}}_+\times\mathcal{Q}_{\mathcal{B},V}$ of %
\eqref{mSORTEeqminimaxappliedcor3} we have $\widehat{{\mathbb{Q}}}\sim{\mathbb{P}}$ and the first equality in %
\eqref{mSORTEpropertieswidehatwidehatq2} holds.}

We start observing that for any optimal $\widehat{{\mathbb{Q}}}$ as in the
claim we have that by STEP 5 
\begin{equation}  \label{mSORTEhatYoptforpiq}
\mathbb{E}_\mathbb{P} \left[U(X+\widehat{Y})\right]\overset{\text{Eq.}%
\eqref{mSORTEeqminimaxforpis}}{=}\pi^{\widehat{{\mathbb{Q}}}%
}_0(X)=\sup\left\{\mathbb{E}_\mathbb{P} \left[U(X+Y)\right]\mid Y\in 
\mathcal{L}, \sum_{j=1}^N\mathbb{E}_{\widehat{{\mathbb{Q}}}^j}[Y^j]\leq
0\right\}
\end{equation}

The last equality in particular follows observing that by trivial set
inclusions and Fenchel inequality 
\begin{align*}
\pi_0^{\mathbb{Q}}(X)\leq &\sup\left\{\mathbb{E}_\mathbb{P} \left[U(X+Y)%
\right]\mid Y\in \mathcal{L}, \sum_{j=1}^N\mathbb{E}_{\widehat{\mathbb{Q}}%
^j} \left[Y^j\right]\leq 0\right\} \\
\leq&\inf_{\lambda \geq 0}\left( \lambda \sum_{j=1}^{N}\mathbb{E}_{\mathbb{Q}%
^{j}}\left[ X^{j}\right] +\mathbb{E}_{\mathbb{P}}\left[ V\left( \lambda 
\frac{\mathrm{d}{\mathbb{Q}}}{\mathrm{d}{\mathbb{P}}}\right) \right] \right)%
\overset{\text{Eq.}\eqref{mSORTEeqqfixedsup12}}{=}\pi_0^{\mathbb{Q}}(X) \,.
\end{align*}

We now prove that $\widehat{{\mathbb{Q}}}\sim {\mathbb{P}}$, using an
argument inspired by \cite{FollmerSchied2} Remark 3.32: if this were not the
case then ${\mathbb{P}}(A_{k})>0,$ where $A_{k}:=\{\frac{\mathrm{d}\widehat{{%
\mathbb{Q}}}^{k}}{\mathrm{d}{\mathbb{P}}}=0\},$ for some component $k\in
\{1,\dots ,N\}$. Then the vector $\widetilde{Y}$ defined by $\widetilde{Y}%
^{k}:=\widehat{Y}^{k}+1_{A_{k}}$, $\widetilde{Y}^{j}:=\widehat{Y}^{j},j\neq
k $ would still satisfy $\sum_{j=1}^{N}\mathbb{E}_{\widehat{{\mathbb{Q}}}^j}[%
\widetilde{Y}^{j}] \leq 0$

and by monotonicity $\mathbb{E}_{\mathbb{P}}\left[ U(X+\widetilde{Y})\right]
\geq \mathbb{E}_{\mathbb{P}}\left[ U(X+\widehat{Y})\right] $. Thus $%
\widetilde{Y}$ would be another optimum, different from $\widehat{Y}$. Since
now $U$ is strictly concave and $\{ Y\in \mathcal{L}, \sum_{j=1}^N\mathbb{E}%
_{\widehat{{\mathbb{Q}}}^j}[Y^j]\leq 0\}$ is convex, we would get a
contradiction arguing in the same way as in STEP 3.

We move to $\sum_{j=1}^{N}\mathbb{E}_{\widehat{\mathbb{Q}}^{j}}\left[ 
\widehat{Y}^{j}\right] =0$: if this were not the case, by %
\eqref{mSORTEeqinandexpwidehat} we would have $\sum_{j=1}^{N}\mathbb{E}_{%
\widehat{\mathbb{Q}}^{j}}\left[ \widehat{Y}^{j}\right] <0$ so that adding $%
0<\varepsilon $ sufficiently small to each component of $\widehat{Y}$ would
give a vector still satisfying the constraints of RHS of %
\eqref{mSORTEhatYoptforpiq}, but having a corresponding expected utility
strictly grater that the supremum ($U$ is strictly increasing, and $\widehat{%
Y}$ is an optimum as showed by \eqref{mSORTEhatYoptforpiq}), which is a
contradiction.

\textbf{STEP 6}: \textit{we prove uniqueness of the optimum for %
\eqref{mSORTEeqminimaxappliedcor3} under the additional differentiability
assumption}.

Take $\mathcal{C}:=\mathcal{B}_{0}\cap M^{\Phi }$, and observe that %
\eqref{mSORTEeqminimaxappliedcor3} can be rewritten, by %
\eqref{mSORTEpolarityisnice}, as 
\begin{equation*}
\min \left\{ \sum_{j=1}^{N}\mathbb{E}_{\mathbb{P}}\left[ X^{j}Z^{j}\right] +%
\mathbb{E}_{\mathbb{P}}\left[ V(Z)\right] \mid Z\neq 0\in (\mathcal{C}%
_{0}^{1})_{+},\mathbb{E}_{\mathbb{P}}\left[ V(Z)\right] <+\infty \right\}
\end{equation*}%
which from strict convexity of $V(\cdot)$ (\cite{Ro70} Theorem 26.5) admits
a unique optimum $0\leq \widehat{Z}\neq 0$. We then get that, since $%
\widehat{\lambda }=\mathbb{E}_{\mathbb{P}}\left[ \widehat{Z}\right] ,\,\frac{%
\mathrm{d}\widehat{{\mathbb{Q}}}}{\mathrm{d}{\mathbb{P}}}=\frac{\widehat{Z}}{%
\mathbb{E}_{\mathbb{P}}\left[ \widehat{Z}\right] }$ (again by %
\eqref{mSORTEpolarityisnice}), uniqueness for optima in %
\eqref{mSORTEeqminimaxappliedcor3} follows.
\end{proof}

We now consider the possibility of weakening the requirements of \emph{strict%
} monotonicity and concavity of $U$. To do so, we introduce the additional
condition $\mathcal{B}=-\mathcal{B}$. We stress that, for $\mathcal{B}=%
\mathcal{C}_{\mathbb{R}}$ both Assumption \ref{mSORTEA1} and $-\mathcal{B}=%
\mathcal{B}$ hold true.

\begin{corollary}
\label{corollarystrict}
Suppose that the function $U:{\mathbb{R}}^{N}\rightarrow {\mathbb{R}}$ is
(not necessarily strictly) concave and (not necessarily strictly) increasing
with respect to the partial componentwise order. Suppose that there exist a
multivariate Orlicz function $\widehat{\Phi }:{\mathbb{R}}%
_{+}^{N}\rightarrow {\mathbb{R}}$ and a function $f:{\mathbb{R}}%
_{+}\rightarrow {\mathbb{R}}$ such that \eqref{mSORTEcontrolwithphihat}
holds and $L^{\widehat{\Phi }}=L^{\Phi }$. Suppose that Assumption \ref%
{mSORTEA1} holds and that $-\mathcal{B}=\mathcal{B}$. Then the following
statements in Theorem \ref{mSORTEthmadditionalreq} still hold: the
equalities in \eqref{mSORTEeqminimaxappliedcor2}$=$%
\eqref{mSORTEeqminimaxappliedcor3}, the fact that any optimum $\widehat{%
\lambda}$ for \eqref{mSORTEeqminimaxappliedcor3} is strictly positive, all
the claims in Item 2 and optimality (not uniqueness) in Item 3. Moreover,
existence (not uniqueness) for a mSORTE holds.
\end{corollary}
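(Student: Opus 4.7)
The strategy is to retrace the six-step proof of Theorem \ref{mSORTEthmadditionalreq}, localize each use of strict concavity or strict monotonicity of $U$, and replace those uses either by weaker arguments or by ones exploiting the new symmetry hypothesis $-\mathcal{B}=\mathcal{B}$.

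First, the duality chain (STEPS~1 and~4 of the original proof) rests solely on the minimax Theorem \ref{mSORTEthmminimax} together with concavity and increasingness of $U$; no strictness is needed. Thus \eqref{mSORTEeqminimaxappliedcor2}$=$\eqref{mSORTEeqminimaxappliedcor3} and the auxiliary identities \eqref{mSORTEeqqfixedsup1}$=$\eqref{mSORTEeqqfixedsup12}, \eqref{mSORTEeqminimaxforpis} transfer verbatim. The Koml\'os-type extraction (STEP~2a), the Fenchel-based integrability book-keeping (STEP~2b), and the optimality lower bound (STEP~2c) rely only on the lower bound on $V$, the growth controls \eqref{mSORTEcontrolwithphihat}--\eqref{mSORTElemmacontrolwithline}, and the standing identity $L^{\widehat{\Phi}}=L^{\Phi}$; they produce a candidate $\widehat{Y}\in\bigcap_{\mathbb{Q}\in\mathcal{Q}_{\mathcal{B},V}}L^{1}(\mathbb{Q})$ with $\sum_{j}\widehat{Y}^{j}=0$, the integrability portion of \eqref{mSORTEeqinandexpwidehat}, and the optimality claimed in Item~3.

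Next, I would recover the equality $\sum_{j=1}^{N}\mathbb{E}_{\widehat{\mathbb{Q}}^{j}}[\widehat{Y}^{j}]=0$ in \eqref{mSORTEpropertieswidehatwidehatq2}---whose original STEP~5 derivation used strict monotonicity via an $\varepsilon\mathbf{1}$-shift---by the following symmetry argument. Since $\widehat{Y}\in\mathcal{B}$ with $\sum_{j}\widehat{Y}^{j}=0$, the hypothesis $-\mathcal{B}=\mathcal{B}$ gives $-\widehat{Y}\in\mathcal{B}_{0}\cap\mathcal{L}$ as well; then Lemma \ref{lemmafairnessgeneral} (with $A=0$) applied to both $\pm\widehat{Y}$ yields $\pm\sum_{j}\mathbb{E}_{\mathbb{Q}^{j}}[\widehat{Y}^{j}]\leq 0$ for every $\mathbb{Q}\in\mathcal{Q}_{\mathcal{B},V}$, and hence the desired equality. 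This completes Item~2. Existence of an mSORTE is then a direct transcription of the construction in Theorem \ref{mSORTEthmmsorteexists}: set $a_{X}^{j}:=\mathbb{E}_{\widehat{\mathbb{Q}}^{j}}[\widehat{Y}^{j}]$ (so that $\sum_{j}a_{X}^{j}=0=A$), $\widetilde{Y}_{X}:=\widehat{Y}-a_{X}\in\mathcal{B}\cap\mathcal{L}$ with $\sum_{j}\widetilde{Y}_{X}^{j}=0$, and $\mathbb{Q}_{X}:=\widehat{\mathbb{Q}}$; Lemma \ref{mSORTESequalsH} together with the duality identity delivers both items of Definition \ref{mSORTEstrongmsorte}. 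Uniqueness is not claimed because STEPS~3 and~6 of the original proof use strict concavity of $U$ (equivalently strict convexity of $V$) in an essential way.

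The remaining point---strict positivity of $\widehat{\lambda}$---is what I expect to be the main obstacle. The case $\sup_{\mathbb{R}^{N}}U=+\infty$ is handled exactly as in STEP~3 of the original proof, since $\widehat{\lambda}=0$ would force $+\infty=V(0)=\mathbb{E}_{\mathbb{P}}[U(X+\widehat{Y})]<+\infty$. When $\sup U<+\infty$, the ``$U$ cannot attain its supremum on a compact set'' argument breaks down without strict monotonicity. My plan is to combine the control \eqref{mSORTEcontrolwithphihat} with $-\mathcal{B}=\mathcal{B}$: if $\widehat{\lambda}=0$ then $U(X+\widehat{Y})=\sup U$ $\mathbb{P}$-a.s., so $X+\widehat{Y}$ takes values in $M:=\arg\max U$, while the sum constraint $\sum_{j}\widehat{Y}^{j}=0$ and the fact that $\widehat{\Phi}((x)^{-})\to+\infty$ whenever a component of $x$ tends to $-\infty$ prevent $M$ from containing any line of the form $\{x_{0}+\alpha(e^{i}-e^{j}):\alpha\in\mathbb{R}\}$. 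Using that $\alpha(e^{i}-e^{j})\in\mathcal{B}\cap\{\sum_{j}Y^{j}=0\}$ for every $\alpha$ (by the symmetry of $\mathcal{B}$ together with translation invariance), a careful comparison between $\mathbb{E}_{\mathbb{P}}[U(X+\widehat{Y}+\alpha(e^{i}-e^{j}))]$ and $\mathbb{E}_{\mathbb{P}}[U(X+\widehat{Y})]$ via the $\varepsilon$-linear bound \eqref{mSORTEcontrolwithepsilon} on $U$ and the Orlicz lower bound on $\widehat{\Phi}$ then yields the desired contradiction. This delicate interplay between the Orlicz growth control on $U$ and the symmetry of the constraint set is where I expect the technical heart of the argument to lie.
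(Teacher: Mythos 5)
Your overall plan tracks the paper's proof closely for most of the statement. The localization of strictness to Steps 5--6, the observation that the Koml\'os extraction and the minimax duality are unaffected, and above all the use of $-\mathcal{B}=\mathcal{B}$ to turn the fairness inequality into an equality are precisely the paper's route: the paper applies the two-sided polar inequality to the truncations $Y_m$ and passes to the limit in \eqref{mSORTEDOM}, while you apply Lemma \ref{lemmafairnessgeneral} to $\pm\widehat{Y}$; these are the same argument packaged differently, and yours is correct. One small repair is needed in the existence part: Lemma \ref{mSORTESequalsH} cannot be cited as a black box, because its proof of $S^{\mathbb{Q}}(A)\le H^{\mathbb{Q}}(A)$ explicitly uses that $U$ is \emph{strictly} increasing (to make the budget constraints tight). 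The paper avoids this by using only the inequality $S^{\widehat{\probq}}(A)\ge H^{\widehat{\probq}}(A)$ and closing the chain $S^{\widehat{\probq}}(A)\ge H^{\widehat{\probq}}(A)\ge \mathbb{E}_{\mathbb{P}}\left[U(X+\widehat{a}+\widetilde{Y}_X-\mathbb{E}_{\widehat{\probq}}[\widetilde{Y}_X])\right]=\mathbb{E}_{\mathbb{P}}[U(X+\widehat{Y})]=S^{\widehat{\probq}}(A)$ via the optimality of $(\widehat{Y},\widehat{a})$.

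The genuine gap is the step $\widehat{\lambda}>0$, which you rightly single out as the obstacle but do not close, and your sketched contradiction cannot be completed. If $\widehat{\lambda}=0$ were optimal one would get $U(X+\widehat{Y})=\sup_{\mathbb{R}^N}U$ a.s., and you propose to contradict this by sliding along $\alpha(e^i-e^j)$; but the argmax set of a merely nondecreasing, well-controlled $U$ can be a nonempty unbounded set containing no such line, and perturbing along $e^i-e^j$ then simply decreases $U$, producing no contradiction. Concretely, take $N=2$, $U(x)=-((x^1)^-)^2-((x^2)^-)^2$, $\widehat{\Phi}(y)=(y^1)^2+(y^2)^2=\Phi(y)$, $\mathcal{B}=\mathcal{C}_{\mathbb{R}}$, $X=0$, $A=0$: every hypothesis of the corollary is met, the primal value is $0=\sup_{\mathbb{R}^N}U=V(0)$ (attained at $\widehat{Y}=0$), while $V(y)=\frac14\sum_j(y^j)^2>0$ for $y\neq 0$, so the dual objective is strictly positive for every $\lambda>0$ and the minimum is attained only at $\lambda=0$. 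Hence no perturbation argument can rescue this step: the assertion really requires that $U$ not attain its supremum at any finite point, which is exactly where strict monotonicity enters in STEP 3 of the proof of Theorem \ref{mSORTEthmadditionalreq}. (The paper's own proof of the corollary is terse here, merely asserting that ``literally the same arguments'' apply, so this difficulty is not one you could have resolved by matching the published argument; but as written your proposal does not establish this part of the claim.)
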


\begin{proof}
Observe first that all the statements in Lemma \ref%
{mSORTElemmaconswellcontrol} still hold, since we did not need strictness of
concavity and monotonicity in their proof. Looking back at the proof of
Theorem \ref{mSORTEthmadditionalreq} we observe that we used strict
convexity and strict monotonicity only from STEP 5 on and for the following
reasons: on the one hand, to prove uniqueness for $\widehat{Y},\widehat{%
\lambda },\widehat{{\mathbb{Q}}}$ and that $\widehat{{\mathbb{Q}}}\sim {%
\mathbb{P}}$, on the other hand to prove that $\sum_{j=1}^{N}\mathbb{E}_{%
\widehat{{\mathbb{Q}}}^{j}}[\widehat{Y}^{j}]=0$. We now show that under the
hypotheses of the Corollary it is indeed possible to show the latter
equality also when $U$ is just increasing and concave, neither of the two
being necessarily strict. More precisely, we show that for any optimum $(%
\widehat{\lambda },\widehat{{\mathbb{Q}}})$ of %
\eqref{mSORTEeqminimaxappliedcor3} we have $\sum_{j=1}^{N}\mathbb{E}_{%
\widehat{{\mathbb{Q}}}^{j}}[\widehat{Y}^{j}]=0$ for the vector $\widehat{Y}$
obtained in STEP 2.

To see this, it is indeed enough to observe that $Y\in\mathcal{B}_0
\Leftrightarrow-Y\in\mathcal{B}_0$, thus the sequence $Y_m$ we used in %
\eqref{mSORTEstep2beq} satisfies $\mathbb{E}_{\mathbb{P}}\left[
\sum_{j=1}^{N}Y_{m}^{j}\frac{\mathrm{d}{\mathbb{Q}}^{j}}{\mathrm{d}{\mathbb{P%
}}}\right]=0$ for any ${\mathbb{Q}}\in\mathcal{Q}_{\mathcal{B},V}$ (recall the definition
of $\mathcal{Q}_{\mathcal{B},V}$ in \eqref{mSORTEQV}). Then, the inequality on %
\eqref{mSORTEDOM} is an equality, and our claim follows. To conclude,
observe that from this point on literally the same arguments of the proof of
Theorem \ref{mSORTEthmadditionalreq} yield the equalities in %
\eqref{mSORTEeqminimaxappliedcor2}$=$\eqref{mSORTEeqminimaxappliedcor3}, the
fact that any optimum $\widehat{\lambda}$ for %
\eqref{mSORTEeqminimaxappliedcor3} is strictly positive, all the claims in
Item 2 and optimality (not uniqueness) in Item 3. Also, Theorem \ref%
{mSORTEthmmaingeneral2} does not in practice require \textit{strict}
concavity or monotonicity in its proof, so it still holds if this assumption
is dropped. Finally, a counterpart to the existence in Theorem \ref{mSORTEthmmsorteexists} can be obtained with the same argument. Take indeed $(\widehat{Y},\widehat{\probq})$ as in Theorem \ref{mSORTEthmmaingeneral1} and adopt the notation from Lemma \ref{mSORTESequalsH}. Then one can check that $S^{\widehat{\probq}}(A)=H^{\widehat{\probq}}(A)$, and that $(\widehat{a}:=\mathbb{E}_{\widehat{\probq}}[\widehat{Y}], \widetilde{Y}_X:=\widehat{Y}-\widehat{a})$ is an optimum for $H^{\widehat{\probq}}(A)$. To see this, it is enough to observe that the inequality $S^{\widehat{\probq}}(A)\geq H^{\widehat{\probq}}(A)$ can be obtained as in Lemma \ref{mSORTESequalsH}, and that $$S^{\widehat{\probq}}(A)\geq H^{\widehat{\probq}}(A)\geq\Ep{U(X+\widehat{a}+\widetilde{Y}_X-\mathbb{E}_{\widehat{\probq}}[\widetilde{Y}_X])}= \Ep{U(X+\widehat{Y})}=S^{\widehat{\probq}}(A)$$ since $\mathbb{E}_{\widehat{\probq}}[\widehat{Y}]=\widehat{a}$ with $\sum_{j=1}^N\widehat{a}^j=A$ and $(\widehat{Y},\widehat{a})$ is optimal for $S^{\widehat{\probq}}(A)$ as in the proof of Theorem \ref{mSORTEthmmsorteexists}.
\end{proof}

\subsection{Replacing Assumption \protect\ref{mSORTEA1}}

\label{mSORTEreplacing} In this Section we present a counterpart to some of
our findings, with Assumption \ref{mSORTEA1} replaced by the following one.

\begin{assumption}
\label{mSORTEA3}The function $\widehat{\Phi }(\cdot )$ in %
\eqref{mSORTEcontrolwithphihat} satisfies $\widehat{\Phi }(x)=\sum_{j=1}^{N}%
\widehat{\Phi }_{j}(x)\,\,\forall x\in {\mathbb{R}}_{+}^{N}$, with $\,%
\widehat{\Phi }_{j}:{\mathbb{R}}\rightarrow {\mathbb{R}}$ differentiable on $%
{\mathbb{R}_{++}}$ and 
\begin{equation*}
\liminf_{z\rightarrow +\infty }\frac{z\widehat{\Phi }_{j}^{\prime }(z)}{%
\widehat{\Phi }_{j}(z)}>1,\,\,\,\,\,\,\,\,\,\,\lim_{z\rightarrow +\infty }%
\frac{\widehat{\Phi }_{j}(z)}{z}=+\infty ,\,\quad \forall j=1,\dots ,N.
\end{equation*}
\end{assumption}

Notice in particular that the condition $\lim_{z\rightarrow +\infty }\frac{%
\widehat{\Phi }_{j}(z)}{z}=+\infty $ guarantees that ${\widehat{\Phi }}%
_{j}^{\ast }(z)<+\infty $ for every $z\geq 0$.

We now present two preliminary propositions before stating the main result
of the section. In Orlicz space theory the well known $\Delta _{2}$
condition on a Young function $\Phi :{\mathbb{R}}\rightarrow {\mathbb{R}}$
guarantees that $L^{\Phi }=M^{\Phi }$. We say that $\Phi \in \Delta _{2}$ if:

\begin{center}
There exists $y_{0}\geq 0,\,K>0$ such that $\Phi (2y)\leq K\Phi
(y)\,\,\forall \,y\text{ s.t. }\left|y\right|\geq y_{0}$.
\end{center}

First, we show how Assumption \ref{mSORTEA3} is linked to the $\Delta_2$
condition for the conjugate of ${\widehat\Phi}_j$ :

\begin{proposition}
\label{mSORTEpropdualisM} Let $\Phi :{\mathbb{R}}\rightarrow {\mathbb{R}}$
be a Young function differentiable on ${\mathbb{R}}\setminus \{0\}$ and let $%
\Phi ^{\ast }:{\mathbb{R}}\rightarrow {\mathbb{R}}$ be its conjugate
function. Then

\begin{equation}  \label{mSORTEeqequivalentdelta2}
\liminf_{z\rightarrow +\infty }\frac{z\Phi ^{\prime }(z)}{\Phi (z)}%
>1\,\,\Longleftrightarrow\,\,\Phi ^{\ast }\in \Delta _{2}.
\end{equation}
In particular, under Assumption \ref{mSORTEA3} we have ${\widehat\Phi}
_{1}^{\ast },\dots ,{\widehat\Phi}_{N}^{\ast }\in \Delta _{2}$ which implies 
\begin{equation}
K_{\Phi }=L^{{\widehat\Phi}_{1}^{\ast }}\times \dots \times L^{{\widehat\Phi}%
_{N}^{\ast }}=M^{{\widehat\Phi}_{1}^{\ast }}\times \dots \times M^{{%
\widehat\Phi}_{N}^{\ast }}.  \notag
\end{equation}
\end{proposition}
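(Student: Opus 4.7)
The proposition has two parts: the equivalence \eqref{mSORTEeqequivalentdelta2}, which is essentially a classical fact of Orlicz-space theory, and the ``in particular'' statement, which is a chaining of previously established results in the paper. I would first prove the equivalence by a direct Fenchel--Young computation yielding quantitative bounds, and then combine it with Standing Assumption~I, Remark~\ref{mSORTERemMPHI}, Proposition~\ref{mSORTEthmsummarykoethe}(3), and the classical univariate identity $L^{\phi^{\ast}}=M^{\phi^{\ast}}$ when $\phi^{\ast}\in\Delta_{2}$.

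\textbf{Equivalence, direction $(\Rightarrow)$.} Assume $\liminf_{z\to+\infty}z\Phi'(z)/\Phi(z)>1$, so that there exist $\alpha>1$ and $z_{0}>0$ with $z\Phi'(z)\geq\alpha\Phi(z)$ for all $z\geq z_{0}$. By the Fenchel identities, for $y=\Phi'(z)$ one has $\Phi^{\ast}(y)=zy-\Phi(z)$ and $(\Phi^{\ast})'(y)=z$, hence
\begin{equation*}
\Phi^{\ast}(y)=zy-\Phi(z)\geq zy\left(1-\tfrac{1}{\alpha}\right)=\tfrac{\alpha-1}{\alpha}\,y\,(\Phi^{\ast})'(y)\,,
\end{equation*}
so that $y(\Phi^{\ast})'(y)/\Phi^{\ast}(y)\leq\beta:=\alpha/(\alpha-1)$ for $y\geq y_{1}:=\Phi'(z_{0})$. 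Integrating $(\log\Phi^{\ast})'(y)\leq\beta/y$ between $y$ and $2y$ yields $\Phi^{\ast}(2y)\leq 2^{\beta}\Phi^{\ast}(y)$ for $y\geq y_{1}$, which is $\Phi^{\ast}\in\Delta_{2}$. For the converse $(\Leftarrow)$, suppose $\Phi^{\ast}(2y)\leq K\Phi^{\ast}(y)$ for $y\geq y_{0}$. Convexity gives $y(\Phi^{\ast})'(y)\leq\Phi^{\ast}(2y)-\Phi^{\ast}(y)\leq(K-1)\Phi^{\ast}(y)$, so $y(\Phi^{\ast})'(y)/\Phi^{\ast}(y)\leq K-1$. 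Applying the computation of the direct implication to the pair $(\Phi^{\ast},(\Phi^{\ast})^{\ast})=(\Phi^{\ast},\Phi)$ with the reversed inequality then yields $z\Phi'(z)/\Phi(z)\geq (K-1)/(K-2)>1$ for large $z$, whence $\liminf>1$.

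\textbf{The ``in particular'' part.} Under Assumption~\ref{mSORTEA3}, each $\widehat{\Phi}_{j}$ satisfies $\liminf_{z\to+\infty}z\widehat{\Phi}'_{j}(z)/\widehat{\Phi}_{j}(z)>1$, so by the equivalence just proven $\widehat{\Phi}_{j}^{\ast}\in\Delta_{2}$ for each $j$; moreover $\lim_{z\to+\infty}\widehat{\Phi}_{j}(z)/z=+\infty$ guarantees that $\widehat{\Phi}_{j}^{\ast}$ is finite-valued, hence a legitimate univariate Young function. By Standing Assumption~I we have $L^{\Phi}=L^{\widehat{\Phi}}$, and Remark~\ref{mSORTERemMPHI} applied to the sum decomposition $\widehat{\Phi}=\sum_{j}\widehat{\Phi}_{j}$ gives $L^{\widehat{\Phi}}=L^{\widehat{\Phi}_{1}}\times\cdots\times L^{\widehat{\Phi}_{N}}$; hence $L^{\Phi}$ has this product form. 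Proposition~\ref{mSORTEthmsummarykoethe}(3) then yields $K_{\Phi}=L^{\widehat{\Phi}_{1}^{\ast}}\times\cdots\times L^{\widehat{\Phi}_{N}^{\ast}}$. Finally, the classical univariate Orlicz-space identity (see e.g.\ \cite{RaoRen} or \cite{Edgar}) asserts that $\widehat{\Phi}_{j}^{\ast}\in\Delta_{2}$ implies $L^{\widehat{\Phi}_{j}^{\ast}}=M^{\widehat{\Phi}_{j}^{\ast}}$, giving the concluding equality.

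\textbf{Main obstacle.} The delicate point is the quantitative transfer between $\liminf z\Phi'(z)/\Phi(z)>1$ and the $\Delta_{2}$ inequality for $\Phi^{\ast}$. Two care points: (i) the Fenchel identity $(\Phi^{\ast})'(y)=z$ for $y=\Phi'(z)$ requires $\Phi$ to be strictly convex and differentiable on the relevant range (and one must handle the origin and possible plateau of $\Phi'$); (ii) one must verify that the bound $y(\Phi^{\ast})'(y)/\Phi^{\ast}(y)\leq\beta$ holds \emph{uniformly} for all $y$ beyond a threshold, not merely along the image $\Phi'([z_{0},+\infty))$, which uses the fact that $\Phi'$ is nondecreasing. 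Once these points are addressed, the ``in particular'' step is purely bookkeeping, assembling identifications already established in the paper.
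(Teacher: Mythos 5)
Your proposal is correct, and the second half — the ``in particular'' statement — follows exactly the paper's own chain: Standing Assumption I gives $L^{\Phi}=L^{\widehat{\Phi}}$ (hence $K_{\Phi}=K_{\widehat{\Phi}}$), Remark \ref{mSORTERemMPHI} gives the product decomposition $L^{\widehat{\Phi}}=L^{\widehat{\Phi}_{1}}\times\dots\times L^{\widehat{\Phi}_{N}}$, Proposition \ref{mSORTEthmsummarykoethe} Item 3 identifies $K_{\Phi}$ with $L^{\widehat{\Phi}_{1}^{\ast}}\times\dots\times L^{\widehat{\Phi}_{N}^{\ast}}$, and the $\Delta_{2}$ property upgrades each factor to $M^{\widehat{\Phi}_{j}^{\ast}}$. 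Where you diverge is the equivalence \eqref{mSORTEeqequivalentdelta2}: the paper disposes of it by citing Theorem 2.3.3 of \cite{RaoRen} (with Proposition 2.2 there replacing Theorem 2.2(a)), whereas you reprove it via the Fenchel--Young computation. Your computation is the standard one and is sound; the two ``care points'' you flag are genuine but resolvable with the hypotheses at hand: since $\Phi^{\ast}$ is finite-valued, $\Phi^{\prime}$ must be unbounded, and being the (continuous, nondecreasing) derivative of a differentiable convex function its image on $[z_{0},\infty)$ is the whole half-line $[\Phi^{\prime}(z_{0}),\infty)$, which gives the required uniformity in $y$; plateaus of $\Phi^{\prime}$ are handled by working with one-sided derivatives of $\Phi^{\ast}$ (the estimates survive with the right derivative), and the differential inequality $(\log\Phi^{\ast})^{\prime}\leq\beta/y$ integrates because $\Phi^{\ast}$ is locally Lipschitz and eventually strictly positive. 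One small point in the converse direction: $\frac{K-1}{K-2}>1$ presupposes $K>2$, but this is harmless since convexity with $\Phi^{\ast}(0)=0$ forces $K\geq 2$ and one may always enlarge $K$. The trade-off is the usual one: your route is self-contained but must redo classical Orlicz-space work in detail, while the paper's citation keeps the proof short at the cost of asking the reader to check that the cited argument survives the slightly more general setup.
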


\begin{proof}
The equivalence of the two conditions in \eqref{mSORTEeqequivalentdelta2}
can be checked along the lines of Theorem 2.3.3 in \cite{RaoRen}, observing
that the argument still works in our slightly more general setup (use
Proposition 2.2 \cite{RaoRen} in place of Theorem 2.2.(a) \cite{RaoRen}). We
now prove the final claim. By Standing Assumption I and Remark \ref%
{mSORTERemMPHI}, $L^\Phi=L^{\widehat{\Phi}}=L^{{\widehat\Phi}_{1}}\times
\dots \times L^{{\widehat\Phi}_{N}}$. Since $L^\Phi=L^{\widehat{\Phi}}$, by
definition $K_\Phi=K_{\widehat\Phi}$. Furthermore, by Proposition \ref%
{mSORTEthmsummarykoethe} Item (3), $K_{\widehat\Phi}=L^{{\widehat\Phi}%
_{1}^{\ast }}\times \dots \times L^{{\widehat\Phi}_{N}^{\ast }}$. To
conclude, under Assumption \ref{mSORTEA3} ${\widehat\Phi}_{j}^{\ast }\in
\Delta _{2}$ by the previous part of this proof, which in turns implies $L^{{%
\widehat\Phi} _{j}^{\ast }}=M^{{\widehat\Phi}_{j}^{\ast }},\,j=1,\dots,N.$
\end{proof}

We also need a sequential w$^{\ast }$-compactness result, see \cite%
{DoldiThesis21} Proposition 2.6.10, partly inspired by \cite{Della} Chap.
II, proof of Theorem 24. A similar result is stated in \cite{Orihuela},
proof of Theorem 1, with a more technical (even though shorter) proof. For
these reasons we omit the proof.

\begin{proposition}
\label{mSORTEpropseqcpt} Assume that $\Phi $, $\Phi ^{\ast }:{\mathbb{R}}%
_+\rightarrow{\mathbb{R}}$ are (univariate) conjugate Young functions, both
everywhere finite valued. Then the balls in $L^{\Phi }$, endowed with Orlicz
norm, are $\sigma \left( L^{\Phi },M^{\Phi ^{\ast }}\right) $ sequentially
compact.
\end{proposition}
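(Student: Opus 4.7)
Fix a sequence $(f_n)_n\subset L^{\Phi}$ with $\|f_n\|_{\Phi}\le 1$; the goal is to extract a subsequence converging in $\sigma(L^{\Phi},M^{\Phi^{\ast}})$ to some $f\in L^{\Phi}$. The strategy is to first produce a candidate limit via a weak $L^1$-compactness argument, then verify it lies in $L^{\Phi}$, and finally upgrade the convergence from $\sigma(L^1,L^{\infty})$ to the desired $\sigma(L^{\Phi},M^{\Phi^{\ast}})$ by density of $L^{\infty}$ in $M^{\Phi^{\ast}}$.

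For the first step, I would exploit the hypothesis that $\Phi^{\ast}$ is everywhere finite valued to deduce $\lim_{z\to+\infty}\Phi(z)/z=+\infty$. Combined with the uniform bound $\sup_n\mathbb{E}_{\mathbb{P}}[\Phi(|f_n|)]<+\infty$ (which follows from $\|f_n\|_{\Phi}\le 1$ and the relation between Orlicz and Luxemburg norms), the de la Vall\'ee-Poussin criterion implies that $(f_n)$ is uniformly integrable in $L^{1}(\mathbb{P})$. The Dunford-Pettis theorem then yields a subsequence $(f_{n_k})_k$ and some $f\in L^{1}(\mathbb{P})$ with $f_{n_k}\to f$ in $\sigma(L^{1},L^{\infty})$. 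To show $f\in L^{\Phi}$, Mazur's lemma provides convex combinations $g_k$ of $\{f_{n_j}\}_{j\ge k}$ converging strongly in $L^{1}$ to $f$, and a further subsequence converges $\mathbb{P}$-a.s. Convexity of $\Phi$ gives $\mathbb{E}_{\mathbb{P}}[\Phi(|g_k|)]\le\sup_n\mathbb{E}_{\mathbb{P}}[\Phi(|f_n|)]$, so Fatou's lemma yields $\mathbb{E}_{\mathbb{P}}[\Phi(|f|)]<+\infty$, i.e. $f\in L^{\Phi}$.

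For the third step, I would pass from $\sigma(L^{1},L^{\infty})$-convergence to $\sigma(L^{\Phi},M^{\Phi^{\ast}})$-convergence by approximation. For any $h\in L^{\infty}$ the weak-$L^{1}$ convergence already gives $\mathbb{E}_{\mathbb{P}}[f_{n_k}h]\to\mathbb{E}_{\mathbb{P}}[fh]$. For a general $h\in M^{\Phi^{\ast}}$, recall that $M^{\Phi^{\ast}}$ is by definition the $\|\cdot\|_{\Phi^{\ast}}$-closure of $L^{\infty}(\mathbb{P})$, so there exist $h_m\in L^{\infty}$ with $\|h-h_m\|_{\Phi^{\ast}}\to 0$. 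A H\"older-type inequality in the dual pair furnishes a constant $C$ (depending only on the equivalence of Orlicz and Luxemburg norms) such that $|\mathbb{E}_{\mathbb{P}}[(f_{n_k}-f)(h-h_m)]|\le C\|h-h_m\|_{\Phi^{\ast}}$ uniformly in $k$, and a standard $\varepsilon/3$ splitting closes the argument.

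The main obstacle I anticipate is in the approximation step: one needs a sharp H\"older inequality with a norm-independent constant to control the tail $\mathbb{E}_{\mathbb{P}}[(f_{n_k}-f)(h-h_m)]$ uniformly in $k$. This is where the assumption that both $\Phi$ and $\Phi^{\ast}$ are everywhere finite becomes essential: it ensures the generalised H\"older inequality is available on the whole dual pair $(L^{\Phi},L^{\Phi^{\ast}})$, and it guarantees that $L^{\infty}$ is truly $\|\cdot\|_{\Phi^{\ast}}$-dense in $M^{\Phi^{\ast}}$. The existence of the weak $L^{1}$ limit and the membership of this limit in $L^{\Phi}$ are then relatively routine.
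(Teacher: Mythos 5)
Your argument is correct and follows exactly the route the paper has in mind (it omits the proof, deferring to the cited thesis and to Dellacherie--Meyer's treatment of Dunford--Pettis): finiteness of $\Phi^{\ast}$ gives $\Phi(z)/z\to+\infty$, hence uniform integrability via de la Vall\'ee-Poussin, a weakly convergent subsequence in $L^{1}$ by Dunford--Pettis, membership of the limit in $L^{\Phi}$ via Mazur plus Fatou, and the upgrade to $\sigma(L^{\Phi},M^{\Phi^{\ast}})$ by H\"older and the density of $L^{\infty}$ in $M^{\Phi^{\ast}}$. The only cosmetic point is that if one insists the limit lie in the \emph{same} norm ball (rather than just in $L^{\Phi}$), one should add that the Orlicz norm is lower semicontinuous along the weak convergence, e.g.\ by testing against bounded $g$ with $\mathbb{E}_{\mathbb{P}}[\Phi^{\ast}(|g|)]\leq 1$; this is a one-line addition and not needed for the application in the paper.
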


We are now ready for stating and proving the main result of this Section.

\begin{theorem}
Theorems \ref{mSORTEthmmaingeneral1}, \ref{mSORTEthmmaingeneral2}, \ref%
{mSORTEthmmsorteexists} and \ref{mSORTEthmadditionalreq} hold true by
replacing Assumption \ref{mSORTEA1} with Assumption \ref{mSORTEA3}
\end{theorem}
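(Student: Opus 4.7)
The plan is to trace each of the four theorems and isolate where Assumption \ref{mSORTEA1} enters. Inspection shows that the assumption is used only through Lemma \ref{lemmafairnessgeneral}, which is itself invoked a single time: at the end of Step 2b in the proof of Theorem \ref{mSORTEthmadditionalreq}, to conclude that the Koml\'{o}s limit $\widehat{Y}\in\mathcal{B}_0\cap \mathcal{L}$ satisfies $\sum_{j=1}^N \mathbb{E}_{\mathbb{Q}^j}[\widehat{Y}^j]\le 0$ for every ${\mathbb{Q}}\in\mathcal{Q}_{\mathcal{B},V}$. Theorems \ref{mSORTEthmmaingeneral1}, \ref{mSORTEthmmaingeneral2} and \ref{mSORTEthmmsorteexists} then cascade from Theorem \ref{mSORTEthmadditionalreq} without any further recourse to closedness under truncation, so the entire task reduces to re-establishing this one inequality under Assumption \ref{mSORTEA3}.

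Under Assumption \ref{mSORTEA3}, Proposition \ref{mSORTEpropdualisM} identifies $K_\Phi=M^{\widehat{\Phi}_1^*}\times\cdots\times M^{\widehat{\Phi}_N^*}$, so every component of every ${\mathbb{Q}}\in\mathcal{Q}_{\mathcal{B},V}$ gives a density $\mathrm{d}{\mathbb{Q}}^j/\mathrm{d}{\mathbb{P}}$ lying in the Orlicz heart $M^{\widehat{\Phi}_j^*}$. The bound \eqref{mSORTEnormbddnegparts}, $\sup_H \mathbb{E}_{\mathbb{P}}[\widehat{\Phi}((X+W_H)^-)]<+\infty$, splits componentwise because $\widehat{\Phi}(x)=\sum_j \widehat{\Phi}_j(x^j)$, yielding a uniform bound $\sup_H\|(X^j+W_H^j)^-\|_{\widehat{\Phi}_j}<+\infty$ for each $j$; combining this with the pointwise inequality $(W_H^j)^-\le (X^j)^+ + (X^j+W_H^j)^-$ and with $X\in M^\Phi\subseteq L^{\widehat{\Phi}}$ yields $\sup_H\|(W_H^j)^-\|_{\widehat{\Phi}_j}<+\infty$.

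I would then apply Proposition \ref{mSORTEpropseqcpt} coordinate by coordinate, and by diagonalization extract a common subsequence (still denoted $W_H$) along which $(W_H^j)^-$ converges in the topology $\sigma(L^{\widehat{\Phi}_j},M^{\widehat{\Phi}_j^*})$ for every $j$, to some limit $g^j\in L^{\widehat{\Phi}_j}$. A Mazur argument produces convex combinations of $(W_H^j)^-$ converging in Luxemburg norm to $g^j$; since those convex combinations also converge $\mathbb{P}$-a.s.\ to $(\widehat{Y}^j)^-$ (as tail convex combinations of an a.s.\ convergent sequence), uniqueness of the $\mathbb{P}$-a.s.\ limit forces $g^j=(\widehat{Y}^j)^-$. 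Testing against $\mathrm{d}{\mathbb{Q}}^j/\mathrm{d}{\mathbb{P}}\in M^{\widehat{\Phi}_j^*}$ then delivers $\mathbb{E}_{\mathbb{Q}^j}[(W_H^j)^-]\to \mathbb{E}_{\mathbb{Q}^j}[(\widehat{Y}^j)^-]$ along the subsequence, while Fatou's lemma applied to the nonnegative $(W_H^j)^+\cdot \mathrm{d}{\mathbb{Q}}^j/\mathrm{d}{\mathbb{P}}$ yields $\liminf_H \mathbb{E}_{\mathbb{Q}^j}[(W_H^j)^+]\ge \mathbb{E}_{\mathbb{Q}^j}[(\widehat{Y}^j)^+]$.

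Subtracting these displays gives $\mathbb{E}_{\mathbb{Q}^j}[\widehat{Y}^j]\le \liminf_H \mathbb{E}_{\mathbb{Q}^j}[W_H^j]$, and summation over $j$, combined with the fact that $W_H\in\mathcal{B}_0\cap M^\Phi$ enforces $\sum_{j=1}^N\mathbb{E}_{\mathbb{Q}^j}[W_H^j]\le 0$ for every $H$, yields the desired $\sum_{j=1}^N \mathbb{E}_{\mathbb{Q}^j}[\widehat{Y}^j]\le 0$. The main obstacle I anticipate is the clean identification of the weak-$*$ limit of $(W_H^j)^-$ with its $\mathbb{P}$-a.s.\ limit $(\widehat{Y}^j)^-$; the above Mazur convexification followed by a further $\mathbb{P}$-a.s.\ subsequence handles it, but one must be careful to preserve the chosen subsequence when combining with the Fatou bound on the positive parts. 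With this single replacement, Step 2b concludes without invoking closedness under truncation, and Theorems \ref{mSORTEthmmaingeneral1}, \ref{mSORTEthmmaingeneral2} and \ref{mSORTEthmmsorteexists} follow verbatim.
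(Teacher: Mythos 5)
Your overall architecture coincides with the paper's: you correctly isolate the single point where Assumption \ref{mSORTEA1} enters (the inequality $\sum_{j=1}^N\mathbb{E}_{\mathbb{Q}^j}[\widehat Y^j]\le 0$ at the end of Step 2b, obtained there via Lemma \ref{lemmafairnessgeneral}), and you rebuild it from the uniform bound \eqref{mSORTEnormbddnegparts}, the identification $K_\Phi=M^{\widehat{\Phi}_1^{\ast}}\times\dots\times M^{\widehat{\Phi}_N^{\ast}}$ of Proposition \ref{mSORTEpropdualisM}, the sequential $w^{\ast}$-compactness of Proposition \ref{mSORTEpropseqcpt} with a diagonal extraction, and Fatou on the positive parts. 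This is exactly the paper's route, and the final bookkeeping (finiteness of $\mathbb{E}_{\mathbb{Q}^j}[(\widehat{Y}^j)^+]$ before subtracting, then summing over $j$ and using Proposition \ref{mSORTEpropfairprob}) goes through.

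The one step that does not stand as written is the identification of the $\sigma(L^{\widehat{\Phi}_j},M^{\widehat{\Phi}_j^{\ast}})$-limit $g^j$ with the a.s.\ limit via ``a Mazur argument''. Mazur's lemma concerns the weak topology $\sigma(X,X^{\ast})$, where norm-closed convex sets are weakly closed by Hahn--Banach; Proposition \ref{mSORTEpropseqcpt} only delivers convergence in the weak-$\ast$ topology $\sigma(L^{\widehat{\Phi}_j},M^{\widehat{\Phi}_j^{\ast}})$. Assumption \ref{mSORTEA3} forces $\widehat{\Phi}_j^{\ast}\in\Delta_2$ but says nothing about $\widehat{\Phi}_j$ itself, so $M^{\widehat{\Phi}_j^{\ast}}$ may be a proper subspace of $(L^{\widehat{\Phi}_j})^{\ast}$, a norm-closed convex set need not be closed in this coarser topology, and you cannot extract norm-convergent convex combinations from weak-$\ast$ convergence. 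The gap is local and repairable: the paper identifies the limit by testing against indicators $1_D\in L^{\infty}\subseteq M^{\widehat{\Phi}_j^{\ast}}$ of sets on which, by Egoroff, the a.s.\ convergence is uniform; alternatively, since $\widehat{\Phi}_j(z)/z\to+\infty$ under Assumption \ref{mSORTEA3}, the norm-bounded sequence $((X^j+W_H^j)^-)_H$ is uniformly integrable by de la Vall\'ee-Poussin and converges in $L^1(\mathbb{P})$ to its a.s.\ limit, which identifies the weak-$\ast$ limit. With either repair in place of the Mazur step, your argument matches \eqref{mSORTEeqnegparts}--\eqref{mSORTEeqsumhatuseful} and the theorem follows as you describe.
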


\begin{proof}
We do not need to start from scratch. In fact, most of the proof of Theorem %
\ref{mSORTEthmadditionalreq} carries over with no modifications. The only
point which needed closedness under truncation was proving that $%
\sum_{j=1}^{N}\mathbb{E}_{\mathbb{Q}^{j}}\left[ \widehat{Y}^{j}\right] \leq
0\,\forall {\mathbb{Q}}\in \mathcal{Q}_{\mathcal{B},V}\,$ using Lemma \ref{lemmafairnessgeneral}. If we show this in an alternative way, all the rest can
be done exactly in the same way. We observe that by %
\eqref{mSORTEnormbddnegparts} we have for each $j=1,\dots,N$ 
\begin{equation*}
\sup_H\mathbb{E}_\mathbb{P} \left[\widehat{\Phi}_j\left((X^j+W^j_H)^-\right)%
\right]\leq \varepsilon\sup_H\left(\sum_{j=1}^N\mathbb{E}_\mathbb{P} \left[%
\left|W_H^j\right|\right]\right)-\inf_H\mathbb{E}_\mathbb{P} \left[U(X+W_H)%
\right]+f(\varepsilon)=:\gamma<+\infty\,.
\end{equation*}
If $\gamma<1$ we immediately conclude that $\sup_H\left\|
(X^j+W^j_H)^-\right\|_{\widehat{\Phi}_j}\leq 1$. If conversely $\gamma>1$
using convexity of $\widehat{\Phi}_j$ we have 
\begin{equation*}
\sup_H\mathbb{E}_\mathbb{P} \left[\widehat{\Phi}_j\left(\frac{1}{\gamma}%
(X^j+W^j_H)^-\right)\right]\leq \frac{1}{\gamma}\sup_H\mathbb{E}_\mathbb{P} %
\left[\widehat{\Phi}_j\left((X+W_H)^-\right)\right]\leq 1
\end{equation*}
thus $\sup_H\left\| (X^j+W^j_H)^-\right\|_{\widehat{\Phi}_j}\leq \gamma$. In
conclusion, $\sup_H\left\| (X^j+W^j_H)^-\right\|_{\widehat{\Phi}_j}\leq
\max(\gamma,1)$. Now we apply Propositions \ref{mSORTEpropdualisM} and
proposition \ref{mSORTEpropseqcpt}. Given the sequences $%
(X^{j}+W_{H}^{j})^{-}$, $j=1,\dots ,N$, a diagonalization argument yields a
common subsequence such that $((X^{j}+W_{H}^{j})^{-})_{H}$ converges in $%
\sigma \left( L^{\widehat{\Phi}_{j}},M^{\widehat{\Phi}_{j}^{\ast }}\right) $
on $L^{\widehat{\Phi}_{j}}$ for every $j$. Call such limit $Z_{j}$. Almost
sure convergence 
\begin{equation*}
(X^{j}+W_{H}^{j})^{-}\rightarrow (X+\widehat{Y})^{-}\,\,{\mathbb{P}}-\text{%
a.s.}
\end{equation*}%
implies $Z=(X+Y)^{-}$. Indeed, if this were not the case assume without loss
of generality ${\mathbb{P}}(Z^{j}>(X^{j}+Y^{j})^{-})>0$ for some $j$. On a
measurable subset $D$ of the event $\{Z^{j}>(X^{j}+Y^{j})^{-}\}$, ${\mathbb{P%
}}(D)>0$, the convergence is uniform (by Egoroff Theorem, Theorem 10.38 in 
\cite{Aliprantis}). Consequently, by Dominated Convergence Theorem plus $%
\sigma \left( L^{\widehat{\Phi}_j }(\mathcal{F}),M^{\widehat{\Phi}_j ^{\ast
}}(\mathcal{F})\right) $ convergence and the fact that $L^{\infty }\subseteq
M^{\widehat{\Phi}_{j}^{\ast }},\,j=1,\dots ,N$ we get $\mathbb{E}_{\mathbb{P}%
}\left[ Z^{j}1_{D}\right] =\mathbb{E}_{\mathbb{P}}\left[
(X^{j}+Y^{j})^{-}1_{D}\right] $, which is a contradiction. Since by
Proposition \ref{mSORTEpropdualisM} 
\begin{equation*}
\mathcal{Q}_{\mathcal{B},V}\subseteq K_{\Phi }=M^{\widehat{\Phi}_{1}^{\ast }}\times
\dots \times M^{\widehat{\Phi}_{N}^{\ast }}
\end{equation*}%
we get for any ${\mathbb{Q}}\in \mathcal{Q}_{\mathcal{B},V}$
\begin{equation}
\sum_{j=1}^{N}\mathbb{E}_{\mathbb{P}}\left[ (X^{j}+W_{H}^{j})^{-}\frac{%
\mathrm{d}{\mathbb{Q}}^{j}}{\mathrm{d}{\mathbb{P}}}\right] \rightarrow
_{H}\sum_{j=1}^{N}\mathbb{E}_{\mathbb{P}}\left[ (X^{j}+\widehat{Y}^{j})^{-}%
\frac{\mathrm{d}{\mathbb{Q}}^{j}}{\mathrm{d}{\mathbb{P}}}\right] \,.
\label{mSORTEeqnegparts}
\end{equation}%
By Fatou Lemma and $x^{+}=x+x^{-}$ 
\begin{align*}
&\sum_{j=1}^{N}\mathbb{E}_{\mathbb{P}}\left[ (X^{j}+\widehat{Y}^{j})^{+}%
\frac{\mathrm{d}{\mathbb{Q}}^{j}}{\mathrm{d}{\mathbb{P}}}\right] \leq
\liminf_{H}\sum_{j=1}^{N}\mathbb{E}_{\mathbb{P}}\left[ (X^{j}+W_{H}^{j})^{+}%
\frac{\mathrm{d}{\mathbb{Q}}^{j}}{\mathrm{d}{\mathbb{P}}}\right]  \notag \\
&\leq \liminf_{H}\left( \mathbb{E}_{\mathbb{P}}\left[ \sum_{j=1}^{N}W_{H}^{j}%
\frac{\mathrm{d}{\mathbb{Q}}^{j}}{\mathrm{d}{\mathbb{P}}}\right]
+\sum_{j=1}^{N}\mathbb{E}_{\mathbb{P}}\left[ X^{j}\frac{\mathrm{d}{\mathbb{Q}%
}^{j}}{\mathrm{d}{\mathbb{P}}}\right] +\sum_{j=1}^{N}\mathbb{E}_{\mathbb{P}}%
\left[ (X^{j}+W_{H}^{j})^{-}\frac{\mathrm{d}{\mathbb{Q}}^{j}}{\mathrm{d}{%
\mathbb{P}}}\right] \right)  \notag \\
&\overset{\text{Prop.}\ref{mSORTEpropfairprob}}{\leq }\liminf_{H}\left(
\sum_{j=1}^{N}W_{H}^{j}+\sum_{j=1}^{N}\mathbb{E}_{\mathbb{P}}\left[ X^{j}%
\frac{\mathrm{d}{\mathbb{Q}}^{j}}{\mathrm{d}{\mathbb{P}}}\right] +\left(
\sum_{j=1}^{N}\mathbb{E}_{\mathbb{P}}\left[ (X^{j}+W_{H}^{j})^{-}\frac{%
\mathrm{d}{\mathbb{Q}}^{j}}{\mathrm{d}{\mathbb{P}}}\right] \right) \right) 
\notag \\
&=\lim_{H}\left( \sum_{j=1}^{N}W_{H}^{j}\right) +\sum_{j=1}^{N}\mathbb{E}_{%
\mathbb{P}}\left[ X^{j}\frac{\mathrm{d}{\mathbb{Q}}^{j}}{\mathrm{d}{\mathbb{P%
}}}\right] +\lim_{H}\left( \sum_{j=1}^{N}\mathbb{E}_{\mathbb{P}}\left[
(X^{j}+W_{H}^{j})^{-}\frac{\mathrm{d}{\mathbb{Q}}^{j}}{\mathrm{d}{\mathbb{P}}%
}\right] \right) \,.
\end{align*}%
where we used Equation \eqref{mSORTEeqnegparts} and the fact that $%
\sum_{j=1}^{N}W_{H}^{j}$ is a numeric sequence converging (${\mathbb{P}}$%
-a.s.) to $\sum_{j=1}^{N}\widehat{Y}^{j}$ to move from $\liminf $ to the sum
of limits. As a consequence 
\begin{equation}
\sum_{j=1}^{N}\mathbb{E}_{\mathbb{P}}\left[ (X^{j}+\widehat{Y}^{j})^{+}\frac{%
\mathrm{d}{\mathbb{Q}}^{j}}{\mathrm{d}{\mathbb{P}}}\right] \leq
\sum_{j=1}^{N}\widehat{Y}^{j}+\sum_{j=1}^{N}\mathbb{E}_{\mathbb{P}}\left[
X^{j}\frac{\mathrm{d}{\mathbb{Q}}^{j}}{\mathrm{d}{\mathbb{P}}}\right]
+\sum_{j=1}^{N}\mathbb{E}_{\mathbb{P}}\left[ (X^{j}+\widehat{Y}^{j})^{-}%
\frac{\mathrm{d}{\mathbb{Q}}^{j}}{\mathrm{d}{\mathbb{P}}}\right]\,.
\label{mSORTEeqsumhatuseful}
\end{equation}%
We get $(X+\widehat{Y})^\pm \in L^{1}({\mathbb{Q}})$, hence $\widehat{Y}\in
L^{1}({\mathbb{Q}})$ and rearranging terms in \eqref{mSORTEeqsumhatuseful} 
\begin{equation*}
\sum_{j=1}^{N}\mathbb{E}_{\mathbb{P}}\left[ (X^{j}+\widehat{Y}^{j})\frac{%
\mathrm{d}{\mathbb{Q}}^{j}}{\mathrm{d}{\mathbb{P}}}\right] \leq
\sum_{j=1}^{N}\widehat{Y}^{j}+\sum_{n=1}^{N}\mathbb{E}_{\mathbb{P}}\left[
X^{j}\frac{\mathrm{d}{\mathbb{Q}}^{j}}{\mathrm{d}{\mathbb{P}}}\right] \,.
\end{equation*}%
In particular, since $\widehat{Y}\in \mathcal{B}_{0}$, we conclude that $%
\sum_{j=1}^{N}\mathbb{E}_{\mathbb{Q}^{j}}\left[ \widehat{Y}^{j}\right] \leq
0\,\forall {\mathbb{Q}}\in \mathcal{Q}_{\mathcal{B},V}\,$. As mentioned before, all the
remaining parts of the proof are identical to the ones for Assumption \ref%
{mSORTEA1}. Hence Theorem \ref{mSORTEthmadditionalreq} holds true. Now we
get counterparts to Theorems \ref{mSORTEthmmaingeneral1} and \ref%
{mSORTEthmmaingeneral2} with the exact same arguments, recalling for the
latter that \eqref{mSORTEeqqfixedsup1} and \eqref{mSORTEeqqfixedsup12} still
hold dropping Assumption \ref{mSORTEA1} (see the proof of Theorem \ref%
{mSORTEthmadditionalreq}). Again using the same arguments of Theorem \ref%
{mSORTEthmmsorteexists} we then get its counterpart under the alternative
Assumption \ref{mSORTEA3}.
\end{proof}

\subsection{Working on $(L^\infty({\mathbb{P}}))^N$}

\label{mSORTEworkonlinfty}The following result is a counterpart to Theorem %
\ref{mSORTEthmadditionalreq} Item 1 when working with the dual system 
\begin{equation*}
((L^{\infty }({\mathbb{P}}))^{N},(L^{1}({\mathbb{P}}))^{N})
\end{equation*}
in place of $(M^{\Phi },K_{\Phi })$.

\begin{theorem}
\label{mSORTEthmlinfty}Under Assumption \ref{mSORTEA1} the following holds:%
\begin{equation}
\sup_{Y\in \mathcal{B}_{0}\cap (L^{\infty }({\mathbb{P}}))^{N}}\mathbb{E}_{%
\mathbb{P}}\left[ U(X+Y)\right] =\min_{\mathbb{Q}\in \mathcal{Q}_{\mathcal{B},V}}\min_{\lambda \geq 0}\left( \lambda \left( \sum_{j=1}^{N}\mathbb{E}_{%
\mathbb{Q}^{j}}\left[ X^{j}\right] \right) +\mathbb{E}_{\mathbb{P}}\left[
V\left( \lambda \frac{\mathrm{d}{\mathbb{Q}}}{\mathrm{d}{\mathbb{P}}}\right) %
\right] \right) \,.  \label{mSORTEeqminimaxappliedcor21Alinfty}
\end{equation}
\end{theorem}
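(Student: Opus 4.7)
The plan is to apply Theorem~\ref{mSORTEthmminimaxlinfty} with the cone $\mathcal{C}:=\mathcal{B}_0\cap(L^\infty(\mathbb{P}))^N$ and then identify the resulting dual problem with the one on the RHS of~\eqref{mSORTEeqminimaxappliedcor21Alinfty}. The cone $\mathcal{C}$ is convex, contains $0$ and all vectors $e_i-e_j$ (constant and hence bounded, and lying in $\mathcal{B}_0$ by Standing Assumption~II). The supremum $\sup_{Y\in\mathcal{C}}\mathbb{E}_\mathbb{P}[U(X+Y)]$ is finite because $\mathcal{C}\subseteq\mathcal{B}_0\cap M^\Phi$ and Theorem~\ref{mSORTEthmadditionalreq} provides finiteness of the supremum over the larger set $\mathcal{B}_0\cap M^\Phi$. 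Theorem~\ref{mSORTEthmminimaxlinfty} therefore yields
$$\sup_{Y\in\mathcal{B}_0\cap(L^\infty(\mathbb{P}))^N}\mathbb{E}_\mathbb{P}[U(X+Y)]=\min_{\lambda\geq 0,\,\mathbb{Q}\in(\mathcal{C}_1^0)^+}\left(\lambda\sum_{j=1}^N\mathbb{E}_{\mathbb{Q}^j}[X^j]+\mathbb{E}_\mathbb{P}\left[V\left(\lambda\tfrac{\mathrm{d}\mathbb{Q}}{\mathrm{d}\mathbb{P}}\right)\right]\right).$$

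It then suffices to show that the LHS above coincides with $\sup_{Y\in\mathcal{B}_0\cap M^\Phi}\mathbb{E}_\mathbb{P}[U(X+Y)]$; granted that, Theorem~\ref{mSORTEthmadditionalreq} rewrites the latter as $\min_{\mathbb{Q}\in\mathcal{Q}_{\mathcal{B},V}}\min_{\lambda\geq 0}(\cdots)$, which is exactly the RHS of~\eqref{mSORTEeqminimaxappliedcor21Alinfty}. The inequality $\leq$ is trivial since $(L^\infty(\mathbb{P}))^N\subseteq M^\Phi$; for the reverse inequality I would invoke Assumption~\ref{mSORTEA1}. Fix $Y\in\mathcal{B}_0\cap M^\Phi$ and consider its truncations $Y_m=Y\,\mathbf{1}_{\{|Y^j|<m\,\forall j\}}+c_Y\,\mathbf{1}_{\Omega\setminus\{|Y^j|<m\,\forall j\}}$ given by Definition~\ref{mSORTEdefclosedundertrunc}; these belong to $\mathcal{B}$ for $m$ large and satisfy $\sum_jY_m^j=\sum_jc_Y^j=\sum_jY^j\leq 0$, hence $Y_m\in\mathcal{B}_0\cap(L^\infty(\mathbb{P}))^N$. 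Since $Y_m\to Y$ almost surely and $U$ is continuous, $U(X+Y_m)\to U(X+Y)$ $\mathbb{P}$-a.s., and I would pass this convergence to expectations by dominated convergence.

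The main obstacle is controlling $(U(X+Y_m))^-$ uniformly in $m$, since Lemma~\ref{mSORTElemmaconswellcontrol} only supplies \emph{upper} bounds on $U$. The fix is to exploit concavity: pick any $x_0\in\mathbb{R}^N$ and any subgradient $v\in\partial U(x_0)$ (necessarily $v\in\mathbb{R}^N_+$ by monotonicity of $U$), so that $U(z)\geq U(x_0)+\langle v,z-x_0\rangle$ for every $z\in\mathbb{R}^N$. Combined with the dominations $|Y_m^j|\leq|Y^j|+|c_Y^j|$ and the fact that $X,Y\in M^\Phi\subseteq(L^1(\mathbb{P}))^N$ by Lemma~\ref{mSORTElemmasummary}~(4), this bounds $(U(X+Y_m))^-$ by an integrable envelope independent of $m$; the symmetric upper bound $(U(X+Y_m))^+\leq a\sum_j(|X^j|+|Y^j|+|c_Y^j|)+|b|$ follows directly from~\eqref{mSORTElemmacontrolwithline}. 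Dominated convergence then gives $\mathbb{E}_\mathbb{P}[U(X+Y_m)]\to\mathbb{E}_\mathbb{P}[U(X+Y)]$, upgrading the approximation from $(L^\infty(\mathbb{P}))^N$ to $M^\Phi$ and completing the proof of~\eqref{mSORTEeqminimaxappliedcor21Alinfty}.
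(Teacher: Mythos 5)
Your overall route differs from the paper's: the paper applies Theorem \ref{mSORTEthmminimaxlinfty} and then identifies the dual feasible set $(\mathcal{C}_1^0)^+\cap\{Z:\mathbb{E}_{\mathbb{P}}[V(\lambda Z)]<+\infty\text{ for some }\lambda>0\}$ with $\mathcal{Q}_{\mathcal{B},V}$, running the truncation argument on the \emph{dual} side (via Lemma \ref{mSORTElemmaevfiniteisinkphi} and the dominated-convergence argument of Lemma \ref{lemmafairnessgeneral}); you instead try to prove the equality of the \emph{primal} suprema $\sup_{\mathcal{B}_0\cap(L^\infty(\mathbb{P}))^N}=\sup_{\mathcal{B}_0\cap M^\Phi}$ directly and then quote Theorem \ref{mSORTEthmadditionalreq}. (Note that if your primal equality were established, your opening application of Theorem \ref{mSORTEthmminimaxlinfty} becomes superfluous.) The target equality of suprema is true — it is Corollary \ref{mSORTEcorsuponlinftyeqmphi}, which the paper obtains only as a \emph{consequence} of both duality results — so a direct proof is a legitimate alternative, but your justification of the key dominated-convergence step is wrong.

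The error is in the lower envelope. For a \emph{concave} function $U$ and $v\in\partial U(x_0)$, the supergradient inequality reads $U(z)\leq U(x_0)+\langle v,z-x_0\rangle$; it is an \emph{upper} bound, not the lower bound you assert. A multivariate utility function in this paper's sense admits no affine minorant in general (take $U(x)=\sum_j(1-e^{-\alpha_jx^j})$: $U$ decays super-linearly as any coordinate tends to $-\infty$), so the claimed integrable envelope for $(U(X+Y_m))^-$ simply does not exist by that mechanism, and the dominated convergence step collapses. The gap is repairable, but by a different device: since $U$ is increasing and $|Y_m^j|\leq|Y^j|+|c_Y^j|$, one has $U(X+Y_m)\geq U\bigl(-(|X|+|Y|+|c_Y|)\bigr)$ componentwise, and setting $W:=|X|+|Y|+|c_Y|\in M^\Phi$ the identity $\Phi(x)=U(0)-U(-x)$ together with $\mathbb{E}_{\mathbb{P}}[\Phi(W)]<+\infty$ (definition of $M^\Phi$) gives $\mathbb{E}_{\mathbb{P}}[U(-W)]>-\infty$, i.e.\ an integrable lower envelope independent of $m$. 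With that substitution (and your correct upper envelope from \eqref{mSORTElemmacontrolwithline}) the dominated convergence goes through and your argument closes.
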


\begin{proof}
To check \eqref{mSORTEeqminimaxappliedcor21Alinfty} we can apply the same
argument used in Step 1 of the proof of Theorem \ref{mSORTEthmadditionalreq}%
, by replacing Theorem \ref{mSORTEthmminimax} with Theorem \ref%
{mSORTEthmminimaxlinfty}. What is left to prove then is that for $\mathcal{C}%
=\mathcal{B}_{0}\cap (L^{\infty }({\mathbb{P}}))^{N}$, the set 
\begin{equation*}
\mathcal{N}:=(\mathcal{C}_{1}^{0})^{+}\cap \left\{ Z\in (L^{1}({\mathbb{P}}%
)_{+})^{N}\mid \mathbb{E}_{\mathbb{P}}\left[ V(\lambda Z)\right] <+\infty 
\text{ for some }\lambda >0\right\}
\end{equation*}%
is in fact $\mathcal{Q}_{\mathcal{B},V}$. To see this, observe that as consequence of
Lemma \ref{mSORTElemmaevfiniteisinkphi} we have $\mathcal{N}\subseteq K_\Phi$%
. From this, by closedness under truncation we have for any $Y\in\mathcal{B}%
_0\cap M^\Phi$ a sequence $(Y_n)_n\subseteq \mathcal{B}_0\cap (L^\infty({%
\mathbb{P}}))^N$ such that for each ${\mathbb{Q}}\in \mathcal{N}$, for each $%
j=1,\dots,N$ $Y^j_n\rightarrow_n Y^j$ ${\mathbb{Q}}^j$-a.s. and the
convergence is dominated: argue as in Lemma \ref{lemmafairnessgeneral}. Thus for any $Y\in\mathcal{B}_0\cap M^\Phi$ we have by Dominated Convergence Theorem that $%
\sum_{j=1}^{N}\mathbb{E}_{\mathbb{Q}^{j}}\left[ Y^{j}\right] \leq 0$. This
completes the proof that $\mathcal{N}=\mathcal{Q}_{\mathcal{B},V}$.
\end{proof}

\begin{corollary}
\label{mSORTEcorsuponlinftyeqmphi} Under Assumption \ref{mSORTEA1} we have 
\begin{equation}  \label{mSORTEeqsuplinftyequalsupmphi}
\sup_{Y\in\mathcal{B}_0\cap (L^\infty({\mathbb{P}}))^N}\mathbb{E}_\mathbb{P} %
\left[U(X+Y)\right]=\sup_{Y\in\mathcal{B}_0\cap M^\Phi}\mathbb{E}_\mathbb{P} %
\left[U(X+Y)\right]\,.
\end{equation}
\end{corollary}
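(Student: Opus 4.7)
\textbf{Proof plan for Corollary \ref{mSORTEcorsuponlinftyeqmphi}.}

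The plan is to observe that the equality follows immediately by comparing two dual representations that have already been established. The inclusion $L^\infty(\mathbb{P}) \subseteq M^\Phi$ (which holds by Lemma \ref{mSORTElemmasummary}(7)) gives for free the inequality
\[
\sup_{Y\in\mathcal{B}_0\cap (L^\infty({\mathbb{P}}))^N}\mathbb{E}_\mathbb{P}\left[U(X+Y)\right]\leq \sup_{Y\in\mathcal{B}_0\cap M^\Phi}\mathbb{E}_\mathbb{P}\left[U(X+Y)\right],
\]
so only the reverse inequality is substantive.

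For the reverse inequality, I would simply chain the two dual identities just proved. By Theorem \ref{mSORTEthmlinfty} (applied with $A=0$),
\[
\sup_{Y\in\mathcal{B}_0\cap (L^\infty({\mathbb{P}}))^N}\mathbb{E}_\mathbb{P}\left[U(X+Y)\right]=\min_{\mathbb{Q}\in \mathcal{Q}_{\mathcal{B},V}}\min_{\lambda\geq 0}\left(\lambda\sum_{j=1}^{N}\mathbb{E}_{\mathbb{Q}^{j}}[X^{j}]+\mathbb{E}_{\mathbb{P}}\left[V\left(\lambda \tfrac{\mathrm{d}{\mathbb{Q}}}{\mathrm{d}{\mathbb{P}}}\right)\right]\right),
\]
while by STEP 1 of the proof of Theorem \ref{mSORTEthmadditionalreq} (equivalently, the equality \eqref{mSORTEeqminimaxappliedcor2}$=$\eqref{mSORTEeqminimaxappliedcor3}),
\[
\sup_{Y\in\mathcal{B}_0\cap M^\Phi}\mathbb{E}_\mathbb{P}\left[U(X+Y)\right]=\min_{\mathbb{Q}\in \mathcal{Q}_{\mathcal{B},V}}\min_{\lambda\geq 0}\left(\lambda\sum_{j=1}^{N}\mathbb{E}_{\mathbb{Q}^{j}}[X^{j}]+\mathbb{E}_{\mathbb{P}}\left[V\left(\lambda \tfrac{\mathrm{d}{\mathbb{Q}}}{\mathrm{d}{\mathbb{P}}}\right)\right]\right).
\]
Since the two right-hand sides are literally the same expression, the two left-hand sides must coincide, establishing \eqref{mSORTEeqsuplinftyequalsupmphi}.

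There is no real obstacle here: all the heavy lifting (the minimax and the careful identification of the dual set $(\mathcal{C}_1^0)^+$ with $\mathcal{Q}_{\mathcal{B},V}$ on the $L^\infty$ side, which required closedness under truncation via Lemma \ref{mSORTElemmaevfiniteisinkphi}) was already performed in Theorem \ref{mSORTEthmlinfty}. Conceptually, the corollary records the nontrivial fact that the supremum cannot be strictly improved by enlarging the admissible set from bounded allocations to the Orlicz heart, which is not evident from a direct approximation argument; one instead obtains it through the duality bridge.
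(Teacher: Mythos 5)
Your proposal is correct and is essentially identical to the paper's own proof: the paper likewise deduces \eqref{mSORTEeqsuplinftyequalsupmphi} by noting that Theorem \ref{mSORTEthmadditionalreq} Item 1 and Theorem \ref{mSORTEthmlinfty} show both suprema equal the same minimax expression $\min_{\lambda\geq 0,\,{\mathbb{Q}}\in\mathcal{Q}_{\mathcal{B},V}}\bigl(\lambda\sum_{j=1}^N\mathbb{E}_{\mathbb{Q}^j}[X^j]+\mathbb{E}_\mathbb{P}[V(\lambda \frac{\mathrm{d}{\mathbb{Q}}}{\mathrm{d}{\mathbb{P}}})]\bigr)$. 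Your additional observation that the inclusion $(L^\infty({\mathbb{P}}))^N\subseteq M^\Phi$ already gives one inequality is harmless but not needed once both sides are matched to the dual expression.
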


\begin{proof}
By Theorem \ref{mSORTEthmadditionalreq} Item 1 and Theorem \ref%
{mSORTEthmlinfty}, both LHS and RHS of \eqref{mSORTEeqsuplinftyequalsupmphi}
are equal to the minimax expression

\begin{equation*}
\min_{\lambda\geq 0,\,{\mathbb{Q}}\in\mathcal{Q}_{\mathcal{B},V}}\left(\lambda\left(%
\sum_{j=1}^N\mathbb{E}_{\mathbb{Q}^j} \left[X^j\right]\right)+\mathbb{E}_%
\mathbb{P} \left[V\left(\lambda \frac{\mathrm{d}{\mathbb{Q}}}{\mathrm{d}{%
\mathbb{P}}}\right)\right]\right)\,.
\end{equation*}
\end{proof}

\subsection{General case: total wealth $A\in{\mathbb{R}}$}

\label{mSORTESecfrom0toA} In this section we extend previous results to
cover the case in which the total wealth $A$ might not be equal to $0$. For $%
A\in{\mathbb{R}}$ and ${\mathbb{Q}}\in\mathcal{Q}_{\mathcal{B},V}$ recall the definitions
of $\pi_A(X)$ in \eqref{mSORTEpiA} and introduce, coherently with %
\eqref{mSORTEeqqfixedsup1},

\begin{equation*}
\pi_A^{\mathbb{Q}}(X):=\sup\left\{\mathbb{E}_\mathbb{P} \left[U(X+Y)\right]%
\mid Y\in M^\Phi,\,\sum_{j=1}^N\mathbb{E}_{\mathbb{Q}^j} \left[Y^j\right]%
\leq A\right\}\,.
\end{equation*}
It is possible to reduce the maximization problem expressed by $\pi_A(X)$
(and similarly $\pi_A^{\mathbb{Q}}(X)$) to the problem related to $%
\pi_0(\cdot)$ (respectively, $\pi_0^{\mathbb{Q}}(\cdot)$).

Take any $a=[a^1,\dots,a^N]\in{\mathbb{R}}^N$ with $\sum_{j=1}^Na^j=A$. Then

\begin{equation*}
\pi_A(X)=\sup \left\{ \mathbb{E}_{\mathbb{P}}\left[ U\left( X+Y+a-a\right) %
\right] \mid \left( Y-a\right) \in \mathcal{B}\cap M^{\Phi
},\sum_{j=1}^{N}\left( Y^{j}-a^{j}\right) \leq 0\right\} =
\end{equation*}%
\begin{equation*}
=\sup \left\{ \mathbb{E}_{\mathbb{P}}\left[ U\left( X+Z+a\right) \right]
\mid Z\in \mathcal{B}_{0}\cap M^{\Phi }\right\} =\pi _{0}(X+a)
\end{equation*}%
where last line holds since ${\mathbb{R}}^{N}+\mathcal{B}=\mathcal{B}$ under
Standing Assumption II. We recognize then that $\pi_A(X)$ is just $\pi
_{0}(\cdot )$, with different initial point $(X+a)$ in place of $X$. %
%For any $A\in {\mathbb{R}}$ we observe that in either Setup A, B or C the
%set $\mathcal{L}_V^{(A)}$ introduced in \eqref{mSORTEeqdeflvAgeneral} is in fact 
%\begin{equation}  \label{mSORTEeqdeflvA}
%\mathcal{L}_{V}^{(A)}=\bigcap_{{\mathbb{Q}}\in \mathcal{Q}_{V}}\left\{ X\in
%(L^{0}({\mathbb{P}}))^{N}\mid \sum_{j=1}^{N}X^{j}\frac{\mathrm{d}{\mathbb{Q}}%
%^{j}}{\mathrm{d}{\mathbb{P}}}\in L^{1}({\mathbb{P}}),\mathbb{E}_{\mathbb{P}}%
%\left[ \sum_{j=1}^{N}X^{j}\frac{\mathrm{d}{\mathbb{Q}}^{j}}{\mathrm{d}{%
%\mathbb{P}}}\right] \leq A\right\} \,.
%\end{equation}%
%Observe also that $\mathcal{L}_{V}^{(0)}=\mathcal{L}_V$ defined in Equation %
%\eqref{mSORTEeqdeflv}. 

The same technique adopted above can be exploited to show that for any $a\in{%
\mathbb{R}}^N$ with $\sum_{j=1}^Na^j=A$ 
\begin{equation*}
\begin{split}
\sup&\left\{\mathbb{E}_\mathbb{P} \left[U(X+Y)\right]\mid Y\in\mathcal{L}%
,\,\sum_{j=1}^N\mathbb{E}_{\mathbb{Q}^j} \left[Y^j\right]\leq A\,\forall{%
\mathbb{Q}}\in\mathcal{Q}_{\mathcal{B},V} \right\} \\
=&\sup\left\{\mathbb{E}_\mathbb{P} \left[U(X+a+Z)\right]\mid Z\in\mathcal{L}%
,\,\sum_{j=1}^N\mathbb{E}_{\mathbb{Q}^j} \left[Z^j\right]\leq 0\,\forall{%
\mathbb{Q}}\in\mathcal{Q}_{\mathcal{B},V}\right\}\,.
\end{split}%
\end{equation*}

The argument above shows how to generalize Theorem \ref%
{mSORTEthmadditionalreq}, Theorem \ref{mSORTEthmlinfty}, Corollary \ref%
{mSORTEcorsuponlinftyeqmphi} to cover the case $A\neq 0$, exploiting the
same results with $X+a$ in place of $X$.

Thus the statements of Theorem \ref{mSORTEthmadditionalreq}, Theorem \ref%
{mSORTEthmlinfty}, Corollary \ref{mSORTEcorsuponlinftyeqmphi} remain true
replacing $0,\,\mathcal{B}_{0}$ with $A,\,\mathcal{B}_{A}$ respectively, and
Equation \eqref{mSORTEeqqfixedsup12} (similarly for %
\eqref{mSORTEeqwithCminimax}, \eqref{mSORTEeqminimaxappliedcor3}, %
\eqref{mSORTEeqminimaxappliedcor21Alinfty}) with

\begin{equation}
\min_{\lambda\geq 0}\left(\lambda\left(\sum_{j=1}^N\mathbb{E}_{\mathbb{Q}^j} %
\left[X^j\right]+A\right)+\mathbb{E}_\mathbb{P} \left[V\left(\lambda\frac{%
\mathrm{d}{\mathbb{Q}}}{\mathrm{d}{\mathbb{P}}}\right)\right]\right)\,. 
\notag
\end{equation}

\section{On the assumptions and examples}

\label{mSORTEsecexamples} We here present a technique to produce a wide
variety of multivariate utility functions fulfilling our requirement. First,
we show how to produce multivariate utility functions satisfying %
\eqref{mSORTEcontrolwithphihat}.

\begin{proposition}
\label{mSORTEPropBLambda}Let $u_{1},\dots ,u_{N}$ be univariate utility
functions such that $u_{1}(0)=\dots =u_{N}(0)=0$ and that the Inada
conditions hold (see Remark \ref{mSORTEreminada}). Let $\Lambda :{\mathbb{R}}%
^{N}\rightarrow {\mathbb{R}}$ be concave and increasing, both not
necessarily strictly, and bounded from above. Set 
\begin{equation}
U(x):=\sum_{j=1}^{N}u_{j}(x)+\Lambda (x),\quad x\in {\mathbb{R}}%
^{N},\,\,\,\,\,\text{and}\,\,\,\,\,\widehat{\Phi }(x):=\sum_{j=1}^{N}%
\widehat{\Phi }_{j}(x^{j})\quad x\in {\mathbb{R}}_{+}^{N},
\label{mSORTEUaddLambda}
\end{equation}%
where $\widehat{\Phi }_{j}(z):=-u_{j}(-z)$ for $z\geq 0$. Then $\,\widehat{%
\Phi }$ is a multivariate Orlicz function and it ensures that $U$ is well
controlled (see Definition \ref{mSORTEwellcontrolled}). If additionally, for
every $j=1,\dots ,N,$ $u_{j}$ satisfies asymptotic elasticity at $-\infty $,
that is $u_{j}$ is differentiable on ${\mathbb{R}}$ and 
\begin{equation*}
\liminf_{x\rightarrow -\infty }\frac{xu_{j}^{\prime }(x)}{u_{j}(x)}>1\,,
\end{equation*}%
then also Assumption \ref{mSORTEA3} holds true.
\end{proposition}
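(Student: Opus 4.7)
The proof decomposes into three independent verifications: (i) that $\widehat{\Phi}$ is a multivariate Orlicz function in the sense of Definition \ref{mSORTEdeforliczfunct}; (ii) that the control inequality \eqref{mSORTEcontrolwithphihat} holds with this $\widehat{\Phi}$ and a suitable $f$; and (iii) under the extra asymptotic elasticity hypothesis, that the components $\widehat{\Phi}_j$ satisfy the conditions listed in Assumption \ref{mSORTEA3}. The crucial structural observation is that, because $\widehat{\Phi}$ is a sum over coordinates, every required property reduces to a separately checkable property of the univariate $\widehat{\Phi}_j(z)=-u_j(-z)$.

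\textbf{Step (i).} I verify each requirement of Definition \ref{mSORTEdeforliczfunct} componentwise. Nullity at $0$ comes from $u_j(0)=0$; convexity of $\widehat{\Phi}_j$ follows from concavity of $u_j$ composed with the affine change $z\mapsto -z$; continuity and coordinatewise monotonicity transfer from the analogous properties of the $u_j$. The linear lower bound is precisely the rewriting \eqref{mSORTEweakInada} of Inada at $-\infty$: fix any $A>0$ and obtain $k_A$ with $u_j(x)\le Ax+k_A$ for $x\le 0$, so that $\widehat{\Phi}_j(z)\ge Az-k_A$ for $z\ge 0$; summing in $j$ yields the required $\widehat{\Phi}(x)\ge A\sum_j x^j-\sum_j k_A$ on ${\mathbb{R}}_+^N$.

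\textbf{Step (ii).} Using $u_j(0)=0$, a case split on the sign of $x^j$ gives the exact identity
\[
u_j(x^j)=u_j((x^j)^+)-\widehat{\Phi}_j((x^j)^-),\quad x^j\in\mathbb{R}.
\]
Next, Inada at $+\infty$ in the form \eqref{mSORTEInada++} provides, for every $\varepsilon>0$, a constant $k_\varepsilon^j$ with $u_j(y)\le \varepsilon y+k_\varepsilon^j$ for all $y\ge 0$, hence $u_j((x^j)^+)\le \varepsilon|x^j|+k_\varepsilon^j$. Summing over $j$, adding the upper bound $\Lambda\le M$, and collecting constants into $f(\varepsilon):=M+\sum_{j=1}^N k_\varepsilon^j$ yields exactly \eqref{mSORTEcontrolwithphihat}.

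\textbf{Step (iii).} Differentiability of $\widehat{\Phi}_j$ on ${\mathbb{R}}_{++}$ and the formula $\widehat{\Phi}_j'(z)=u_j'(-z)$ follow from the corresponding properties of $u_j$. The substitution $x=-z$ converts the asymptotic elasticity of $u_j$ into the required $\liminf_{z\to+\infty}\frac{z\widehat{\Phi}_j'(z)}{\widehat{\Phi}_j(z)}=\liminf_{x\to-\infty}\frac{xu_j'(x)}{u_j(x)}>1$, and the original form $\lim_{x\downarrow-\infty}u_j(x)/x=+\infty$ of Inada at $-\infty$ translates directly into $\lim_{z\to+\infty}\widehat{\Phi}_j(z)/z=+\infty$. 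The proof is therefore entirely mechanical; the only mildly delicate point is the clean identity of Step (ii), which hinges on the normalization $u_j(0)=0$ and makes the otherwise coupled multivariate bound reduce to a sum of univariate Inada-type estimates.
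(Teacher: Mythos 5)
Your proof is correct and follows essentially the same route as the paper's: the same decomposition $u_j(x)=u_j((x)^+)+u_j(-(x)^-)$ (valid precisely because $u_j(0)=0$), the linear lower bound on $\widehat{\Phi}_j$ from Inada$(-\infty)$, the $\varepsilon$-control of the positive part from Inada$(+\infty)$ (you invoke the equivalent form \eqref{mSORTEInada++} directly, where the paper re-derives it inline via a right-derivative tangent bound), and the substitution $x=-z$ for Assumption \ref{mSORTEA3}. The only thing the paper includes that you omit is the one-line observation that $U$ is itself a multivariate utility function (strict concavity and strict monotonicity are inherited from the $u_j$), which is implicitly required for Definition \ref{mSORTEwellcontrolled} to apply; conversely, you are slightly more complete than the paper in explicitly verifying $\lim_{z\to+\infty}\widehat{\Phi}_j(z)/z=+\infty$.
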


\begin{proof}
First we motivate why $U$ in \eqref{mSORTEUaddLambda} is a multivariate
utility function: this follows from the fact that $U$ inherits strict
concavity and strict monotonicity from $u_1,\dots,u_N$ which are univariate
utility functions.

Now we show that the control from above in \eqref{mSORTEcontrolwithphihat}
holds true. To this end, notice that using the fact that $%
u_j(x)=u_j((x)^+)+u_j(-(x)^-)$ we have for any $z\geq 0$ 
\begin{align}
U(x)&=-\left(\sum_{j=1}^N-u_j(-(x^j)^-)\right)+\sum_{j=1}^Nu_j((x^j)^+)+%
\Lambda(x)  \notag \\
&\leq -\left(\sum_{j=1}^N-u_j(-(x^j)^-)\right)+\left(\max_{j=1}^N\frac{%
\mathrm{d}^+u_j}{\mathrm{d}x^j}(z)\right)\left(\sum_{j=1}^N(x^j)^+-z\right)+%
\sum_{j=1}^Nu_j(z)+\sup_{z\in{\mathbb{R}}^N}\Lambda(z)
\label{mSORTEpartialexample}
\end{align}

where $\frac{\mathrm{d}^+u_j}{\mathrm{d}x^j}(z)$ are the usual right
derivatives at $z$.

The Inada conditions imply that: for any $\varepsilon>0$ there exists $z\geq
0$ such that $\max_{j=1}^N\frac{\mathrm{d}^+u_j}{\mathrm{d}x^j}(z)\leq
\varepsilon$, and there exist $A>0,B\in{\mathbb{R}}$ such that $%
u_j(-(x^j)^-)\geq -A(x^j)^-+B$. From the latter we conclude that $\widehat{%
\Phi}(\cdot)$ is a multivariate Orlicz function according to Definition \ref%
{mSORTEdeforliczfunct}. From the former, continuing from %
\eqref{mSORTEpartialexample} once $\varepsilon>0$ had been fixed, 
\begin{equation*}
U(x)\leq -\widehat{\Phi}((x)^-)+\varepsilon \sum_{j=1}^N(x^j)^++\text{%
constant}(\varepsilon)\leq -\widehat{\Phi}((x)^-)+\varepsilon
\sum_{j=1}^N\left|x^j\right|+\text{constant}(\varepsilon)
\end{equation*}
which clearly implies the existence of $\widehat{\Phi},f$ satisfying %
\eqref{mSORTEcontrolwithphihat}.

To conclude, we prove the claim concerning Assumption \ref{mSORTEA3}.
Differentiability of $\widehat{\Phi}_1,\dots,\widehat{\Phi}_N$ on $%
(0,+\infty)$ follows from differentiability of $u_1,\dots,u_N$. It is
finally easy to verify that asymptotic elasticity at $-\infty$ implies $%
\liminf_{z\rightarrow +\infty }\frac{z\widehat{\Phi}_j ^{\prime }(z)}{%
\widehat{\Phi}_j (z)}>1\,\,\,\forall \,j=1,\dots ,N\,.$
\end{proof}

It remains now to elaborate on how to guarantee the integrability condition $%
L^{\widehat{\Phi}}=L^\Phi$ to obtain examples in which Standing Assumption I
holds. At this point we introduce a definition, inspired by Definition 2.2.1 
\cite{RaoRen}, which will serve precisely for the scope (see Proposition \ref%
{mSORTEPropB}).

\begin{definition}
\label{mSORTEdefuu}Let $u:\mathbb{R\rightarrow R}$ and $\widetilde{u}:%
\mathbb{R\rightarrow R}$. We say that $u\preceq \widetilde{u}$ if there
exist $k\in {\mathbb{R}}$, $c\in {\mathbb{R}}_{+},$ $C\in {\mathbb{R}}_{+}$
such that $\widetilde{u}(x)\geq Cu(cx)+k$ for each $x\leq 0$.
\end{definition}

Note that such control is required to hold only for negative values.

\begin{proposition}
\label{mSORTEPropB}Under the same assumptions and notation of Proposition %
\ref{mSORTEPropBLambda} suppose additionally that

\begin{equation}
\Lambda(x)=\sum_{k=1}^K \Lambda^k\left( \sum_{j=1}^{N}\beta^k _{j}x^{j}\right) ,\text{ with }\beta
^1_{1},\dots ,\beta^1 _{N},\dots, \beta
^K_{1},\dots ,\beta^K _{N}\geq 0\text{ and }\max_{j,k}\beta^k_j>0\, \notag
\end{equation}
for bounded from above, (not necessarily strictly) concave increasing $\Lambda^1,\dots,\Lambda^K:\R\rightarrow\R$.
Suppose that for every  $k=1,\dots,K$ it holds that $u_{j}\preceq \Lambda^k$, for each $j=1,\dots,N$. Then Standing Assumption I holds true.
\end{proposition}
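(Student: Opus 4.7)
The plan is as follows. Proposition \ref{mSORTEPropBLambda} already guarantees that $U$ is a well controlled multivariate utility function with the prescribed $\widehat{\Phi}$, so the only remaining ingredient of Standing Assumption I to verify is the identity $L^{\widehat{\Phi}}=L^{\Phi}$. One inclusion, $L^{\Phi}\subseteq L^{\widehat{\Phi}}$, follows directly from Lemma \ref{mSORTElemmaconswellcontrol}(ii). Thus the whole task is to show the reverse inclusion $L^{\widehat{\Phi}}\subseteq L^{\Phi}$.

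The starting observation is that, since $u_j(0)=0$ for all $j$, we have $U(0)=\Lambda(0)$, and therefore for $x\in{\mathbb{R}}_+^N$
\begin{equation*}
\Phi(x)=U(0)-U(-x)=\widehat{\Phi}(x)+\bigl(\Lambda(0)-\Lambda(-x)\bigr),
\end{equation*}
where $\Lambda(0)-\Lambda(-x)\geq 0$ by monotonicity of $\Lambda$. So, given $X\in L^{\widehat{\Phi}}$ with some $\lambda>0$ such that $\mathbb{E}_{\mathbb{P}}[\widehat{\Phi}_j(\lambda|X^j|)]<+\infty$ for every $j$, it suffices to exhibit $0<\mu\leq\lambda$ with $\mathbb{E}_{\mathbb{P}}[\Lambda(0)-\Lambda(-\mu|X|)]<+\infty$, since $\mathbb{E}_{\mathbb{P}}[\widehat{\Phi}(\mu|X|)]<+\infty$ then follows from monotonicity of each $\widehat{\Phi}_j$.

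The key computation will address the composite term $\Lambda(-\mu|X|)=\sum_{k=1}^K\Lambda^k(-\mu\sum_{l=1}^N\beta^k_l|X^l|)$. For each $k$ with $B^k:=\sum_l\beta^k_l>0$ (indices with $B^k=0$ contribute trivially, as the corresponding summand vanishes), I would write the argument as a genuine convex combination by rescaling: $-\mu\sum_l\beta^k_l|X^l|=\sum_l(\beta^k_l/B^k)(-\mu B^k|X^l|)$. Convexity of $-\Lambda^k$ (since $\Lambda^k$ is concave) and Jensen's inequality then give
\begin{equation*}
-\Lambda^k\!\left(-\mu\sum_l\beta^k_l|X^l|\right)\leq \sum_{l=1}^N\frac{\beta^k_l}{B^k}\bigl(-\Lambda^k(-\mu B^k|X^l|)\bigr).
\end{equation*}
For each $l$ the hypothesis $u_l\preceq\Lambda^k$ provides constants $C_{l,k},c_{l,k}>0$ and $K_{l,k}\in{\mathbb{R}}$ with $\Lambda^k(y)\geq C_{l,k}u_l(c_{l,k}y)+K_{l,k}$ for $y\leq 0$; evaluating at $y=-\mu B^k|X^l|\leq 0$ and using $\widehat{\Phi}_l(z)=-u_l(-z)$ yields
\begin{equation*}
-\Lambda^k(-\mu B^k|X^l|)\leq C_{l,k}\widehat{\Phi}_l(c_{l,k}\mu B^k|X^l|)-K_{l,k}.
\end{equation*}

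Now I would choose $\mu>0$ small enough so that $c_{l,k}\mu B^k\leq\lambda$ for every finite pair $(l,k)$; monotonicity of $\widehat{\Phi}_l$ then bounds the RHS by $C_{l,k}\widehat{\Phi}_l(\lambda|X^l|)-K_{l,k}$, which is ${\mathbb{P}}$-integrable by the defining property of $L^{\widehat{\Phi}}$. Summing over $l$ and $k$ and adding the constant $\Lambda(0)$, we obtain $\mathbb{E}_{\mathbb{P}}[\Lambda(0)-\Lambda(-\mu|X|)]<+\infty$, and combined with $\mathbb{E}_{\mathbb{P}}[\widehat{\Phi}(\mu|X|)]<+\infty$ this proves $X\in L^{\Phi}$. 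The main mild subtlety will be the rescaling step (making the sum a proper convex combination and then shrinking $\mu$ once more to absorb the factor $c_{l,k}B^k$); none of this is deep, but it is what forces the definition \ref{mSORTEdefuu} to be used componentwise for each pair $(l,k)$ rather than through a single comparison.
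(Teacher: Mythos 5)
Your proof is correct and follows essentially the same route as the paper's: reduce to showing $L^{\widehat{\Phi}}\subseteq L^{\Phi}$, split each $\Lambda^k$ of the weighted sum via concavity/Jensen into a convex combination of $\Lambda^k\bigl(B^k x^l\bigr)$, and then invoke $u_l\preceq\Lambda^k$ together with a sufficiently small rescaling of the argument to get integrability of $\Lambda(-\mu|X|)$. The paper merely carries this out for $K=1$ (noting the general case is immediate), whereas you write out the general $K$ and the choice of $\mu$ explicitly; there is no substantive difference.
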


\begin{proof}
By Proposition \ref{mSORTEPropBLambda} we only need to prove that $L^\Phi=L^{%
\widehat{\Phi}}$, which will conclude the proof of the fact that Standing
Assumption I holds. For the sake of notational simplicity, we take $K=1$. It will become clear at the end of the proof that the generalization to $K\geq 2$ is immediate. By the concavity of $\Lambda^1$ we have, for every $x\in {%
\mathbb{R}}^{N},$ 
\begin{equation}
\begin{split}
&\Lambda^1\left( \sum_{j=1}^{N}\beta _{j}x^{j}\right) =\Lambda^1\left( \sum_{j=1}^{N}\frac{%
\beta _{j}}{\sum_{n=1}^{N}\beta _{n}}\left( \sum_{n=1}^{N}\beta _{n}\right)
x^{j}\right) \\
&\geq \sum_{j=1}^{N}\frac{\beta _{j}}{\sum_{n=1}^{N}\beta _{n}}\Lambda^1\left(
\left( \sum_{n=1}^{N}\beta _{n}\right) x^{j}\right) .  \label{mSORTE111}
\end{split}%
\end{equation}%
By $u_{j}\preceq \Lambda^1,$ and boundedness from above of $\Lambda^1$ we have for each $%
x\in ((-\infty ,0])^{N}$ and from \eqref{mSORTE111} 
\begin{equation}
+\infty >\sup_{z\in {\mathbb{R}}^{N}}\Lambda^1 (z)\geq \Lambda^1(x)\geq \sum_{j=1}^{N}\frac{%
\beta _{j}}{\sum_{n=1}^{N}\beta _{n}}\left( C_{j}u_{j}\left( c_{j}\left(
\sum_{n=1}^{N}\beta _{n}\right) x^{j}\right) +k_{j}\right) \,.
\label{mSORTE333}
\end{equation}%
If $X\in L^{\widehat{\Phi}}$, then by definition there exists a $\lambda
_{0}>0$ such that the inequality $\mathbb{E}_{\mathbb{P}}\left[
u_{j}(\lambda (-\left\vert X^{j}\right\vert )\right] >-\infty $ holds for
every $\lambda \leq \lambda _{0}$ and $j=1,\dots ,N$. This and %
\eqref{mSORTE333} then imply the existence of some $\lambda_0>\lambda _{1}>0$ such
that $\mathbb{E}_{\mathbb{P}}\left[ \Lambda (-\lambda \left\vert X\right\vert )%
\right] >-\infty $ for every $\lambda \leq \lambda _{1},$ that is $X\in
L^{\Phi }.$
\end{proof}

Just to mention a few explicit examples, any of the following multivariate
utility functions satisfy the Standing Assumption I and Assumption \ref%
{mSORTEA3}: 
\begin{equation}
U(x):=\sum_{j=1}^{N}u_{j}(x^{j})+u\left( \sum_{j=1}^{N}\beta
_{j}x^{j}\right) ,\text{\quad with }\text{ }\beta _{j}\geq 0\text{, for all }%
j,  \label{mSORTEuuu}
\end{equation}%
where $u:\mathbb{R\rightarrow R}$, for some $p>1,$ is any one of the
following functions:%
\begin{equation*}
u(x):=1-\exp \left( -px\right) ;\text{\quad }u(x)=\,%
\begin{cases}
\,p\,\frac{x}{x+1} & \,\,\,x\geq 0 \\ 
\,1-\left\vert x-1\right\vert ^{p} & \,\,\,x<0%
\end{cases}%
;\text{\quad }u(x)=\,%
\begin{cases}
\,p\,\arctan (x) & \,\,\,x\geq 0 \\ 
\,1-\left\vert x-1\right\vert ^{p} & \,\,\,x<0%
\end{cases}
\end{equation*}%
and $u_{1},\dots ,u_{N}$ are exponential utility functions ($%
u_{j}(x^{j})=1-\exp {(-\alpha _{j}x^{j})},\,\alpha >0$) for any choice of $u$
as above.

The function $\Lambda $ in \eqref{mSORTEUaddLambda} could also be
constructed as follows. Let $G:\mathbb{R}^{N}\mathbb{\rightarrow R}$ be
convex, monotone decreasing and bounded from below, and $F:\mathbb{%
R\rightarrow R}$ be concave and monotone decreasing on $range(G)$. Then $%
\Lambda :\mathbb{R}^{N}\mathbb{\rightarrow R}$ defined by 
\begin{equation}
\Lambda (x)=F(G(x))  \notag
\end{equation}%
is concave, monotone increasing and bounded above by $F(\inf G).$ Notice
that we require differentiability only in few circumstances, mainly
concerning uniqueness. We here provide an example in which our assumptions
are met, covering the non differentiable case. Take $\gamma _{j}\geq
0,\,j=1,\dots ,N$, $G(x):=\sum_{j=1}^{N}\gamma _{j}(x^{j}-k^{j})^{-}$ and
take $F:\mathbb{R\rightarrow R}$ defined by $F(x):=-x^{\alpha }$ , $\alpha
\geq 1$,\ which is concave and monotone decreasing on $range(G)=[0,\infty ).$
Then 
\begin{equation}
\Lambda (x):=-\left( \sum_{j=1}^{N}\gamma _{j}(x^{j}-k^{j})^{-}\right)
^{\alpha }  \label{mSORTEExample1}
\end{equation}%
is concave, monotone increasing and bounded above by $0$, and $%
U(x):=\sum_{j=1}^{N}u_{j}(x^{j})+\Lambda (x)$, with $u_{1},\dots ,u_{N}$
exponential utility functions and $\Lambda $ assigned in %
\eqref{mSORTEExample1}, satisfies Standing Assumption I.

\bigskip

We conclude this Section providing a method to identify suitable candidates
for Standing Assumption I.

\begin{remark}
\label{mSORTEremuinfty} Let $U$ be a multivariate utility function, bounded
from above. Define for each $j=1,\dots,N$ 
\begin{equation*}
u_j^\infty(z):=\sup_{x^{[-j]}\in{\mathbb{R}}^{N-1}}U\left([x^{[-j]};z]%
\right)\,.
\end{equation*}
Then $u_1^\infty,\dots,u_N^\infty$ are real valued, concave and
nondecreasing, and satisfy 
\begin{equation*}
U(x)\leq \sum_{j=1}^Nu_j^\infty(-(x^j)^-)+\sup_{z\in{\mathbb{R}}^N}U(z)\,.
\end{equation*}

With the same notation for $\widehat{\Phi }$ and $\widehat{\Phi }_{j}$ in
Proposition \ref{mSORTEPropBLambda} (with $u_j^\infty$ in place of $u_j$),
we see that $U(x)\leq -\widehat{\Phi }((x)^{-})+\sup_{z\in {\mathbb{R}}%
^{N}}U(z)$. Thus, if one can guarantee, say by explicit computation, that $%
(a)$ $u_{1}^{\infty },\dots ,u_{N}^{\infty }$ are \emph{strictly} concave
and \emph{strictly} increasing, $(b)$ that $L^{\widehat{\Phi }}=L^{\Phi }$,
then Standing Assumption I is satisfied.
\end{remark}

\subsection{Comparison with (univariate) SORTE\label{mSORTEseccomparison}}

We suppose now that $U$ has the form %
\eqref{mSORTEuuu} for $u\equiv0$ (which is allowed, since $u$ needs not be
strictly concave nor strictly increasing by Proposition \ref%
{mSORTEPropBLambda} and \ref{mSORTEPropB}). 
%It is easy to see that if an
%optimum exists for $\mathbb{U}_{j}^{\,Y^{[-j]},{\mathbb{Q}}^{j}}(\cdot )$ in %
%\eqref{mSORTEUQ}, it no longer depends on $Y^{[-j]}$, and the optimization
%problem $\mathbb{U}_{j}^{\,Y^{[-j]},{\mathbb{Q}}^{j}}(\cdot )$ is in fact
%the same problem denoted with {$U_{j}^{\mathbb{Q}^{j}}(\cdot )$} in \cite%
%{BDFFM} Equation (16). Similarly, it can be seen 
%It is easy to see that 
% the optimization
%problem expressed by \eqref{mSORTESQ} is equivalent to the one in \cite%
%{BDFFM} Equation (17).
Assumption \ref{mSORTEA1} and Assumption \ref{mSORTEA3} are left untouched,
and Standing Assumption I is satisfied automatically provided that $%
u_1,\dots,u_N$ satisfy the assumptions in Proposition \ref{mSORTEPropB}.
Theorems \ref{mSORTEthmmsorteexists} and \ref{mSORTEthmstrongunique} show
that both existence and uniqueness can be obtained assuming closedness under
truncation for far less strong assumptions than the ones in \cite{BDFFM}
(that is, without requesting the validity of Equation (22) in \cite{BDFFM}). In particular then the following holds true.

\begin{corollary}
\label{mSORTEcorcarteexists}Let let $u_{1},\dots ,u_{j}:{\mathbb{R}}%
\rightarrow {\mathbb{R}}$ be strictly increasing, strictly concave and
satisfying the Inada conditions. Then under either Assumption \ref{mSORTEA1}
or \ref{mSORTEA3} a SORTE, as defined in \cite{BDFFM}, exists.

%, that is there exists a triple $(\widehat{Y}%
%,\widehat{{\mathbb{Q}}},\widehat{a})\in \mathcal{L}\times \mathcal{Q}_{\mathcal{B},V}\times {\mathbb{R}}^{N}$ such that:
%
%\begin{enumerate}
%\item {$\widehat{Y}^j$ is an optimum for $U_j^{\widehat{\mathbb{Q}}^j}(\widehat{a}^j)$, for each $j\in\{1,\dots,N\}$, where
%\begin{equation*}
%U^{\widehat{\mathbb{Q}}^j}_j(\widehat{a}^j):=\sup\left\{\Ep{u_j(X^j+Y^j)}\mid Y\in\mathcal{L}^j, \mathbb{E}_{\widehat{\probq}^j}[Y^j]\leq \widehat{a}^j\right\}\,;
%\end{equation*}
%}
%
%\item {$\widehat{a}$ is an optimum for $S^{\widehat{\mathbb{Q}}}(A)$ where
%\begin{equation*}
%S^{\widehat{\mathbb{Q}}}(A):=\sup\left\{U^{\widehat{\mathbb{Q}}^j}_j(\widehat{a}^j)\mid a\in\R^N,\,sum_{j=1}^Na^j=A\right\}\,;
%\end{equation*} 
%}
%
%\item {$\widehat{Y}\in \mathcal{B}$ and $\sum_{j=1}^{N}\widehat{Y}^{j}=A$ ${%
%\mathbb{P}}$-a.s..}
%\end{enumerate}
\end{corollary}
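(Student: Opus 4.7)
The plan is to realize the SORTE of \cite{BDFFM} as a special case of a mSORTE by choosing $U(x):=\sum_{j=1}^{N}u_{j}(x^{j})$ (i.e.\ take $u\equiv 0$ in \eqref{mSORTEuuu}), and then invoke Theorem \ref{mSORTEthmmsorteexists} to obtain existence. Without loss of generality we assume $u_{j}(0)=0$ for each $j$, since translating each $u_{j}$ by a constant changes the value function but not the optimizers or equilibrium conditions.

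First I would check that the Standing Assumptions are in force. Strict concavity and strict monotonicity of $U$ are inherited componentwise from the $u_{j}$. For the well-controlled property, Proposition \ref{mSORTEPropBLambda} (with $\Lambda \equiv 0$) applies directly under the assumed Inada conditions and gives a multivariate Orlicz function $\widehat{\Phi}(x)=\sum_{j=1}^{N}\widehat{\Phi}_{j}(x^{j})$ with $\widehat{\Phi}_{j}(z)=-u_{j}(-z)$, verifying \eqref{mSORTEcontrolwithphihat}. The identity $L^{\widehat{\Phi}}=L^{\Phi}$ is in fact trivial here, because by $u_{j}(0)=0$ one has $\Phi(x)=U(0)-U(-x)=-\sum_{j=1}^{N}u_{j}(-x^{j})=\sum_{j=1}^{N}\widehat{\Phi}_{j}(x^{j})=\widehat{\Phi}(x)$, so the two spaces coincide as sets. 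Hence Standing Assumption I holds, and Standing Assumption II is assumed throughout. Under Assumption \ref{mSORTEA1} (or \ref{mSORTEA3}, which follows from asymptotic elasticity of the $u_{j}$'s at $-\infty$), Theorem \ref{mSORTEthmmsorteexists} yields a mSORTE $(\widetilde{Y}_{X},{\mathbb{Q}}_{X},a_{X})\in \mathcal{L}\times\mathcal{Q}_{\mathcal{B},V}\times\mathbb{R}^{N}$ with $\mathbb{E}_{{\mathbb{Q}}_{X}^{j}}[\widetilde{Y}_{X}^{j}]=0$ for every $j$ and $\sum_{j=1}^{N}\widetilde{Y}_{X}^{j}=0$, $\widetilde{Y}_{X}\in\mathcal{B}$.

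Next I would verify that this triple is a SORTE in the sense of \cite{BDFFM}. The key observation is that for the additive $U$,
\begin{equation*}
\mathbb{E}_{\mathbb{P}}\!\left[U\!\left(a+X+\widetilde{Y}-\mathbb{E}_{{{\mathbb{Q}}}_{X}}[\widetilde{Y}]\right)\right]=\sum_{j=1}^{N}\mathbb{E}_{\mathbb{P}}\!\left[u_{j}\!\left(a^{j}+X^{j}+\widetilde{Y}^{j}-\mathbb{E}_{{{\mathbb{Q}}}_{X}^{j}}[\widetilde{Y}^{j}]\right)\right],
\end{equation*}
and since the integrability set $\mathcal{L}=\mathcal{L}^{1}\times\cdots\times\mathcal{L}^{N}$ is a product, the inner supremum in Item~1 of Definition \ref{mSORTEstrongmsorte} decouples into the sum of individual suprema. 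Consequently, property (a) of a SORTE (condition $(\alpha)$ at the vector $a_{X}+X$) is precisely the Nash Equilibrium characterization in \eqref{eqnashequil} of Theorem \ref{mSORTEthmmsorteexists} applied to the additive $U$, while property (b) of SORTE is exactly the outer optimization in $a\in\mathbb{R}^{N}$ with $\sum_{j=1}^{N}a^{j}=A$ appearing in Item~1 of Definition \ref{mSORTEstrongmsorte}. Condition $(\beta)$ of SORTE coincides with Item~2 of Definition \ref{mSORTEstrongmsorte}.

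The main (minor) obstacle is bookkeeping: checking that the optimization domain $\mathcal{L}^{j}$ for each agent in the mSORTE Nash property is consistent with the set of admissible random allocations used implicitly in the SORTE definition of \cite{BDFFM}, and that replacing the sum of separate dual probabilities by a vector ${\mathbb{Q}}_{X}\in\mathcal{Q}_{\mathcal{B},V}$ recovers the univariate pricing measures used in $V^{{\mathbb{Q}}_{X}^{j}}$. Both points are handled by observing that with $U$ additive the dual objective $\mathbb{E}_{\mathbb{P}}[V(\lambda\,\mathrm{d}{\mathbb{Q}}/\mathrm{d}{\mathbb{P}})]$ also decouples into a sum, so the framework of \cite{BDFFM} is recovered verbatim. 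Thus the mSORTE furnished by Theorem \ref{mSORTEthmmsorteexists} is a SORTE in the sense of \cite{BDFFM}, completing the existence proof.
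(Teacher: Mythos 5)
Your proposal is correct and follows essentially the same route as the paper: specialize to $U(x)=\sum_{j=1}^{N}u_{j}(x^{j})$ (the case $u\equiv 0$ in \eqref{mSORTEuuu}), verify Standing Assumption I via Proposition \ref{mSORTEPropBLambda}, invoke Theorem \ref{mSORTEthmmsorteexists} (or its counterpart under Assumption \ref{mSORTEA3}), and use the additive structure to decouple the mSORTE optimality conditions into the SORTE conditions of \cite{BDFFM}. Your remark that $L^{\widehat{\Phi}}=L^{\Phi}$ holds trivially here because $\Phi=\widehat{\Phi}$ when $u_{j}(0)=0$ is a small, valid shortcut relative to the paper's appeal to Proposition \ref{mSORTEPropB}.
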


\appendix

\section{Appendix}
\begin{lemma}
\label{lemmafairnessgeneral}
Suppose Assumption \ref{mSORTEA1} is satisfied. Then for any $Y\in \mathcal{L}\cap\mathcal{B}_A$ and any $\probq\in \mathcal{Q}_{\mathcal{B},V}$  it holds that
$$\sum_{j=1}^N\mathbb{E}_{\mathbb{Q}^j} \left[Y^j\right]\leq A\,. $$
\end{lemma}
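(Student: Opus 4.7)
The plan is to extend the fairness inequality from Proposition \ref{mSORTEpropfairprob} (which applies to $Y \in \mathcal{B} \cap M^\Phi$) to arbitrary $Y \in \mathcal{B}_A \cap \mathcal{L}$. The natural strategy is a truncation-plus-dominated-convergence argument, where Assumption \ref{mSORTEA1} (closedness under truncation) provides an approximating sequence of bounded elements of $\mathcal{B}$, and the $\mathcal{L}$-integrability gives the dominating function.

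Fix $Y \in \mathcal{L} \cap \mathcal{B}_A$ and ${\mathbb{Q}} \in \mathcal{Q}_{\mathcal{B},V}$. Since $\mathcal{B}\subseteq \mathcal{C}_{\mathbb{R}}$, the componentwise sum $\sum_{j=1}^N Y^j$ is almost surely equal to a constant, which by $Y\in \mathcal{B}_A$ is at most $A$. By Definition \ref{mSORTEdefclosedundertrunc}, there exist $m_Y\in\mathbb{N}$ and $c_Y\in\mathbb{R}^N$ with $\sum_{j=1}^N Y^j=\sum_{j=1}^N c_Y^j\leq A$ such that the truncated vectors
\[
Y_m:=Y1_{\{|Y^j|<m\,\forall j\}}+c_Y1_{\Omega\setminus\{|Y^j|<m\,\forall j\}}
\]
belong to $\mathcal{B}$ for every $m\geq m_Y$. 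Each $Y_m$ is bounded, hence lies in $(L^\infty(\mathbb{P}))^N\subseteq M^\Phi$, and satisfies $\sum_{j=1}^N Y_m^j=\sum_{j=1}^N c_Y^j\leq A$, so $Y_m\in \mathcal{B}_A\cap M^\Phi$.

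Applying Proposition \ref{mSORTEpropfairprob} to each $Y_m\in\mathcal{B}\cap M^\Phi$ (valid since $\mathcal{Q}_{\mathcal{B},V}\subseteq\mathcal{Q}$) yields
\[
\sum_{j=1}^N\mathbb{E}_{\mathbb{Q}^j}[Y_m^j]\leq \sum_{j=1}^N Y_m^j=\sum_{j=1}^N c_Y^j\leq A.
\]
It then remains to pass to the limit $m\to\infty$. Pointwise $\mathbb{P}$-a.s. (and therefore $\mathbb{Q}^j$-a.s., since $\mathbb{Q}^j\ll\mathbb{P}$) one has $Y_m^j\to Y^j$, because the set $\{|Y^j|<m\,\forall j\}$ increases to $\Omega$. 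For the dominating function, note $|Y_m^j|\leq |Y^j|+|c_Y^j|$, which is $\mathbb{Q}^j$-integrable: $Y^j\in L^1(\mathbb{Q}^j)$ since $Y\in\mathcal{L}$ and ${\mathbb{Q}}\in\mathcal{Q}_{\mathcal{B},V}$, while $|c_Y^j|$ is a constant.

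By dominated convergence $\mathbb{E}_{\mathbb{Q}^j}[Y_m^j]\to \mathbb{E}_{\mathbb{Q}^j}[Y^j]$ for each $j$, and passing to the limit in the displayed inequality gives $\sum_{j=1}^N\mathbb{E}_{\mathbb{Q}^j}[Y^j]\leq A$, as required. The only subtlety worth highlighting is the verification that $Y_m\in\mathcal{B}_A$ (not merely in $\mathcal{B}$): this is immediate from the identity $\sum_{j=1}^N Y_m^j=\sum_{j=1}^N c_Y^j$ together with $\sum_{j=1}^N Y^j=\sum_{j=1}^N c_Y^j\leq A$, so no additional difficulty arises there.
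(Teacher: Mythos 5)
Your proof is correct and follows essentially the same route as the paper's: truncate via Assumption \ref{mSORTEA1}, apply Proposition \ref{mSORTEpropfairprob} to the bounded truncations $Y_m\in\mathcal{B}_A\cap M^{\Phi}$, and pass to the limit by dominated convergence using the $\mathcal{L}$-integrability of $Y$ as the dominating function. The only cosmetic difference is that you apply dominated convergence componentwise under each $\mathbb{Q}^j$ whereas the paper dominates the single sum $\sum_{j}Y_m^j\,\mathrm{d}\mathbb{Q}^j/\mathrm{d}\mathbb{P}$ under $\mathbb{P}$; the two are equivalent here.
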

\begin{proof}
Taking $Y_{m}$ as in
Definition \ref{mSORTEdefclosedundertrunc}, we have 
\begin{equation*}
\left\vert \sum_{j=1}^{N}Y_{m}^{j}\frac{\mathrm{d}{\mathbb{Q}}^{j}}{\mathrm{d%
}{\mathbb{P}}}\right\vert \leq \max \left( \left\vert \sum_{j=1}^{N}Y^{j}\frac{\mathrm{d}{\mathbb{Q}}^{j}}{\mathrm{d}{\mathbb{P}}}\right\vert
,\sum_{j=1}^{N}\left\vert c_{y}^{j}\right\vert\frac{\mathrm{d}{\mathbb{Q}}%
^{j}}{\mathrm{d}{\mathbb{P}}} \right) \in L^{1}({\mathbb{P}})
\end{equation*}%
and 
\begin{equation}  \label{mSORTEstep2beq}
\sum_{j=1}^{N}Y_{m}^{j}\frac{\mathrm{d}{\mathbb{Q}}^{j}}{\mathrm{d}{\mathbb{P%
}}}\rightarrow _{m}\sum_{j=1}^{N}{Y}^{j}\frac{\mathrm{d}{\mathbb{Q}}%
^{j}}{\mathrm{d}{\mathbb{P}}}\,\,{\mathbb{P}}-\text{ a.s.}
\end{equation}%
hence by Dominated Convergence Theorem 
\begin{equation}  \label{mSORTEDOM}
A\geq \mathbb{E}_{\mathbb{P}}\left[ \sum_{j=1}^{N}Y_{m}^{j}\frac{\mathrm{d}{%
\mathbb{Q}}^{j}}{\mathrm{d}{\mathbb{P}}}\right] \rightarrow _{m}\mathbb{E}_{%
\mathbb{P}}\left[ \sum_{j=1}^{N}{Y}^{j}\frac{\mathrm{d}{\mathbb{Q}}%
^{j}}{\mathrm{d}{\mathbb{P}}}\right]=\sum_{j=1}^{N}\mathbb{E}_{\mathbb{Q}^{j}}\left[ Y^{j}\right]
\end{equation}%
where the inequality for LHS comes from the fact that $Y_{m}\in \mathcal{B}%
_{A}\cap (L^{\infty })^{N}\subseteq \mathcal{B}_{A}\cap M^{\Phi }$ and ${%
\mathbb{Q}}\in \mathcal{Q}_{\mathcal{B},V}$, so that by Proposition \ref{mSORTEpropfairprob} $\mathbb{E}_{\mathbb{P}}%
\left[ \sum_{j=1}^{N}Y^{j}_m\frac{\mathrm{d}{\mathbb{Q}} ^{j}}{\mathrm{d}{%
\mathbb{P}}}\right] =\sum_{j=1}^{N}\mathbb{E}_{\mathbb{Q}^{j}}\left[ Y^{j}_m%
\right] \leq \sum_{j=1}^N Y_m^j\leq A.$
\end{proof}
\begin{lemma}
For every $\probq\in\mathcal{Q}_{\mathcal{B},V}$
\label{mSORTESequalsH}
\begin{equation}
\label{sqequalshq}
\begin{split}
S^{{{\mathbb{Q}}}}(A)&:=\sup\left\{
\sup \left\{ \mathbb{E}_{\mathbb{P}}\left[ U(X+Y)\right] \mid Y\in \mathcal{L%
},\,\mathbb{E}_{{\mathbb{Q}}^{j}}\left[ Y^{j}\right] \leq
a^{j},\,\forall \,j\mid  a\in {\mathbb{R}}^{N},\, \sum_{j=1}^{N}a_{j}=A\right\} \right\} 
\\
&=\sup \left\{ \sup_{\widetilde{Y}\in\mathcal{L}}\mathbb{E}_{\mathbb{P}}\left[ U\left( a+X+%
\widetilde{Y}-{\mathbb{E}_{{{\mathbb{Q}}}}[\widetilde{Y}]}\right) \right]
\mid a\in {\mathbb{R}}^{N},\,\sum_{j=1}^{N}a^{j}=A\right\}=:H^{{{\mathbb{Q}}}}(A) \,.
\end{split}
\end{equation}%
Furthermore, any optimum $(\widehat{Y},\widehat{a})$ for $S^{{{\mathbb{Q}}}}(A)$ produces an optimum $(\widetilde{Y},\widehat{a})$ for $H^{{{\mathbb{Q}}}}(A)$ by setting $\widetilde{Y}=\widehat{Y}-\widehat{a}$, and any optimum  $(\widetilde{Y},\widehat{a})$ for $H^{{{\mathbb{Q}}}}(A)$  produces an optimum $(\widehat{Y},\widehat{a})$ for $S^{{{\mathbb{Q}}}}(A)$ by setting $\widehat{Y}=\widetilde{Y}+\widehat{a}-\mathbb{E}_\probq[\widetilde{Y}]$.
\end{lemma}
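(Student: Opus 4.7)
The plan is to establish equality between $S^{\mathbb{Q}}(A)$ and $H^{\mathbb{Q}}(A)$ through two matching substitutions at the level of the inner suprema, exploiting that $\mathcal{L}$ is invariant under translations by constant vectors in $\mathbb{R}^N$ and that $U$ is componentwise increasing. The whole argument is a change of variable wedged between two straightforward monotonicity estimates; no substantial technical obstacle is expected.

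For the inequality $S^{\mathbb{Q}}(A)\le H^{\mathbb{Q}}(A)$, given $(a,Y)$ feasible for the $S$-problem I would set $\widetilde{Y}:=Y-a\in\mathcal{L}$ and use the identity
\[
a+X+\widetilde{Y}-\mathbb{E}_{\mathbb{Q}}[\widetilde{Y}]=X+Y+\bigl(a-\mathbb{E}_{\mathbb{Q}}[Y]\bigr),
\]
noting that $a^j-\mathbb{E}_{\mathbb{Q}^j}[Y^j]\ge 0$ componentwise. Monotonicity of $U$ then yields $\mathbb{E}_\mathbb{P}[U(a+X+\widetilde{Y}-\mathbb{E}_\mathbb{Q}[\widetilde{Y}])]\ge \mathbb{E}_\mathbb{P}[U(X+Y)]$, and taking suprema gives the required inequality. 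Conversely, given $(a,\widetilde{Y})$ feasible for the $H$-problem I would define $Y^j:=a^j+\widetilde{Y}^j-\mathbb{E}_{\mathbb{Q}^j}[\widetilde{Y}^j]$: then $Y\in\mathcal{L}$, $\mathbb{E}_{\mathbb{Q}^j}[Y^j]=a^j$ (so $(a,Y)$ is $S$-feasible, with the constraint saturated), and $X+Y$ coincides identically with $a+X+\widetilde{Y}-\mathbb{E}_\mathbb{Q}[\widetilde{Y}]$, so the two objectives are literally equal. This gives $H^{\mathbb{Q}}(A)\le S^{\mathbb{Q}}(A)$ and hence equality.

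For the optimizer correspondence, both directions follow once the sup-values are known to agree. If $(\widehat{Y},\widehat{a})$ is $S$-optimal, the first construction produces an $H$-feasible pair $(\widehat{Y}-\widehat{a},\widehat{a})$ whose $H$-objective is at least $\mathbb{E}_\mathbb{P}[U(X+\widehat{Y})]=S^{\mathbb{Q}}(A)=H^{\mathbb{Q}}(A)$, so it is $H$-optimal. Conversely, from an $H$-optimum $(\widetilde{Y},\widehat{a})$ the second construction produces an $S$-feasible pair $(\widetilde{Y}+\widehat{a}-\mathbb{E}_\mathbb{Q}[\widetilde{Y}],\widehat{a})$ whose $S$-objective equals the $H$-value, hence this pair is $S$-optimal. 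Notably neither direction requires strict concavity or strict monotonicity of $U$, which matches the fact that the lemma is invoked both in the proof of Theorem \ref{mSORTEthmmsorteexists} and in the proof of Corollary \ref{corollarystrict}.
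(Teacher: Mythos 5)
Your proof is correct and follows essentially the same route as the paper's: the same two affine changes of variable $Y=a+\widetilde{Y}-\mathbb{E}_{\mathbb{Q}}[\widetilde{Y}]$ and $\widetilde{Y}=Y-a$, with the optimizer correspondence read off from the resulting chain of (in)equalities. The only cosmetic difference is in the direction $S^{\mathbb{Q}}(A)\le H^{\mathbb{Q}}(A)$: the paper first tightens the budget constraints to equalities, while you absorb the slack $a-\mathbb{E}_{\mathbb{Q}}[Y]\ge 0$ directly via monotonicity of $U$; both work, and your variant makes explicit that strict monotonicity is not needed for this step.
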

\begin{proof}
We fist show $S^{{{\mathbb{Q}}}}(A)\geq H^{{{\mathbb{Q}}}}(A)$. Take $a\in\R^N$ with $\sum_{j=1}^Na^j=A$ and $\widetilde{Y}\in\mathcal{L}$. Define $Y:=a+\widetilde{Y}-\mathbb{E}_{{\probq}}[\widetilde{Y}]$. Then $Y\in\mathcal{L}$ and $\mathbb{E}_{{\probq}^j}[Y^j]=a^j$, therefore $S^{{{\mathbb{Q}}}}(A)\geq \Ep{U(X+Y)}=\Ep{U(X+a+\widetilde{Y}-\mathbb{E}_{{\probq}}[\widetilde{Y}])}$. Taking now a supremum over $\{\widetilde{Y}\in\mathcal{L},a\in\R^N\text{ s.t. }\sum_{j=1}^Na^j=A\}$ the claim follows. 

Now we show $S^{{{\mathbb{Q}}}}(A)\leq H^{{{\mathbb{Q}}}}(A)$. Observe first that since $U$ is strictly increasing the budget constraints are tight, so that all the inequalities in the definition of $S^{{{\mathbb{Q}}}}(A)$ can be turned into equalities without affecting the value of the supremum.

  Take $Y\in\mathcal{L},a\in\R^N$ with $\sum_{j=1}^Na^j=A$ and $\mathbb{E}_{{\probq}}[Y]=a^j,j=1,\dots,N$. Define $\widetilde{Y}:=Y-a$ and observe that $\widetilde{Y}\in\mathcal{L}$ and  $\mathbb{E}_{{\probq}}[\widetilde{Y}]=0$ . Then $H^{{{\mathbb{Q}}}}(A)\geq\Ep{U(X+a+\widetilde{Y}-\mathbb{E}_{{\probq}}[\widetilde{Y}])}= \Ep{U(X+Y)}$. The claim follows taking a supremum over $\{Y\in\mathcal{L},a\in\R^N \text{ s.t. }\mathbb{E}_{{\probq}}[Y]=a^j,\sum_{j=1}^Na^j=A\}$.  This argument also shows how optima for one of these problems produce optima for the other in the way descried in the statement (just observe that in this case all the inequalities above become equalities).
\end{proof}

\subsection{Additional properties of multivariate utility functions}

We work under Standing Assumption I and II without further mention. 

\begin{lemma}
\label{mSORTElemmakomlos} Let $(Z_n)_n$ be a sequence of random variables
taking values in ${\mathbb{R}}^N$ such that $\mathbb{E}_\mathbb{P} \left[%
U(Z_n)\right]\geq B$ for all $n$, for some $B\in{\mathbb{R}}$.

\begin{enumerate}
\item {\label{mSORTElemmabdd1} If $\sup_n\left|\sum_{j=1}^N\mathbb{E}_%
\mathbb{P} \left[Z^j_n\right]\right|<+\infty$ then $\sup_n\sum_{j=1}^N%
\mathbb{E}_\mathbb{P} \left[\left|Z^j_n\right|\right]<\infty$.}

\item {\label{mSORTElemmabdd2} If $Z_n\rightarrow Z$ ${\mathbb{P}}$-a.s. and 
$\sup_n\sum_{j=1}^N\mathbb{E}_\mathbb{P} \left[(Z^j_n)^+\right]<+\infty$
then $\mathbb{E}_\mathbb{P} \left[U(Z)\right]\geq B$.}
\end{enumerate}
\end{lemma}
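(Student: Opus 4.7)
The plan is to use the two linear upper bounds on $U$ provided by Lemma~\ref{mSORTElemmaconswellcontrol}: a global one with a genuine linear control of negative parts (Item~(iv), namely \eqref{mSORTElemmacontrolwithline}) for proving Part~1, and an arbitrarily-flat one on positive parts (Item~(iii), namely \eqref{mSORTEcontrolwithepsilon}) for proving Part~2.

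\textbf{Part 1.} I will argue that negative parts are uniformly integrable componentwise by combining \eqref{mSORTElemmacontrolwithline} with the lower bound $B$. Using $(x)^{+}=x+(x)^{-}$, the inequality $U(x)\leq a\sum_{j=1}^{N}(x^{j})^{+}-2a\sum_{j=1}^{N}(x^{j})^{-}+b$ rewrites as $U(x)\leq a\sum_{j=1}^{N}x^{j}-a\sum_{j=1}^{N}(x^{j})^{-}+b$. Taking expectations against $Z_{n}$ and using $\mathbb{E}_{\mathbb{P}}[U(Z_{n})]\geq B$ yields
\begin{equation*}
a\sum_{j=1}^{N}\mathbb{E}_{\mathbb{P}}[(Z_{n}^{j})^{-}]\;\leq\;a\sum_{j=1}^{N}\mathbb{E}_{\mathbb{P}}[Z_{n}^{j}]+b-B\,.
\end{equation*}
Taking the supremum over $n$ and using the hypothesis $\sup_{n}|\sum_{j}\mathbb{E}_{\mathbb{P}}[Z_{n}^{j}]|<+\infty$, the right-hand side is uniformly bounded, hence so is $\sup_{n}\sum_{j}\mathbb{E}_{\mathbb{P}}[(Z_{n}^{j})^{-}]$. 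The identity $|Z_{n}^{j}|=Z_{n}^{j}+2(Z_{n}^{j})^{-}$ then yields the desired bound on $\sup_{n}\sum_{j}\mathbb{E}_{\mathbb{P}}[|Z_{n}^{j}|]$.

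\textbf{Part 2.} I will use Fatou's lemma on a carefully chosen non-negative sequence, with the key point being that the small-$\varepsilon$ flattening \eqref{mSORTEcontrolwithepsilon} lets the error be absorbed. Fix $\varepsilon>0$ small enough that $U(x)\leq\varepsilon\sum_{j}(x^{j})^{+}+b_{\varepsilon}$ holds. Then the random variables
\begin{equation*}
F_{n}:=\varepsilon\sum_{j=1}^{N}(Z_{n}^{j})^{+}+b_{\varepsilon}-U(Z_{n})
\end{equation*}
are non-negative. Since $U$ is concave and finite on $\mathbb{R}^{N}$ it is continuous, so $U(Z_{n})\to U(Z)$ $\mathbb{P}$-a.s., and $F_{n}\to F_{\infty}:=\varepsilon\sum_{j}(Z^{j})^{+}+b_{\varepsilon}-U(Z)$ almost surely. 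Fatou's lemma gives
\begin{equation*}
\mathbb{E}_{\mathbb{P}}[F_{\infty}]\leq\liminf_{n}\mathbb{E}_{\mathbb{P}}[F_{n}]\leq\varepsilon\sup_{n}\sum_{j=1}^{N}\mathbb{E}_{\mathbb{P}}[(Z_{n}^{j})^{+}]+b_{\varepsilon}-B\,.
\end{equation*}
A separate Fatou applied to the non-negative $\sum_{j}(Z_{n}^{j})^{+}$ shows $\mathbb{E}_{\mathbb{P}}[\sum_{j}(Z^{j})^{+}]\leq\sup_{n}\sum_{j}\mathbb{E}_{\mathbb{P}}[(Z_{n}^{j})^{+}]<+\infty$, so $U(Z)^{+}$ is integrable (again by \eqref{mSORTEcontrolwithepsilon}) and $\mathbb{E}_{\mathbb{P}}[U(Z)]$ is well-defined in $[-\infty,+\infty)$. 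Rearranging,
\begin{equation*}
\mathbb{E}_{\mathbb{P}}[U(Z)]\;\geq\;B-\varepsilon\Bigl(\sup_{n}\sum_{j=1}^{N}\mathbb{E}_{\mathbb{P}}[(Z_{n}^{j})^{+}]-\mathbb{E}_{\mathbb{P}}\Bigl[\sum_{j=1}^{N}(Z^{j})^{+}\Bigr]\Bigr)\,.
\end{equation*}
The parenthesis is a fixed finite non-negative constant, so letting $\varepsilon\downarrow 0$ yields $\mathbb{E}_{\mathbb{P}}[U(Z)]\geq B$.

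No step looks genuinely hard; the only subtlety is the choice of auxiliary non-negative sequence $F_{n}$ in Part~2, exploiting precisely the Inada-type flexibility in \eqref{mSORTEcontrolwithepsilon}, since a direct Fatou on $-U(Z_{n})$ would require a uniform integrable upper envelope for $U(Z_{n})$ that we do not have.
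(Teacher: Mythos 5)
Your proposal is correct and follows essentially the same route as the paper: Part 1 uses the linear control \eqref{mSORTElemmacontrolwithline} together with the identity $(x)^{+}=x+(x)^{-}$ to bound $\sup_n\sum_j\mathbb{E}_{\mathbb{P}}[(Z_n^j)^-]$ (you argue directly where the paper argues by contradiction), and Part 2 applies Fatou to the same non-negative auxiliary function $\varepsilon\sum_j(\cdot)^+ + b_\varepsilon - U(\cdot)$ built from \eqref{mSORTEcontrolwithepsilon} and then lets $\varepsilon\downarrow 0$. Your extra remarks on the continuity of $U$ and the well-definedness of $\mathbb{E}_{\mathbb{P}}[U(Z)]$ are correct and only make explicit what the paper leaves implicit.
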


\begin{proof}
$\,$

\textbf{Item \ref{mSORTElemmabdd1}}. Suppose that 
\begin{equation*}
\sup_{n}\left( \sum_{j=1}^{N}\mathbb{E}_{\mathbb{P}}\left[ \left\vert
Z_{n}^{j}\right\vert \right] \right) =\sup_{n}\left( \sum_{j=1}^{N}\mathbb{E}%
_{\mathbb{P}}\left[ (Z_{n}^{j})^{+}\right] +\sum_{j=1}^{N}\mathbb{E}_{%
\mathbb{P}}\left[ (Z_{n}^{j})^{-}\right] \right) =+\infty \,.
\end{equation*}%
From the boundedness of 
\begin{equation*}
\sum_{j=1}^{N}\mathbb{E}_{\mathbb{P}}\left[ Z_{n}^{j}\right] =\sum_{j=1}^{N}%
\mathbb{E}_{\mathbb{P}}\left[ (Z_{n}^{j})^{+}\right] -\sum_{j=1}^{N}\mathbb{E%
}_{\mathbb{P}}\left[ (Z_{n}^{j})^{-}\right]
\end{equation*}%
we conclude that $\sup_{n}\sum_{j=1}^{N}\mathbb{E}_{\mathbb{P}}\left[
(Z_{n}^{j})^{-}\right] =+\infty $. Select $a,b$ as in %
\eqref{mSORTElemmacontrolwithline} . Then we have 
\begin{equation*}
B\leq \mathbb{E}_{\mathbb{P}}\left[ U(Z_{n})\right] \leq a\sum_{j=1}^{N}%
\mathbb{E}_{\mathbb{P}}\left[ Z_{n}^{j}\right] -a\sum_{j=1}^{N}\mathbb{E}_{%
\mathbb{P}}\left[ (Z_{n}^{j})^{-}\right] +b
\end{equation*}%
which is clearly a contradiction.

\textbf{Item \ref{mSORTElemmabdd2}}. For $\varepsilon >0$ define the
function $\Gamma _{\varepsilon }$ as 
\begin{equation*}
\Gamma _{\varepsilon }(x):=2\varepsilon \left(
\sum_{j=1}^{N}(x^{j})^{+}\right) +b_{\varepsilon }-U(x)
\end{equation*}%
where the coefficient $b_{\varepsilon }$ is the one in %
\eqref{mSORTEcontrolwithepsilon}. Then $\Gamma _{\varepsilon }\geq 0$ and by
Fatou Lemma we have 
\begin{align*}
&2\varepsilon \left( \sum_{j=1}^{N}\mathbb{E}_{\mathbb{P}}\left[ (Z^{j})^{+}%
\right] \right) +b_{\varepsilon }-\mathbb{E}_{\mathbb{P}}\left[ U(Z)\right] =%
\mathbb{E}_{\mathbb{P}}\left[ \Gamma _{\varepsilon }(Z)\right] \leq
\liminf_{n}\mathbb{E}_{\mathbb{P}}\left[ \Gamma _{\varepsilon }(Z_{n})\right]
\\
&=\liminf_{n}\left( 2\varepsilon \left( \sum_{j=1}^{N}\mathbb{E}_{\mathbb{P}}%
\left[ (Z_{n}^{j})^{+}\right] \right) +b_{\varepsilon }-\mathbb{E}_{\mathbb{P%
}}\left[ U(Z_{n})\right] \right) \\
&\leq -B+b_{\varepsilon }+2\varepsilon \liminf_{n}\left( \sum_{j=1}^{N}%
\mathbb{E}_{\mathbb{P}}\left[ (Z_{n}^{j})^{+}\right] \right) \,.
\end{align*}%
As a consequence 
\begin{equation*}
\mathbb{E}_{\mathbb{P}}\left[ U(Z)\right] \geq B+2\varepsilon \left(
\sum_{j=1}^{N}\mathbb{E}_{\mathbb{P}}\left[ (Z^{j})^{+}\right]
-\sup_{n}\sum_{j=1}^{N}\mathbb{E}_{\mathbb{P}}\left[ (Z_{n}^{j})^{+}\right]
\right)\,.
\end{equation*}%
Since the term multiplying $\varepsilon $ is finite by hypothesis and the
inequality holds for all $\varepsilon >0$ we conclude that $\mathbb{E}_{%
\mathbb{P}}\left[ U(Z)\right] \geq B$.
\end{proof}
\subsection{Explicit computation of the convex conjugate in an exponential setup}

\begin{lemma}
\label{lemmacomputeV}
Take $\alpha_1,\dots,\alpha_N>0$ and consider for $x=[x^1,\dots,x^N]\in\R^N$ $$U(x^1,\dots,x^N):=\frac{1}{2}\sum_{j=1}^N \left(1-e^{-2\alpha_j x^j}\right)+\frac{1}{2}\sum_{\substack{i,j\in\{1,\dots,N\}\\i\neq j}}\left(1-e^{-(\alpha_ix^i+\alpha_jx^j)}\right)=\frac{N^2}{2}-\frac12\left(\sum_{j=1}^N e^{-\alpha_j x^j}\right)^2\,.$$
Define
$$\beta=\sum_{j=1}^N\frac{1}{\alpha_j}\,,\,\,\,\,\Gamma=\sum_{j=1}^N\frac{1}{\alpha_j}\log\left(\frac{1}{\alpha_j}\right)\,.$$
Then the conjugate $V(w)=\sup_{x\in\R^N}\left(U(x)-\sum_{j=1}^Nx^jw^j\right)$ satisfies: for any $w\in (0,+\infty)^N\cap \mathrm{dom}(V)$
\begin{align}
V(w)=\frac{N^2}{2}+\sum_{j=1}^N\left(\frac{w^j}{\alpha_j}+\frac{w^j}{\alpha^j}\log\left(\frac{w^j}{\alpha^j}\right)\right)-\frac12 \left[\sum_{j=1}^N\frac{w^j}{\alpha_j}+\left(\sum_{j=1}^N\frac{w^j}{\alpha_j}\right)\log\left(\sum_{j=1}^N\frac{w^j}{\alpha_j}\right)\right]\label{formulaexplv}
\end{align}
and for any $w=[z,\dots,z]\in (0,+\infty)^N\cap \mathrm{dom}(V)$
\begin{align}
\frac{\partial V}{\partial w^j}(z,\dots,z)&=\frac{1}{\alpha_j}+\frac{1}{\alpha_j}\log\left(\frac{1}{\alpha_j}\right)-\frac{1}{2\alpha_j}\log\left(\beta\right)+\frac{1}{2\alpha_j}\log(z)\,, \label{formulagradient}\\
\sum_{j=1}^Nz\frac{\partial V}{\partial w^j}(z,\dots,z)&=z\left[\beta+\Gamma-\frac{\beta\log(\beta)}{2}\right]+\frac{\beta}{2}z\log(z)\,.\label{sumgradient}
\end{align}
\end{lemma}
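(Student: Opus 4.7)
The strategy is to compute the conjugate by explicit Legendre--Fenchel inversion, exploiting the factored form $U(x)=\frac{N^{2}}{2}-\frac{1}{2}S(x)^{2}$ where $S(x):=\sum_{i=1}^{N}e^{-\alpha_{i}x^{i}}$. Since $U$ is smooth, strictly concave on $\mathbb{R}^{N}$, and its gradient is everywhere strictly positive (hence $\mathrm{dom}(V)\subseteq(0,+\infty)^{N}$), for any $w\in(0,+\infty)^{N}\cap\mathrm{dom}(V)$ the supremum defining $V(w)$ is attained at the unique critical point $x^{*}=x^{*}(w)$ satisfying $\nabla U(x^{*})=w$. The heart of the proof will be inverting this first-order condition in closed form.

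The plan is as follows. First I would compute $\partial U/\partial x^{j}(x)=\alpha_{j}\,S(x)\,e^{-\alpha_{j}x^{j}}$ by a direct chain-rule calculation. Setting this equal to $w^{j}$ gives $e^{-\alpha_{j}x^{*j}}=\frac{w^{j}}{\alpha_{j}S(x^{*})}$. The key decoupling step: summing this identity over $j$ and using $S(x^{*})=\sum_{j}e^{-\alpha_{j}x^{*j}}$ yields the scalar equation $S(x^{*})^{2}=\sum_{j=1}^{N}\frac{w^{j}}{\alpha_{j}}$, which I shall abbreviate by $T:=\sum_{j=1}^{N}w^{j}/\alpha_{j}$, so $S(x^{*})=\sqrt{T}$. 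Inverting the logarithm then produces the explicit optimizer
\begin{equation*}
x^{*j}(w)=\frac{1}{\alpha_{j}}\log\!\left(\frac{\alpha_{j}\sqrt{T}}{w^{j}}\right)=-\frac{1}{\alpha_{j}}\log\!\left(\frac{w^{j}}{\alpha_{j}}\right)+\frac{1}{2\alpha_{j}}\log(T).
\end{equation*}

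Next I would substitute this $x^{*}$ back. The primal value simplifies cleanly: $U(x^{*})=\frac{N^{2}}{2}-\frac{1}{2}T$. The linear term $\langle x^{*},w\rangle$ splits as $-\sum_{j}\frac{w^{j}}{\alpha_{j}}\log(w^{j}/\alpha_{j})+\frac{1}{2}T\log(T)$, after regrouping the $\log(\alpha_{j})$, $\log(w^{j})$, $\log(S)$ contributions. Collecting terms in $V(w)=U(x^{*})-\langle x^{*},w\rangle$ and rearranging the resulting expression in the form displayed in \eqref{formulaexplv} is then a bookkeeping exercise (one should also note that the conjugate function is being evaluated only on $(0,+\infty)^{N}\cap\mathrm{dom}(V)$, which justifies all the logarithms).

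For the gradient formulas \eqref{formulagradient} and \eqref{sumgradient}, I would avoid re-differentiating \eqref{formulaexplv} and instead use the standard Legendre duality identity: since the sup is attained at the unique $x^{*}(w)$, one has $\nabla V(w)=-x^{*}(w)$ (the envelope theorem, easily verified by differentiating $V(w)=U(x^{*}(w))-\langle x^{*}(w),w\rangle$ and using $\nabla U(x^{*})=w$ to cancel the chain-rule terms). Specializing to $w=[z,\dots,z]$ gives $T=z\beta$, $\sqrt{T}=\sqrt{z\beta}$, and substituting into $-x^{*j}$ produces \eqref{formulagradient} after splitting the logarithms; summing $z\,\partial V/\partial w^{j}$ over $j$ using the definitions of $\beta$ and $\Gamma$ yields \eqref{sumgradient}. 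The only possible obstacle is algebraic: keeping track of signs and of the $\tfrac{1}{2}\log(T)$ terms when separating $\log(\alpha_{j})$, $\log(w^{j}/\alpha_{j})$, and $\log(T)$ contributions, but there is no conceptual difficulty beyond the inversion of the first-order condition.
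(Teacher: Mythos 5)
Your route---inverting the first-order condition $\nabla U(x^{*})=w$ directly, using the decoupling identity $S(x^{*})^{2}=\sum_{j}w^{j}/\alpha_{j}=:T$, and reading off the gradient from the envelope identity $\nabla V(w)=-x^{*}(w)$---is sound and genuinely different from the paper's proof. The paper instead linearizes the square via $\tfrac12 s^{2}=\sup_{h\geq 0}(sh-\tfrac12 h^{2})$, interchanges $\sup_{x}$ and $\inf_{h}$ by a minimax theorem, and reduces everything to $N+1$ univariate exponential conjugates. Your approach is more elementary (only strict concavity and smoothness of $U$ are needed to justify that the unique critical point is the global maximizer, no minimax theorem), and it delivers the maximizer $x^{*}(w)$ and hence $\nabla V$ for free, whereas the paper must differentiate its closed-form expression afterwards.

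However, the step you dismiss as a bookkeeping exercise does not close. Your own (correct) intermediate computations, $U(x^{*})=\tfrac{N^{2}}{2}-\tfrac{T}{2}$ and $\langle x^{*},w\rangle=-\sum_{j}\tfrac{w^{j}}{\alpha_{j}}\log\tfrac{w^{j}}{\alpha_{j}}+\tfrac12 T\log T$, combine to
\[
V(w)=\frac{N^{2}}{2}+\sum_{j=1}^{N}\left(-\frac{w^{j}}{\alpha_{j}}+\frac{w^{j}}{\alpha_{j}}\log\frac{w^{j}}{\alpha_{j}}\right)-\frac12\left[-T+T\log T\right],
\]
which differs from \eqref{formulaexplv} in the sign of the two terms that are linear in $w$. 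A sanity check: for $N=1$, $\alpha_{1}=1$, $w=1$ the supremum $\sup_{x}(\tfrac12-\tfrac12 e^{-2x}-x)$ is attained at $x=0$ with value $0$, whereas \eqref{formulaexplv} gives $1$. The discrepancy propagates: $\partial V/\partial w^{j}=-x^{*j}$ yields \eqref{formulagradient} without the constant $1/\alpha_{j}$, and \eqref{sumgradient} without the $\beta$ inside the bracket. The origin in the paper's own proof is the misquoted univariate conjugate $\sup_{x\in\R}(-e^{-\gamma x}-xw)$, which equals $-\tfrac{w}{\gamma}+\tfrac{w}{\gamma}\log\tfrac{w}{\gamma}$ rather than $+\tfrac{w}{\gamma}+\tfrac{w}{\gamma}\log\tfrac{w}{\gamma}$. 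So your method is the right one, but you cannot assert that it lands on the displayed formulas as printed; you should either record the corrected expressions or explicitly flag the sign slip.
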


\begin{proof}
Take $w=[w^1,\dots,w^N],w^j>0\,\forall j=1,\dots,N$ with $w\in\mathrm{dom}(V)$. Then
\begin{align*}
V(w):=&\sup_{x\in\R^N}\left(U(x)-\sum_{j=1}^Nx^jw^j\right)=\frac{N^2}{2}+\sup_{x\in\R^N}\left(-\frac12\left(\sum_{j=1}^N e^{-\alpha_j x^j}\right)^2-\sum_{j=1}^Nx^jw^j\right)\\
=&\frac{N^2}{2}+\sup_{x\in\R^N}\left(-\sup_{h\geq 0}\left(\left(\sum_{j=1}^N e^{-\alpha_j x^j}\right)h-\frac12h^2\right)-\sum_{j=1}^Nx^jw^j\right)\\
=&\frac{N^2}{2}+\sup_{x\in\R^N}\inf_{h\geq 0}\left(\sum_{j=1}^N \left(-he^{-\alpha_j x^j}-x^jw^j\right)+\frac12h^2\right)\\
\stackrel{(\star)}{=}&\frac{N^2}{2}+\inf_{h\geq 0}\sup_{x\in\R^N}\left(\sum_{j=1}^N \left(-he^{-\alpha_j x^j}-x^jw^j\right)+\frac12h^2\right)
\end{align*}
where in $(\star)$ we exploited the minimax Theorem 2.4 in \cite{Sava}.
Now observe that since $w\neq 0$, if the infimum over $h$ were attained at $h=0$ we would get $V(w)=+\infty$. Hence, recalling that $w\in\mathrm{dom}(V)$,
\begin{align*}
V(w)=&\frac{N^2}{2}+\inf_{h> 0}\sup_{x\in\R^N}\left(\sum_{j=1}^N \left(-he^{-\alpha_j x^j}-x^jw^j\right)+\frac12h^2\right)\\
=&\frac{N^2}{2}+\inf_{h> 0}\sup_{x\in\R^N}\left(\sum_{j=1}^N \left(-e^{-\alpha_j \left(x^j-\frac{1}{\alpha_j}\log(h)\right)}-\left(x^j-\frac{1}{\alpha_j}\log(h)\right)w^j\right)-\sum_{j=1}^N\left(\frac{1}{\alpha_j}\log(h)\right)w^j+\frac12h^2\right)\\
=&\frac{N^2}{2}+\inf_{h>0}\left(\sum_{j=1}^N\sup_{x\in\R}\left(-e^{-\alpha_j x}-xw^j\right)-\sum_{j=1}^N\left(\frac{1}{\alpha_j}\log(h)\right)w^j+\frac12h^2\right)\\
=&\frac{N^2}{2}+\sum_{j=1}^N\sup_{x\in\R}\left(-e^{-\alpha_j x}-xw^j\right)-\frac12\sup_{h>0}\left(\left(\sum_{j=1}^N\frac{w^j}{\alpha_j}\right)2\log(h)-e^{2\log(h)}\right)\\
=&\frac{N^2}{2}+\sum_{j=1}^N\sup_{x\in\R}\left(-e^{-\alpha_j x}-xw^j\right)-\frac12\sup_{x\in\R}\left(-e^{-x}-\left(\sum_{j=1}^N\frac{w^j}{\alpha_j}\right)x\right)
\end{align*}
where in the last equality we set $2\log(h)=-x$. Now, recalling that $\sup_{x\in\R}\left(-e^{-\gamma x}-xw\right)=\frac{w}{\gamma}+\frac{w}{\gamma}\log\left(\frac{w}{\gamma}\right)$ we get from the equation above that \eqref{formulaexplv} holds. Expressions
 \eqref{formulagradient} and \eqref{sumgradient} can then be obtained by direct computation.
%
%We can then see that $$\frac{\partial V}{\partial w^j}(w)=\frac{1}{\alpha_j}+\frac{1}{\alpha_j}\log\left(\frac{w^j}{\alpha_j}\right)+\frac{1}{\alpha_j}-\frac{1}{2}\left[\frac{1}{\alpha_j}+\frac{1}{\alpha_j}\log\left(\sum_{j=1}^N\frac{w^j}{\alpha_j}\right)+\frac{1}{\alpha_j}\right]$$
%and choosing $w^1=\dots=w^N=z$ we get $$\frac{\partial V}{\partial w^j}(z,\dots,z)=\frac{1}{\alpha_j}+\frac{1}{\alpha_j}\log\left(\frac{1}{\alpha_j}\right)-\frac{1}{2\alpha_j}\log\left(\sum_{j=1}^N\frac{1}{\alpha_j}\right)+\frac{1}{2\alpha_j}\log(z)$$
%and we can also compute 
%\begin{equation*}
%\sum_{j=1}^Nz\frac{\partial V}{\partial w^j}(z,\dots,z)=z\left[\beta+\Gamma-\frac{\beta\log(\beta)}{2}\right]+\frac{\beta}{2}z\log(z)\,.
%\end{equation*}
\end{proof}

\subsection{Results on multivariate Orlicz spaces}

\label{mSORTEsecmultiorliczproofs}

\begin{proof}[Proof \textit{of Proposition \protect\ref{mSORTEpropdual1}}]
We show that $K_{\Phi }$ is a subspace of the topological dual of $L^{\Phi }$
and is a subset of $(L^{1}({\mathbb{P}}))^{N}$.

For $Z\in K_{\Phi }$ consider the well defined linear map $\phi :L^{\Phi
}\rightarrow L^{1}({\mathbb{P}})$, $X\mapsto \sum_{j=1}^{N}X^{j}Z^{j}$.
Suppose $X_{n}\rightarrow X$ in $L^{\Phi }$ and $\phi (X_{n})\rightarrow W$,
then we can extract a subsequence $(X_{n_{k}})$ converging almost surely to $%
X$, since convergence in Luxemburg norm implies convergence in probability
(Lemma \ref{mSORTElemmasummary} Item 5). It is then clear that $\phi
(X_{n_{k}})=\sum_{j=1}^{N}X_{n_{k}}^{j}Z^{j}\rightarrow
_{k}\sum_{j=1}^{N}X^{j}Z^{j}=W\,{\mathbb{P}}$-a.s., thus the graph of $\phi $
is closed in $L^{\Phi }\times L^{1}({\mathbb{P}})$ (endowed with product
topology). By Closed Graph Theorem (\cite{Aliprantis} theorem 5.20) the map
is then continuous, thus any vector in $K_{\Phi }$ identifies a continuous
linear functional on $L^{\Phi }$. Finally since $[sign(Z^{j})]_{j=1}^{N}\in
L^{\infty }({\mathbb{P}})\subseteq M^{\Phi }\subseteq L^{\Phi }$, $%
\sum_{j=1}^{N}\left\vert Z^{j}\right\vert \in L^{1}({\mathbb{P}}) $ yielding 
$K_{\Phi }\subseteq L^{1}({\mathbb{P}}).$
\end{proof}

\begin{proof}[\textit{Proof of Proposition \protect\ref%
{mSORTEthmsummarykoethe} Item \protect\ref{mSORTElemmaluxenorm}}]
We show that for any extended real valued vector 
\begin{equation*}
Z\in L^{0}\left( (\Omega ,\mathcal{F},{\mathbb{P}});\allowbreak[-\infty
,+\infty ]^{N}\right)
\end{equation*}
we have 
\begin{equation}
\sup_{X\in L^{\Phi },\left\Vert X\right\Vert _{\Phi }\leq 1}\mathbb{E}_{%
\mathbb{P}}\left[ \sum_{j=1}^{N}\left\vert X^{j}Z^{j}\right\vert \right]
=\sup_{X\in M^{\Phi },\left\Vert X\right\Vert _{\Phi }\leq 1}\mathbb{E}_{%
\mathbb{P}}\left[ \sum_{j=1}^{N}\left\vert X^{j}Z^{j}\right\vert \right] \,.
\label{mSORTEeqputabs}
\end{equation}%
and that, moreover 
\begin{equation*}
K_{\Phi }=\left\{ Z\in L^{0}\left( (\Omega ,\mathcal{F},{\mathbb{P}}%
);[-\infty ,+\infty ]^{N}\right) \mid \sum_{j=1}^{N}X^{j}Z^{j}\in L^{1}({%
\mathbb{P}}),\,\forall \,X\in M^{\Phi }\right\} \,.
\end{equation*}%
Argue as in Proposition 2.2.8 of \cite{Edgar}: take any $X\in L^\Phi$ and $%
Z\in (L^0({\mathbb{P}}))^N$ and assume w.l.o.g. both are componentwise
nonnegative (multiplying by signum functions will not affect Luxemburg norms
by definition). Take sequences of simple functions $(Y_{n}^{j})_{n}$, $%
j=1,\dots ,n$ each converging to $X^{j}$ monotonically from below. Clearly $%
\left\Vert Y_{n}\right\Vert _{\Phi }\leq \left\Vert X\right\Vert _{\Phi }$
for each $n$ and by Monotone Convergence Theorem 
\begin{equation*}
\mathbb{E}_{\mathbb{P}}\left[ \sum_{j=1}^{N}\left\vert X^{j}Z^{j}\right\vert %
\right] =\lim_{n}\mathbb{E}_{\mathbb{P}}\left[ \sum_{j=1}^{N}\left\vert
Y_{n}^{j}Z^{j}\right\vert \right] \,.
\end{equation*}%
This implies that 
\begin{equation*}
\sup_{X\in L^{\Phi },\left\Vert X\right\Vert _{\Phi }\leq 1}\mathbb{E}_{%
\mathbb{P}}\left[ \sum_{j=1}^{N}\left\vert X^{j}Z^{j}\right\vert \right]
\end{equation*}%
\begin{equation*}
\leq \sup_{X\in L^{\infty },\left\Vert X\right\Vert _{\Phi }\leq 1}\mathbb{E}%
_{\mathbb{P}}\left[ \sum_{j=1}^{N}\left\vert X^{j}Z^{j}\right\vert \right]
\leq \sup_{X\in M^{\Phi },\left\Vert X\right\Vert _{\Phi }\leq 1}\mathbb{E}_{%
\mathbb{P}}\left[ \sum_{j=1}^{N}\left\vert X^{j}Z^{j}\right\vert \right]
\end{equation*}%
since $L^{\infty }\subseteq M^{\Phi }$. The converse inequality is evident,
so that \eqref{mSORTEeqputabs} follows. Now suppose 
\begin{equation*}
Z\in \left\{ Z\in L^{0}\left( (\Omega ,\mathcal{F},{\mathbb{P}});[-\infty
,+\infty ]^{N}\right) \mid \sum_{j=1}^{N}X^{j}Z^{j}\in L^{1}({\mathbb{P}}%
),\,\forall \,X\in M^{\Phi }\right\} \,.
\end{equation*}%
Observe (by using $\left\vert X^{j}\right\vert sgn(Z^{j})$ in place of $%
X^{j} $ in RHS below) that 
\begin{equation*}
\sup_{X\in M^{\Phi },\left\Vert X\right\Vert _{\Phi }\leq 1}\mathbb{E}_{%
\mathbb{P}}\left[ \sum_{j=1}^{N}\left\vert X^{j}Z^{j}\right\vert \right]
=\sup_{X\in M^{\Phi },\left\Vert X\right\Vert _{\Phi }\leq 1}\mathbb{E}_{%
\mathbb{P}}\left[ \sum_{j=1}^{N}X^{j}Z^{j}\right] <+\infty \,.
\end{equation*}

where we used a Closed Graph Theorem argument similar to the one in the
proof of Proposition \ref{mSORTEpropdual1}, with $M^\Phi$ in place of $%
L^\Phi $, to show finiteness of RHS: since $X\mapsto \sum_{j=1}^NX^jZ^j$ is
well defined and continuous on $M^\Phi$ it must have finite operator norm,
i.e. RHS. Now it follows that 
%Now apply the closed graph argument as in the proof of Proposition \ref%
%{propdual1}, this time to $M^{\Phi }$ in place of $L^{\Phi }$, to get that 
\begin{equation*}
\sup_{X\in L^{\Phi },\left\Vert X\right\Vert _{\Phi }\leq 1}\mathbb{E}_{%
\mathbb{P}}\left[ \sum_{j=1}^{N}\left\vert X^{j}Z^{j}\right\vert \right] 
\overset{\eqref{mSORTEeqputabs}}{=}\sup_{X\in M^{\Phi },\left\Vert
X\right\Vert _{\Phi }\leq 1}\mathbb{E}_{\mathbb{P}}\left[ \sum_{j=1}^{N}%
\left\vert X^{j}Z^{j}\right\vert \right] <+\infty
\end{equation*}%
which in turns provides $Z\in K_{\Phi }$.
\end{proof}

\medskip

\begin{proof}[\textit{Proof of Proposition \protect\ref%
{mSORTEthmsummarykoethe} Item \protect\ref{mSORTEthmdualhearth}}]
We prove that the topological dual of $(M^{\Phi },\left\Vert \cdot
\right\Vert _{\Phi })$ is $(K_{\Phi },\left\Vert \cdot \right\Vert _{\Phi
}^{\ast })$. By order continuity, for a given linear functional $\phi $ in
the topological dual of $M^{\Phi }$ we have that $A\mapsto \phi ([0,\dots
,0,1_{A},0,\dots ,0])$ defines a (finite) absolutely continuous measure with
respect to ${\mathbb{P}}$. This gives by Radon-Nikodym Theorem a vector $%
Z\in (L^{1})^{N} $ satisfying: for every vector of simple functions $s\in
(L^{\infty }({\mathbb{P}}))^{N}$ $\phi (s)=\sum_{j=1}^{N}\mathbb{E}_{\mathbb{%
P}}\left[ s^{j}Z^{j}\right] $ We now prove that $Z$ belongs to $K_{\Phi }$:
take $X\geq 0$ and a sequence $(Y_{n})_{n}$ of non negative simple functions
(vectors of simple functions more precisely) converging to $X$ from below.

By order continuity of the topology on $M^{\Phi }$ we have 
\begin{equation*}
\sum_{j=1}^{N}\mathbb{E}_{\mathbb{P}}\left[ sgn(Z^{j})Y_{n}^{j}Z^{j}\right]
=\phi \left( \left[ sign(Z^{j})Y_{n}^{j}\right] _{j=1}^{N}\right) %
\xrightarrow[n]{\norm{\cdot}_\Phi}\phi \left( \left[ sign(Z^{j})X^{j}\right]
_{j=1}^{N}\right) <+\infty\,.
\end{equation*}%
Thus by Monotone Convergence Theorem 
\begin{equation*}
+\infty >\lim_{n}\sum_{j=1}^{N}\mathbb{E}_{\mathbb{P}}\left[
sgn(Z^{j})Y_{n}^{j}Z^{j}\right] =\lim_{n}\sum_{j=1}^{N}\mathbb{E}_{\mathbb{P}%
}\left[ Y_{n}^{j}\left\vert Z^{j}\right\vert \right] =\sum_{j=1}^{N}\mathbb{E%
}_{\mathbb{P}}\left[ X^{j}\left\vert Z^{j}\right\vert \right] \,.
\end{equation*}%
This proves that $Z\in K_{\phi }$, since the argument above can be applied
to any $0\leq X\in M^{\Phi }$ and subsequently to any $X\in M^{\Phi }$.
Finally, the norm we use on $K_{\Phi }$ is exactly the usual one for
continuous linear functionals, so $(K_{\Phi },\left\Vert \cdot \right\Vert
_{\Phi }^{\ast })$ is isometric to the topological dual of $(M^{\Phi
},\left\Vert \cdot \right\Vert _{\Phi })$.
\end{proof}

\begin{proof}[\textit{Proof of Proposition \protect\ref%
{mSORTEthmsummarykoethe} Item \protect\ref{mSORTElemmakoetheok}}]
We show that if we suppose 
\begin{equation}
L^{\Phi }=L^{\Phi _{1}}\times \dots \times L^{\Phi _{N}}\,,
\label{mSORTEeqlisprod}
\end{equation}%
then we have that $K_{\Phi }=L^{\Phi _{1}^{\ast }}\times \dots \times
L^{\Phi _{N}^{\ast }}$. To see this, observe that%
\begin{equation*}
K_{\Phi }:=\left\{ Z\in L^{0}\left( (\Omega ,\mathcal{F},{\mathbb{P}}%
);[-\infty ,+\infty ]^{N}\right) \mid \sum_{j=1}^{N}X^{j}Z^{j}\in L^{1}({%
\mathbb{P}}),\,\forall \,X\in L^{\Phi }\right\}
\end{equation*}%
\begin{equation*}
\overset{\eqref{mSORTEeqlisprod}}{=}\left\{ Z\in L^{0}\left( {\mathbb{P}}%
;[-\infty ,+\infty ]\right)^{N} \mid \sum_{j=1}^{N}X^{j}Z^{j}\in L^{1}({%
\mathbb{P}}),\,\forall \,X\in L^{\Phi _{1}}\times \dots \times L^{\Phi
_{N}}\right\}
\end{equation*}%
\begin{equation*}
=\left\{ Z\in L^{0}\left( {\mathbb{P}};[-\infty ,+\infty ]\right)^{N} \mid
X^{j}Z^{j}\in L^{1}({\mathbb{P}}),\,\forall \,X^{j}\in L^{\Phi
_{j}},\,\forall j=1\,\dots ,N\right\} \,.
\end{equation*}%
Now apply Corollary 2.2.10 in \cite{Edgar} componentwise.
\end{proof}

\begin{proof}[\textit{Proof of Remark \protect\ref{mSORTERemMPHI}}]
To prove the claims, observe that $M^{\Phi }\subseteq M^{\Phi _{1}}\times
\dots \times M^{\Phi _{N}}$ follows from the fact that $\mathbb{E}_{\mathbb{P%
}}\left[ \Phi _{j}(\lambda \left\vert X^{j}\right\vert )\right] \leq \mathbb{%
E}_{\mathbb{P}}\left[ \Phi (\lambda \left\vert X\right\vert )\right] $,
while the converse $(\supseteq )$ is trivial.

We now prove inequalities \eqref{mSORTEnormequiv}. First observe that for $%
X\in M^{\Phi }$ and for every $j=1,\dots ,N$ the functions $\gamma \mapsto 
\mathbb{E}_{\mathbb{P}}\left[ \Phi (\frac{1}{\gamma }\left\vert X\right\vert
)\right] $ and $\gamma \mapsto \mathbb{E}_{\mathbb{P}}\left[ \Phi _{j}(\frac{%
1}{\gamma }\left\vert X^{j}\right\vert )\right] $ are continuous by
Dominated Convergence Theorem, hence for $\left\Vert X\right\Vert _{\Phi
}\neq 0$ and every $j=1,\dots ,N$

\begin{equation*}
\mathbb{E}_{\mathbb{P}}\left[ \Phi _{j}\left( \frac{1}{\left\Vert
X^{j}\right\Vert _{\Phi _{j}}}\left\vert X^{j}\right\vert \right) \right]%
\leq \mathbb{E}_{\mathbb{P}}\left[ \Phi \left( \frac{1}{\left\Vert
X\right\Vert _{\Phi }}\left\vert X\right\vert \right) \right] =1\,.
\end{equation*}%
Since also for $\left\Vert X\right\Vert _{\Phi }=0$ we have $X=0$ and as a
consequence $\left\Vert X^{j}\right\Vert _{\Phi _{j}}=0,\,j=1,\dots ,N$, we
have 
\begin{equation}
\left\Vert X^{j}\right\Vert _{\Phi _{j}}\leq \left\Vert X\right\Vert _{\Phi
}\,j=1,\dots ,N\,.  \label{mSORTEeqequiv1}
\end{equation}%
Moreover for $X\neq 0$ set $\lambda :=\max_{j}\left( \left\Vert
X^{j}\right\Vert _{\Phi _{j}}\right) $. Then 
\begin{equation*}
\mathbb{E}_{\mathbb{P}}\left[ \Phi _{j}\left( \frac{1}{N\lambda }\left\vert
X^{j}\right\vert \right) \right] \leq \frac{1}{N}\mathbb{E}_{\mathbb{P}}%
\left[ \Phi _{j}\left( \frac{1}{\lambda }\left\vert X^{j}\right\vert \right) %
\right] \leq \frac{1}{N}\,.
\end{equation*}%
Hence for $X\neq 0$ 
\begin{equation*}
\left\Vert X\right\Vert _{\Phi }\leq N\max_{j}\left( \left\Vert
X^{j}\right\Vert _{\Phi _{j}}\right)
\end{equation*}%
and the same trivially holds for $X=0$. In general then 
\begin{equation}
\left\Vert X\right\Vert _{\Phi }\leq N\max_{j}\left( \left\Vert
X^{j}\right\Vert _{\Phi _{j}}\right) \leq N\sum_{j=1}^{N}\left\Vert
X^{j}\right\Vert _{\Phi _{j}}\,.  \label{mSORTEeqequiv2}
\end{equation}%
Now inequalities \eqref{mSORTEnormequiv} follow from inequalities %
\eqref{mSORTEeqequiv1} and \eqref{mSORTEeqequiv2} and the claims are proved.
\end{proof}

\begin{lemma}
\label{mSORTElemmaevfiniteisinkphi} Let $Z\in (L^1({\mathbb{P}}))^N$ be such
that for some $\lambda>0$ $\mathbb{E}_\mathbb{P} \left[V(\lambda Z)\right]%
<+\infty $. Then $Z\in K_\Phi$.
\end{lemma}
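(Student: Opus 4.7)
My plan is to deduce $Z\in K_{\Phi}$ directly from the Fenchel inequality, squeezing $\langle X,Z\rangle$ between two integrable bounds for every $X\in L^{\Phi}$.

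First, by scaling it suffices to prove that $\sum_{j=1}^{N}X^{j}Z^{j}\in L^{1}(\mathbb{P})$ whenever $X\in L^{\Phi}$ additionally satisfies $\mathbb{E}_{\mathbb{P}}[\Phi(|X|)]<+\infty$, since any $X\in L^{\Phi}$ has a scalar multiple $\mu X$ of this form, and integrability is preserved under scalars. Next I would show that under this reduction both $U(X)$ and $U(-X)$ belong to $L^{1}(\mathbb{P})$. For the negative part, monotonicity of $U$ gives $U(\pm X)\geq U(-|X|)=U(0)-\Phi(|X|)$, so $U(\pm X)^{-}\leq \Phi(|X|)+|U(0)|\in L^{1}$. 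For the positive part, the well-controlled bound \eqref{mSORTEcontrolwithphihat} combined with $\widehat{\Phi}\geq A\sum_{j}x^{j}-B$ yields $U(\pm X)\leq B+\varepsilon\sum_{j=1}^{N}|X^{j}|+f(\varepsilon)$, and $|X|\in(L^{1})^{N}$ because $L^{\Phi}\subseteq (L^{1}(\mathbb{P}))^{N}$ by Lemma \ref{mSORTElemmasummary}(4).

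Then I would apply the Fenchel inequality $V(\lambda Z)\geq U(x)-\langle x,\lambda Z\rangle$ twice, first with $x=X$ and then with $x=-X$, which a.s. gives
\begin{equation*}
\frac{U(X)-V(\lambda Z)}{\lambda}\leq \sum_{j=1}^{N}X^{j}Z^{j}\leq \frac{V(\lambda Z)-U(-X)}{\lambda}.
\end{equation*}
Taking positive and negative parts,
\begin{equation*}
\left(\sum_{j=1}^{N}X^{j}Z^{j}\right)^{+}\!\!\leq \frac{V(\lambda Z)^{+}+U(-X)^{-}}{\lambda},\qquad \left(\sum_{j=1}^{N}X^{j}Z^{j}\right)^{-}\!\!\leq \frac{V(\lambda Z)^{+}+U(X)^{-}}{\lambda}.
\end{equation*}
Finally, since $V\geq U(0)$ is bounded below, the finiteness hypothesis $\mathbb{E}_{\mathbb{P}}[V(\lambda Z)]<+\infty$ forces $V(\lambda Z)^{+}\in L^{1}(\mathbb{P})$, and $U(\pm X)^{-}\in L^{1}(\mathbb{P})$ by the previous step. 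Both $(\sum_{j}X^{j}Z^{j})^{\pm}$ are then integrable, so $\sum_{j=1}^{N}X^{j}Z^{j}\in L^{1}(\mathbb{P})$ and $Z\in K_{\Phi}$.

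The only delicate point is establishing $U(\pm X)\in L^{1}(\mathbb{P})$ for $X\in L^{\Phi}$: this is precisely where the standing assumption (the well-controlled bound together with $L^{\widehat{\Phi}}=L^{\Phi}$, used implicitly via $\Phi(|X|)\in L^{1}$ and $|X|\in (L^{1})^{N}$) and Lemma \ref{mSORTElemmaconswellcontrol} are needed; everything else is the standard Fenchel/$L^{1}$-squeeze.
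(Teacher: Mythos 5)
Your proof is correct, and it reaches the conclusion by a route that differs from the paper's in two mechanical respects. The paper does not work over $L^{\Phi}$ directly: it tests $Z$ against $M^{\Phi}$ only, and then invokes Proposition \ref{mSORTEthmsummarykoethe} Item 1 to conclude $Z\in K_{\Phi}$; moreover it applies the Fenchel inequality just once, to the sign-flipped vector $\widehat{X}^{j}:=-\mathrm{sgn}(X^{j})\mathrm{sgn}(Z^{j})X^{j}$, so that $\lambda\langle|X|,|Z|\rangle=-\langle\widehat{X},\lambda Z\rangle\leq V(\lambda Z)-U(\widehat{X})$; since the left-hand side is nonnegative, only an integrable \emph{lower} bound on $U(\widehat{X})$ is needed, and this comes for free from $U(\widehat{X})\geq U(-|X|)=U(0)-\Phi(|X|)$ and $X\in M^{\Phi}$. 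You instead avoid Proposition \ref{mSORTEthmsummarykoethe} altogether via the scaling reduction on $L^{\Phi}$, and replace the sign-flip by a two-sided Fenchel squeeze at $x=X$ and $x=-X$. Both arguments are sound; the paper's sign-flip is marginally slicker in that it bounds the nonnegative quantity $\sum_{j}|X^{j}Z^{j}|$ in one step. One small remark on your write-up: the step establishing $(U(\pm X))^{+}\in L^{1}$ via the well-controlled bound \eqref{mSORTEcontrolwithphihat} is superfluous, since your final estimates only involve $V(\lambda Z)^{+}$ and $(U(\pm X))^{-}$, and the latter is controlled by $\Phi(|X|)+|U(0)|$ alone. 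So, contrary to your closing comment, the delicate ingredient is not the well-controlled condition but merely the lower bound $U(\pm X)\geq U(0)-\Phi(|X|)$ together with $\mathbb{E}_{\mathbb{P}}[\Phi(|X|)]<+\infty$ after rescaling — which is consistent with the paper's proof using only that same lower bound.
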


\begin{proof}
By definition of $V$ we have for any $x,z\in {\mathbb{R}}^{N}$ $%
-\left\langle x,z\right\rangle \leq V(z)-U(X)$. Take $Z$ with $\mathbb{E}_{%
\mathbb{P}}\left[ V(\lambda Z)\right] <+\infty $ for some $\lambda >0$. For
any $X\in M^{\Phi }$ consider $\widehat{X}$ defined as 
\begin{equation*}
\widehat{X}^{j}:=-sgn(X^{j})sgn(Z^{j})X^{j},\,j=1,\dots ,N
\end{equation*}%
and observe that $\widehat{X}\in M^{\Phi }$. Moreover we have $\lambda
\left\langle \left\vert X\right\vert ,\left\vert Z\right\vert \right\rangle
=-\left\langle \widehat{X},\lambda Z\right\rangle \leq V(\lambda Z)-U(%
\widehat{X})$. If $\widehat{X}\in M^{\Phi }$ then, by %
\eqref{mSORTEassocorlicz}, $\mathbb{E}_\mathbb{P} \left[U(\widehat{X})\right]%
>-\infty .$ Since $V(\lambda Z)\in L^{1}({\mathbb{P}})$ by hypothesis, we
conclude that $\left\langle X,Z\right\rangle \in L^{1}({\mathbb{P}})$ for
every $X\in M^{\Phi }$, which in turns yields $Z\in K_{\Phi }$ by
Proposition \ref{mSORTEthmsummarykoethe} Item 1.
\end{proof}

\subsection{On Koml\'{o}s Theorem}

We now recall the original Koml\'{o}s Theorem:

\begin{theorem}[Koml\'{o}s]
\label{mSORTEkomlosoriginal} Let $(f_{n})_{n}\subseteq L^{1}((\Omega ,%
\mathcal{F},{\mathbb{P}},);{\mathbb{R}})$ be a sequence with bounded $L^{1}$
norms. Then there exists a subsequence $(f_{n_{k}})_{k}$ and a $g$ again in $%
L^{1}$ such that for any further subsequence the C\'{e}saro means satisfy: 
\begin{equation*}
\frac{1}{N}\sum_{i\leq N}f_{n_{k_{i}}}\rightarrow g\,\,\,{\mathbb{P}}-\text{
a.s. as }N\rightarrow +\infty\,.
\end{equation*}
\end{theorem}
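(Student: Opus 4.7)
The plan is to follow the classical approach of Komlós (1967), combining truncation with the Banach--Saks property of $L^2$. First I would reduce to nonnegative sequences by splitting $f_n = f_n^+ - f_n^-$ and handling each part separately (both are bounded in $L^1$). So assume $f_n \geq 0$ with $\sup_n \mathbb{E}_\mathbb{P}[f_n] =: C < \infty$.

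Next, for each integer $M \geq 1$, introduce the truncation $f_n^{(M)} := f_n \wedge M$. Since $0 \leq f_n^{(M)} \leq M$, the sequence $(f_n^{(M)})_n$ is bounded in $L^2(\mathbb{P})$. By the reflexivity of $L^2$, I can extract a weakly convergent subsequence; by the Banach--Saks theorem, a further subsequence has Cesàro averages converging strongly in $L^2$, and from this a sub-subsequence along which the Cesàro averages converge $\mathbb{P}$-a.s. to some limit $g^{(M)} \in L^2$. Performing this extraction successively for $M=1,2,\ldots$ and taking a diagonal subsequence $(f_{n_k})_k$, I obtain a single subsequence such that for every $M$ and every further subsequence, the Cesàro averages of $f_{n_k}^{(M)}$ converge a.s.\ to the same $g^{(M)}$. (The "any further subsequence" clause of Banach--Saks is what makes the diagonal argument compatible with the strong conclusion of Komlós.)

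Then I would control the tails $r_n^{(M)} := f_n - f_n^{(M)} = (f_n - M)^+$. Markov's inequality gives $\mathbb{P}(f_n > M) \leq C/M$, and more importantly $\sup_n \mathbb{E}_\mathbb{P}[r_n^{(M)}] \to 0$ as $M \to \infty$ would be false in general, but $\sup_n \mathbb{E}_\mathbb{P}[r_n^{(M)}] \leq C$ always, which allows me to show (via Fatou) that $g^{(M)}$ is nondecreasing in $M$ and $\sup_M \mathbb{E}_\mathbb{P}[g^{(M)}] \leq C$. Define $g := \lim_M g^{(M)} \in L^1$. Finally I would show that the Cesàro means $\frac{1}{N}\sum_{i \leq N} f_{n_{k_i}}$ converge a.s.\ to $g$ along any further subsequence: the lower bound is immediate from the $f_n^{(M)}$ piece and monotone convergence in $M$; the upper bound requires a more delicate argument, handling the tails by a further extraction so that $\limsup_N \frac{1}{N}\sum_{i \leq N} r_{n_{k_i}}^{(M)} \leq \mathbb{E}_\mathbb{P}[g - g^{(M)}]$ a.s., which vanishes as $M \to \infty$.

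The main obstacle is exactly this last step: upgrading from "there exists a subsequence whose Cesàro means converge" (routine from Banach--Saks) to "every further subsequence" has the same property with the same limit. The trick is to arrange, in the diagonal extraction, that the selected $f_{n_k}$'s have the property that for each $M$, $\mathbb{E}_\mathbb{P}[(f_{n_k}^{(M)} - g^{(M)})^2]$ is summable (or at least small enough), so that a Borel--Cantelli / maximal-inequality argument yields a.s.\ Cesàro convergence not only along the chosen indices but along every subsequence simultaneously. Combining this uniform control with the tail estimate for $r_n^{(M)}$ completes the proof.
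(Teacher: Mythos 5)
The paper does not prove this statement at all: it is the classical Koml\'{o}s theorem and is simply cited (``See \cite{Komlos} Theorem 1a''), so your sketch cannot be compared to an argument in the paper, only judged on its own terms. Its skeleton (reduce to $f_n\geq 0$, truncate, use Hilbert-space arguments for the bounded part, control the tails, diagonalize) is the right one, but the mechanism you propose for the decisive step is not available. You want to arrange that $\mathbb{E}_{\mathbb{P}}[(f_{n_k}^{(M)}-g^{(M)})^2]$ is summable (or small); this is impossible in general, because weak convergence in $L^2$ does not force strong convergence. Take $f_n=1+r_n$ with $(r_n)$ a Rademacher sequence: then $f_n^{(M)}=f_n$ for $M\geq 2$, $g^{(M)}=1$, and $\mathbb{E}_{\mathbb{P}}[(f_n^{(M)}-g^{(M)})^2]=1$ for every $n$ and every subsequence, so no extraction can make these quantities small. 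The classical device is instead \emph{near-orthogonalization}: extract so that $|\mathbb{E}_{\mathbb{P}}[(f_{n_k}^{(M)}-g^{(M)})(f_{n_j}^{(M)}-g^{(M)})]|\leq 2^{-k}$ for all $j<k$. This gives $\mathbb{E}_{\mathbb{P}}[S_N^2]=O(N)$ for the partial sums of the centered variables along \emph{any} further subsequence, whence Cesàro convergence a.s.\ via Chebyshev and Borel--Cantelli along $N=m^2$, with the gaps between consecutive squares filled using the uniform bound $|f_n^{(M)}|\leq M$. That is the correct, subsequence-stable substitute for your summability claim.

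The tail step is also thinner than it looks. Without uniform integrability, a truncation at a fixed level $M$ does not let you conclude $\limsup_N \frac{1}{N}\sum_{i\leq N} r^{(M)}_{n_{k_i}}\to 0$ a.s.\ as $M\to\infty$ by soft arguments. Koml\'{o}s truncates at a level growing with the index, $f_{n_k}\wedge k$, first extracting so that the distributions of the $f_{n_k}$ converge; this yields $\sum_k \mathbb{P}(f_{n_k}>k)<\infty$ (so the growing truncation changes only finitely many terms a.s.) together with the Kolmogorov-type bound $\sum_k \mathbb{E}_{\mathbb{P}}[(f_{n_k}\wedge k)^2]/k^2<\infty$. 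Making these estimates survive passage to an arbitrary further subsequence -- note that relabelling shrinks the truncation levels, so summability of $\sum_k\mathbb{P}(f_{n_k}>k)$ is \emph{not} automatically inherited -- is exactly the delicate point of the original paper, and your sketch does not address it. As written, the proposal would prove only the weak form (``some subsequence has a.s.\ convergent Cesàro means''), not the stated subsequence-stable version that the paper actually uses in Corollary \ref{mSORTEcorkomplosmultidim}.
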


\begin{proof}
See \cite{Komlos} Theorem 1a.
\end{proof}

\begin{corollary}
\label{mSORTEcorkomplosmultidim} Let a sequence $(Y_n)_n$ be given in $L^1({%
\mathbb{P}}_1)\times\dots\times L^1({\mathbb{P}}_N)$ such that for
probabilities ${\mathbb{P}}_1,\dots,{\mathbb{P}}_N\ll{\mathbb{P}}$ 
%with  $\min_i\left(\rn{\probp_i}{\probp}\right)=:\rn{\mu}{\probp}\neq0$ ($\Leftrightarrow\,\,\probp_1,\dots,\probp_N\sim\probp$).
\begin{equation*}
\sup_n\sum_{j=1}^N\mathbb{E}_\mathbb{P} \left[\left|Y_n^j\right|\frac{%
\mathrm{d}{\mathbb{P}}_j}{\mathrm{d}{\mathbb{P}}}\right]<\infty\,.
\end{equation*}
Then there exists a subsequence $(Y_{n_h})_h$ and an $\widehat{Y}\in L^1({%
\mathbb{P}}_1)\times\dots\times L^1({\mathbb{P}}_N)$ such that every further
subsequence $(Y_{n_{h_k}})_k$ satisfies 
\begin{equation*}
\frac1K\sum_{k=1}^K Y_{n_{h_k}}^j\rightarrow\widehat{Y}^j\,\,\,\,{{\mathbb{P}%
}_j-\text{a.s.}} \,\,\,\,\forall\,\,j=1,\dots N\,\text{ as }%
K\rightarrow+\infty\,.
\end{equation*}
\end{corollary}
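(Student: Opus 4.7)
The idea is to reduce the multidimensional statement to $N$ successive applications of the classical (univariate) Komlós Theorem \ref{mSORTEkomlosoriginal}, one component at a time, exploiting the key feature that the subsequence produced by Komlós has the property that \emph{every} further subsequence still has ${\mathbb{P}}$-a.s.\ convergent Cesàro means to the same limit. This hereditary property is precisely what allows us to iterate.

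First I would observe that for each fixed $j\in\{1,\dots,N\}$, the hypothesis gives
\[
\sup_n \mathbb{E}_{{\mathbb{P}}_j}\bigl[|Y_n^j|\bigr]=\sup_n\mathbb{E}_{\mathbb{P}}\left[|Y_n^j|\rn{{\mathbb{P}}_j}{{\mathbb{P}}}\right]\leq \sup_n\sum_{k=1}^N\mathbb{E}_{\mathbb{P}}\left[|Y_n^k|\rn{{\mathbb{P}}_k}{{\mathbb{P}}}\right]<+\infty,
\]
so $(Y_n^j)_n$ is norm-bounded in $L^1({\mathbb{P}}_j)$ and Theorem \ref{mSORTEkomlosoriginal} applies to it with respect to the measure ${\mathbb{P}}_j$ (which only requires a probability space; the fact that ${\mathbb{P}}_j\ll{\mathbb{P}}$ is immaterial here).

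Then I would proceed by induction on the coordinate. Apply Komlós to $(Y_n^1)_n$ in $L^1({\mathbb{P}}_1)$ to obtain a subsequence $(n^{(1)}_k)_k$ and $\widehat{Y}^1\in L^1({\mathbb{P}}_1)$ such that every further subsequence of $(Y_{n^{(1)}_k}^1)_k$ has Cesàro means converging ${\mathbb{P}}_1$-a.s.\ to $\widehat{Y}^1$. Next apply Komlós to $(Y_{n^{(1)}_k}^2)_k$ in $L^1({\mathbb{P}}_2)$, extracting a further subsequence $(n^{(2)}_k)_k\subseteq(n^{(1)}_k)_k$ and a limit $\widehat{Y}^2\in L^1({\mathbb{P}}_2)$ with the analogous hereditary property. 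Iterate this procedure $N$ times and set $(n_h)_h:=(n^{(N)}_h)_h$.

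Finally I would verify the claimed conclusion: fix an arbitrary further subsequence $(n_{h_k})_k$. For each $j=1,\dots,N$, $(n_{h_k})_k$ is in particular a further subsequence of $(n^{(j)}_k)_k$, hence by the hereditary property inherited from the $j$-th application of Komlós, $\tfrac{1}{K}\sum_{k=1}^K Y_{n_{h_k}}^j\to \widehat{Y}^j$ ${\mathbb{P}}_j$-a.s.\ as $K\to+\infty$. Setting $\widehat{Y}:=[\widehat{Y}^1,\dots,\widehat{Y}^N]\in L^1({\mathbb{P}}_1)\times\dots\times L^1({\mathbb{P}}_N)$ completes the argument. There is essentially no obstacle here: the whole proof is a clean diagonalization, and the only thing one must not lose sight of is that Komlós provides the stronger ``every further subsequence works'' conclusion, without which the induction step would fail.
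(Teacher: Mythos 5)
Your proposal is correct and follows essentially the same route as the paper's proof: an iterated (diagonal) application of the univariate Koml\'{o}s theorem component by component, relying on the hereditary ``every further subsequence'' property to make the extraction of successive subsequences consistent. Your version is simply a more detailed write-up of the same argument.
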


\begin{proof}
We suppose $N=2$, the argument can be iterated. The result follows from a
diagonal argument: take the first component, we have a subsequence and an $%
\widehat{Y}^1$ s.t. each further subsequence has ${\mathbb{P}}_1$-a.s.
converging C\'{e}saro means as in Theorem \ref{mSORTEkomlosoriginal}. Now
take this sequence in place of the one we began with, and do the same for
the second component. Notice that in the end we get a subsequence for the
second component too, and the corresponding indices yield a subsequence of
the one we extracted for the first component. The claim follows.
\end{proof}

{\small %\bibliographystyle{apalike}
\bibliographystyle{abbrv}
\bibliography{BibAll}
}

\end{document}